\newif\ifcomments
\title{Universal chain rules from entropic triangle inequalities}
\author{Ashutosh Marwah\footnote{email: ashutosh.marwah@outlook.com} \ and Fr\'ed\'eric Dupuis}
\affil{\small{D\'epartement d'informatique et de recherche op\'erationnelle,\\ Universit\'e de Montr\'eal,\\ Montr\'eal QC, Canada}}
\date{\today}							
\newcommand{\charFn}[1]{\chi_{#1}}
\begin{document}
\maketitle
\begin{abstract}
    The von Neumann entropy of an $n$-partite system $A_1^n$ given a system $B$ can be written as the sum of the von Neumann entropies of the individual subsystems $A_k$ given $A_1^{k-1}$ and $B$. While it is known that such a chain rule does not hold for the smooth min-entropy, we prove a counterpart of this for a variant of the smooth min-entropy, which is equal to the conventional smooth min-entropy up to a constant. This enables us to lower bound the smooth min-entropy of an $n$-partite system in terms of, roughly speaking, equally strong entropies of the individual subsystems. We call this a \emph{universal chain rule} for the smooth min-entropy, since it is applicable for all values of $n$. Using duality, we also derive a similar relation for the smooth max-entropy. Our proof utilises the entropic triangle inequality based technique developed in \cite{Marwah23} for analysing approximation chains. Additionally, we also prove an approximate version of the entropy accumulation theorem, which significantly relaxes the conditions required on the state to bound its smooth min-entropy. In particular, it does not require the state to be produced through a sequential process like previous entropy accumulation type bounds. In our upcoming companion paper \cite{Marwah24-DIQKD}, we use it to prove the security of parallel device independent quantum key distribution. 
\end{abstract}

\tableofcontents

\section{Introduction}
\begin{sloppypar}
    The rates of several interesting tasks with i.i.d (independent and identically distributed) structure, like coding and randomness extraction, are characterised by the von Neumann entropy \cite{Schumacher95,Wilde13} in the asymptotic regime. The von Neumann entropy is the most commonly used and understood measure of uncertainty. It generally follows the behaviour one intuitively expects from a measure of uncertainty. Central to this paper is the intuitive expectation that the total uncertainty of two registers $A_1$ and $A_2$ given a register $B$ should be the sum of the uncertainty of register $A_1$ given $B$ and the uncertainty of register $A_2$, conditioned on both $A_1$ and $B$. The von Neumann entropy, respects this intuition nicely and satisfies the following chain rule:
    \begin{align}
        H(A_1 A_2| B)_\rho = H(A_1 | B)_\rho + H(A_2 | A_1 B)_\rho.
        \label{eq:bi_von_N_ch_rule}
    \end{align}
    More generally, we can decompose the von Neumann entropy of a large system $A_1^n$\footnote{The notation $A_1^n$ denotes the set of registers $A_1, A_2, \cdots, A_n$.} given $B$ into a sum of the entropies of its parts as:
    \begin{align}
        H(A_1^n | B)_\rho = \sum_{k=1}^n H(A_k | A_1^{k-1} B)_\rho.
        \label{eq:lin_von_N_ch_rule}
    \end{align}
    As we move beyond the asymptotic and i.i.d regime and consider general processes, we move into the territory of one-shot information theory, where we need a multitude of different entropic measures to characterise and analyse various tasks. The smooth min-entropy \cite{Renner06,Renner05} is one of the most important information measures among these. It determines the amount of randomness that can be securely extracted from a (classical) register, when part of it is correlated with the adversary's state \cite{Tomamichel10}. This is particularly useful in cryptography, where no assumptions (specifically i.i.d assumptions) can be placed on the adversary's actions. For a state $\rho_{AB}$, the min-entropy of system $A$ conditioned on system $B$ is defined as\footnote{We use base $e$ for $\exp$ and $\log$, and \emph{nat} units for entropies throughout this paper.}
    \begin{align}
        H_{\min}(A|B)_{\rho} := \max\curlyBrk{\lambda \in \mathbb{R}: \text{there exists a state } {\sigma}_{B} \text{ such that } {\rho}_{AB} \leq e^{-\lambda} \Id_A \otimes \sigma_B}
    \end{align}
    and the smooth min-entropy is defined as 
    \begin{align}
        H_{\min}^{\epsilon}(A|B)_{\rho} := \max_{\tilde{\rho}}H_{\min}(A|B)_{\tilde{\rho}}
    \end{align}
    where the maximum is over all subnormalised states $\tilde{\rho}_{AB}$ which are $\epsilon$-close to the state $\rho$ in the purified distance. The smooth min-entropy behaves very differently from the von Neumann entropy. Informally speaking, it places a much larger weight on the worst case value of the conditioning register as compared to the von Neumann entropy, which places equal weight on all conditioning values. As a consequence, the smooth min-entropy exhibits behaviour that deviates from the intuitive expectations set by the von Neumann entropy. This is evident in the chain rule for smooth min-entropy \cite{Dupuis14, Vitanov13}, which is notably more complex than Eq. \ref{eq:bi_von_N_ch_rule}:
    \begin{align}
        H_{\min}^{\epsilon_1+ 2\epsilon_2+\delta}(A_1 A_2|B)_{\rho} \geq H_{\min}^{\epsilon_1}(A_1|B)_{\rho} + H_{\min}^{\epsilon_2}(A_2| A_1 B)_{\rho} - k(\delta)
        \label{eq:bi_Hmin_ch_rule}
    \end{align}
    where $k(\delta)=O\rndBrk{\log\frac{1}{\delta}}$. Moreover, it is easy to demonstrate that a smooth min-entropy counterpart for the relation in Eq. \ref{eq:lin_von_N_ch_rule} cannot hold true. Mathematically, a relation of the following form cannot be valid for $\epsilon$ in a neighborhood of $0$\footnote{Simply consider the classical distribution $p_{A_1^n B}$ where $B$ is chosen uniformly at random from $\{0,1\}$ and if $B=0$ then $A_1^n$ is set to the constant all zero string, otherwise each $A_i$ is independently and randomly chosen bit. In this case, $H_{\min}^{g_1(\epsilon)}(A_1^n | B)_p = O(1)$ and for each $k$, $H_{\min}(A_k | A_1^{k-1} B)_p \geq \log 4/3$. For small $\epsilon$, Eq. \ref{eq:lin_Hmin_ch_rule} would have a constant left-hand side and a linearly growing right-hand side.}:
    \begin{align}
        H_{\min}^{g_1(\epsilon)}(A_1^n | B)_\rho &\geq \sum_{k=1}^n H_{\min}^\epsilon(A_k | A_1^{k-1} B)_{\rho} - ng_2(\epsilon) - k(\epsilon),
        \label{eq:lin_Hmin_ch_rule}
    \end{align}
    where the functions $g_1$, $g_2$, and $k$ are dependent solely on $\epsilon$ and $|A|$ (the dimension of the $A_k$ registers, assumed to be constant in $n$) and are independent of $n$. Furthermore, $g_1(\epsilon)$ and $g_2(\epsilon)$ are required to be \emph{small} functions of $\epsilon$, meaning they are continuous and approach $0$ as $\epsilon$ tends to $0$.\\

    Eq. \ref{eq:lin_Hmin_ch_rule} is intimately related to the concept of \emph{approximation chains} introduced in \cite{Marwah23}. For a state $\rho_{A_1^n B}$, a sequence of states $(\sigma_{A_1^k B}^{(k)})_{k=1}^n$ is called an $\epsilon$-approximation chain of $\rho$ if for every $1 \leq k\leq n$, we have $\rho_{A_1^k B} \approx_{\epsilon} \sigma_{A_1^k B}^{(k)}$. Often, direct bounds on the conditional entropies of the state $\rho$ are unavailable; however, they can be obtained for the entropies of its approximation chain states. For instance, consider a state $\rho_{A_1^n B}$ where the $A_k$ registers are produced sequentially. Ideally, each register $A_k$ should be sampled independently from the rest. However, say imperfections introduce minor correlations between $A_k$ and the earlier registers. In such a case, we might only be able to confirm that $\rho_{A_1^k B} \approx_\epsilon \rho_{A_k} \otimes \rho_{A_1^{k-1} B} =: \sigma_{A_1^k B}^{(k)}$. Despite these correlations, we expect the entropy of $A_1^n$ given $B$ to be large, as the $A_k$ registers are almost independent of $B$. In such cases, a chain rule-like tool relating the entropy of $\rho$ to the entropies of its approximation chain states would be helpful.\\
    
    The impossibility of a bound of the form in Eq. \ref{eq:lin_Hmin_ch_rule} implies that $H_{\min}^{\epsilon}(A_1^n | B)_\rho$ cannot be lower bounded meaningfully in terms of the min-entropies of the approximation chain states of $\rho$, since these approximation chain states can simply be states satisfying $H_{\min}(A_k | A_1^{k-1} B)_{\sigma^{(k)}} = H_{\min}^\epsilon(A_k | A_1^{k-1} B)_{\rho}$ for every $k$. On the other hand, the von Neumann entropy of a state can easily be bounded in terms of its approximation chain by using the continuity of the conditional von Neumann entropy \cite{Alicki04,Winter16} to modify Eq. \ref{eq:lin_von_N_ch_rule} and derive: 
    \begin{align}
        H(A_1^n | B)_\rho &\geq \sum_{k=1}^n H(A_k | A_1^{k-1} B)_{\sigma^{(k)}} - nf(\epsilon)
    \end{align}
    where $f(\epsilon) = O\rndBrk{\epsilon\log\frac{|A|}{\epsilon}}$. The absence of a comparable bound for the smooth min-entropy severely limits us\footnote{It is often useful to initially prove novel results using von Neumann entropies before translating them into the corresponding one-shot entropies. This approach separates the complexity of the problem into two distinct phases. The impossibility of Eq. \ref{eq:lin_Hmin_ch_rule}, in particular, prevents us from porting arguments based on the von Neumann entropy into smooth min-entropy arguments.}.\\

    There are multiple alternative definitions of the smooth min-entropy, which are equal to the one we defined above up to a constant \cite{Renner06,Tomamichel10,Anshu20}. One of these, is the $H^{\downarrow}_{\min}$ min-entropy and its smoothed variant $H^{\downarrow, \epsilon}_{\min}$, defined as:
    \begin{align}
        H^{\downarrow}_{\min}(A|B)_{\rho} &:= \max\curlyBrk{\lambda \in \mathbb{R}:\rho_{AB} \leq e^{-\lambda}\Id_A \otimes \rho_B} \\ 
        H^{\downarrow, \epsilon}_{\min}(A|B)_{\rho} &:= \sup_{\tilde{\rho}}H^{\downarrow}_{\min}(A|B)_{\tilde{\rho}}
    \end{align}
    where the supremum is over all subnormalised states $\tilde{\rho}_{AB}$ which are $\epsilon$-close to the state $\rho$ in the purified distance. \cite[Lemma 20]{Tomamichel10} showed that this smooth min-entropy is equal to $H^{ \epsilon}_{\min}$ up to a constant:
    \begin{align}
        H_{\min}^{\epsilon/2}(A_1^n|B)_{\rho} - O\rndBrk{\log \frac{1}{\epsilon}} \leq H_{\min}^{\downarrow, \epsilon}(A_1^n|B)_{\rho} \leq H_{\min}^{\epsilon}(A_1^n|B)_{\rho}.
    \end{align}
    One can now ask whether $H_{\min}^{\downarrow, \epsilon}$ satisfies a chain rule like Eq. \ref{eq:lin_Hmin_ch_rule}, that is, does
    \begin{align}
        H_{\min}^{\downarrow, g_1(\epsilon)}(A_1^n | B)_\rho &\geq \sum_{k=1}^n H_{\min}^{\downarrow, \epsilon}(A_k | A_1^{k-1} B)_{\rho} - ng_2(\epsilon) - k(\epsilon),
        \label{eq:lin_dn_Hmin_ch_rule}
    \end{align}
    hold true for some $g_1, g_2$ and $k$ as in Eq. \ref{eq:lin_Hmin_ch_rule}? Remarkably, we prove that this is indeed the case. Establishing this in Theorem \ref{th:Hmin_eps_chain_rule} is the first main result of this paper. We call this a \emph{universal chain rule} for the smooth min-entropy to emphasise the fact that it is true and meaningful for a constant $\epsilon \in (0,1)$ and an arbitrary $n \in \mathbb{N}$. This universal chain can be viewed as a smoothed generalisation of the chain rule for $H_{\min}^{\downarrow}$:
    \begin{align}
        H_{\min}^{\downarrow}(A_1^n | B)_\rho &\geq \sum_{k=1}^n H_{\min}^{\downarrow}(A_k | A_1^{k-1} B)_{\rho},
        \label{eq:Hmin_dn_non_sm_ch_rule}
    \end{align}
    which is well known and fairly simple to prove \cite[Proposition 5.5]{TomamichelBook16}. The universal chain rule in Eq. \ref{eq:lin_dn_Hmin_ch_rule} is particularly interesting because it can be used to decompose the smooth min-entropy (both $H_{\min}^{\epsilon}$ and $H_{\min}^{\downarrow, \epsilon}$) into a sum of conditional entropies, which are equally strong. As a corollary to this, we also prove a universal chain rule for the smooth max-entropy (Corollary \ref{cor:H0_eps_chain_rule}). \\

    We provide two proofs for this chain rule. We briefly describe the first proof technique here, since it is simpler and we use it to prove the approximate entropy accumulation theorem as well. This proof follows the entropic triangle inequality based approach developed in \cite{Marwah23} for analysing approximation chains. The entropic triangle inequality allows us to bound the smooth min-entropy of the state $\rho$ in Eq. \ref{eq:lin_dn_Hmin_ch_rule} with the min-entropy of an auxiliary state $\sigma$ as
    \begin{align}
        H^{\delta}_{\min}(A_1^n|B)_{\rho} \geq H_{\min}(A_1^n|B)_{\sigma} - D^{\delta}_{\max}(\rho|| \sigma).
    \end{align}
    Using the Generalised Golden Thompson inequality \cite{Sutter17}, we identify a state $\sigma_{A_1^n B}$ satisfying
    \begin{align}
        & D_m(\rho_{A_1^n B} || \sigma_{A_1^n B}) \leq n g(\epsilon) \label{eq:intro_sigma_D_bd}\\
        & H_{\min}(A_1^n|B)_{\sigma} \geq \sum_{k=1}^n H_{\min}^{\downarrow, \epsilon}(A_k | A_1^{k-1} B)_{\rho} \label{eq:intro_sigma_Hmin_bd}
    \end{align}
    where $D_m$ is the measured relative entropy and $g(\epsilon)$ is a small function of $\epsilon$. Selecting an appropriate state $\sigma$ is the most non-trivial part of the proof. The bound in Eq. \ref{eq:intro_sigma_D_bd} can further be transformed into a smooth max-relative entropy bound using the substate theorem \cite{Jain02, Jain11}. Putting these bounds together into the triangle inequality then yields the universal chain rule. \\

    The second major result in our paper is an \emph{unstructured} approximate entropy accumulation theorem (Theorem \ref{th:approx_EAT}). The entropy accumulation theorem (EAT) \cite{Dupuis20}, similar to a chain rule, decomposes the smooth min-entropy of a large system into a sum of the entropies of its parts. EAT can only be used for states $\rho_{A_1^n B_1^n E}$ satisfying the Markov chain $A_1^{k-1} \leftrightarrow B_1^{k-1}E \leftrightarrow B_k$ for every $k \in [n]$, which are produced by applying maps $\cM_k : R_{k-1}\rightarrow A_k B_k R_k$ sequentially so that $\rho_{A_1^n B_1^n E} = \tr_{R_k} \circ \cM_n \circ \cdots \circ \cM_1 (\rho^{(0)}_{R_0 E})$. In Theorem \ref{th:approx_EAT}, we significantly relax the conditions on the structure of the state required to obtain an EAT like bound. We show that for any state as long as for every $k \in [n]$: $\rho_{A_1^k B_1^k E} \approx_{\epsilon} \cN_k(\tilde{\rho}^{(k)}_{A_1^{k-1} B_1^{k-1} E R_k})$ for some state $\tilde{\rho}^{(k)}_{A_1^{k-1} B_1^{k-1} E R_k}$ and a channel $\cN_k: R_k \rightarrow A_k B_k$ which samples $B_k$ independent of the previous registers\footnote{It is worth noting two points. First, this condition is typically satisfied in applications of EAT. Second, if one strengthens the Markov chain condition for EAT, such that the Markov chain $A_1^{k-1} \leftrightarrow B_1^{k-1}E \leftrightarrow B_k$ holds for all inputs to the channels $\cM_k$, then this seems to reduce to the independence condition used in this context.}, we have the bound 
    \begin{align}
        H^{\tilde{O}(\epsilon^{1/6})}_{\min} (A_1^n | B_1^n E)_\rho &\geq \sum_{k=1}^n \inf_{\omega_{R_k \tilde{R}_k}} H (A_k | B_k \tilde{R}_k)_{\cM_k(\omega)}  - n \tilde{O}(\epsilon^{1/12}) - \tilde{O}\rndBrk{\frac{1}{\epsilon^{5/12}}}
    \end{align}
    where $\tilde{O}$ hides the logarithmic factors in $\epsilon$ and the infimum is over all states $\omega_{R_k \tilde{R}_k}$. This bound holds irrespective of the process which produces the state. In fact, the central motivation behind this theorem was proving the security of parallel device independent quantum key distribution (DIQKD). We do this in our upcoming companion paper \cite{Marwah24-DIQKD}. The proof approach for this theorem is very similar to that of the universal chain rule.


\end{sloppypar}

\section{Background}
\begin{sloppypar}
\subsection{Notation}

For a classical probability distribution $p_{AB}$, the conditional probability distribution $p_{A|B}$ is defined as $p_{A|B}(a|b) := \frac{p_{AB}(a,b)}{p_B(b)}$ when $p_B(b) >0$. For the case when $p_B(b) =0$, we define $p_{A|B}$ to be the uniform distribution for our purposes. The probability distribution $q_B p_{A|B}$ for a distribution $q$ on random variable $B$ is defined as $q_B p_{A|B} (b,a) := q_B(b) p_{A|B}(a|b)$.\\

The term normalised (subnormalised) quantum states is used for positive semidefinite operators with unit trace (trace less than $1$). We denote the set of registers a quantum state describes (equivalently, its Hilbert space) using a subscript. Partial states of a quantum state are simply denoted by restricting the set of registers in the subscript. For example, if $\rho_{AB}$ denotes a quantum state on registers $A$ and $B$, then the partial states on registers $A$ and $B$, will be denoted as $\rho_{A}$ and $\rho_{B}$ respectively. \\

The term ``channel'' is used for completely positive trace preserving (CPTP) linear maps between two spaces of Hermitian operators. A channel $\cN$ mapping registers $A$ to $B$ will be denoted by $\cN_{A \rightarrow B}$. \\

We use the notation $[n]$ to denote the set $\{1,2, \cdots, n\}$. For $n$ quantum registers $(X_1, X_2, \cdots, X_n)$, the notation $X_i^j$ for $i <j$ refers to the set of registers $(X_i, X_{i+1}, \cdots, X_{j})$. For a register $A$, $|A|$ represents the dimension of the underlying Hilbert space. \\

For two Hermitian operators $X$ and $Y$, the operator inequality $X \geq Y$ is used to denote that $X-Y$ is a positive semidefinite operator and $X>Y$ denotes that $X-Y$ is a strictly positive operator. The notation $X \ll Y$ denotes that the support of operator $X$ is contained in the support of $Y$. The identity operator on register $A$ is denoted using $\Id_A$.\\

\subsection{Information theory}
The trace norm is defined as $\norm{X}_1 := \tr\big(\rndBrk{X^\dag X}^{\frac{1}{2}}\big)$. The fidelity between two positive operators $P$ and $Q$ is defined as $F(P,Q)= \norm{\sqrt{P}\sqrt{Q}}_1^2$. The generalised fidelity between two subnormalised states $\rho$ and $\sigma$ is defined as 
\begin{align}
    F_\ast(\rho, \sigma) := \rndBrk{\norm{\sqrt{\rho}\sqrt{\sigma}}_1 + \sqrt{(1- \tr\rho)(1- \tr\sigma)}}^2.
\end{align}
The purified distance between two subnormalised states $\rho$ and $\sigma$ is defined as 
\begin{align}
    P(\rho, \sigma) = \sqrt{1- F_{\ast}(\rho, \sigma)}.
\end{align}
Throughout this paper, we use base $e$ for both the functions $\log$ and $\exp$. Therefore, all the entropies in this paper are in \emph{nat} units. We follow the notation in Tomamichel's book~\cite{TomamichelBook16} for R\'enyi entropies. The sandwiched $\alpha$-R\'enyi relative entropy for $\alpha \in [\frac{1}{2},1) \cup (1,\infty]$ between the positive operator $P$ and $Q$ is defined as 
\begin{align}
    \tilde{D}_{\alpha}(P || Q) = \begin{cases}
        \frac{1}{\alpha-1} \log \frac{\tr(Q^{-\frac{\alpha'}{2}} P Q^{-\frac{\alpha'}{2}})^{\alpha}}{\tr(P)} & \text{ if } (\alpha<1 \text{ and } P \not\perp Q) \text{ or } (P \ll Q)\\
        \infty & \text{ else}.
    \end{cases}
\end{align}
where $\alpha' = \frac{\alpha-1}{\alpha}$. In the limit $\alpha \rightarrow 
\infty$, the sandwiched divergence becomes equal to the max-relative entropy, $D_{\max}$, which is defined as 
\begin{align}
    D_{\max} (P|| Q) := \inf \curlyBrk{\lambda \in \mathbb{R}: P \leq e^{\lambda}Q}.
\end{align}
In the limit of $\alpha \rightarrow 1$, the sandwiched relative entropies equals the quantum relative entropy, $D(P ||Q)$, which is defined as 
\begin{align}
    D(P|| Q) := \begin{cases}
        \frac{\tr\rndBrk{P \log P - P \log Q}}{\tr(P)} & \text{ if } (P \ll Q)\\
        \infty & \text{ else}.
    \end{cases}
\end{align}
In this work, we will also need the measured relative entropy $D_m$. For two positive operators $P$ and $Q$, it is defined as
\begin{align}
    D_m (P|| Q) := \sup_{\cM} D(\cM(P) || \cM(Q))
\end{align}
where the supremum is taken over all measurement channels $\cM$, that is, a channel such that $\cM(\rho) = \sum_{x \in \mathcal{X}} \tr(M_x \rho) \ket{x}\bra{x}$ for some POVM elements $(M_x)_x$ and an alphabet $\mathcal{X}$.\\

We can use the sandwiched divergence to define the following conditional entropies for a subnormalised state $\rho_{AB}$:
\begin{align}
    \tilde{H}_{\alpha}^{\uparrow} (A|B)_{\rho} &:= \max_{\sigma_B} - \tilde{D}_{\alpha}(\rho_{AB} || \Id_A \otimes \sigma_B) \\
    \tilde{H}_{\alpha}^{\downarrow} (A|B)_{\rho} &:= - \tilde{D}_{\alpha}(\rho_{AB} || \Id_A \otimes \rho_B)
\end{align}
for $\alpha \in [\frac{1}{2},1) \cup (1,\infty]$. The maximum in the definition for $ \tilde{H}_{\alpha}^{\uparrow}$ is over all quantum states $\sigma_B$ on register $B$. \\

For $\alpha \rightarrow 1$, both these conditional entropies are equal to the von Neumann conditional entropy $H(A|B)$. $\tilde{H}_{\infty}^{\uparrow} (A|B)_{\rho}$ is usually called the min-entropy and is denoted as $H_{\min}(A|B)_{\rho}$. For a subnormalised state, it can also be defined as 
\begin{align}
    H_{\min}(A|B)_{\rho} &:= \max \curlyBrk{\lambda \in \mathbb{R}: \text{ there exists state }\sigma_B \text{ such that } \rho_{AB} \leq e^{-\lambda} \Id_A \otimes \sigma_B}.
\end{align}  
Another version of the min-entropy, that will be relevant in this paper is given by 
$\tilde{H}_{\infty}^{\downarrow} (A|B)_{\rho}$. We denote it in this paper using $H^{\downarrow}_{\min}(A|B)_{\rho}$. It can alternatively be defined as
\begin{align}
    H^{\downarrow}_{\min}(A|B)_{\rho} :=& \max \curlyBrk{\lambda \in \mathbb{R}:  \rho_{AB} \leq e^{-\lambda} \Id_A \otimes \rho_B}\\
    =&  - \log \norm{\rho_B^{-\frac{1}{2}} \rho_{AB} \rho_B^{-\frac{1}{2}}}_{\infty} 
\end{align} 
For a classical distribution $p_{AB}$, the min-entropies above simplify to 
\begin{align}
    & H_{\min}(A|B)_{p} = - \log \sum_{b} p(b)\max_{a}p(a|b) \label{eq:Hmin_cl_defn}\\
    & H^{\downarrow}_{\min}(A|B)_p = - \log \max_{a,b} p(a|b). \label{eq:Hmin_down_cl_defn}
\end{align}
The expressions for these entropies in the classical case clearly demonstrates the difference between $H_{\min}$ and $H_{\min}^{\downarrow}$. Since $H_{\min}$ is the negative log of the guessing probability \cite{Konig09}, it averages over the values of $B$. Whereas $H_{\min}^{\downarrow}$ selects the worst possible value of $B$ and then calculates the guessing probability for this value\footnote{Note that $H_{\min}^{\downarrow}$ is not continuous.}. \\

The entropy $\tilde{H}_{1/2}^{\uparrow} (A|B)_{\rho}$ is commonly referred to as the max-entropy and is denoted as $H_{\max}(A|B)_{\rho}$. It can also be written as
\begin{align}
    H_{\max}(A|B)_{\rho} := \max_{\sigma_B} \log F(\rho_{AB}, \Id_A \otimes \sigma_B)
\end{align}
where the maximum is over all states $\sigma_B$. \cite{Renner06,Tomamichel10} also define the following alternative max-entropy
\begin{align}
  \bar{H}^{\uparrow}_{0}(A|B)_{\rho} :=& \max_{\sigma_B} \log \tr \rndBrk{\rho_{AB}^0 \sigma_B} \\
  =& \log \norm{\tr_A \rho_{AB}^0}_{\infty}
\end{align}
where $\rho_{AB}^0$ is the projector on the support of $\rho_{AB}$ and the maximum in the first line is over all states $\sigma_B$ on register $B$.\\

For the purpose of smoothing, we define the $\epsilon$-ball around a subnormalised state $\rho$ as the set
\begin{align}
    B_{\epsilon}(\rho) = \{ \tilde{\rho} \geq 0 : P(\rho, \tilde{\rho}) \leq \epsilon \text{ and } \tr\tilde{\rho} \leq 1\}.
\end{align}
The smooth max-relative entropy is defined as 
\begin{align}
    D_{\max}^{\epsilon}(\rho || \sigma) := \min_{\tilde{\rho} \in B_{\epsilon}(\rho)} D_{\max}(\tilde{\rho} || \sigma)
\end{align}
We define the two version of smooth min-entropy of $\rho_{AB}$ using the min-entropies defined above as 
\begin{align}
    H_{\min}^{\epsilon}(A|B)_{\rho} := \max_{\tilde{\rho} \in B_{\epsilon}(\rho)} H_{\min}(A|B)_{\tilde{\rho}} \\
    H_{\min}^{\downarrow, \epsilon}(A|B)_{\rho} := \sup_{\tilde{\rho} \in B_{\epsilon}(\rho)} H^{\downarrow}_{\min}(A|B)_{\tilde{\rho}}.
\end{align} 
In \cite[Lemma 20]{Tomamichel10} it is shown that these two are related as 
\begin{align}
    H_{\min}^{\epsilon}(A|B)_{\rho} - \log\rndBrk{\frac{2}{\epsilon^2} + \frac{1}{1-\epsilon}} \leq H_{\min}^{\downarrow, 2\epsilon}(A|B)_{\rho} \leq H_{\min}^{2\epsilon}(A|B)_{\rho}.
    \label{eq:Hmin_up_dn_reln}
\end{align}
That is, these entropies are essentially the same up to a constant term and a constant factor in the smoothing parameter. \\

\noindent We define the smooth max-entropy as 
\begin{align}
  H_{\max}^{\epsilon}(A|B)_{\rho} := \min_{\tilde{\rho} \in B_{\epsilon}(\rho)} H_{\max}(A|B)_{\tilde{\rho}}
\end{align}
Following \cite{Tomamichel10}, we define the smoothed version of $\bar{H}^{\uparrow}_{0}$ as 
\begin{align}
  \bar{H}^{\uparrow, \epsilon}_{0}(A|B)_{\rho} := \inf \curlyBrk{ \bar{H}^{\uparrow}_{0}(A|B')_{\tilde{\rho}} : \text{ for a Hilbert space } B' \geq B \text{ and } \tilde{\rho}_{AB'} \in B_{\epsilon}(\rho_{AB'})}.
\end{align}
The notation $B'\geq B$ for Hilbert spaces represents that $B$ is a Hilbert subspace of the Hilbert space $B'$. $\rho_{AB'}$ above is an embedding of the state $\rho_{AB}$ in $AB'$. It is sufficient to restrict the dimension of $B'$ above to $|A|^2 |B|$. \\

For a pure state $\rho_{ABC}$, the following duality relations are shown in \cite{Tomamichel10}:
\begin{align}
  & H^{\epsilon}_{\min} (A|B) =  -H_{\max}^{\epsilon} (A|C) \\
  & H^{\downarrow, \epsilon}_{\min} (A|B) =  -\bar{H}^{\uparrow, \epsilon}_{0} (A|C)
  \label{eq:Hmin_dn_dual}
\end{align}
As a result, the following relation between the two smooth max-entropies can be derived from Eq. \ref{eq:Hmin_up_dn_reln} (\cite[Lemma 20]{Tomamichel10}):
\begin{align}
    H_{\max}^{2\epsilon}(A|B)_{\rho} \leq \bar{H}^{\uparrow, \epsilon}_{0}(A|B)_{\rho} \leq H_{\max}^{\epsilon}(A|B)_{\rho} + \log\rndBrk{\frac{2}{\epsilon^2} + \frac{1}{1-\epsilon}}.
    \label{eq:Hmax_up_dn_reln}
\end{align}

\end{sloppypar}

\subsection{Key lemmas and theorems}
\begin{sloppypar}
In this section, we state some important results, which we use throughout this paper. As stated in the introduction, the entropic triangle inequalities developed in \cite{Marwah23} are central to analysing approximation chains and hence to this work. The following lemma states the two inequalities we use. 

\begin{lemma}[Entropic triangle inequality {\cite[Introduction and Lemma 3.5]{Marwah23}}]
    For two states $\rho_{AB}$ and $\eta_{AB}$ and $\epsilon \in [0,1)$, we have
    \begin{align}
        H_{\min}^{\epsilon}(A|B)_{\rho} \geq H_{\min}(A|B)_{\eta} - D^{\epsilon}_{\max}(\rho_{AB} || \eta_{AB}).
        \label{eq:ent_tri_ineq_simp}
    \end{align}
    Further, for $\alpha \in (1,2]$ and $\delta \in (0,1)$ such that $\epsilon + \delta < 1$, we also have 
    \begin{align}
        H_{\min}^{\epsilon+\delta}(A|B)_{\rho} &\geq \tilde{H}^{\uparrow}_{\alpha}(A|B)_{\eta} - \frac{\alpha}{\alpha-1} D^\epsilon_{\max} (\rho_{AB}|| \eta_{AB}) - \frac{g_1(\delta, \epsilon)}{\alpha-1}
        \label{eq:ent_tri_ineq_alpha_renyi}
    \end{align}
    where $g_1(x, y):= - \log(1- \sqrt{1-x^2}) - \log (1-y^2)$. 
    \label{lemm:ent_tri_ineq}
\end{lemma}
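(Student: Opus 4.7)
The first inequality is the direct one: my plan is to unfold the definition of $D_{\max}^\epsilon(\rho \| \eta)$ to get a subnormalised $\tilde\rho \in B_\epsilon(\rho)$ with $\tilde\rho_{AB} \le e^\lambda \eta_{AB}$ where $\lambda := D_{\max}^\epsilon(\rho\|\eta)$, and to unfold the definition of $H_{\min}(A|B)_\eta$ to obtain a state $\sigma_B$ with $\eta_{AB} \le e^{-H_{\min}(A|B)_\eta} \Id_A \otimes \sigma_B$. Composing the two operator inequalities gives $\tilde\rho_{AB} \le e^{\lambda - H_{\min}(A|B)_\eta} \Id_A \otimes \sigma_B$, hence $H_{\min}(A|B)_{\tilde\rho} \ge H_{\min}(A|B)_\eta - \lambda$; since $\tilde\rho \in B_\epsilon(\rho)$, taking this as a feasible candidate in the definition of the smooth min-entropy closes the argument.

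For the second inequality I would again start from an optimal smoothing $\tilde\rho \in B_\epsilon(\rho)$ with $\tilde\rho_{AB} \le e^\lambda \eta_{AB}$, and compare the sandwiched $\alpha$-Rényi relative entropies of $\tilde\rho$ and $\eta$ against any reference $\Id_A \otimes \sigma_B$. Conjugating the operator inequality by $(\Id_A \otimes \sigma_B)^{-\alpha'/2}$ preserves the order, and since $\alpha > 1$ the map $X \mapsto \tr X^\alpha$ is monotone in the Löwner order on positive operators (eigenvalues are pointwise comparable), giving
\begin{align}
\tr\bigl[(\sigma_B^{-\alpha'/2}\tilde\rho\,\sigma_B^{-\alpha'/2})^\alpha\bigr] \le e^{\alpha\lambda}\, \tr\bigl[(\sigma_B^{-\alpha'/2}\eta\,\sigma_B^{-\alpha'/2})^\alpha\bigr].
\end{align}
Dividing by $\tr\tilde\rho$ (and multiplying/dividing by $\tr\eta = 1$) and applying $\frac{1}{\alpha-1}\log(\cdot)$ gives
\begin{align}
\tilde D_\alpha(\tilde\rho \| \Id_A \otimes \sigma_B) \le \tilde D_\alpha(\eta \| \Id_A \otimes \sigma_B) + \frac{\alpha}{\alpha-1}\lambda - \frac{1}{\alpha-1}\log\tr\tilde\rho.
\end{align}
Maximising over $\sigma_B$ yields $\tilde H^\uparrow_\alpha(A|B)_{\tilde\rho} \ge \tilde H^\uparrow_\alpha(A|B)_\eta - \tfrac{\alpha}{\alpha-1}\lambda + \tfrac{1}{\alpha-1}\log\tr\tilde\rho$.

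Next I would invoke the standard smooth min-entropy / Rényi bound (Tomamichel-type): there exists $\hat\rho \in B_\delta(\tilde\rho)$ such that $H_{\min}(A|B)_{\hat\rho} \ge \tilde H^\uparrow_\alpha(A|B)_{\tilde\rho} - \tfrac{1}{\alpha-1}\bigl(-\log(1-\sqrt{1-\delta^2})\bigr)$; this is the place where the first term of $g_1(\delta,\epsilon)$ appears and is essentially obtained by truncating the spectrum of $(\sigma_B^{-\alpha'/2}\tilde\rho\,\sigma_B^{-\alpha'/2})$ at a level set. The triangle inequality for purified distance then ensures $\hat\rho \in B_{\epsilon+\delta}(\rho)$, so $H_{\min}^{\epsilon+\delta}(A|B)_\rho \ge H_{\min}(A|B)_{\hat\rho}$. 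Finally, I would control $\tr\tilde\rho$ via the Fuchs–van de Graaf style estimate $\tr\tilde\rho \ge F(\rho,\tilde\rho) \ge 1-\epsilon^2$ (using $\tr\rho = 1$ and Cauchy–Schwarz on $\|\sqrt\rho\sqrt{\tilde\rho}\|_1$), which contributes the $-\log(1-\epsilon^2)$ piece of $g_1(\delta,\epsilon)$.

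Stitching these four ingredients together produces exactly the claimed inequality. The only genuinely non-trivial step is the Rényi-to-smooth-min-entropy reduction; the rest is an operator-monotonicity calculation plus bookkeeping of the subnormalisation through the purified distance. The factor $\tfrac{\alpha}{\alpha-1}$ is forced by step where raising to the $\alpha$-th power inside the trace converts $e^\lambda$ into $e^{\alpha\lambda}$ before the $\tfrac{1}{\alpha-1}\log$ is applied, and keeping track of this factor is the main thing one has to be careful about.
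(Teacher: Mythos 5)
The paper does not reprove this lemma; it cites it from \cite[Introduction and Lemma 3.5]{Marwah23}, so there is no in-paper proof to compare against. That said, your argument is correct and is the same argument used in that reference: the first inequality follows by composing the two operator inequalities coming from the optimisers of $D_{\max}^\epsilon(\rho\|\eta)$ and $H_{\min}(A|B)_\eta$, and the second is the ``weak triangle inequality'' for $\tilde D_\alpha$ (producing the $\tfrac{\alpha}{\alpha-1}$ factor and the $-\tfrac{1}{\alpha-1}\log\tr\tilde\rho$ normalisation penalty) followed by the Rényi-to-smooth-min-entropy reduction and the triangle inequality for the purified distance.

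Two small remarks on rigor worth making explicit if you write this up. First, the step ``$\tr X^\alpha$ is monotone in the Löwner order'' is not operator monotonicity of $t\mapsto t^\alpha$ (which fails for $\alpha>1$); it is Weyl's eigenvalue monotonicity ($0\le X\le Y$ forces $\lambda_k^\downarrow(X)\le\lambda_k^\downarrow(Y)$ for all $k$) combined with the scalar monotonicity of $t\mapsto t^\alpha$. You gesture at this parenthetically, but it is exactly the kind of point a referee will ask about, so spell it out. Second, the Rényi-to-smooth-min bound you invoke, $H_{\min}^\delta(A|B)_{\tilde\rho}\ge \tilde H^\uparrow_\alpha(A|B)_{\tilde\rho} + \tfrac{1}{\alpha-1}\log(1-\sqrt{1-\delta^2})$, must be cited in its subnormalised form since your $\tilde\rho$ is subnormalised; the version in Tomamichel's book is stated for subnormalised states, so this is fine, but the citation should make that clear, and this is also where the restriction $\alpha\in(1,2]$ in the lemma statement enters.

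Finally, the bookkeeping of $g_1(\delta,\epsilon)=-\log(1-\sqrt{1-\delta^2})-\log(1-\epsilon^2)$ checks out: the first summand is the Rényi-to-min-entropy smoothing cost, the second is the $\tr\tilde\rho\ge F(\rho,\tilde\rho)\ge 1-\epsilon^2$ normalisation penalty, and the $\tfrac{\alpha}{\alpha-1}$ prefactor on $D_{\max}^\epsilon$ is, as you say, forced by raising the operator inequality to the $\alpha$-th power before dividing by $\alpha-1$ in the log.
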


The quantum substate theorem stated below is one of the major results used to bound the smooth max-relative entropy in this paper. This theorem serves as a tool to convert bounds on the measured relative entropy to bounds on the smooth max-relative entropy. 

\begin{theorem}[Quantum substate theorem \cite{Jain02,Jain11}]
    \label{thm:substate_th}
    Let $\rho$ and $\sigma$ be two normalised states on the same Hilbert space. Then for any $\epsilon \in (0,1)$, we have
    \begin{align}
        D^{\sqrt{\epsilon}}_{\max}(\rho || \sigma) \leq \frac{D_m (\rho || \sigma)+1}{\epsilon} + \log \frac{1}{1-\epsilon}.
        \label{eq:subset_th_eq2}
    \end{align}
\end{theorem}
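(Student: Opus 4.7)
\begin{sloppypar}
The plan is to reduce the theorem to a classical substate statement and then lift it to the quantum level via a POVM that nearly achieves the supremum in $D_m(\rho \| \sigma)$. For the classical step, given probability distributions $p, q$ on a finite alphabet, I would define the bad set $S_\lambda := \{x : p(x) > e^\lambda q(x)\}$; the truncated subdistribution $\tilde{p}(x) := p(x)\, \mathbf{1}_{x \notin S_\lambda}$ is pointwise dominated by $e^\lambda q$, so $D_{\max}(\tilde{p} \| q) \leq \lambda$. A Markov-style inequality applied to the random variable $\log(p(X)/q(X))$ with $X \sim p$ controls $p(S_\lambda)$ in terms of $D(p \| q)$: choosing $\lambda = (D(p\|q) + 1)/\epsilon + \log(1/(1-\epsilon))$ forces $p(S_\lambda) \leq \epsilon$, which yields a classical substate statement with total-variation defect $\epsilon$.

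For the quantum step, I would fix (for arbitrary $\eta > 0$) a POVM $\{M_x\}$ such that $D(\mathcal{M}(\rho) \| \mathcal{M}(\sigma)) \geq D_m(\rho\|\sigma) - \eta$, and apply the classical result to the outcome distributions $p_x := \tr(M_x \rho)$ and $q_x := \tr(M_x \sigma)$. It then remains to construct a subnormalised $\tilde{\rho}$ that is $\sqrt{\epsilon}$-close to $\rho$ in purified distance and satisfies the operator inequality $\tilde{\rho} \leq e^\lambda \sigma$. The square root in $\sqrt{\epsilon}$ is characteristic of this passage: a classical $\epsilon$-sized defect in trace distance of the outcome distributions corresponds, via Fuchs--van~de~Graaf, to a $\sqrt{\epsilon}$-sized defect in purified distance once lifted coherently, while the factor $1/\epsilon$ on the right is inherited directly from the Markov step.

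The main obstacle is precisely this lifting step. A naive candidate such as $\tilde{\rho} := \sqrt{N}\, \rho \sqrt{N}$ with $N := \sum_{x \notin S_\lambda} M_x$ produces a positive operator close to $\rho$, but the classical bounds $\tr(M_x \tilde{\rho}) \leq e^\lambda \tr(M_x \sigma)$ for the good indices do not imply the operator inequality $\tilde{\rho} \leq e^\lambda \sigma$, since classical testing against a single POVM is strictly weaker than an operator relation. Following \cite{Jain02, Jain11}, I would therefore exploit the minimax/SDP structure of the problem, together with a coherent ``chimera''-style construction built from the joint spectral data of $\rho$ relative to $\sigma$, to certify both the purified-distance closeness and the operator inequality simultaneously, rather than trying to read the substate off from the classical post-measurement distribution.
\end{sloppypar}
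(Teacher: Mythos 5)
The paper's proof of this theorem is not a re-derivation of the substate theorem at all: it simply quotes the version of \cite[Theorem 1]{Jain11} that bounds $D^{\sqrt{\epsilon}}_{\max}(\rho\|\sigma)$ in terms of the \emph{observational divergence}
$D_{\mathrm{obs}}(\rho\|\sigma) := \sup\{\tr(P\rho)\log\tfrac{\tr(P\rho)}{\tr(P\sigma)} : 0 \le P \le \Id\}$,
and then makes a one-line observation — by testing $D_m$ against the binary measurement $(P,\Id-P)$ and dropping the nonnegative second log — that $D_{\mathrm{obs}}(\rho\|\sigma) \le D_m(\rho\|\sigma) + 1$. That scalar inequality, plugged into the quoted $D_{\mathrm{obs}}$-bound, is the entire proof. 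Your proposal instead tries to rebuild the full substate theorem from scratch (classical Markov step, then a POVM achieving $D_m$, then a quantum lift), which is a genuinely different and considerably longer route to the same statement.

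The problem is that your route, as written, has a real gap, and you acknowledge it yourself: passing from scalar inequalities $\tr(M_x\tilde\rho) \le e^{\lambda}\tr(M_x\sigma)$ for the POVM elements to the operator inequality $\tilde\rho \le e^{\lambda}\sigma$ does not follow, and you end by deferring to the ``chimera''/minimax machinery of \cite{Jain02,Jain11}. That deferral is not a proof; it is precisely the hard part of the Jain--Nayak argument, and re-running it with $D_m$ in place of $D$ requires checking where in their argument the divergence actually enters. What your proposal misses is that you do not need to re-derive anything: the Jain--Nayak result is already stated in terms of $D_{\mathrm{obs}}$, which is \emph{one scalar functional of a single effect $P$}, and is therefore trivially dominated by $D_m + 1$ (take the measurement $\{P, \Id-P\}$). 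Once you use $D_{\mathrm{obs}}$ as the interface, all the quantum lifting has already been done for you, and the remaining work is a two-line calculation. Also note that your classical step targets $D(p\|q)$ rather than $D_{\mathrm{obs}}$, which is a further mismatch with the quantum statement you ultimately need to invoke.
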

Usually, the above theorem is stated with the relative entropy $D (\rho || \sigma)$ on the right-hand side instead of the measured relative entropy $D_m (\rho || \sigma)$. Since, in the proofs in this paper we are only able to derive bounds on the measured relative entropy, we need the stronger version stated above. It is, therefore, instructive to understand how it is possible to use $D_m (\rho || \sigma)$ in the bound above. \\

\cite{Jain02,Jain11} actually prove a bound on $D^{\sqrt{\epsilon}}_{\max}(\rho || \sigma)$ in terms of a divergence they call the \emph{observational divergence} $D_{obs}$, which is defined as 
\begin{align}
    D_{obs}(\rho || \sigma) := \sup\curlyBrk{\tr(P\rho) \log \frac{\tr(P\rho)}{\tr(P\sigma)} : \text{ for } 0\leq P\leq \Id}.
\end{align}
\cite[Theorem 1]{Jain11} proves that 
\begin{align}
    D^{\sqrt{\epsilon}}_{\max}(\rho || \sigma) \leq \frac{D_\text{obs} (\rho || \sigma)}{\epsilon} + \log \frac{1}{1-\epsilon}.
    \label{eq:subset_th_obs_div}
\end{align}
Observe that for any binary measurement $(P, \Id - P)$, we have 
\begin{align*}
    D_m(\rho || \sigma)
    &\geq \tr(P\rho) \log \frac{\tr(P\rho)}{\tr(P\sigma)} + (1-\tr(P\rho)) \log \frac{1- \tr(P\rho)}{1-\tr(P\sigma)} \\
    &\geq \tr(P\rho) \log \frac{\tr(P\rho)}{\tr(P\sigma)} + (1-\tr(P\rho)) \log \rndBrk{1- \tr(P\rho)} \\
    &\geq \tr(P\rho) \log \frac{\tr(P\rho)}{\tr(P\sigma)} - 1 
\end{align*}
which implies, that 
\begin{align}
    D_{obs}(\rho || \sigma) \leq D_m(\rho || \sigma) +1.
\end{align}
This allows us to use $D_m(\rho || \sigma)$ in the bound for the substate theorem. \\

In addition to the quantum substate theorem, the following generalisation of the Golden-Thompson (GT) inequality is another major result enabling the proof approach used in this paper. 

\begin{theorem}[Generalised Golden-Thompson (GT) Inequality {\cite{Sutter17}}]
    \label{thm:gen_golden_thompson}
    For a collection of Hermitian matrices $\{H_k\}_{k=1}^n$, we have 
    \begin{align}
        \tr\exp{\rndBrk{\sum_{k=1}^n H_k}} \leq \int_{-\infty}^{\infty} dt \beta_0 (t) \tr\rndBrk{e^{H_n} e^{\frac{1-it}{2}H_{n-1}}\cdots\ e^{\frac{1-it}{2}H_{2}} e^{H_1} e^{\frac{1+it}{2}H_{2}} \cdots\ e^{\frac{1+it}{2}H_{n-1}}}
    \end{align}
    where $\beta_0 (t) := \frac{\pi}{2} (\cosh (\pi t) + 1)^{-1}$ is a probability density function. 
\end{theorem}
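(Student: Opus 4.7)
My plan is to reduce this multivariate trace inequality to a comparison of Schatten norms via the Lie--Trotter product formula, and then to obtain the weighted integral bound by a complex interpolation argument in the style of Stein--Hirschman. By Lie--Trotter,
\begin{align*}
    \tr \exp\left(\sum_{k=1}^n H_k\right) = \lim_{m \to \infty} \left\| \prod_{k=1}^n e^{H_k/m} \right\|_m^m,
\end{align*}
so the task reduces to controlling Schatten $m$-norms of the products $\prod_k e^{H_k/m}$ uniformly in $m$ by an expression of the desired twisted cyclic shape, and then passing to the limit $m \to \infty$.

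To introduce the complex parameter, I would consider an analytic family of operators indexed by $z$ in the vertical strip $\{z \in \mathbb{C} : 0 \le \Re z \le 1\}$. A convenient choice (after absorbing the $1/m$ factor from Lie--Trotter into each $H_k$) is
\begin{align*}
    F(z) := e^{H_n/2} \left(\prod_{k=n-1}^{2} e^{(1-z)H_k/2}\right) e^{H_1} \left(\prod_{k=2}^{n-1} e^{(1-\bar z)H_k/2}\right) e^{H_n/2},
\end{align*}
arranged so that at the interior point $z = 1$ the family collapses to a Hermitian product whose Schatten norm (taken to the right power) reproduces $\tr \exp(\sum_k H_k)$ in the Lie--Trotter limit, while at the imaginary boundary $z = it$ the operator reduces exactly to the cyclic product $e^{H_n} e^{(1-it)H_{n-1}/2} \cdots e^{H_1} \cdots e^{(1+it)H_{n-1}/2}$ appearing on the right-hand side of the theorem. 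Standard Stein interpolation then yields a three-lines bound on $\|F(\theta)\|_p$ in terms of its boundary norms.

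The decisive ingredient is Hirschman's strengthening of the three-lines lemma: for analytic families on the strip with mild growth at infinity, one has a \emph{log-concave} integral bound of the form $\log \|F(\theta)\|_{p_\theta} \le \int \alpha_\theta(t) \log \|F(it)\|_{p_0}\, dt + \int \beta_\theta(t) \log \|F(1+it)\|_{p_1}\, dt$ with Poisson-kernel-type densities on the strip. Selecting the interior line so that only the imaginary-boundary contribution survives (the other boundary gives a trivial norm by Hermiticity) and explicitly computing the Poisson kernel yields exactly $\beta_0(t) = \tfrac{\pi}{2}(\cosh(\pi t) + 1)^{-1}$. Exponentiating, using Jensen's inequality against the probability density $\beta_0$ to pull the exponential inside the integral, and then taking $m \to \infty$ finishes the argument.

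The main obstacle is twofold. First, choosing the analytic family so that the interior and boundary values match precisely the \emph{asymmetric} ordering in the statement, where $H_1$ and $H_n$ are singled out while the remaining $H_k$ carry the twisted weight $(1 \pm it)/2$, requires careful bookkeeping: fully symmetric choices produce cleaner but differently-shaped cyclic variants, so the family must be set up with the endpoints treated specially. Second, justifying the interchange of the Lie--Trotter limit with the $\beta_0$-weighted integral requires uniform operator-norm control of the form $\|e^{z H_k / (2m)}\| \le e^{\|H_k\| / (2m)}$ across the strip, which dominates the integrand and permits dominated convergence; this is routine but must be tracked to ensure that boundary traces remain integrable against $\beta_0(t)$ despite the exponential factors.
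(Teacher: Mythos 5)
Your high-level plan (Lie--Trotter plus a Stein--Hirschman interpolation on the strip, with $\beta_0$ arising as a Poisson kernel) is indeed the machinery that \cite{Sutter17} uses to establish the result, so as a re-derivation of their Corollary 3.3 the strategy is on the right track. However, the paper's own proof is much shorter and quite different in spirit: it simply invokes \cite[Corollary 3.3]{Sutter17} directly for the Schatten $2$-norm, namely $\log\norm{\exp(\tfrac{1}{2}\sum_k H_k)}_2 \le \int dt\, \beta_0(t) \log\norm{\prod_k \exp(\tfrac{1+it}{2}H_k)}_2$, then expands each $\norm{\cdot}_2^2$ as a trace (the cyclic structure and the collapse $e^{\frac{1-it}{2}H_1}e^{\frac{1+it}{2}H_1}=e^{H_1}$ and $e^{\frac{1-it}{2}H_n}e^{\frac{1+it}{2}H_n}=e^{H_n}$ come for free from taking the adjoint and using cyclicity of the trace), and finally applies Jensen's inequality for the concave $\log$ against the probability density $\beta_0$. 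No interpolation argument is carried out in the paper; it is delegated to the citation. So you are doing considerably more work than is needed, and in effect re-proving the cited lemma.

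Beyond being a longer route, the specific setup you wrote has genuine defects. The family $F(z) := e^{H_n/2} \big(\prod_{k=n-1}^{2} e^{(1-z)H_k/2}\big) e^{H_1} \big(\prod_{k=2}^{n-1} e^{(1-\bar z)H_k/2}\big) e^{H_n/2}$ is \emph{not} holomorphic in $z$ because of the $\bar z$; Stein--Hirschman interpolation requires a genuinely analytic (holomorphic) operator-valued family, with the conjugate appearing only after taking $\norm{F(z)}_p$ or $\tr(F(z)^\dagger F(z))$, not inside $F$ itself. Moreover, $z=1$ is a \emph{boundary} point of the strip $\{0\le \Re z \le 1\}$, not an interior point, and at $z=1$ your $F$ collapses to $e^{H_n/2}e^{H_1}e^{H_n/2}$ — all of $H_2,\dots,H_{n-1}$ vanish — so it cannot reproduce $\tr\exp(\sum_k H_k)$ in the Lie--Trotter limit. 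The asymmetry in the statement (where $H_1$ and $H_n$ carry no twist while $H_2,\dots,H_{n-1}$ do) does not come from a specially engineered analytic family at all: it falls out automatically when one expands $\norm{\prod_k e^{\frac{1+it}{2}H_k}}_2^2$ and uses cyclicity of the trace. If you want to carry out the re-derivation honestly, you should follow the inductive structure in \cite{Sutter17}: apply the (two-boundary) Hirschman lemma iteratively, once per additional $H_k$, to a holomorphic family of the form $z\mapsto \prod_k e^{zH_k/2}$, and take a limit in the interpolation parameter $\theta$ to land on the $\beta_0$ kernel. But for the purposes of this paper, citing \cite[Corollary 3.3]{Sutter17} and doing the two-line reduction is cleaner.
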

\ifcomments \textcolor{red}{\noindent Note that by simply multiplying the Hermitian matrices by $\log(c)$, we see that a similar inequality holds when the base of the exponent is $c$ instead of $e$.}\fi
\begin{proof}
    Using \cite[Corollary 3.3]{Sutter17}, we have 
    \begin{align}
        & \log\norm{\exp\rndBrk{\sum_{k=1}^n \frac{1}{2}H_k}}_2 \leq \int_{-\infty}^{\infty} dt \beta_0 (t) \log\norm{\prod_{k=1}^n \exp\rndBrk{\frac{1+it}{2} H_k}}_2.        
    \end{align}
    Expanding the norm gives
    \begin{align}
        \frac{1}{2} \log\tr \rndBrk{\exp\rndBrk{\sum_{k=1}^n H_k}} \leq \int_{-\infty}^{\infty} dt \beta_0 (t) \frac{1}{2} \log\tr\rndBrk{e^{H_n} e^{\frac{1-it}{2}H_{n-1}}\cdots e^{\frac{1-it}{2}H_{2}} e^{H_1} e^{\frac{1+it}{2}H_{2}} \cdots e^{\frac{1+it}{2}H_{n-1}}}.
    \end{align}
    Using the concavity and monotonicity of $\log$, we get the statement in the Theorem. 
\end{proof}

The GT inequality above is often used in conjunction with the following variational expressions for the relative entropies. This is also the case in this paper.

\begin{lemma}[\cite{Petz88,Berta17}]
    For a normalised state $\rho$ and a positive operator $Q$, the following variational forms hold true:
    \begin{align}
        &D(\rho || Q) = \sup_{\omega > 0} \curlyBrk{\tr(\rho \log \omega) + 1 - \tr\exp(\log Q + \log \omega)} \label{eq:rel_ent_var_rep}\\
        &D_{m}(\rho || Q) = \sup_{\omega > 0} \curlyBrk{\tr(\rho \log \omega) + 1 - \tr (Q \omega)}. \label{eq:meas_rel_ent_var_rep}
    \end{align}
\end{lemma}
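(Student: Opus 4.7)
The plan is to derive both identities from the classical Donsker--Varadhan formula for relative entropy. For finite-alphabet distributions $p$ and $q$, first-order optimality in $\sum_x p(x)\log w(x) + 1 - \sum_x q(x)w(x)$ pins the optimizer at $w(x) = p(x)/q(x)$ and gives value $D(p\|q)$. I would leverage this identity in the quantum setting separately for the two directions of inequality in each identity, with quantum noncommutativity entering only through careful choices of $\omega$ or through Gibbs' variational principle.

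For Eq. \ref{eq:meas_rel_ent_var_rep}, to show the ``$\geq$'' direction I would pick any $\omega > 0$, diagonalize $\omega = \sum_x \lambda_x |x\rangle\langle x|$, and take $\mathcal{M}$ to be the projective measurement in this eigenbasis. Because $\mathcal{M}$ commutes with $\omega$, applying the classical Donsker--Varadhan identity to $(\langle x|\rho|x\rangle)_x$ and $(\langle x|Q|x\rangle)_x$ with $w(x) = \lambda_x$ gives $D(\mathcal{M}(\rho)\|\mathcal{M}(Q)) \geq \tr(\rho\log\omega) + 1 - \tr(Q\omega)$, and the definition of $D_m$ then yields the inequality. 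For ``$\leq$'', I would fix a projective measurement $\mathcal{M}$ with projectors $\{M_x\}$, reducing the general POVM case to this via Naimark dilation (which preserves $D(\mathcal{M}(\rho)\|\mathcal{M}(Q))$). Letting $(w^\ast_x)_x$ denote the classical optimizer, I set $\omega := \sum_x w^\ast_x M_x$; since the $M_x$ are orthogonal projectors, $\log\omega = \sum_x (\log w^\ast_x) M_x$ on its support, so the classical value lifts exactly to $\tr(\rho\log\omega) + 1 - \tr(Q\omega)$, which is bounded by the quantum supremum on the right-hand side.

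For Eq. \ref{eq:rel_ent_var_rep}, the ``$\geq$'' direction uses Gibbs' variational principle $\log\tr e^H = \sup_\sigma\{\tr(\sigma H) + S(\sigma)\}$ applied with $H = \log Q + \log\omega$ and $\sigma = \rho$, combined with the elementary bound $1 - x \leq -\log x$: this sandwiches the right-hand side of the variational expression between terms whose difference is exactly $D(\rho\|Q)$. For achievability, I would use a family $\omega_t$ approaching (in a suitable limiting sense) the noncommutative analogue of $\rho Q^{-1}$ as $t \to 1$; concretely, $\omega = e^{\log\rho - \log Q}$ attains the supremum in the commuting case, and the general case follows by continuity together with the Lie product formula.

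The main obstacle I anticipate is the achievability part of Eq. \ref{eq:rel_ent_var_rep}: unlike the classical case one cannot simply set $\omega = \rho Q^{-1}$, and one must carefully track how the noncommutative exponent $\log Q + \log\omega$ can be driven towards $\log\rho$ while keeping $\tr\exp(\log Q + \log\omega)$ under control. A secondary subtlety is extending the ``$\leq$'' direction of Eq. \ref{eq:meas_rel_ent_var_rep} from projective measurements to general POVMs, which Naimark's theorem resolves cleanly since the induced post-measurement classical distributions, and hence their KL divergence, are unchanged by dilation.
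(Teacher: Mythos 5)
The paper does not prove this lemma — it is cited from \cite{Petz88,Berta17} — so there is no proof in the text to compare against; I evaluate your sketch on its own terms.

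Your treatment of the projective-measurement case of Eq.~\ref{eq:meas_rel_ent_var_rep} is correct in both directions, and the Gibbs-variational argument for the ``$D\geq\sup$'' direction of Eq.~\ref{eq:rel_ent_var_rep} (combine $\log\tr e^{H}\geq\tr(\rho H)+S(\rho)$ at $H=\log Q+\log\omega$ with $1-t\leq-\log t$) is also correct. However, there are two issues worth flagging, one a genuine gap and one a phantom difficulty.

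The genuine gap is in the POVM extension of the ``$\leq$'' direction of Eq.~\ref{eq:meas_rel_ent_var_rep}. Naimark dilation does preserve the classical output distributions and hence $D(\cM(\rho)\|\cM(Q))$, but after dilation your projective argument produces a bound of the form $D(\cM(\rho)\|\cM(Q))\leq\sup_{\omega'>0}\{\tr(V\rho V^\dagger\log\omega')+1-\tr(VQV^\dagger\omega')\}$, where $\omega'$ ranges over positive operators on the \emph{enlarged} Hilbert space. To return to a supremum over $\omega>0$ on the original space you need, for a given $\omega'$, to compare $\tr\bigl(\rho\,V^\dagger(\log\omega')V\bigr)$ against $\tr\bigl(\rho\log(V^\dagger\omega'V)\bigr)$, and that step is precisely an operator Jensen / Choi inequality for the operator-concave function $\log$: $V^\dagger(\log\omega')V\leq\log(V^\dagger\omega'V)$. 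You never invoke it, and the reduction is not ``clean'' without it. In fact, once one is willing to use operator Jensen, Naimark is superfluous: for a POVM $\{M_x\}$ and any $w_x>0$, the Hansen--Pedersen form with $V_x=\sqrt{M_x}$ gives directly $\log\bigl(\sum_x w_xM_x\bigr)\geq\sum_x(\log w_x)M_x$, hence $\tr(\rho\log\omega)+1-\tr(Q\omega)\geq\sum_x\tr(M_x\rho)\log w_x+1-\sum_x\tr(M_xQ)w_x$ with $\omega:=\sum_x w_xM_x$, which is all you need. You should also add the standard caveat that when some $w_x^\ast=0$ one replaces $w_x^\ast$ by $w_x^\ast\vee\varepsilon$ and lets $\varepsilon\to0$ so that $\omega>0$.

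The phantom difficulty is in the achievability step of Eq.~\ref{eq:rel_ent_var_rep}. There is no noncommutativity obstruction and no need for the Lie product formula: with $\omega=\exp(\log\rho-\log Q)$ one has $\log\omega=\log\rho-\log Q$ and therefore, by ordinary matrix addition, $\log Q+\log\omega=\log\rho$, so $\tr\exp(\log Q+\log\omega)=\tr\rho=1$ and $\tr(\rho\log\omega)=D(\rho\|Q)$, achieving equality exactly. (If $\rho$ is not full rank, regularize by $\rho_\varepsilon=(1-\varepsilon)\rho+\varepsilon\tau$ and take $\varepsilon\to0$.) Your sketch therefore works for Eq.~\ref{eq:rel_ent_var_rep} modulo this simplification, but the POVM reduction in Eq.~\ref{eq:meas_rel_ent_var_rep} needs the operator Jensen ingredient to be complete.
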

\ifcomments \textcolor{red}{(Need base $e$ for this)}\fi

We will also use the following lemma, which is a quantum generalisation of the fact that if two probability distributions $p_{AB}$ and $q_{AB}$ are close to each other, then the probability distributions $p_{AB}$ and $p_B q_{A|B}$ are also close to each other. 

\begin{lemma}
    For a normalised state $\rho_{AB}$ and a subnormalised state $\tilde{\rho}_{AB}$ such that $P(\rho_{AB}, \tilde{\rho}_{AB}) \leq \epsilon$, the state $\eta_{AB} := \rho_{B}^{1/2}\tilde{\rho}_{B}^{-1/2}\tilde{\rho}_{AB}\tilde{\rho}_{B}^{-1/2}\rho_{B}^{1/2}$ (Moore-Penrose pseudo-inverse) satisfies $P(\rho_{AB}, \eta_{AB}) \leq (\sqrt{2} + 1)\epsilon$. Note that if $\tilde{\rho}_B$ is full rank, then $\eta_B = \rho_B$. 
    \label{lemm:cond_state_dist}
\end{lemma}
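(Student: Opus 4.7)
My approach is to use the triangle inequality for the purified distance,
\[
P(\rho_{AB}, \eta_{AB}) \leq P(\rho_{AB}, \tilde\rho_{AB}) + P(\tilde\rho_{AB}, \eta_{AB}) \leq \epsilon + P(\tilde\rho_{AB}, \eta_{AB}),
\]
so it suffices to show $P(\tilde\rho_{AB}, \eta_{AB}) \leq \sqrt{2}\,\epsilon$. I bound the right-hand term by constructing an explicit pair of purifications linked by the very operator that defines $\eta$.

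Write $K := \Id_A \otimes \rho_B^{1/2}\tilde\rho_B^{-1/2}$, so that $\eta_{AB} = K\tilde\rho_{AB} K^\dagger$. For any purification $\ket{\tilde\rho}_{ABR}$ of $\tilde\rho_{AB}$, define $\ket{\eta}_{ABR} := (K \otimes \Id_R)\ket{\tilde\rho}_{ABR}$; tracing out $R$ recovers $\eta_{AB}$. Direct computation gives
\[
\braket{\tilde\rho|\eta} = \tr(K_B \tilde\rho_B) = \tr(\rho_B^{1/2}\tilde\rho_B^{1/2}), \quad \braket{\eta|\eta} = \tr(K_B^\dagger K_B \tilde\rho_B) = \tr(\rho_B \Pi),
\]
where $\Pi$ is the projector onto $\mathrm{supp}(\tilde\rho_B)$. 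Combining the formula for $F_*$ on pure subnormalised states with monotonicity of $F_*$ under partial trace yields
\[
F_*(\tilde\rho_{AB}, \eta_{AB}) \geq \left(\tr(\rho_B^{1/2}\tilde\rho_B^{1/2}) + \sqrt{(1-\tr\tilde\rho_B)(1-\tr(\rho_B\Pi))}\right)^2.
\]

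The remainder is to feed $P(\rho, \tilde\rho) \leq \epsilon$ into this bound. Monotonicity of the purified distance under partial trace (including the trace map to scalars) yields $\tr\tilde\rho_B \geq 1-\epsilon^2$ and $F(\rho_B, \tilde\rho_B) \geq 1-\epsilon^2$ (using that $\rho_B$ is normalised, so $F_*(\rho_B, \tilde\rho_B) = F(\rho_B, \tilde\rho_B)$). A Cauchy--Schwarz argument using $\tilde\rho_B^{1/2} = \Pi\tilde\rho_B^{1/2}$ and $\|XY\|_1 \leq \|X\|_2\|Y\|_2$ further gives $F(\rho_B,\tilde\rho_B) \leq \tr(\rho_B\Pi)\cdot\tr\tilde\rho_B$, hence $\tr(\rho_B\Pi) \geq 1-\epsilon^2$, making the correction term in the lower bound at most $\epsilon^2$. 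The main obstacle is obtaining a sufficiently tight lower bound on $\tr(\rho_B^{1/2}\tilde\rho_B^{1/2})$; combining the Hellinger-type identity $\|\rho_B^{1/2}-\tilde\rho_B^{1/2}\|_2^2 = \tr\rho_B + \tr\tilde\rho_B - 2\tr(\rho_B^{1/2}\tilde\rho_B^{1/2})$ with Powers--Stormer and Fuchs--van de Graaf, and carefully tracking the sub-normalisation of $\tilde\rho_B$, should produce $F_*(\tilde\rho_{AB}, \eta_{AB}) \geq 1 - 2\epsilon^2$, whence $P(\tilde\rho_{AB}, \eta_{AB}) \leq \sqrt{2}\,\epsilon$ and the triangle inequality completes the proof.
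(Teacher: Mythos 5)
Your overall strategy matches the paper's: split $P(\rho_{AB},\eta_{AB})$ by the triangle inequality, build the explicit purification $\ket{\eta}_{ABR}=(K\otimes\Id_R)\ket{\tilde\rho}_{ABR}$, and lower bound the (generalised) fidelity between $\tilde\rho_{AB}$ and $\eta_{AB}$ by the overlap of the purifications. Your computations $\braket{\tilde\rho|\eta}=\tr(\rho_B^{1/2}\tilde\rho_B^{1/2})$ and $\braket{\eta|\eta}=\tr(\rho_B\Pi)$ are correct, as is the Cauchy--Schwarz argument giving $\tr(\rho_B\Pi)\ge 1-\epsilon^2$. Up to here you actually carry a bit \emph{more} information than the paper does (the paper just uses $F_*\ge F\ge|\braket{\tilde\rho|\eta}|^2$ and never touches $\braket{\eta|\eta}$).

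The gap is in the step you label ``the main obstacle'' and leave as a sketch. The route you propose --- the Hellinger identity, then Powers--St\o rmer, then Fuchs--van de Graaf --- does \emph{not} produce $\tr(\rho_B^{1/2}\tilde\rho_B^{1/2})\ge 1-O(\epsilon^2)$. Powers--St\o rmer gives
\[
\bigl\|\rho_B^{1/2}-\tilde\rho_B^{1/2}\bigr\|_2^2 \le \|\rho_B-\tilde\rho_B\|_1,
\]
and even after carefully tracking subnormalisation (so $\|\rho_{AB}-\tilde\rho_{AB}\|_1 \le 2\epsilon-(1-\tr\tilde\rho_B)$), plugging into the Hellinger identity yields only $\tr(\rho_B^{1/2}\tilde\rho_B^{1/2}) \ge 1-\epsilon$. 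Feeding that into your lower bound on $F_*$ (the correction term $\sqrt{(1-\tr\tilde\rho_B)(1-\tr\rho_B\Pi)}\le\epsilon^2$ cannot make up the deficit) gives $F_*(\tilde\rho_{AB},\eta_{AB})\ge(1-\epsilon)^2$, hence $P(\tilde\rho_{AB},\eta_{AB})\le\sqrt{2\epsilon-\epsilon^2}=O(\sqrt\epsilon)$, a full square-root weaker than the $\sqrt{2}\,\epsilon$ you need. The loss is intrinsic to going through the trace norm: Powers--St\o rmer followed by Fuchs--van de Graaf degrades an $O(\epsilon^2)$ Hellinger-squared quantity to $O(\epsilon)$.

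What actually closes the argument (and what the paper uses) is the \emph{pretty good fidelity} inequality of Iten--Renes--Sutter (cited as [Iten17, Eq.~44]): $\tr\bigl(\rho_B^{1/2}\tilde\rho_B^{1/2}\bigr)\ge F(\rho_B,\tilde\rho_B)$, where $F=\|\rho_B^{1/2}\tilde\rho_B^{1/2}\|_1^2$. Combined with $F(\rho_B,\tilde\rho_B)\ge F(\rho_{AB},\tilde\rho_{AB})\ge 1-\epsilon^2$ (the latter because $\rho_{AB}$ is normalised so $F=F_*$), this gives $\tr(\rho_B^{1/2}\tilde\rho_B^{1/2})\ge 1-\epsilon^2$, hence $F_*\ge(1-\epsilon^2)^2\ge 1-2\epsilon^2$ and $P(\tilde\rho_{AB},\eta_{AB})\le\sqrt2\,\epsilon$. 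Replace the Powers--St\o rmer step with this single inequality and the rest of your argument goes through.
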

  
\begin{proof}
Note that since $\rho_{AB}$ is normalised, we have $F(\tilde{\rho}_{AB}, \rho_{AB})\geq 1- \epsilon^2$. Let $\ket{\tilde{\rho}}_{ABR}$ be an arbitrary purification of $\tilde{\rho}_{AB}$. Observe that the pure state $\ket{\eta} := \rho_{B}^{1/2} \tilde{\rho}_{B}^{-1/2} \ket{\tilde{\rho}}_{ABR}$ is a purification of $\eta_{AB}$. \\

Let $\ket{\tilde{\rho}}_{ABR} = \sum_{i} \sqrt{\tilde{p}_i} \ket{u_i}_{B}\otimes \ket{v_i}_{AR}$ where all $p_i > 0$ be the Schmidt decomposition of $\tilde{\rho}_{ABR}$. This implies $\tilde{\rho}_B = \sum_i \tilde{p}_i \ket{u_i} \bra{u_i}_B$. Then, using Uhlmann's theorem \cite[Theorem 3.22]{Watrous18}, we have 
\begin{align*}
    F(\tilde{\rho}_{AB}, \eta_{AB}) &\geq |\braket{\tilde{\rho} | \eta}|^2 \\
    &= \left\vert \braket{\tilde{\rho} | \rho_{B}^{1/2} \tilde{\rho}_{B}^{-1/2} | \tilde{\rho}} \right\vert^2 \\
    &= \left\vert \tr \rndBrk{\rho_{B}^{1/2} \tilde{\rho}_{B}^{-1/2} \tilde{\rho}_{ABR}}\right\vert^2 \\
    &= \left\vert\tr\rndBrk{\rho_{B}^{1/2} \tilde{\rho}_B^{1/2} } \right\vert^2 \\
    &\geq F(\tilde{\rho}_B, \rho_{B})^2 \\
    &\geq (1-\epsilon^2)^2\\
    &\geq 1-2\epsilon^2
\end{align*}
where we have used the relation between the \emph{pretty good} fidelity and fidelity \cite[Eq. 44]{Iten17} for the first inequality and the fact that $F(\tilde{\rho}_B, \rho_{B}) \geq F(\tilde{\rho}_{AB}, \rho_{AB})$. Further, we have 
\begin{align*}
    P(\tilde{\rho}_{AB}, \eta_{AB}) &= \sqrt{1- F_\ast(\tilde{\rho}_{AB}, \eta_{AB})} \\
    &\leq \sqrt{1- F(\tilde{\rho}_{AB}, \eta_{AB})} \\
    &\leq \sqrt{2}\epsilon.
\end{align*}
Using the triangle inequality, we get
\begin{align*}
    P({\rho}_{AB}, \eta_{AB}) &\leq P({\rho}_{AB}, \tilde{\rho}_{AB}) + P(\tilde{\rho}_{AB}, \eta_{AB})\\
    &\leq (\sqrt{2} + 1 )\epsilon.
\end{align*}
\end{proof}

\end{sloppypar}

\section{A universal chain rule for smooth min-entropy}

In this section, we will prove the universal chain rule for $H^{\downarrow, \epsilon}_{\min}$. Specifically, we will show that for a state $\rho_{A_1^n B}$,
\begin{align}
  H^{\downarrow, g_1(\epsilon)}_{\min}(A_1^n | B)_{\rho} \geq \sum_{k=1}^n H^{\downarrow, \epsilon}_{\min}(A_k | A_1^{k-1} B)_{\rho} - n g_2(\epsilon) - k(\epsilon),
  \label{eq:target_Hmin_ch_rule}
\end{align}
where $g_1$ and $g_2$ are small functions of $\epsilon$ ($g_1$ and $g_2$ are continuous and tend to $0$ as $\epsilon \rightarrow 0$) and $k$ is a general function of $\epsilon$. It should be noted that these functions may depend on $|A|$, which is the size of the individual registers $A_k$. We begin by sketching the proof of such a chain rule in the classical case first. This will be beneficial for understanding the challenges that need to be solved in order to prove the statement in the quantum case. We will then generalise this proof to the quantum case. In Sec. \ref{sec:alternate_proof}, we provide an alternate proof for this chain rule. We compare the techniques used to prove this chain rule with the previous chain rules and their proofs in Appendix \ref{sec:ch_rule_comp}.

\subsection{Proof sketch for classical distributions}
\label{sec:uni_ch_rul_cl1}

Consider the probability distribution $p_{A_1^n B}$. We will sketch a proof for a chain rule of the form in Eq. \ref{eq:target_Hmin_ch_rule} for $\rho = p$. Let us first broadly describe the proof strategy. We will identify an auxiliary distribution $p^{(\text{aux})}_{A_1^n B}$, such that 
\begin{align}
  &D^{g_1(\epsilon)}_{\max}(p_{A_1^n B} || p^{(\text{aux})}_{A_1^n B}) \leq n g_2(\epsilon) + k(\epsilon) \text{ and } \label{eq:cl_eq_needed1}\\
  &H_{\min}(A_1^n | B)_{p^{(\text{aux})}} \geq \sum_{k=1}^n H_{\min}^{\downarrow, \epsilon}(A_k | A_1^{k-1} B)_p \label{eq:cl_eq_needed2}
\end{align}
where $g_1, g_2$ and $k$ are as in Eq. \ref{eq:target_Hmin_ch_rule}. Then, we can simply use the entropic triangle inequality to show that
\begin{align}
  H^{g_1(\epsilon)}_{\min}(A_1^n | B)_{p} &\geq H_{\min}(A_1^n | B)_{p^{(\text{aux})}} - D_{\max}^{g_1(\epsilon)}(p_{A_1^n B} || p^{(\text{aux})}_{A_1^n B}) \\
  &\geq \sum_{k=1}^n H_{\min}^{\downarrow, \epsilon}(A_k | A_1^{k-1} B)_p - n g_2(\epsilon) - k(\epsilon).
\end{align}
$p^{(\text{aux})}$ will be defined using the distributions which achieve $H_{\min}^{\downarrow, \epsilon}(A_k | A_1^{k-1} B)_p$. For $k \in [n]$, let $q^{(k)}_{A_1^k B}$ be the distribution, such that 
\begin{align}
    & \frac{1}{2}\norm{p_{A_1^k B} - q^{(k)}_{A_1^k B}}_1 \leq \epsilon \\
    & H^{\epsilon, \downarrow}_{\min}(A_k | A_1^{k-1} B)_p = H^{\downarrow}_{\min}(A_k | A_1^{k-1} B)_{q^{(k)}}
\end{align}
It is easy to show that we also have 
\begin{align}
  \frac{1}{2} \norm{p_{A_1^k B} - p_{A_1^{k-1} B} q^{(k)}_{A_k|A_1^{k-1} B}}_1 \leq 2\epsilon.
  \label{eq:cl_ch_rule_dist_bd}
\end{align}
We can define an auxiliary distribution as $q_{A_1^n B} := p_B \prod_{k=1}^n q^{(k)}_{A_k|A_1^{k-1} B}$. For this distribution, the conditional min-entropy satisfies (Eq. \ref{eq:Hmin_down_cl_defn}), 
\begin{align}
  H^{\downarrow}_{\min}(A_k | A_1^{k-1} B)_{q} = H^{\downarrow}_{\min}(A_k | A_1^{k-1} B)_{q^{(k)}} = H^{\epsilon, \downarrow}_{\min}(A_k | A_1^{k-1} B)_p
\end{align}
for every $k$ since $q_{A_k | A_1^{k-1} B} = q^{(k)}_{A_k|A_1^{k-1} B}$. Further, we can bound the min-entropy for this distribution as desired in Eq. \ref{eq:cl_eq_needed2} as 
\begin{align}
  H_{\min}(A_1^n | B)_q \geq H^{\downarrow}_{\min}(A_1^n | B)_q \geq \sum_{k=1}^n H^{\downarrow}_{\min}(A_k | A_1^{k-1} B)_q
\end{align}
Next, we need to ensure that the smooth max-divergence between $p$ and $q$ is relatively small. Our strategy will be to bound the relative entropy between these two distributions and use the substate theorem to convert that to a bound for the smooth max-divergence. We expect to be able to use the following chain rule for the relative entropy
\begin{align}
  D(p_{A_1^k B} || q_{A_1^k B}) = D(p_{A_1^{k-1} B} || q_{A_1^{k-1} B}) + D(p_{A_1^k B} || p_{A_1^{k-1} B} q_{A_k | A_1^{k-1} B}) \label{eq:cl_rel_ent_ch_rule_use}
\end{align} 
to prove that the relative entropy distance between $p_{A_1^n B}$ and $q_{A_1^n B}$ is small. Using this repeatedly along with the fact that $q_{A_k | A_1^{k-1} B} = q^{(k)}_{A_k|A_1^{k-1} B}$, gives us
\begin{align}
  D(p_{A_1^n B} || q_{A_1^n B}) &= \sum_{k=1}^n D(p_{A_1^k B} || p_{A_1^{k-1} B} q^{(k)}_{A_k | A_1^{k-1} B}).
\end{align}
If we could now show that each term inside the summation is some small function in $\epsilon$, $g(\epsilon)$, we could show that the $D(p_{A_1^n B} || q_{A_1^n B})$ is bounded by $n g(\epsilon)$. Eq. \ref{eq:cl_ch_rule_dist_bd} shows that the two distributions in each of the terms are close to each other. However, the relative entropy between them need not be small or even bounded, since it could be that $p_{A_1^k B} \not\ll p_{A_1^{k-1} B} q^{(k)}_{A_k | A_1^{k-1} B}$. To circumvent this technicality, we need to \emph{massage} the distributions $q^{(k)}_{A_k | A_1^{k-1} B}$ by a small amount. This is done by mixing these distributions with the uniform distribution $u_{A_k}$ to produce the distributions
\begin{align}
  r^{(k)}_{A_k|A_1^{k-1} B} = (1-\delta)q^{(k)}_{A_k|A_1^{k-1} B} + \delta u_{A_k}.
\end{align}
For these distributions, we can show that (Lemma \ref{lemm:approx_chain_rule_for_D})
\begin{align}
    & \frac{1}{2}\norm{p_{A_1^k B} - p_{A_1^{k-1} B}r^{(k)}_{A_k|A_1^{k-1} B}}_1 \leq 2\epsilon + \delta 
\end{align}
and
\begin{align}
    & D(p_{A_1^k B} || p_{A_1^{k-1} B} r^{(k)}_{A_k | A_1^{k-1} B}) \leq z(\epsilon, \delta)
\end{align}
where $z(\epsilon, \delta) = O\rndBrk{(\epsilon + \delta) \log \frac{|A|}{\epsilon \delta}}$. Note that $z(\epsilon, \delta)$ can be made $O\rndBrk{\epsilon \log \frac{|A|}{\epsilon}}$ for $\delta= \epsilon$. Instead of using $q_{A_1^n B}$ as the auxiliary distribution, we can use $r_{A_1^n B} := p_B \prod_{k=1}^n r^{(k)}_{A_k|A_1^{k-1} B} $. We can use the same argument as above to bound the relative entropy between $p_{A_1^n B}$ and $r_{A_1^n B}$. This gives us
\begin{align*}
    D(p_{A_1^n B} || r_{A_1^n B}) &= \sum_{k=1}^n D(p_{A_1^k B} || p_{A_1^{k-1} B} r^{(k)}_{A_k | A_1^{k-1} B}) \\
    &\leq n z(\epsilon, \delta).
    \numberthis
    \label{eq:cl_ch_rule_rel_ent_bd}
\end{align*}
Let $\mu:= {z(\epsilon, \delta)}^{1/3}$, then using the substate theorem, we have
\begin{align*}
    D^{\mu}_{\max} (p_{A_1^n B} || r_{A_1^n B}) \leq n \mu + \frac{1}{\mu^2} + \log \frac{1}{1 - \mu^2}. 
\end{align*}
Using the entropic triangle inequality, we get
\begin{align*}
    H_{\min}^{\mu}(A_1^n | B)_p &\geq H_{\min} (A_1^n | B)_r - n \mu - \frac{1}{\mu^2} - \log \frac{1}{1 - \mu^2}\\
    &\geq \sum_{k=1}^n H_{\min}^{\downarrow}(A_k | A_1^{k-1} B)_r - n \mu - \frac{1}{\mu^2} - \log \frac{1}{1 - \mu^2}\\
    &\geq \sum_{k=1}^n H_{\min}^{\downarrow}(A_k | A_1^{k-1} B)_{q^{(k)}} - n \mu - \frac{1}{\mu^2} - \log \frac{1}{1 - \mu^2} \\
    &= \sum_{k=1}^n H_{\min}^{\downarrow, \epsilon}(A_k | A_1^{k-1} B)_{p} - n \mu - \frac{1}{\mu^2} - \log \frac{1}{1 - \mu^2}
\end{align*}
where in the third line, we use the quasi-concavity of $H_{\min}^{\downarrow}$\cite[Pg 73]{TomamichelBook16}. \\

Thus, we have proven a chain rule for $H_{\min}^{\downarrow,\epsilon}$ of the desired form in the classical case. The primary challenge in generalizing this approach to the quantum setting lies in defining the auxiliary state. In the classical case, we could use the product of (suitably massaged) conditional distributions $q^{(k)}_{A_k | A_1^{k-1} B}$ to define the auxiliary distribution $r_{A_1^n B}$. However, the quantum generalisation of such a distribution is not unique. Moreover, such generalisations are all quite difficult to manipulate. A closely related problem is that we need to be able to prove a relative entropy bound similar to that in Eq. \ref{eq:cl_ch_rule_rel_ent_bd} for the auxiliary state. This also turns out to be quite challenging, especially because the quantum chain rules \cite[Theorem 4.1 and Proposition F.1]{Sutter17} for the relative entropy do not yield something as simple and convenient as Eq. \ref{eq:cl_rel_ent_ch_rule_use}. We solve both of these problems indirectly. We will use a simple relative entropy bound between the state $\rho_{A_1^n B}$ and an unwieldy, yet simple exponential generalisation of the auxiliary probability distribution used above. Then, in order to convert this bound to a (measured) relative entropy bound between the state $\rho_{A_1^n B}$ and a simpler and more convenient auxiliary state, we employ the Generalised GT inequality (Lemma \ref{lemm:cond_st_rel_ent_bd}). Through this approach, we are able to delegate the pesky question of the correct generalisation of the auxiliary state to the GT inequality.

\subsection{Proof for the quantum case}

The following lemma uses the GT inequality as described above to create a generalisation of the auxiliary state used in the classical proof and also prove a measured relative entropy bound between the original state and this auxiliary state. 

\begin{lemma}
  \label{lemm:cond_st_rel_ent_bd}
  Suppose $\rho_{A_1^n B}$ is a normalised state and for $1 \leq k\leq n$ the normalised states $\bar{\rho}^{(k)}_{A_1^k B}$ are full rank and satisfy
  \begin{align}
    D\rndBrk{\rho_{A_1^k B}||\bar{\rho}^{(k)}_{A_1^k B}} \leq \epsilon
    \label{eq:cond_st_lemm_rel_ent_cond}
  \end{align}
  for some $\epsilon>0$. Let $\bar{\rho}^{(0)}_{ B}:= \rho_B$ for notational simplicity. Then, the subnormalised state\footnote{In the following and throughout this paper, the product notation $\Pi_{j=0}^k M_j$ represents the operator product $M_0 M_1 \cdots M_k$ and the notation $\Pi_{j=k}^0 M_j$ represents the operator product $M_k M_{k-1} \cdots M_0$.}
  \begin{align}
    \sigma_{A_1^n B} &:= \int_{-\infty}^{\infty} dt \beta_0 (t) \prod_{k=0}^{n-1} \sqBrk{\rndBrk{\bar{\rho}^{(k)}_{A_1^k B}}^{\frac{1-it}{2}} \rndBrk{\bar{\rho}^{(k+1)}_{A_1^{k} B}}^{-\frac{1-it}{2}}} \cdot \bar{\rho}^{(n)}_{A_1^n B} \cdot \prod_{k=n-1}^{0} \sqBrk{\rndBrk{\bar{\rho}^{(k+1)}_{A_1^{k} B}}^{-\frac{1+it}{2}} \rndBrk{\bar{\rho}^{(k)}_{A_1^k B}}^{\frac{1+it}{2}}} \label{eq:sigma_def1}\\
    &= \int_{-\infty}^{\infty} dt \beta_0 (t) \rho_B^{\frac{1-it}{2}} \rndBrk{\bar{\rho}^{(1)}_{B}}^{-\frac{1-it}{2}} \rndBrk{\bar{\rho}^{(1)}_{A_1 B}}^{\frac{1-it}{2}} \rndBrk{\bar{\rho}^{(2)}_{A_1 B}}^{-\frac{1-it}{2}} \rndBrk{\bar{\rho}^{(2)}_{A_1^2 B}}^{\frac{1-it}{2}}\ \cdots \ \rndBrk{\bar{\rho}^{(n)}_{A_1^{n-1} B}}^{-\frac{1-it}{2}} \nonumber\\
    & \bar{\rho}^{(n)}_{A_1^n B} \cdot \rndBrk{\bar{\rho}^{(n)}_{A_1^{n-1} B}}^{-\frac{1+it}{2}}\ \cdots \ \rndBrk{\bar{\rho}^{(2)}_{A_1^2 B}}^{\frac{1+it}{2}} \rndBrk{\bar{\rho}^{(2)}_{A_1 B}}^{-\frac{1+it}{2}} \rndBrk{\bar{\rho}^{(1)}_{A_1 B}}^{\frac{1+it}{2}} \rndBrk{\bar{\rho}^{(1)}_{B}}^{-\frac{1+it}{2}} \rho_B^{\frac{1+it}{2}}. \label{eq:sigma_def2}
  \end{align}
  is such that
  \begin{align}
    D_m (\rho_{A_1^n B} || \sigma_{A_1^n B}) \leq n \epsilon. 
  \end{align}
  Further, the partial states of $\sigma$ are
  \begin{align*}
    \sigma_{A_1^k B} &= \int_{-\infty}^{\infty} dt \beta_0 (t) \prod_{j=0}^{k-1} \sqBrk{\rndBrk{\bar{\rho}^{(j)}_{A_1^j B}}^{\frac{1-it}{2}} \rndBrk{\bar{\rho}^{(j+1)}_{A_1^{j} B}}^{-\frac{1-it}{2}}} \cdot \bar{\rho}^{(k)}_{A_1^k B} \cdot \prod_{j=k-1}^{0} \sqBrk{\rndBrk{\bar{\rho}^{(j+1)}_{A_1^{j} B}}^{-\frac{1+it}{2}} \rndBrk{\bar{\rho}^{(j)}_{A_1^j B}}^{\frac{1+it}{2}}} \numberthis\\
    &= \int_{-\infty}^{\infty} dt \beta_0 (t) \rho_B^{\frac{1-it}{2}} \rndBrk{\bar{\rho}^{(1)}_{B}}^{-\frac{1-it}{2}} \rndBrk{\bar{\rho}^{(1)}_{A_1 B}}^{\frac{1-it}{2}} \rndBrk{\bar{\rho}^{(2)}_{A_1 B}}^{-\frac{1-it}{2}} \rndBrk{\bar{\rho}^{(2)}_{A_1^2 B}}^{\frac{1-it}{2}}\ \cdots \ \rndBrk{\bar{\rho}^{(k)}_{A_1^{k-1} B}}^{-\frac{1-it}{2}} \\
    & \qquad \qquad \bar{\rho}^{(k)}_{A_1^{k} B} \cdot \rndBrk{\bar{\rho}^{(k)}_{A_1^{k-1} B}}^{-\frac{1+it}{2}}\ \cdots \ \rndBrk{\bar{\rho}^{(2)}_{A_1^2 B}}^{\frac{1+it}{2}} \rndBrk{\bar{\rho}^{(2)}_{A_1 B}}^{-\frac{1+it}{2}} \rndBrk{\bar{\rho}^{(1)}_{A_1 B}}^{\frac{1+it}{2}} \rndBrk{\bar{\rho}^{(1)}_{B}}^{-\frac{1+it}{2}} \rho_B^{\frac{1+it}{2}}.
    \numberthis
    \label{eq:sigma_partial_st}
  \end{align*}
  Hence, $\sigma_B = \rho_B$, and $\sigma$ is normalised.
\end{lemma}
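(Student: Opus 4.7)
The plan is to use the variational expression $D_m(\rho || \sigma) = \sup_{\omega > 0} \{\tr(\rho \log \omega) + 1 - \tr(\sigma \omega)\}$ (Eq.~\ref{eq:meas_rel_ent_var_rep}) and show that, for every $\omega > 0$ on $A_1^n B$, $\tr(\sigma \omega) \geq \tr(\rho \log \omega) + 1 - n\epsilon$. Before that I would verify the stated formula for $\sigma_{A_1^k B}$ by iteratively tracing out the outermost register: every factor in the integrand of $\sigma_{A_1^n B}$ other than the central $\bar{\rho}^{(n)}_{A_1^n B}$ acts trivially on $A_n$, so partial-tracing $A_n$ replaces $\bar{\rho}^{(n)}_{A_1^n B}$ by $\bar{\rho}^{(n)}_{A_1^{n-1} B}$, which combines with the two adjacent factors $(\bar{\rho}^{(n)}_{A_1^{n-1} B})^{-(1\mp it)/2}$ into $(\bar{\rho}^{(n)}_{A_1^{n-1} B})^{0} = \Id$ by full-rank; the next pair $(\bar{\rho}^{(n-1)}_{A_1^{n-1} B})^{(1\mp it)/2}$ then recombines to $\bar{\rho}^{(n-1)}_{A_1^{n-1} B}$ at the new centre. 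Iterating down to $k=0$ yields $\sigma_B = \bar{\rho}^{(0)}_B = \rho_B$, so $\sigma$ is normalised.

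For the $D_m$ bound, the main step is to read the integral defining $\sigma_{A_1^n B}$ as the right-hand side of Generalised Golden--Thompson (Theorem~\ref{thm:gen_golden_thompson}) applied with $N = 2n+2$ Hermitian matrices: take $H_1 = \log \bar{\rho}^{(n)}_{A_1^n B}$ at the centre; alternate $H_{2k} = -\log \bar{\rho}^{(n-k+1)}_{A_1^{n-k} B}$ and $H_{2k+1} = \log \bar{\rho}^{(n-k)}_{A_1^{n-k} B}$ for $k = 1,\dots,n$ as the intermediate matrices carrying the half-exponents $(1 \mp it)/2$; and place $H_{2n+2} = \log \omega$ as the outermost matrix appearing with full power. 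A factor-by-factor comparison, together with cyclicity of trace, shows that the right-hand side of GT is exactly $\tr(\sigma \omega)$, so
\[
\tr(\sigma \omega) \;\geq\; \tr \exp\!\Big( \log \omega + \log \bar{\rho}^{(n)}_{A_1^n B} + \sum_{j=0}^{n-1}\big( \log \bar{\rho}^{(j)}_{A_1^j B} - \log \bar{\rho}^{(j+1)}_{A_1^j B}\big) \Big),
\]
where operators on subregisters are tacitly extended to $A_1^n B$ by tensoring with identities.

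The final step applies the variational form of the relative entropy (Eq.~\ref{eq:rel_ent_var_rep}) to the hypothesis $D(\rho_{A_1^n B} || \bar{\rho}^{(n)}_{A_1^n B}) \leq \epsilon$ with the test operator $\omega' := \exp\!\big( \log \omega + \sum_{j=0}^{n-1}(\log \bar{\rho}^{(j)}_{A_1^j B} - \log \bar{\rho}^{(j+1)}_{A_1^j B}) \big)$, giving $\tr(\sigma \omega) \geq \tr(\rho \log \omega') + 1 - \epsilon$. Expanding $\tr(\rho \log \omega')$ and using the identity $\tr(\rho_{A_1^j B} \log \bar{\rho}^{(j)}_{A_1^j B}) - \tr(\rho_{A_1^j B} \log \bar{\rho}^{(j+1)}_{A_1^j B}) = D(\rho_{A_1^j B} || \bar{\rho}^{(j+1)}_{A_1^j B}) - D(\rho_{A_1^j B} || \bar{\rho}^{(j)}_{A_1^j B})$, the first divergence in each summand is non-negative and the second is bounded by $\epsilon$ for $1 \leq j \leq n-1$ (and is zero at $j=0$ because $\bar{\rho}^{(0)}_B = \rho_B$). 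Summing the $(n-1)$ slack terms with the leading $-\epsilon$ produces $\tr(\sigma \omega) \geq \tr(\rho \log \omega) + 1 - n\epsilon$, and taking the supremum over $\omega$ in the variational formula delivers $D_m(\rho || \sigma) \leq n\epsilon$.

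The hard part will be setting up the GT application with the correct bookkeeping of $2n+2$ Hermitian matrices in the exact order needed to match the integrand of $\sigma$ (with $\log \omega$ as the outermost full-power matrix). Once this matching is in place, all of the non-commutativity among the $\bar{\rho}^{(j)}$'s and of $\omega$ with them is absorbed into the GT integral, which is precisely what makes this quantum analogue of the classical auxiliary distribution work; the remaining telescoping and use of the per-level $\epsilon$-bounds is then routine.
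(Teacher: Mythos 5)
Your proof is correct and takes essentially the same route as the paper: the same application of generalised Golden--Thompson to the same $2n+2$ Hermitian logarithms, the same variational formula, and the same per-level telescoping of the divergences $D\bigl(\rho_{A_1^kB}\big\|\bar{\rho}^{(k)}_{A_1^kB}\bigr)\le\epsilon$. The paper first introduces an explicit operator $Q=\exp\bigl(\sum_{k}(\log\bar{\rho}^{(k)}_{A_1^kB}-\log\bar{\rho}^{(k)}_{A_1^{k-1}B})+\log\rho_B\bigr)$, bounds $D(\rho\|Q)\le n\epsilon$ by direct telescoping, and only then invokes the variational form and GT, whereas you fold the telescoping into the choice of test operator $\omega'$ in the variational form of $D(\rho\|\bar\rho^{(n)}_{A_1^nB})$; since $\log Q+\log\omega=\log\bar\rho^{(n)}_{A_1^nB}+\log\omega'$, this is a repackaging of the same computation.
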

\begin{proof}
  \noindent Our approach here is broadly based on the proof of \cite[Theorem 4.1]{Sutter17}. Define the positive operator\footnote{By restricting ourselves to strictly positive operators $\bar{\rho}^{(k)}_{A_1^k B}$, we ensure that this expression is always well-defined.}
  \begin{align}
    Q_{A_1^n B} := \exp\rndBrk{\sum_{k=1}^n \rndBrk{\log \bar{\rho}^{(k)}_{A_1^k B} - \log\bar{\rho}^{(k)}_{A_1^{k-1} B}} + \log \rho_B}.
  \end{align}
  Note that if all the operators above commuted, $Q_{A_1^n B}$ (and also $\sigma_{A_1^n B}$) would be the probability distribution $\bar{\rho}^{(n)}_{A_{n} | A_1^{n-1}B} \bar{\rho}^{(n-1)}_{A_{n-1} | A_1^{n-2}B}\ \cdots\ \bar{\rho}^{(1)}_{A_1 | B}\rho_B$. As is often times the case with the quantum relative entropy, the best quantum generalisation of the above conditional distributions turns out to be an exponential\footnote{Classically, we have that $D(P_{AB}|| Q_{AB}) - D(P_A || Q_A) = D(P_{AB} || P_{A}Q_{B|A})$. Quantumly, this equation is directly best generalised as $D(\rho_{AB}|| \sigma_{AB}) - D(\rho_A || \sigma_A) = D(\rho_{AB} || \exp\rndBrk{\log \sigma_{AB} - \log \sigma_A + \log \rho_A})$. Though, this is not very useful on its own.}. For this operator, we have 
  \begin{align*}
    D(\rho_{A_1^n B} || Q_{A_1^n B}) &= \tr\rndBrk{\rho_{A_1^n B} \rndBrk{\log\rho_{A_1^n B} -  \log Q_{A_1^n B}}} \\
    &= \tr\rho_{A_1^n B} \rndBrk{\log\rho_{A_1^n B} -  \sum_{k=1}^n \rndBrk{\log\bar{\rho}^{(k)}_{A_1^k B} - \log\bar{\rho}^{(k)}_{A_1^{k-1} B}} - \log\rho_{B}} \\
    &= \sum_{k=1}^n \tr\rndBrk{\rho_{A_1^n B} \rndBrk{\log\rho_{A_1^k B} - \log\bar{\rho}^{(k)}_{A_1^k B}}} - \sum_{k=1}^n \tr\rndBrk{\rho_{A_1^n B} \rndBrk{\log\rho_{A_1^{k-1} B} - \log\bar{\rho}^{(k)}_{A_1^{k-1} B}}} \\
    &= \sum_{k=1}^n \rndBrk{D(\rho_{A_1^k B} || \bar{\rho}^{(k)}_{A_1^k B}) - D(\rho_{A_1^{k-1} B} || \bar{\rho}^{(k)}_{A_1^{k-1} B})} \\
    &\leq \sum_{k=1}^n D(\rho_{A_1^k B} || \bar{\rho}^{(k)}_{A_1^k B}) \\
    &\leq n \epsilon
    \numberthis
  \end{align*}
  where in the second last line, we have used the fact that the relative entropy of two normalised states is positive and in the last line, we have used the bound in Eq. \ref{eq:cond_st_lemm_rel_ent_cond}. Next, by using the variational expression for the relative entropy (Eq. \ref{eq:rel_ent_var_rep}), we see that
  \begin{align*}
    n \epsilon &\geq D(\rho_{A_1^n B} || Q_{A_1^n B})\\
    &= \sup_{\omega_{A_1^n B} > 0} \curlyBrk{\tr(\rho_{A_1^n B} \log \omega_{A_1^n B}) + 1 - \tr\exp \rndBrk{\sum_{k=1}^n \rndBrk{\log \bar{\rho}^{(k)}_{A_1^k B} - \log\bar{\rho}^{(k)}_{A_1^{k-1} B}} + \log \rho_B + \log \omega_{A_1^n B}}}. \numberthis
    \label{eq:var_exp_int}
  \end{align*}
  The trace exponential above can be bounded using the Generalised GT inequality (Theorem \ref{thm:gen_golden_thompson}) as 
  \begin{align*}
    \tr&\exp \rndBrk{\sum_{k=1}^n \rndBrk{\log \bar{\rho}^{(k)}_{A_1^k B} - \log\bar{\rho}^{(k)}_{A_1^{k-1} B}} + \log \rho_B + \log \omega_{A_1^n B}} \\
    &\leq \int_{-\infty}^{\infty} dt \beta_0 (t) \tr \bigg(\omega_{A_1^n B}\  \rho_B^{\frac{1-it}{2}} \rndBrk{\bar{\rho}^{(1)}_{B}}^{-\frac{1-it}{2}} \rndBrk{\bar{\rho}^{(1)}_{A_1 B}}^{\frac{1-it}{2}} \rndBrk{\bar{\rho}^{(2)}_{A_1 B}}^{-\frac{1-it}{2}} \rndBrk{\bar{\rho}^{(2)}_{A_1^2 B}}^{\frac{1-it}{2}}\ \cdots \ \rndBrk{\bar{\rho}^{(n)}_{A_1^{n-1} B}}^{-\frac{1-it}{2}} \cdot \\
    & \qquad \qquad \bar{\rho}^{(n)}_{A_1^n B} \cdot \rndBrk{\bar{\rho}^{(n)}_{A_1^{n-1} B}}^{-\frac{1+it}{2}}\ \cdots \ \rndBrk{\bar{\rho}^{(2)}_{A_1^2 B}}^{\frac{1+it}{2}} \rndBrk{\bar{\rho}^{(2)}_{A_1 B}}^{-\frac{1+it}{2}} \rndBrk{\bar{\rho}^{(1)}_{A_1 B}}^{\frac{1+it}{2}} \rndBrk{\bar{\rho}^{(1)}_{B}}^{-\frac{1+it}{2}} \rho_B^{\frac{1+it}{2}} \bigg) \\
    &= \int_{-\infty}^{\infty} dt \beta_0 (t) \tr \bigg(\omega_{A_1^n B}\  
    \prod_{k=0}^{n-1} \sqBrk{\rndBrk{\bar{\rho}^{(k)}_{A_1^k B}}^{\frac{1-it}{2}} \rndBrk{\bar{\rho}^{(k+1)}_{A_1^{k} B}}^{-\frac{1-it}{2}}} \cdot \bar{\rho}^{(n)}_{A_1^n B} \cdot \prod_{k=n-1}^{0} \sqBrk{\rndBrk{\bar{\rho}^{(k+1)}_{A_1^{k} B}}^{-\frac{1+it}{2}} \rndBrk{\bar{\rho}^{(k)}_{A_1^k B}}^{\frac{1+it}{2}}}\bigg)\\
    &= \tr \rndBrk{\omega_{A_1^n B} \sigma_{A_1^n B}}
    \numberthis
    \label{eq:gen_gol_thomp_app}
  \end{align*}
  for the state $\sigma$ defined in the statement of the lemma. Plugging the bound above into Eq. \ref{eq:var_exp_int}, we get 
  \begin{align*}
    n \epsilon &\geq D(\rho_{A_1^n B} || Q_{A_1^n B})\\
    &\geq \sup_{\omega_{A_1^n B} > 0} \curlyBrk{\tr(\rho_{A_1^n B} \log \omega_{A_1^n B}) + 1 - \tr\rndBrk{\omega_{A_1^n B} \sigma_{A_1^n B}}} \\
    &=D_m (\rho_{A_1^n B} || \sigma_{A_1^n B}). \numberthis
    \label{eq:Dmeas_bd}
  \end{align*}
\end{proof}

We use the following theorem and its corollary to bound the relative entropy between two states which are close to each other and have a bounded max-relative entropy. 
\begin{theorem}[{\cite[Theorem 4]{Audenaert14}}]
  \label{thm:TRE_dist_bd}
  For two normalised quantum states $\rho$ and $\sigma$, and $\delta \in (0,1]$, we have 
  \begin{align}
    D(\rho || (1-\delta) \sigma + \delta \rho ) \leq \frac{1}{2}\norm{\rho - \sigma}_1\log \frac{1}{\delta} .
  \end{align}
\end{theorem}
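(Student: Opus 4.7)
The plan is to reduce the inequality to a pointwise scalar bound and then extend it to the non-commutative setting. The scalar bound I would prove is
\begin{align*}
    g(x) := x \log \frac{x}{(1-\delta) + \delta x} \leq (x-1)_+ \log \frac{1}{\delta} \qquad (x \geq 0).
\end{align*}
For $x \in [0,1]$ the factor $(1-\delta) + \delta x - x = (1-\delta)(1-x)$ is non-negative, so $g(x) \leq 0$. For $x \geq 1$, set $h(x) := (x-1) \log(1/\delta) - g(x)$ and $u := (1-\delta) + \delta x$; a direct calculation gives $h''(x) = -(1-\delta)^2/(x u^2) \leq 0$ and $h'(x) = \log(u/(\delta x)) - (1-\delta)/u \to 0$ as $x \to \infty$. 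Since $h$ is concave on $[1, \infty)$ with a decreasing derivative tending to $0$, we must have $h'(x) \geq 0$ on $[1,\infty)$, and $h(1) = 0$ then yields $h \geq 0$.

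With the scalar bound in hand, the classical case (commuting $\rho$ and $\sigma$, identified with probability vectors $p$ and $q$) is immediate: setting $x_i := p_i/q_i$ (with $0/0 := 1$),
\begin{align*}
    D(\rho \| \tau) = \sum_i q_i\, g(x_i) \leq \log\frac{1}{\delta} \sum_i q_i (x_i - 1)_+ = \log\frac{1}{\delta} \sum_i (p_i - q_i)_+ = \frac{1}{2} \|\rho - \sigma\|_1 \log\frac{1}{\delta}.
\end{align*}

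For the quantum case my plan is to use the integral representation $D(\rho\|\tau) = \int_0^\infty \tr[\rho((\tau+sI)^{-1} - (\rho+sI)^{-1})]\, ds$ combined with the identity $\tau + sI = (1-\delta)(\sigma + sI) + \delta(\rho + sI)$. The main obstacle is that naively applying operator convexity of $X \mapsto X^{-1}$ to this identity only recovers the weaker bound $(1-\delta) D(\rho\|\sigma)$, so a direct lift of the classical argument fails. To obtain the prefactor $\frac{1}{2}\|\rho-\sigma\|_1$, my strategy would be to decompose $\rho-\sigma = \Delta_+ - \Delta_-$ into its Jordan parts (with $\tr\Delta_\pm = \frac{1}{2}\|\rho-\sigma\|_1$) and handle the excess-mass contribution of $\Delta_+$ separately via $\tau \geq \delta\rho$, which supplies the $\log(1/\delta)$ factor, while controlling the remaining ``shared'' part through a commuting approximation and continuity of $D$ on states with finite $D_{\max}(\rho\|\tau)$. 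The hard part will be precisely this non-commutative step: constructing the appropriate operator inequality that lifts $g(x) \leq (x-1)_+\log(1/\delta)$ to a statement about $\tr\rho(\log\rho - \log\tau)$ while respecting the trace-distance structure of $\rho-\sigma$, since $\rho$ need not commute with the Jordan projectors of $\rho-\sigma$.
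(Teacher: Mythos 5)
The paper states this result by citing Audenaert and does not reproduce a proof, so there is no in-paper proof to compare against; the assessment must be of your proposal on its own terms. Your scalar inequality $g(x) \le (x-1)_+\log(1/\delta)$ is correct: for $x \le 1$, $(1-\delta)+\delta x - x = (1-\delta)(1-x) \ge 0$ gives $g\le 0$, and for $x \ge 1$ the computation $h''(x) = -(1-\delta)^2/(x u^2) \le 0$ together with $h'(x)\to 0$ as $x\to\infty$ and $h(1)=0$ indeed yields $h\ge 0$. The commuting case then follows exactly as you write.

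However, the quantum case --- where the theorem actually carries content --- is not proved. Your diagnosis of the obstruction is accurate: operator convexity of $X \mapsto X^{-1}$ applied to the integral representation $D(\rho\|\tau) = \int_0^\infty \tr[\rho((\tau+s)^{-1} - (\rho+s)^{-1})]\,ds$ with $\tau+s = (1-\delta)(\sigma+s) + \delta(\rho+s)$ yields only $(1-\delta)D(\rho\|\sigma)$, which is vacuous when $\rho \not\ll \sigma$ and in any case lacks the $\tfrac{1}{2}\|\rho-\sigma\|_1$ prefactor. The proposed fix --- decomposing $\rho-\sigma$ into Jordan parts, isolating the positive-mass contribution via $\tau \ge \delta\rho$, and controlling the remainder by a commuting approximation --- is stated as a plan rather than carried out. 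The step that lifts the scalar bound to an operator statement compatible with the trace-distance structure is exactly where the failure of $\rho$ to commute with the Jordan projectors of $\rho-\sigma$ bites, and you flag this yourself as unresolved. As written, this is a correct scalar lemma plus an admitted gap at the crux of the quantum case, not a proof of the theorem.
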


\begin{corollary}
  \label{cor:TRE_bd}
  Suppose two normalised quantum states $\rho_{AB}$ and $\sigma_{AB}$ are such that 
  \begin{align}
    \frac{1}{2}\norm{\rho_{AB} - \sigma_{AB}}_1 \leq \epsilon
  \end{align}
  for $\epsilon \in [0,1]$ and 
  \begin{align}
    \delta \tau_A \otimes \rho_B \leq \sigma_{AB} 
  \end{align}
  where $\delta \in (0,1)$ and $\tau_A$ is the completely mixed state on register $A$. Then, 
  \begin{align}
    D(\rho_{AB} || \sigma_{AB}) \leq \frac{\epsilon}{1- \delta/ |A|^2} \log \frac{|A|^2}{\delta}.
  \end{align}
\end{corollary}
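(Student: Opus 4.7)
The plan is to rewrite $\sigma_{AB}$ as a convex combination $(1-\delta')\tilde\sigma_{AB} + \delta'\rho_{AB}$ for a suitable state $\tilde\sigma_{AB}$ and an explicit $\delta'$, so that Theorem~\ref{thm:TRE_dist_bd} applies directly. To do so I first need a one-sided dominance $\sigma_{AB} \geq \delta' \rho_{AB}$; the hypothesis only gives $\sigma_{AB} \geq \delta \tau_A\otimes\rho_B = (\delta/|A|)\,\Id_A\otimes\rho_B$, so I must compare $\Id_A\otimes\rho_B$ with $\rho_{AB}$ itself.

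The key intermediate fact I would establish is that for any normalised $\rho_{AB}$,
\begin{align}
\rho_{AB} \;\leq\; |A|\,\Id_A\otimes\rho_B.
\end{align}
I would prove this by purification: let $|\phi\rangle_{ABR}$ purify $\rho_{AB}$. For a pure state, the Schmidt decomposition across the cut $A\,|\,BR$ yields $|\phi\rangle\langle\phi| \leq \operatorname{rank}(\phi_{BR})\,\Id_A\otimes\phi_{BR} \leq |A|\,\Id_A\otimes\phi_{BR}$, which is the same calculation that underlies $H^{\downarrow}_{\min}(A|BR)_{|\phi\rangle}\geq -\log|A|$ for pure states (one computes $\phi_{BR}^{-1/2}|\phi\rangle\langle\phi|\phi_{BR}^{-1/2}$ and sees it is rank one with operator norm at most $|A|$). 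Tracing out $R$ preserves the operator inequality and gives the claim.

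Combining this with the hypothesis yields $\sigma_{AB} \geq (\delta/|A|)\,\Id_A\otimes\rho_B \geq (\delta/|A|^2)\rho_{AB}$. Setting $\delta' := \delta/|A|^2 \in (0,1)$, I can then write
\begin{align}
\sigma_{AB} = \delta'\rho_{AB} + (1-\delta')\tilde\sigma_{AB}, \qquad \tilde\sigma_{AB} := \frac{\sigma_{AB}-\delta'\rho_{AB}}{1-\delta'},
\end{align}
where $\tilde\sigma_{AB}$ is a bona fide normalised state since $\sigma_{AB}-\delta'\rho_{AB}\geq 0$ and $\tr\sigma_{AB}=\tr\rho_{AB}=1$. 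Applying Theorem~\ref{thm:TRE_dist_bd} with $\delta'$ gives $D(\rho_{AB}\|\sigma_{AB}) \leq \tfrac{1}{2}\|\rho_{AB}-\tilde\sigma_{AB}\|_1\log(1/\delta')$.

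It remains to bound the trace-norm difference. A direct computation gives $\rho_{AB}-\tilde\sigma_{AB} = (\rho_{AB}-\sigma_{AB})/(1-\delta')$, so $\tfrac12\|\rho_{AB}-\tilde\sigma_{AB}\|_1 \leq \epsilon/(1-\delta')$. Substituting $\delta'=\delta/|A|^2$ produces the announced bound $\frac{\epsilon}{1-\delta/|A|^2}\log\frac{|A|^2}{\delta}$. The main obstacle is really the conceptual step of turning the weaker condition $\sigma_{AB}\geq \delta\tau_A\otimes\rho_B$ into the stronger dominance $\sigma_{AB}\geq \delta'\rho_{AB}$ needed to use Audenaert's inequality; after that the computation is routine, and the factor $|A|^2$ in the logarithm is exactly the price of the purification-based bound $\rho_{AB}\leq |A|\,\Id_A\otimes\rho_B$.
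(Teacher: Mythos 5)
Your proposal is correct and follows the same route as the paper: establish the dominance $\sigma_{AB}\geq (\delta/|A|^2)\rho_{AB}$ via $\rho_{AB}\leq |A|\,\Id_A\otimes\rho_B$, decompose $\sigma_{AB}=\delta'\rho_{AB}+(1-\delta')\tilde\sigma_{AB}$ with $\delta'=\delta/|A|^2$, compute $\tfrac12\|\rho-\tilde\sigma\|_1\leq\epsilon/(1-\delta')$, and invoke Theorem~\ref{thm:TRE_dist_bd}. The only difference is that you spell out the standard operator inequality $\rho_{AB}\leq|A|\,\Id_A\otimes\rho_B$ via a purification/Schmidt-rank argument (equally valid is the pinching argument), a step the paper treats as self-evident.
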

\begin{proof}
  Note that 
  \begin{align*}
    \frac{\delta}{|A|^2} \rho_{AB} \leq \delta \tau_A \otimes \rho_B \leq \sigma_{AB}.
  \end{align*}
  So, we can write $\sigma$ as
  \begin{align}
    \sigma_{AB} = \rndBrk{1- \frac{\delta}{|A|^2}} \sigma'_{AB}+ \frac{\delta}{|A|^2} \rho_{AB}
  \end{align}
  for some normalised quantum state $\sigma'_{AB}$. For this state, we have 
  \begin{align*}
    \frac{1}{2} \norm{\rho_{AB} - \sigma'_{AB}}_1 &= \frac{1}{2} \norm{\rho_{AB} - \frac{1}{1- \delta/ |A|^2}(\sigma_{AB} - \frac{\delta}{|A|^2} \rho_{AB})}_1 \\
    &= \frac{1}{2} \frac{1}{1- \delta/ |A|^2} \norm{\rho_{AB} - \frac{\delta}{|A|^2} \rho_{AB} - (\sigma_{AB} - \frac{\delta}{|A|^2} \rho_{AB})}_1 \\
    &\leq \frac{\epsilon}{1- \delta/ |A|^2}.
  \end{align*}
  Using Theorem \ref{thm:TRE_dist_bd}, we have that 
  \begin{align*}
    D(\rho_{AB} || \sigma_{AB}) &\leq \frac{1}{2} \norm{\rho_{AB} - \sigma'_{AB}}_1 \log \frac{|A|^2}{\delta}\\
    &\leq \frac{\epsilon}{1- \delta/ |A|^2} \log \frac{|A|^2}{\delta}.
  \end{align*}
\end{proof}
Now we are ready to state and prove the universal chain rule for the smooth min-entropy for quantum states. 
\begin{theorem}
  \label{th:Hmin_eps_chain_rule}
  For a normalised quantum state $\rho_{A_1^n B}$ such that for all $k \in [n]$ the dimension $|A_k| = |A|$, and $\epsilon \in (0,1)$ such that $\mu = \rndBrk{\frac{2\epsilon }{1- \epsilon/ |A|^2} \log \frac{|A|^2}{\epsilon}}^{1/3} = O\rndBrk{\rndBrk{\epsilon\log\frac{|A|}{\epsilon}}^{1/3}}$ lies in $(0,1)$, we have the chain rule
  \begin{align}
    H^{\downarrow, 2\mu + \epsilon/2}_{\min}(A_1^n | B)_{\rho} &\geq \sum_{k=1}^n H^{\downarrow, \epsilon/2}_{\min}(A_k | A_1^{k-1} B)_{\rho} - n \mu - \frac{1}{\mu^2} - \log \frac{1}{1 - \mu^2} - \log\rndBrk{\frac{2}{\mu^2} + \frac{1}{1-\mu}}
  \end{align}
\end{theorem}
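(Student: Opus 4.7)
The plan mirrors the classical sketch in Section~\ref{sec:uni_ch_rul_cl1}, with the key difficulty of defining the correct quantum analogue of the product distribution $r_{A_1^n B}$ resolved by Lemma~\ref{lemm:cond_st_rel_ent_bd}. The strategy is to build a normalised auxiliary state $\sigma_{A_1^n B}$ satisfying (i) $D_m(\rho_{A_1^n B} \| \sigma_{A_1^n B}) \leq n \cdot O(\epsilon \log(|A|/\epsilon))$ and (ii) $H_{\min}(A_1^n|B)_\sigma \geq \sum_{k} H^{\downarrow,\epsilon/2}_{\min}(A_k | A_1^{k-1} B)_\rho$ up to small additive corrections. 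Combining (i) with the quantum substate theorem (Theorem~\ref{thm:substate_th}) produces a smooth $D_{\max}$ bound, plugging everything into the entropic triangle inequality (Lemma~\ref{lemm:ent_tri_ineq}) yields a bound on $H_{\min}^{\mu}(A_1^n|B)_\rho$ with $\mu$ as in the theorem statement, and the final conversion to $H_{\min}^{\downarrow, 2\mu + \epsilon/2}$ goes through Eq.~\eqref{eq:Hmin_up_dn_reln} and picks up the $\log(2/\mu^2 + 1/(1-\mu))$ term.

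For each $k \in [n]$, I would pick $\tilde{\rho}^{(k)}_{A_1^k B} \in B_{\epsilon/2}(\rho_{A_1^k B})$ that attains $H^{\downarrow, \epsilon/2}_{\min}(A_k | A_1^{k-1} B)_\rho$. Applying Lemma~\ref{lemm:cond_state_dist} with $A_1^{k-1}B$ in place of $B$ gives a state $\eta^{(k)}_{A_1^k B}$ close to $\rho_{A_1^k B}$ with marginal $\eta^{(k)}_{A_1^{k-1} B} = \rho_{A_1^{k-1} B}$; because the conjugation by $\rho_{A_1^{k-1} B}^{1/2} \tilde{\rho}^{(k),-1/2}_{A_1^{k-1} B}$ commutes with $\Id_{A_k}$, the $\downarrow$-min-entropy of $A_k$ given $A_1^{k-1} B$ is preserved by this reconditioning. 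Next, define the full-rank state $\bar{\rho}^{(k)}_{A_1^k B} := (1-\delta) \eta^{(k)}_{A_1^k B} + \delta \tau_{A_k} \otimes \rho_{A_1^{k-1} B}$ with $\delta \sim \epsilon$, so that $\bar{\rho}^{(k)}_{A_1^{k-1} B} = \rho_{A_1^{k-1} B}$, $\bar{\rho}^{(k)} \geq \delta \tau_{A_k} \otimes \rho_{A_1^{k-1} B}$, and $\bar{\rho}^{(k)}$ stays $O(\epsilon)$-close to $\rho_{A_1^k B}$ in trace distance. Corollary~\ref{cor:TRE_bd} then delivers $D(\rho_{A_1^k B} \| \bar{\rho}^{(k)}_{A_1^k B}) = O(\epsilon \log(|A|/\epsilon))$, verifying the hypothesis of Lemma~\ref{lemm:cond_st_rel_ent_bd}.

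Feeding $(\bar{\rho}^{(k)})_{k=1}^n$ into Lemma~\ref{lemm:cond_st_rel_ent_bd} yields $\sigma_{A_1^n B}$, which automatically satisfies (i). The crucial step is bounding $H_{\min}(A_1^n|B)_\sigma$. From the explicit form Eq.~\eqref{eq:sigma_partial_st}, each operator factor appearing in the integrand defining $\sigma_{A_1^k B}$ lives on $A_1^{k-1} B$ only, so none of them touches $A_k$. Starting from the operator inequality $\bar{\rho}^{(k)}_{A_1^k B} \leq e^{-\lambda_k} \Id_{A_k} \otimes \bar{\rho}^{(k)}_{A_1^{k-1} B}$ with $\lambda_k := H^{\downarrow}_{\min}(A_k | A_1^{k-1} B)_{\bar{\rho}^{(k)}}$, one substitutes inside the integrand; the innermost $\bar{\rho}^{(k)}_{A_1^{k-1} B}$ then sandwiches cleanly with the two $[\bar{\rho}^{(k)}_{A_1^{k-1} B}]^{-(1\pm it)/2}$ factors already present, collapsing them to the identity and reassembling exactly $\sigma_{A_1^{k-1} B}$. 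The net result is $\sigma_{A_1^k B} \leq e^{-\lambda_k} \Id_{A_k} \otimes \sigma_{A_1^{k-1} B}$, i.e.\ $H^{\downarrow}_{\min}(A_k | A_1^{k-1} B)_\sigma \geq \lambda_k$. Chaining with Eq.~\eqref{eq:Hmin_dn_non_sm_ch_rule} and $H_{\min} \geq H^{\downarrow}_{\min}$ gives $H_{\min}(A_1^n|B)_\sigma \geq \sum_k \lambda_k$, and $\lambda_k \geq H^{\downarrow, \epsilon/2}_{\min}(A_k | A_1^{k-1} B)_\rho - O(\delta)$ because mixing in weight $\delta$ of the maximally mixed state changes the operator bound by at most $\delta/|A|$.

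The hardest step is that operator-inequality propagation through the Lemma~\ref{lemm:cond_st_rel_ent_bd} integral; the whole argument succeeds precisely because the conjugating operators relevant to $A_k$ act trivially on $A_k$ and the outer $[\bar{\rho}^{(k)}_{A_1^{k-1} B}]^{-(1\pm it)/2}$ factors exactly cancel the $\bar{\rho}^{(k)}_{A_1^{k-1} B}$ that comes out of the operator bound. Once this is in hand, the assembly is routine: choosing $\mu$ with $\mu^3 \sim \epsilon \log(|A|/\epsilon)$ so that Theorem~\ref{thm:substate_th} gives $D^{\mu}_{\max}(\rho \| \sigma) \leq n\mu + 1/\mu^2 + \log 1/(1-\mu^2)$, applying Lemma~\ref{lemm:ent_tri_ineq} to obtain $H_{\min}^{\mu}(A_1^n|B)_\rho$, and finally invoking Eq.~\eqref{eq:Hmin_up_dn_reln} to translate this into $H_{\min}^{\downarrow, 2\mu + \epsilon/2}(A_1^n|B)_\rho$ at the cost of the $\log(2/\mu^2 + 1/(1-\mu))$ term that appears in the statement.
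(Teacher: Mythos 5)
Your proposal follows the paper's strategy closely: build an auxiliary state $\sigma$ via Lemma~\ref{lemm:cond_st_rel_ent_bd}, bound $D_m(\rho\|\sigma)$ linearly in $n$, upgrade to a smooth $D_{\max}$ bound with the substate theorem, and feed everything into the entropic triangle inequality. You have also correctly identified the crux: the operator inequality $\bar{\rho}^{(k)}_{A_1^k B}\leq e^{-\lambda_k}\Id_{A_k}\otimes\bar{\rho}^{(k)}_{A_1^{k-1} B}$ propagates through the integral defining $\sigma$ because the $(\bar{\rho}^{(k)}_{A_1^{k-1}B})^{-(1\pm it)/2}$ factors already present absorb the $\bar{\rho}^{(k)}_{A_1^{k-1}B}$ that emerges, telescoping back to $\sigma_{A_1^{k-1}B}$.

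Two things separate this from the paper's proof and one is a genuine gap. First, the reconditioning via Lemma~\ref{lemm:cond_state_dist} is superfluous. The telescoping cancellation works for whatever $\bar{\rho}^{(k)}_{A_1^{k-1}B}$ is; you do not need $\bar{\rho}^{(k)}_{A_1^{k-1}B}=\rho_{A_1^{k-1}B}$. The paper sets $\bar{\rho}^{(k)}:=(1-\delta)\tilde{\rho}^{(k)}_{A_1^kB}+\delta\,\tau_{A_k}\otimes\rho_{A_1^{k-1}B}$ directly and deduces $\bar{\rho}^{(k)}_{A_1^kB}\leq e^{-\lambda_k}\Id_{A_k}\otimes\bar{\rho}^{(k)}_{A_1^{k-1}B}$ from quasi-concavity of $H^{\downarrow}_{\min}$ (using that the maximally mixed component attains $\log|A|\geq\lambda_k$). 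Your detour costs a factor of $(\sqrt{2}+1)$ in the trace distance (so your $\mu$ would not match the one in the statement) and also requires $\tilde{\rho}^{(k)}_{A_1^{k-1}B}$ to be full rank for Lemma~\ref{lemm:cond_state_dist} to return the right marginal, an assumption you have not arranged. Incidentally, the ``$-O(\delta)$'' loss in $\lambda_k$ you mention is not there: since $e^{-\lambda_k}\geq 1/|A|$, the mixture satisfies the same operator bound exactly.

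Second, and this is the real gap: you never address rank-deficient $\rho$. Lemma~\ref{lemm:cond_st_rel_ent_bd} requires the $\bar{\rho}^{(k)}_{A_1^kB}$ to be full rank, which holds only if $\rho_{A_1^{k-1}B}$ is full rank, i.e.\ only if $\rho_{A_1^nB}$ is full rank. You simply start with $\epsilon/2$ smoothing and assert the LHS lands at $H^{\downarrow,\,2\mu+\epsilon/2}_{\min}$, but Eq.~\eqref{eq:Hmin_up_dn_reln} alone only gives $H^{\downarrow,\,2\mu}_{\min}$ from $H^{\mu}_{\min}$. The extra $+\epsilon/2$ on the left and the $\epsilon/2$ (rather than $\epsilon$) on the right both come from the paper's Case~2: run the full-rank argument on $\rho'=(1-\nu)\rho+\nu\tau$ with smoothing parameter $\epsilon$, then translate to $\rho$ at cost $\pm\sqrt{2\nu}$ in each smoothing parameter, and choose $\nu=\epsilon^2/8$ so $\sqrt{2\nu}=\epsilon/2$. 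Without this perturbation step the proof does not reach the stated theorem.
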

\begin{proof}
  \textbf{Case 1:} Let's first consider states $\rho_{A_1^n B}$, which are full rank. \\

  \noindent For every $k \in [n]$, define $\lambda_k := H^{\downarrow, \epsilon}_{\min}(A_k | A_1^{k-1}B)_{\rho}$ and let the state $\tilde{\rho}^{(k)}_{A_1^k B}$ be such that\footnote{Since $H^{\downarrow}_{\min}$ is discontinuous, strictly speaking, states achieving $H^{\downarrow, \epsilon}_{\min}(A_k | A_1^{k-1}B)_{\rho}$ may not exist. To take this into account, we can consider $\lambda_k$ to be arbitrarily close to $H^{\downarrow, \epsilon}_{\min}(A_k | A_1^{k-1}B)_{\rho}$. However, for simplicity we assume that such states exist throughout this paper. }
  \begin{align}
    & P(\rho_{A_1^k B}, \tilde{\rho}^{(k)}_{A_1^k B}) \leq \epsilon \\
    & \tilde{\rho}^{(k)}_{A_1^k B} \leq e^{-\lambda_k} \Id_{A_k} \otimes \tilde{\rho}^{(k)}_{A_1^{k-1} B}.
  \end{align}
  Without loss of generality, we will choose the states $\tilde{\rho}^{(k)}_{A_1^k B}$ to be normalised states, since dividing $\tilde{\rho}^{(k)}_{A_1^k B}$ by a normalising factor $N \leq 1$ only decreases its purified distance from $\rho$ and leaves the operator inequality above invariant. Further, let $\delta \in (0,1)$ be a small parameter. For every $ k \in [n]$, we define the normalised states, 
  \begin{align}
    \bar{\rho}^{(k)}_{A_1^k B} := (1- \delta) \tilde{\rho}^{(k)}_{A_1^k B} + \delta \tau_{A_k} \otimes \rho_{A_1^{k-1} B}
  \end{align}
  where $\tau_{A_k}$ is the completely mixed state on register $A_k$. Also, define $\bar{\rho}_B^{(0)} := \rho_B$ for notational convenience. Since, $\rho_{A_1^{k-1} B}$ is full rank, $\bar{\rho}^{(k)}_{A_1^k B}$ are also full rank for all $k$.\\

  For each $k \in [n]$, these states satisfy
  \begin{align*}
    \frac{1}{2}\norm{\rho_{A_1^{k} B} - \bar{\rho}^{(k)}_{A_1^k B}}_1 &\leq \frac{1-\delta}{2}\norm{\rho_{A_1^{k} B} - \tilde{\rho}^{(k)}_{A_1^k B}}_1 + \frac{\delta}{2}\norm{\rho_{A_1^{k} B} - \tau_{A_k} \otimes \rho_{A_1^{k-1} B}}_1 \\
    &\leq \epsilon + \delta 
    \numberthis
  \end{align*}
  and
  \begin{align}
    D(\rho_{A_1^k B} || \bar{\rho}^{(k)}_{A_1^k B}) \leq \frac{\epsilon + \delta}{1- \delta/ |A|^2} \log \frac{|A|^2}{\delta}.
    \label{eq:Drho_rho_bar_bd}
  \end{align}
  using Corollary \ref{cor:TRE_bd}. Let's define the right-hand side above as 
  \begin{align} 
    z(\epsilon, \delta) := \frac{\epsilon + \delta}{1- \delta/ |A|^2} \log \frac{|A|^2}{\delta}.
  \end{align}
  Note that this can be made small, say $O(\epsilon\log 1/\epsilon)$, by choosing $\delta = \epsilon$ for example. \\

  \noindent Using Lemma \ref{lemm:cond_st_rel_ent_bd}, we have that the state $\sigma$ defined as 
  \begin{align}
    \sigma_{A_1^n B} &:= \int_{-\infty}^{\infty} dt \beta_0 (t) \prod_{k=0}^{n-1} \sqBrk{\rndBrk{\bar{\rho}^{(k)}_{A_1^k B}}^{\frac{1-it}{2}} \rndBrk{\bar{\rho}^{(k+1)}_{A_1^{k} B}}^{-\frac{1-it}{2}}} \cdot \bar{\rho}^{(n)}_{A_1^n B} \cdot \prod_{k=n-1}^{0} \sqBrk{\rndBrk{\bar{\rho}^{(k+1)}_{A_1^{k} B}}^{-\frac{1+it}{2}} \rndBrk{\bar{\rho}^{(k)}_{A_1^k B}}^{\frac{1+it}{2}}} \label{eq:Hmin_ch_rule_th_sigma_def1}\\
    &= \int_{-\infty}^{\infty} dt \beta_0 (t) \rho_B^{\frac{1-it}{2}} \rndBrk{\bar{\rho}^{(1)}_{B}}^{-\frac{1-it}{2}} \rndBrk{\bar{\rho}^{(1)}_{A_1 B}}^{\frac{1-it}{2}} \rndBrk{\bar{\rho}^{(2)}_{A_1 B}}^{-\frac{1-it}{2}} \rndBrk{\bar{\rho}^{(2)}_{A_1^2 B}}^{\frac{1-it}{2}}\ \cdots \ \rndBrk{\bar{\rho}^{(n)}_{A_1^{n-1} B}}^{-\frac{1-it}{2}} \nonumber\\
    & \bar{\rho}^{(n)}_{A_1^n B} \cdot \rndBrk{\bar{\rho}^{(n)}_{A_1^{n-1} B}}^{-\frac{1+it}{2}}\ \cdots \ \rndBrk{\bar{\rho}^{(2)}_{A_1^2 B}}^{\frac{1+it}{2}} \rndBrk{\bar{\rho}^{(2)}_{A_1 B}}^{-\frac{1+it}{2}} \rndBrk{\bar{\rho}^{(1)}_{A_1 B}}^{\frac{1+it}{2}} \rndBrk{\bar{\rho}^{(1)}_{B}}^{-\frac{1+it}{2}} \rho_B^{\frac{1+it}{2}} \label{eq:Hmin_ch_rule_th_sigma_def2}
  \end{align}
  is normalised and satisfies
  \begin{align*}
    D_m (\rho_{A_1^n B} || \sigma_{A_1^n B}) \leq n z(\epsilon, \delta). \numberthis
 \end{align*}
  Further, 
  \begin{align*}
    \sigma_{A_1^k B} &= \int_{-\infty}^{\infty} dt \beta_0 (t) \prod_{j=0}^{k-1} \sqBrk{\rndBrk{\bar{\rho}^{(j)}_{A_1^j B}}^{\frac{1-it}{2}} \rndBrk{\bar{\rho}^{(j+1)}_{A_1^{j} B}}^{-\frac{1-it}{2}}} \cdot \bar{\rho}^{(k)}_{A_1^k B} \cdot \prod_{j=k-1}^{0} \sqBrk{\rndBrk{\bar{\rho}^{(j+1)}_{A_1^{j} B}}^{-\frac{1+it}{2}} \rndBrk{\bar{\rho}^{(j)}_{A_1^j B}}^{\frac{1+it}{2}}}
    \numberthis
    \label{eq:eq:Hmin_ch_rule_th_sigma_partial_st}
  \end{align*}
  for $k \in [n]$. Using the substate theorem\footnote{It is quite remarkable that the substate theorem works with a $D_m$ bound and the generalised GT inequality only yields a $D_m$ bound. All our proofs exploit this fact. Whether it is an incredible coincidence or an indication of the tightness of these bounds: you decide.}, we have that for $\mu := z(\epsilon, \delta)^{1/3}$ (we omit the dependence of $\mu$ on $\epsilon$ and $\delta$ for clarity)
  \begin{align}
    D^{\mu}_{\max}(\rho_{A_1^n B} || \sigma_{A_1^n B}) &\leq \frac{D_m (\rho_{A_1^n B} || \sigma_{A_1^n B}) + 1}{\mu^2} + \log \frac{1}{1 - \mu^2}\\
    &\leq n \mu + \frac{1}{\mu^2} + \log \frac{1}{1 - \mu^2}.
  \end{align}
  A simple application of the entropic triangle inequality in Eq. \ref{eq:ent_tri_ineq_simp} gives us that
  \begin{align*}
    H^{\mu}_{\min}(A_1^n | B)_{\rho} &\geq H_{\min}(A_1^n | B)_{\sigma} - D^{\mu}_{\max}(\rho_{A_1^n B} || \sigma_{A_1^n B}) \\
    &\geq H_{\min}(A_1^n | B)_{\sigma} - n \mu - \frac{1}{\mu^2} - \log \frac{1}{1 - \mu^2} \\
    &\geq H^{\downarrow}_{\min}(A_1^n | B)_{\sigma}  - n \mu - \frac{1}{\mu^2} - \log \frac{1}{1 - \mu^2}\\
    &\geq \sum_{k=1}^n H^{\downarrow}_{\min}(A_k | A_1^{k-1} B)_{\sigma}  - n \mu - \frac{1}{\mu^2} - \log \frac{1}{1 - \mu^2}
    \numberthis
    \label{eq:Hmin_rho_bd_in_sigma}
  \end{align*}
  where we have used the chain rule for $H^{\downarrow}_{\min}$ (Eq. \ref{eq:Hmin_dn_non_sm_ch_rule}). We will now show that $\sigma$ is such that $H^{\downarrow}_{\min}(A_k | A_1^{k-1} B)_{\sigma} \geq H^{ \downarrow, \epsilon}_{\min}(A_k | A_1^{k-1} B)_{\rho}$. \\

  Using the quasi-concavity of $H_{\min}^{\downarrow}$ for $\bar{\rho}^{(k)}$ \cite[Pg 73]{TomamichelBook16}, we have
  \begin{align*}
    H_{\min}^{\downarrow}(A_k | A_1^{k-1} B)_{\bar{\rho}^{(k)}} &\geq \min \curlyBrk{H_{\min}^{\downarrow}(A_k | A_1^{k-1} B)_{\tilde{\rho}^{(k)}}, H_{\min}^{\downarrow}(A_k | A_1^{k-1} B)_{\tau_{A_k}\otimes \rho_{A_1^{k-1}B}}} \\
    &\geq \min \curlyBrk{H_{\min}^{\downarrow}(A_k | A_1^{k-1} B)_{\tilde{\rho}^{(k)}}, \log |A|} \\
    &\geq H_{\min}^{\downarrow}(A_k | A_1^{k-1} B)_{\tilde{\rho}^{(k)}}.
  \end{align*}
  Therefore, we have that 
  \begin{align}
    \bar{\rho}^{(k)}_{A_1^k B} &\leq e^{-\lambda_k} \Id_{A_k} \otimes \bar{\rho}_{A_1^{k-1} B}^{(k)}.
  \end{align}
  This implies that 
  \begin{align*}
    \sigma_{A_1^k B} &= \int_{-\infty}^{\infty} dt \beta_0 (t) \prod_{j=0}^{k-1} \sqBrk{\rndBrk{\bar{\rho}^{(j)}_{A_1^j B}}^{\frac{1-it}{2}} \rndBrk{\bar{\rho}^{(j+1)}_{A_1^{j} B}}^{-\frac{1-it}{2}}} \cdot \bar{\rho}^{(k)}_{A_1^k B} \cdot \prod_{j=k-1}^{0} \sqBrk{\rndBrk{\bar{\rho}^{(j+1)}_{A_1^{j} B}}^{-\frac{1+it}{2}} \rndBrk{\bar{\rho}^{(j)}_{A_1^j B}}^{\frac{1+it}{2}}} \\
    &\leq e^{-\lambda_k} \int_{-\infty}^{\infty} dt \beta_0 (t) \prod_{j=0}^{k-1} \sqBrk{\rndBrk{\bar{\rho}^{(j)}_{A_1^j B}}^{\frac{1-it}{2}} \rndBrk{\bar{\rho}^{(j+1)}_{A_1^{j} B}}^{-\frac{1-it}{2}}} \cdot  \Id_{A_k} \otimes \bar{\rho}^{(k)}_{A_1^{k-1} B} \cdot \prod_{j=k-1}^{0} \sqBrk{\rndBrk{\bar{\rho}^{(j+1)}_{A_1^{j} B}}^{-\frac{1+it}{2}} \rndBrk{\bar{\rho}^{(j)}_{A_1^j B}}^{\frac{1+it}{2}}} \\
    &\leq e^{-\lambda_k} \Id_{A_k} \otimes \int_{-\infty}^{\infty} dt \beta_0 (t) \prod_{j=0}^{k-2} \sqBrk{\rndBrk{\bar{\rho}^{(j)}_{A_1^j B}}^{\frac{1-it}{2}} \rndBrk{\bar{\rho}^{(j+1)}_{A_1^{j} B}}^{-\frac{1-it}{2}}} \cdot \bar{\rho}^{(k-1)}_{A_1^{k-1} B} \cdot \prod_{j=k-2}^{0} \sqBrk{\rndBrk{\bar{\rho}^{(j+1)}_{A_1^{j} B}}^{-\frac{1+it}{2}} \rndBrk{\bar{\rho}^{(j)}_{A_1^j B}}^{\frac{1+it}{2}}} \\
    &= e^{-\lambda_k} \Id_{A_k} \otimes \sigma_{A_1^{k-1} B}.
  \end{align*}
  Or, equivalently that $H^{\downarrow}_{\min}(A_k | A_1^{k-1} B)_{\sigma} \geq \lambda_k = H^{\downarrow, \epsilon}_{\min}(A_k | A_1^{k-1} B)_{\rho}$. Plugging this bound in Eq. \ref{eq:Hmin_rho_bd_in_sigma}, we get
  \begin{align}
    H^{\mu}_{\min}(A_1^n | B)_{\rho} &\geq \sum_{k=1}^n H^{\downarrow, \epsilon}_{\min}(A_k | A_1^{k-1} B)_{\rho} - n \mu - \frac{1}{\mu^2} - \log \frac{1}{1 - \mu^2}.
  \end{align}
  Using Eq. \ref{eq:Hmin_up_dn_reln}, we can upper bound the left-hand side with $H^{\downarrow, 2\mu}_{\min}(A_1^n | B)$ to derive 
  \begin{align}
    H^{\downarrow, 2\mu}_{\min}(A_1^n | B)_{\rho} &\geq \sum_{k=1}^n H^{\downarrow, \epsilon}_{\min}(A_k | A_1^{k-1} B)_{\rho} - n \mu - \frac{1}{\mu^2} - \log \frac{1}{1 - \mu^2} - \log\rndBrk{\frac{2}{\mu^2} + \frac{1}{1-\mu}}.
  \end{align}
  \textbf{Case 2:} The above proves the Theorem for the case when $\rho$ was full rank. If $\rho$ was not full rank, then for an arbitrary $\nu \in (0,1)$, the state $\rho_{A_1^n B}' := (1-\nu) \rho_{A_1^n B} + \nu \tau_{A_1^n B}$, which has full support, satisfies
  \begin{align}
    H^{\downarrow,2\mu}_{\min}(A_1^n | B)_{\rho'} &\geq \sum_{k=1}^n H^{\downarrow, \epsilon}_{\min}(A_k | A_1^{k-1} B)_{\rho'} - n \mu - \frac{1}{\mu^2} - \log \frac{1}{1 - \mu^2} - \log\rndBrk{\frac{2}{\mu^2} + \frac{1}{1-\mu}}
  \end{align}
  which implies that 
  \begin{align}
    H^{\downarrow, 2\mu + \sqrt{2\nu}}_{\min}(A_1^n | B)_{\rho} &\geq \sum_{k=1}^n H^{\downarrow, \epsilon - \sqrt{2\nu}}_{\min}(A_k | A_1^{k-1} B)_{\rho} - n \mu - \frac{1}{\mu^2} - \log \frac{1}{1 - \mu^2} - \log\rndBrk{\frac{2}{\mu^2} + \frac{1}{1-\mu}}
  \end{align}
  for every $\nu \in (0,1)$. Unfortunately, since $H^{\downarrow}_{\min}$ is not continuous, we cannot use continuity to claim the above for $\rho$ itself. We can, however, simply choose $\nu = \frac{\epsilon^2}{8}$, which gives us
  \begin{align}
    H^{\downarrow, 2\mu + \epsilon/2}_{\min}(A_1^n | B)_{\rho} &\geq \sum_{k=1}^n H^{\downarrow, \epsilon/2}_{\min}(A_k | A_1^{k-1} B)_{\rho'} - n \mu - \frac{1}{\mu^2} - \log \frac{1}{1 - \mu^2} - \log\rndBrk{\frac{2}{\mu^2} + \frac{1}{1-\mu}}
  \end{align}
  where $\mu = z(\epsilon, \delta)^{1/3} = \rndBrk{\frac{\epsilon + \delta}{1- \delta/ |A|^2} \log \frac{|A|^2}{\delta}}^{1/3}$. To derive the bound in the Theorem, we make the concrete choice $\delta = \epsilon$. 
\end{proof}

\subsection{Dual relation for the smooth max-entropy}

In this section, we will use duality to prove a universal chain rule for the alternate smooth max-entropy $\bar{H}^{\uparrow, \epsilon}_{0}$. Using Eq. \ref{eq:Hmax_up_dn_reln}, one can use this to decompose the conventional smooth max-entropy as well. 

\begin{corollary}
  \label{cor:H0_eps_chain_rule}
  For a normalised quantum state $\rho_{A_1^n B}$ such that for all $k \in [n]$ the dimension $|A_k| = |A|$, and $\epsilon \in (0,1)$ such that $\mu = \rndBrk{\frac{2\epsilon }{1- \epsilon/ |A|^2} \log \frac{|A|^2}{\epsilon}}^{1/3} = O\rndBrk{\rndBrk{\epsilon\log\frac{|A|}{\epsilon}}^{1/3}}$ lies in $(0,1)$, we have the chain rule
  \begin{align}
    \bar{H}_0^{\uparrow, 2\mu + \epsilon/2}(A_1^n | B)_{\rho} &\leq \sum_{k=1}^n \bar{H}_0^{\uparrow, \epsilon/2}(A_k | A_1^{k-1} B)_{\rho} + n \mu + \frac{1}{\mu^2} + \log \frac{1}{1 - \mu^2} + \log\rndBrk{\frac{2}{\mu^2} + \frac{1}{1-\mu}}.
  \end{align}
\end{corollary}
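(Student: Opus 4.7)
The corollary is purely a duality statement, so the plan is to lift Theorem \ref{th:Hmin_eps_chain_rule} across the $(H^{\downarrow, \epsilon}_{\min}, \bar{H}^{\uparrow, \epsilon}_{0})$ duality of Eq. \ref{eq:Hmin_dn_dual}, after first relabelling the $A_k$ registers so that the ``past'' in the chain rule becomes the ``future'' we actually want on the right-hand side of the corollary. Concretely, I would purify $\rho_{A_1^n B}$ into a pure state $\rho_{A_1^n B C}$ by introducing an auxiliary register $C$ of suitable dimension; the partial state on $A_1^n B$ coincides with the original $\rho$.

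Next, define the permuted labels $\tilde{A}_j := A_{n+1-j}$, so that as a set $\tilde{A}_1^n = A_1^n$ but the order is reversed. Apply Theorem \ref{th:Hmin_eps_chain_rule} to the state $\rho_{\tilde{A}_1^n C}$, treating $C$ in the role of the conditioning register $B$ in that theorem. This yields
\begin{align*}
    H^{\downarrow, 2\mu + \epsilon/2}_{\min}(\tilde{A}_1^n \mid C)_{\rho} \ \geq \ \sum_{k=1}^n H^{\downarrow, \epsilon/2}_{\min}(\tilde{A}_k \mid \tilde{A}_1^{k-1} C)_{\rho} - n\mu - \frac{1}{\mu^2} - \log \frac{1}{1-\mu^2} - \log\!\rndBrk{\frac{2}{\mu^2} + \frac{1}{1-\mu}}.
\end{align*}
Since $\rho_{A_1^n B C}$ is pure, I can then invoke the duality $H^{\downarrow, \epsilon}_{\min}(X \mid Y) = - \bar{H}^{\uparrow, \epsilon}_{0}(X \mid Z)$ from Eq. \ref{eq:Hmin_dn_dual} term by term: the left-hand side becomes $-\bar{H}^{\uparrow, 2\mu+\epsilon/2}_{0}(\tilde{A}_1^n \mid B) = -\bar{H}^{\uparrow, 2\mu+\epsilon/2}_{0}(A_1^n \mid B)$ (the labelling of $\tilde{A}_1^n$ is irrelevant for the whole collection), and each summand $H^{\downarrow, \epsilon/2}_{\min}(\tilde{A}_k \mid \tilde{A}_1^{k-1} C)$ becomes $-\bar{H}^{\uparrow, \epsilon/2}_{0}(\tilde{A}_k \mid \tilde{A}_{k+1}^n B)$, where the complement is taken with respect to the full pure system $\tilde{A}_1^n B C$.

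Finally, unwind the relabelling: $\tilde{A}_k = A_{n+1-k}$ and, as a set, $\tilde{A}_{k+1}^n = A_1^{n-k}$. Reindexing the sum by $j = n+1-k$ identifies $\bar{H}^{\uparrow, \epsilon/2}_{0}(\tilde{A}_k \mid \tilde{A}_{k+1}^n B)$ with $\bar{H}^{\uparrow, \epsilon/2}_{0}(A_j \mid A_1^{j-1} B)$, and flipping the overall sign of the displayed inequality reverses the direction and delivers exactly the claimed upper bound. There is no real obstacle here; the only point requiring care is the bookkeeping of the purification and the index reversal, to make sure that ``conditioning on $A_1^{k-1}$ and $B$'' on the max-entropy side matches ``conditioning on $A_{k+1}^n$ and $C$'' on the min-entropy side under duality.
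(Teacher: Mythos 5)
Your proposal is correct and is essentially the same argument as the paper's: purify $\rho_{A_1^n B}$ to $\rho_{A_1^n B C}$, apply Theorem \ref{th:Hmin_eps_chain_rule} to the min-entropy of $A_1^n$ conditioned on the purifying register $C$ with the $A_k$'s taken in reverse order, invoke the duality $H^{\downarrow,\epsilon}_{\min}(A\mid C) = -\bar{H}^{\uparrow,\epsilon}_0(A\mid B)$ term by term, and negate. The paper encodes the index reversal implicitly by writing the sum as $\sum_{k=n}^{1} H^{\downarrow,\epsilon/2}_{\min}(A_k \mid A_{k+1}^n C)$, whereas you make it explicit with the relabelling $\tilde{A}_j := A_{n+1-j}$; this is merely a notational difference, and your bookkeeping of which system is the duality complement of $\tilde{A}_k \tilde{A}_1^{k-1} C$ is correct.
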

\begin{proof}
  Let $\rho_{A_1^n B C}$ be a purification of $\rho_{A_1^n B}$. Then, using the universal chain rule for the smooth min-entropy, we have 
  \begin{align}
    H^{\downarrow, 2\mu + \epsilon/2}_{\min}(A_1^n | C)_{\rho} &\geq \sum_{k=n}^1 H^{\downarrow, \epsilon/2}_{\min}(A_k | A_{k+1}^n C)_{\rho} - n \mu - \frac{1}{\mu^2} - \log \frac{1}{1 - \mu^2} - \log\rndBrk{\frac{2}{\mu^2} + \frac{1}{1-\mu}}.
  \end{align}
  Using the duality relation in Eq. \ref{eq:Hmin_dn_dual} for $H^{\downarrow, \epsilon}_{\min}$, we get 
  \begin{align}
    -\bar{H}_0^{\uparrow, 2\mu + \epsilon/2}(A_1^n | B)_{\rho} &\geq \sum_{k=1}^n -\bar{H}_0^{\uparrow, \epsilon/2}(A_k | A_1^{k-1} B)_{\rho} - n \mu - \frac{1}{\mu^2} - \log \frac{1}{1 - \mu^2} - \log\rndBrk{\frac{2}{\mu^2} + \frac{1}{1-\mu}}
  \end{align}
  which implies the statement in the corollary. 
\end{proof}

\section{Unstructured approximate entropy accumulation}
\begin{sloppypar}
  \begin{figure}
    \centering
    \includegraphics[scale=0.1]{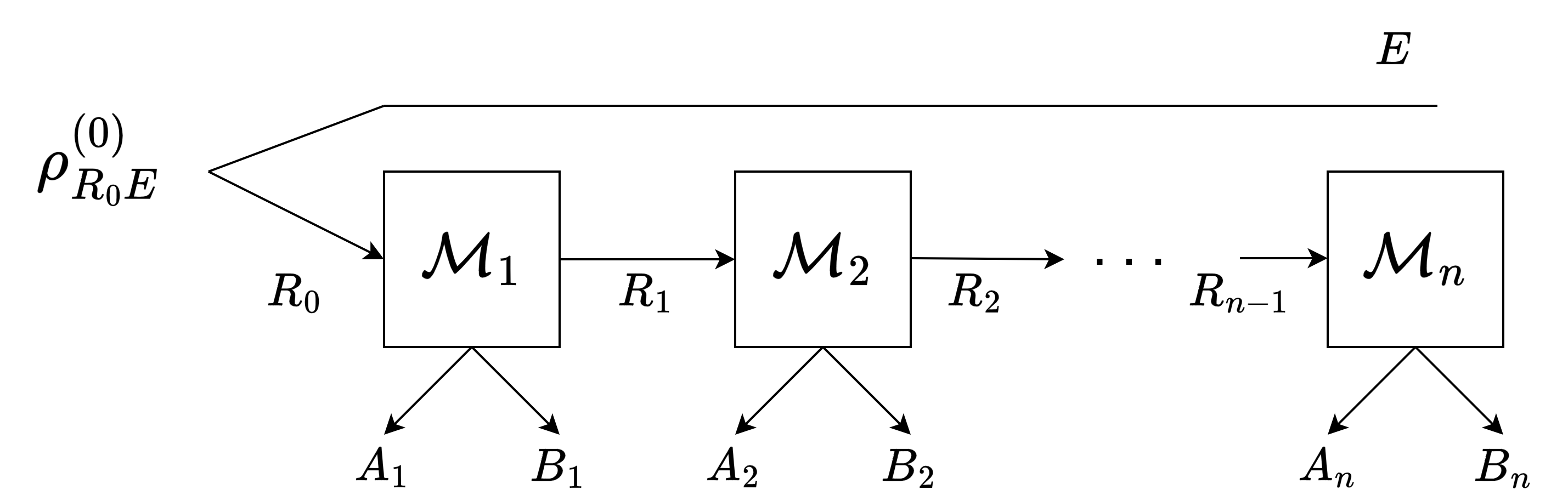}
    \caption{The setting for entropy accumulation. The channels $\cM_k$ are sequentially applied to the registers $R_{k-1}$ and produce the secret information $A_k$ and the side information $B_k$.}
    \label{fig:EAT_setup}
  \end{figure}

  The entropy accumulation theorem (EAT) \cite{Dupuis20} (also see \cite{Metger22}) is one of the most important tools available to bound the smooth min-entropy produced by a sequential process. It is useful for proving the security of quantum key distribution (QKD) \cite{Dupuis20}, device-independent QKD (DIQKD) and randomness expansion \cite{Friedman18,Arnon_Friedman19, Liu21}. EAT can be used to bound the smooth min-entropy of a state produced by a process which has the structure shown in Fig. \ref{fig:EAT_setup}. In such a process, the output state $\rho_{A_1^n B_1^n E} = \mathcal{M}_n \circ \cdots \circ \mathcal{M}_1(\rho^{(0)}_{R_0 E})$ is produced by sequentially applying the channels $\cM_k$ to the registers $R_{k-1}$ for $k \in [n]$ to produce the registers $A_k, B_k$ and $R_k$. In cryptographic applications of this theorem, the registers $A_k$ represent partially secret information produced during the $k$th round of the protocol and the registers $B_k$ denote the side information about these registers, which is leaked to the adversary. EAT provides a lower bound for the smooth min-entropy of the secret information conditioned on the adversary's state in these applications. Specifically, according to the theorem, as long as the output of the process in Fig. \ref{fig:EAT_setup}, $\rho_{A_1^n B_1^n E}$ satisfies the Markov chain 
  \begin{align}
    A_1^{k-1} \leftrightarrow B_1^{k-1} E \leftrightarrow B_k
  \end{align}
  for every $k \in [n]$, the smooth min-entropy satisfies 
  \begin{align}
    H^{\delta}_{\min}(A_1^n | B_1^n E)_{\rho} \geq \sum_{k=1}^n \inf_{\omega_{R_{k-1}\tilde{R}}} H(A_k | B_k \tilde{R})_{\cM_k(\omega)} - c \sqrt{n}
  \end{align}
  where the infimum is over all input states to the channel $\cM_k$ and $c>0$ is a constant that only depends on $|A|$ (the size of registers $A_k$) and $\delta$. In this section, we will prove Theorem \ref{th:approx_EAT}, which is an approximate version of EAT. We show that for any state $\rho_{A_1^n B_1^n E}$ whose partial states $\rho_{A_1^k B_1^k E}$ can be $\epsilon$-approximated as the output of channels $\cM_k$, which sample the side information $B_k$ independent of the previous registers $A_1^{k-1} B_1^{k-1} E$ we can recover a statement similar to EAT\footnote{This seems to be equivalent to requiring that all outputs of the channel $\cM_k$ satisfy $A_1^{k-1} \leftrightarrow B_1^{k-1}E \leftrightarrow B_k$ (see Appendix \ref{sec:all_op_MC_cond}).}. Crucially, Theorem \ref{th:approx_EAT} does not require $\rho_{A_1^n B_1^n E}$ to be produced by a structured process like in Fig. \ref{fig:EAT_setup}; the state may even be produced by a completely parallel process. In fact, our central motivation to develop this theorem was to prove the security of parallel device independent QKD. We do this in our companion work \cite{Marwah24-DIQKD}. For this reason, we call this theorem the \emph{unstructured} approximate EAT.\\

  Previously, we also proved an approximate entropy accumulation theorem in \cite[Theorem 5.1]{Marwah23}. The key difference between these two theorems lies in their applicability: \cite[Theorem 5.1]{Marwah23} applies only to states $\rho$ produced by a sequential process of the same structure as in EAT (Fig. \ref{fig:EAT_setup}), while Theorem \ref{th:approx_EAT} can be applied to states generated by a completely unstructured or parallel process. Furthermore, \cite{Marwah23} considers approximation at the level of channels. Specifically, it considers the diamond norm approximation of the channels $\cM_k$ producing the output state $\rho$ in Fig. \ref{fig:EAT_setup} by \emph{nicer} channels $\cM'_k$, which satisfy certain Markov chain conditions. This is a strong approximation condition, since it requires the outputs of $\cM_k$ and $\cM'_k$ to be approximately equal for all input states. In contrast, in Theorem \ref{th:approx_EAT}, we only need the trace distance between $\rho_{A_1^k B_1^k E}$ and the output of $\cM_k$ to be small for a single input state. We pay the price for these weaker conditions in terms of a much stronger condition on the side information produced by the approximation channels. Additionally, the smoothing parameter in Theorem \ref{th:approx_EAT} depends on the approximation parameter $\epsilon$ and cannot be made arbitrarily small. In comparison, the smoothing parameter for the smooth min-entropy in \cite[Theorem 5.1]{Marwah23} is independent of the approximation parameter and can be chosen arbitrarily small. Consequently, in its current form, Theorem \ref{th:approx_EAT} may have limited utility in cryptographic scenarios where experimental noise or imperfections need to be accounted for, as cryptographic protocols typically require the smoothing parameter to be independent of the noise parameters. Under an analysis using Theorem \ref{th:approx_EAT}, however, the noise parameters would determine the approximation parameter and hence the smoothing parameter. We discuss the possibility of improving this theorem to decouple these parameters in Sec. \ref{sec:decouple_possible}. Nonetheless, Theorem \ref{th:approx_EAT} is a valuable theoretical tool in scenarios where the approximation parameter can be made arbitrarily small, as demonstrated in the security analysis of parallel DIQKD \cite{Marwah24-DIQKD}.
\end{sloppypar}
\begin{theorem}
  \label{th:approx_EAT}
  Let $\epsilon \in (0,1)$ and for every $k \in [n]$ the registers $A_k$ and $B_k$ be such that $|A_k| = |A|$ and $|B_k| = |B|$. Suppose, the state $\rho_{A_1^n B_1^n E}$ is such that for every $k \in [n]$, there exists a channel $\cM_k: R_{k} \rightarrow A_k B_k$ such that for all inputs $X_{R_k}$, 
  \begin{align}
    \tr_{A_k} \circ \cM_k \rndBrk{X_{R_k}} = \tr(X) \theta^{(k)}_{B_k}
    \label{eq:side_info_ind}
  \end{align}
  for some state $\theta^{(k)}_{B_k}$\footnote{This condition can be relaxed to requiring channels $\cM_k$ such that for all input states $\sigma_{A_1^{k-1} B_1^{k-1} E R_{k}}$, the output state $\cM_k (\sigma)$ satisfies the Markov chain condition $A_1^{k-1} \leftrightarrow B_1^{k-1} E \leftrightarrow B_k$. This condition seems to imply the independence condition used in the theorem (see Appendix \ref{sec:all_op_MC_cond}).}, and a state $\tilde{\rho}^{(k, 0)}_{A_1^{k-1} B_1^{k-1} E R_{k}}$ for which 
  \begin{align}
    \frac{1}{2}\norm{\rho_{A_1^k B_1^k E} - \cM_k(\tilde{\rho}^{(k, 0)}_{A_1^{k-1} B_1^{k-1} E R_{k}})}_1\leq \epsilon.
    \label{eq:approx_EAT_approx_cond}
  \end{align}  
  Then, as long as $\mu := \rndBrk{\frac{4 \sqrt{\epsilon} + \epsilon}{1- \epsilon/(|A| |B|)^2} \log\frac{|A|^2 |B|^2}{\epsilon}}^{1/3}= O\rndBrk{\epsilon^{1/6}\rndBrk{\log\frac{|A|^2 |B|^2}{\epsilon}}^{1/3}}$ lies in $(0,1)$, we have the following lower bound for the smooth min-entropy of $\rho$:
  \begin{align*}
    H^{\mu + \epsilon'}_{\min} (A_1^n | B_1^n E)_\rho &\geq \sum_{k=1}^n \inf_{\omega_{R_k \tilde{R}_k}} H (A_k | B_k \tilde{R}_k)_{\cM_k(\omega)}  - 3n\sqrt{\mu} \log(1+ 2|A|) \\
    &\qquad \qquad- \frac{\log(1+2|A|)}{\sqrt{\mu}} \rndBrk{\frac{2}{\mu^2} + 2 \log\frac{1}{1- \mu^2} + g_1(\epsilon', \mu)}
    \numberthis
    \label{eq:approx_EAT}
  \end{align*}
  where the infimum is over all the input states $\omega_{R_k \tilde{R}_k}$ to the channel $\cM_k$ and $g_1(x, y):= - \log(1- \sqrt{1-x^2}) - \log (1-y^2)$. 
\end{theorem}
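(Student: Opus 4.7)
The plan is to lift the strategy of Theorem~\ref{th:Hmin_eps_chain_rule} by replacing the simple entropic triangle inequality with its R\'enyi counterpart (Eq.~\ref{eq:ent_tri_ineq_alpha_renyi}), so that a von Neumann entropy can emerge on the right-hand side. First, for each $k \in [n]$, set $\tilde{\rho}^{(k)}_{A_1^k B_1^k E} := \cM_k(\tilde{\rho}^{(k,0)})$; the hypothesis Eq.~\ref{eq:approx_EAT_approx_cond} gives $\tilde{\rho}^{(k)} \approx_\epsilon \rho_{A_1^k B_1^k E}$ in trace distance, and the independence condition Eq.~\ref{eq:side_info_ind} ensures the product form $\tilde{\rho}^{(k)}_{A_1^{k-1} B_1^{k-1} E B_k} = \tilde{\rho}^{(k,0)}_{A_1^{k-1} B_1^{k-1} E} \otimes \theta^{(k)}_{B_k}$. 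Applying Lemma~\ref{lemm:cond_state_dist} to twist the $A_1^{k-1} B_1^{k-1} E$ marginal of $\tilde{\rho}^{(k)}$ so that it agrees with $\rho$'s, I obtain $\tilde{\rho}'^{(k)}$ at the price of a trace distance of order $\sqrt{\epsilon}$; since the twist acts only on $A_1^{k-1} B_1^{k-1} E$, the $B_k$-independence survives. Mixing with $\tau_{A_k B_k} \otimes \rho_{A_1^{k-1} B_1^{k-1} E}$ yields a full-rank $\bar{\rho}^{(k)}$ for which Corollary~\ref{cor:TRE_bd} gives $D(\rho_{A_1^k B_1^k E} \| \bar{\rho}^{(k)}) \leq z(\epsilon, \delta)$ with $z = O(\sqrt{\epsilon}\log(|A||B|/\delta))$, and preserves the product structure $\bar{\rho}^{(k)}_{A_1^{k-1} B_1^{k-1} E B_k} = \rho_{A_1^{k-1} B_1^{k-1} E} \otimes \bar{\theta}^{(k)}_{B_k}$ with $\bar{\theta}^{(k)} := (1-\delta)\theta^{(k)} + \delta \tau_{B_k}$.

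Next, I apply Lemma~\ref{lemm:cond_st_rel_ent_bd} with the composite register $A_k B_k$ playing the role of ``$A_k$'' and $E$ playing the role of ``$B$'', producing a normalized state $\sigma_{A_1^n B_1^n E}$ with $D_m(\rho \| \sigma) \leq n z(\epsilon, \delta)$. A key structural observation is that the $B_k$-independence of each $\bar{\rho}^{(k)}$ propagates to $\sigma$: because the operators $(\bar{\rho}^{(j)})^{\pm (1 \pm it)/2}$ in the generalized Golden--Thompson integrand act trivially on $B_k$, tracing out $A_k$ factors $\bar{\theta}^{(k)}_{B_k}$ cleanly out of the integrand, giving $\sigma_{A_1^{k-1} B_1^{k-1} E B_k} = \sigma_{A_1^{k-1} B_1^{k-1} E} \otimes \bar{\theta}^{(k)}_{B_k}$. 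In particular $\sigma$ satisfies the Markov chain $A_1^{k-1} \leftrightarrow B_1^{k-1} E \leftrightarrow B_k$ for every $k$.

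With this in place, I convert the measured-divergence bound to a smooth max-divergence bound via the quantum substate theorem (so $D^{\mu}_{\max}(\rho \| \sigma) \leq n\mu + 1/\mu^2 + \log(1/(1-\mu^2))$ with $\mu := z^{1/3}$), and then invoke the R\'enyi entropic triangle inequality~\ref{eq:ent_tri_ineq_alpha_renyi} with some $\alpha \in (1, 2]$ to bound $H^{\mu + \epsilon'}_{\min}(A_1^n | B_1^n E)_\rho$ from below in terms of $\tilde{H}^\uparrow_\alpha(A_1^n | B_1^n E)_\sigma$. The Markov property of $\sigma$ places it within the scope of the sandwiched R\'enyi chain rule underlying the standard EAT~\cite{Dupuis20}, which I use to split $\tilde{H}^\uparrow_\alpha(A_1^n | B_1^n E)_\sigma$ into single-step R\'enyi entropies lower bounded by $\inf_{\omega_{R_k \tilde{R}_k}} \tilde{H}^\uparrow_\alpha(A_k | B_k \tilde{R}_k)_{\cM_k(\omega)}$; a standard second-order R\'enyi-to-von Neumann bound then passes to the conditional entropies appearing in the theorem statement. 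Finally, I choose $\alpha - 1 \propto \sqrt{\mu}$ to balance the triangle-inequality loss $\tfrac{\alpha}{\alpha-1}(n\mu + 1/\mu^2) = O(n\sqrt{\mu})$ against the R\'enyi approximation cost, and set $\delta = \sqrt{\epsilon}$ so that $\mu = O(\epsilon^{1/6}(\log(|A|^2|B|^2/\epsilon))^{1/3})$, matching the theorem's expression.

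The most delicate step is the transfer from the complicated auxiliary state $\sigma$---defined by a Golden--Thompson integral over products of $\bar{\rho}^{(k)}$-powers---to single-round entropies of $\cM_k$. The decisive structural fact is that the side-information independence hypothesis propagates all the way through the integral representation of $\sigma$, securing the Markov chain needed to invoke the EAT-style R\'enyi chain rule on an object that is not literally the output of a sequential protocol. Without this propagation, the sandwiched R\'enyi entropy of $\sigma$ would be essentially intractable, and there would be no bridge from the auxiliary state back to the channel entropies $H(A_k | B_k \tilde{R}_k)_{\cM_k(\omega)}$ that the theorem requires.
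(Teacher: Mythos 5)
Your proposal follows essentially the same route as the paper's proof: twist and mix the channel outputs to build full-rank states $\bar{\rho}^{(k)}$ close to $\rho$ in relative entropy, invoke Lemma~\ref{lemm:cond_st_rel_ent_bd} to get the auxiliary state $\sigma$ with $D_m(\rho \| \sigma) \le n z$, convert to $D^{\mu}_{\max}$ via the substate theorem, feed this into the R\'enyi triangle inequality (Eq.~\ref{eq:ent_tri_ineq_alpha_renyi}), and finally split $\tilde{H}^{\uparrow}_{\alpha}(A_1^n | B_1^n E)_\sigma$ using the EAT R\'enyi chain rule and choose $\alpha - 1 \propto \sqrt{\mu}$. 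Your explanation of how the $B_k$-independence propagates through the Golden--Thompson integral is correct, and is an efficient rephrasing of what the paper does by pulling the channel $\cM_k^\delta$ out of the integrand.

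A few details need to be addressed, however. The most substantive one: the chain rule you invoke (\cite[Corollary~3.5]{Dupuis20}) is not a statement about states satisfying a Markov chain alone; it is a chain rule for states that arise as the output of a channel $\cM_k^\delta : R_k \to A_k B_k$ applied to an extension of the previous marginal. Establishing $\sigma_{A_1^{k-1} B_1^{k} E} = \sigma_{A_1^{k-1} B_1^{k-1} E} \otimes \bar{\theta}^{(k)}_{B_k}$ is not the same thing as establishing $\sigma_{A_1^k B_1^k E} = \cM_k^\delta(\sigma^{(k,0)})$ for some $\sigma^{(k,0)}$ extending $\sigma_{A_1^{k-1} B_1^{k-1} E}$. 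The same observation you make --- that the prefactors $(\bar{\rho}^{(j)})^{\pm(1\pm it)/2}$ for $j < k$ act only on $A_1^{k-1} B_1^{k-1} E$ and hence commute with $\cM_k^\delta$ --- does establish the channel form, but this must be said explicitly before the Dupuis chain rule can be applied. Two further points: (i) the per-round infimum produced by the chain rule involves $\cM_k^\delta(\omega)$, not $\cM_k(\omega)$; one needs the quasi-concavity of $\tilde{H}^{\downarrow}_{\alpha}$ to write $\tilde{H}^{\downarrow}_{\alpha}(A_k | B_k \tilde{R}_k)_{\cM_k^\delta(\omega)} \geq \tilde{H}^{\downarrow}_{\alpha}(A_k | B_k \tilde{R}_k)_{\cM_k(\omega)}$ (the $\Delta_k$-branch contributes $\log|A|$, which is always at least as large). (ii) Lemma~\ref{lemm:cond_state_dist} lands the twisted marginal exactly on $\rho_{A_1^{k-1} B_1^{k-1} E}$ only when $\tilde{\rho}^{(k)}_{A_1^{k-1} B_1^{k-1} E}$ is full rank; the paper therefore first mixes $\tilde{\rho}^{(k,0)}$ with the maximally mixed state by a small $\nu$ and takes $\nu \to 0$ at the end --- you have skipped this. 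Finally, your parameter choice $\delta = \sqrt{\epsilon}$ gives the right big-$O$ but not the exact $\mu$ in the theorem statement; the paper sets $\delta = \epsilon$, producing $\mu = \left(\tfrac{4\sqrt{\epsilon}+\epsilon}{1-\epsilon/(|A||B|)^2}\log\tfrac{|A|^2|B|^2}{\epsilon}\right)^{1/3}$ as stated.
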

The proof of this theorem follows almost the same approach as that of Theorem \ref{th:Hmin_eps_chain_rule}. The major difference being that here we use the stronger triangle inequality (Eq. \ref{eq:ent_tri_ineq_alpha_renyi}) to bound the smooth min-entropy with the $\alpha$-R\'enyi conditional entropy of an auxiliary state. We then use the chain rule for these entropies along with the independence conditions on $B_k$ to derive the lower bound above similar to \cite{Dupuis20}. 
\begin{proof}
  \textbf{Case 1:} First let's consider states $\rho_{A_1^n B_1^n E}$, which have full rank. Let $\nu \in (0,1)$ be an arbitrarily chosen small parameter. For every $k \in [n]$, define the states
  \begin{align}
    \tilde{\tilde{\rho}}^{(k, 0)}_{A_1^{k-1} B_1^{k-1} E R_{k}} &:= (1-\nu) \tilde{\rho}^{(k, 0)}_{A_1^{k-1} B_1^{k-1} E R_{k}} + \nu \tau_{A_1^{k-1} B_1^{k-1} E R_{k}} \\
    \tilde{\tilde{\rho}}^{(k)}_{A_1^{k} B_1^{k} E} &:= \cM_k\rndBrk{\tilde{\tilde{\rho}}^{(k, 0)}_{A_1^{k-1} B_1^{k-1} E R_{k}}}
  \end{align}
  We make this modification to $\tilde{\rho}^{(k)}$ states so that $\tilde{\tilde{\rho}}^{(k)}_{A_1^{k-1} B_1^{k-1} E}$ is full rank. For each $k$ these states satisfy
  \begin{align*}
    \frac{1}{2}&\norm{\rho_{A_1^k B_1^k E} - \tilde{\tilde{\rho}}^{(k)}_{A_1^{k} B_1^{k} E}}_1 \\
    &= \frac{1}{2}\norm{\rho_{A_1^k B_1^k E} - (1-\nu) \cM_k\rndBrk{\tilde{\rho}^{(k,0)}_{A_1^{k-1} B_1^{k-1} E R_{k}}} - \nu \cM_k\rndBrk{\tau_{A_1^{k-1} B_1^{k-1} E R_{k}}}}_1 \\
    &\leq \frac{1- \nu}{2}\norm{\rho_{A_1^k B_1^k E} - \cM_k\rndBrk{\tilde{\rho}^{(k)}_{A_1^{k-1} B_1^{k-1} E R_{k}}}}_1 + \nu \\
    &\leq \epsilon + \nu. \numberthis
  \end{align*}
  Now, for each $k$, we define the states 
  \begin{align}
    \omega^{(k,0)}_{A_1^{k-1} B_1^{k-1} R_k E} &:= \rho_{A_1^{k-1} B_1^{k-1} E}^{1/2} \rndBrk{\tilde{\tilde{\rho}}^{(k)}_{A_1^{k-1} B_1^{k-1} E}}^{-1/2} \tilde{\tilde{\rho}}^{(k,0)}_{A_1^{k-1} B_1^{k-1} R_k E} \rndBrk{\tilde{\tilde{\rho}}^{(k)}_{A_1^{k-1} B_1^{k-1} E}}^{-1/2} \rho_{A_1^{k-1} B_1^{k-1} E}^{1/2} \\
    \omega^{(k)}_{A_1^{k} B_1^{k} E} &:= \rho_{A_1^{k-1} B_1^{k-1} E}^{1/2} \rndBrk{\tilde{\tilde{\rho}}^{(k)}_{A_1^{k-1} B_1^{k-1} E}}^{-1/2} \tilde{\tilde{\rho}}^{(k)}_{A_1^{k} B_1^{k} E} \rndBrk{\tilde{\tilde{\rho}}^{(k)}_{A_1^{k-1} B_1^{k-1} E}}^{-1/2} \rho_{A_1^{k-1} B_1^{k-1} E}^{1/2} \\
    &= \cM_k\rndBrk{\omega^{(k,0)}_{A_1^{k-1} B_1^{k-1} R_k E}}.
  \end{align}
  Since, we defined $\tilde{\tilde{\rho}}^{(k)}_{A_1^{k-1} B_1^{k-1} E}$ to be full rank, we have that 
  \begin{align}
    \omega^{(k)}_{A_1^{k-1} B_1^{k-1} E} = \rho_{A_1^{k-1} B_1^{k-1} E}.
  \end{align} 
  Using Lemma \ref{lemm:cond_state_dist}, we have that 
  \begin{align*}
    \frac{1}{2}\norm{\rho_{A_1^{k} B_1^{k} E} - \omega^{(k)}_{A_1^{k} B_1^{k} E}}_1 &\leq (\sqrt{2} +1) P(\rho_{A_1^{k} B_1^{k} E}, \tilde{\tilde{\rho}}_{A_1^{k} B_1^{k} E}) \\
    &\leq (\sqrt{2} +1)\sqrt{2(\epsilon + \nu)}\\
    &\leq 4\sqrt{\epsilon + \nu}. \numberthis 
  \end{align*}
  Let $\delta \in (0,1)$ be a small parameter (to be set equal to $\epsilon$ later). Finally, for every $k \in [n]$, we define the states 
  \begin{align}
    \bar{\rho}^{(k)}_{A_1^{k} B_1^{k} E} :=& (1-\delta) \omega^{(k)}_{A_1^{k} B_1^{k} E} + \delta \tau_{A_k B_k} \otimes \rho_{A_1^{k-1} B_1^{k-1} E} \\
    =& (1-\delta) \omega^{(k)}_{A_1^{k} B_1^{k} E} + \delta \tau_{A_k B_k} \otimes \omega^{(k)}_{A_1^{k-1} B_1^{k-1} E}
  \end{align}
  where $\tau_{A_k B_k}$ is the completely mixed state on the registers $A_k$ and $B_k$. Also, define $\bar{\rho}^{(0)}_E: =\rho_E$. Let $\Delta_k : {R_k \rightarrow A_k B_k}$ be the map which traces out the register $R_k$ and simply outputs $\tau_{A_k B_k}$ and let $\cM^{\delta}_k := (1-\delta) \cM_k + \delta \Delta_k$. Then, we have that 
  \begin{align*}
    \bar{\rho}^{(k)}_{A_1^{k} B_1^{k} E} &= (1-\delta) \omega^{(k)}_{A_1^{k} B_1^{k} E} + \delta \tau_{A_k B_k} \otimes \omega^{(k)}_{A_1^{k-1} B_1^{k-1} E} \\
    &= \rndBrk{(1-\delta) \cM_k + \delta \Delta_k} \rndBrk{\omega^{(k,0)}_{A_1^{k-1} B_1^{k-1} R_k E}} \\
    &= \cM^\delta_k \rndBrk{\omega^{(k,0)}_{A_1^{k-1} B_1^{k-1} R_k E}}. \numberthis
  \end{align*}
  We also have that 
  \begin{align}
    \frac{1}{2}\norm{\bar{\rho}^{(k)}_{A_1^{k} B_1^{k} E} - \rho_{A_1^{k} B_1^{k} E}}_1 \leq 4 \sqrt{\epsilon + \nu} + \delta.
  \end{align}
  Using Corollary \ref{cor:TRE_bd}, this gives us
  \begin{align}
    D(\rho_{A_1^k B_1^k E} || \bar{\rho}^{(k)}_{A_1^k B_1^k E}) \leq \frac{4 \sqrt{\epsilon + \nu} + \delta}{1- \delta/(|A| |B|)^2} \log\frac{|A|^2 |B|^2}{\delta}.
  \end{align}
  We define the above bound as $z(\epsilon+ \nu, \delta)$. Once again, using Lemma \ref{lemm:cond_st_rel_ent_bd} (for $A_k \leftarrow A_k B_k$, $B \leftarrow E$ and $\bar{\rho}^{(k)}_{A_1^k B} \leftarrow \bar{\rho}^{(k)}_{A_1^k B_1^k E}$) we have that auxiliary state \ifcomments\textcolor{red}{(Note here $\bar{\rho}^{(k)}_{A_1^{k-1} B_1^{k-1} E} = \rho_{A_1^{k-1} B_1^{k-1} E} $)}\fi
  \begin{align}
    \sigma_{A_1^n B_1^n E} &:= \int_{-\infty}^{\infty} dt \beta_0 (t) \prod_{k=0}^{n-1} \sqBrk{\rndBrk{\bar{\rho}^{(k)}_{A_1^k B_1^k E}}^{\frac{1-it}{2}} \rndBrk{\bar{\rho}^{(k+1)}_{A_1^{k} B_1^k E}}^{-\frac{1-it}{2}}} \cdot \bar{\rho}^{(n)}_{A_1^n B_1^n E} \cdot \prod_{k= n-1}^{0} \sqBrk{\rndBrk{\bar{\rho}^{(k+1)}_{A_1^{k} B_1^k E}}^{-\frac{1+it}{2}} \rndBrk{\bar{\rho}^{(k)}_{A_1^k B_1^k E}}^{\frac{1+it}{2}}} \label{eq:approx_EAT_sigma_def1}\\
    &= \int_{-\infty}^{\infty} dt \beta_0 (t) \rho_E^{\frac{1-it}{2}} \rndBrk{\bar{\rho}^{(1)}_{E}}^{-\frac{1-it}{2}} \rndBrk{\bar{\rho}^{(1)}_{A_1 B_1 E}}^{\frac{1-it}{2}} \rndBrk{\bar{\rho}^{(2)}_{A_1 B_1 E}}^{-\frac{1-it}{2}} \rndBrk{\bar{\rho}^{(2)}_{A_1^2 B_1^2 E}}^{\frac{1-it}{2}}\ \cdots \ \rndBrk{\bar{\rho}^{(n)}_{A_1^{n-1} B_1^{n-1} E}}^{-\frac{1-it}{2}} \cdot \nonumber \\
    & \qquad \qquad \bar{\rho}^{(n)}_{A_1^n B_1^n E} \cdot \rndBrk{\bar{\rho}^{(n)}_{A_1^{n-1} B_1^{n-1} E}}^{-\frac{1+it}{2}}\ \cdots \ \rndBrk{\bar{\rho}^{(2)}_{A_1^2 B_1^2 E}}^{\frac{1+it}{2}} \rndBrk{\bar{\rho}^{(2)}_{A_1 B_1 E}}^{-\frac{1+it}{2}} \rndBrk{\bar{\rho}^{(1)}_{A_1 B_1 E}}^{\frac{1+it}{2}} \rndBrk{\bar{\rho}^{(1)}_{E}}^{-\frac{1+it}{2}} \rho_E^{\frac{1+it}{2}} \label{eq:approx_EAT_sigma_def2}
  \end{align}
  is a normalised state satisfying
  \begin{align}
    D_m(\rho_{A_1^n B_1^n E} || \sigma_{A_1^n B_1^n E}) \leq nz(\epsilon+\nu, \delta)
    \label{eq:approx_EAT_Dmeas_bd}
  \end{align}
  and 
  \begin{align}
    \sigma&_{A_1^{k} B_1^{k} E} \nonumber \\
    &= \int_{-\infty}^{\infty} dt \beta_0 (t) \prod_{j=0}^{k-1} \sqBrk{\rndBrk{\bar{\rho}^{(j)}_{A_1^j B_1^j E}}^{\frac{1-it}{2}} \rndBrk{\bar{\rho}^{(j+1)}_{A_1^{j} B_1^j E}}^{-\frac{1-it}{2}}} \cdot \bar{\rho}^{(k)}_{A_1^k B_1^k E} \cdot \prod_{j=k-1}^{0} \sqBrk{\rndBrk{\bar{\rho}^{(j+1)}_{A_1^{j} B_1^j E}}^{-\frac{1+it}{2}} \rndBrk{\bar{\rho}^{(j)}_{A_1^j B_1^j E}}^{\frac{1+it}{2}}} \label{eq:approx_EAT_sigma_partial_st1}\\
    &= \cM^{\delta}_k \rndBrk{\int_{-\infty}^{\infty} dt \beta_0 (t) \prod_{j=0}^{k-1} \sqBrk{\rndBrk{\bar{\rho}^{(j)}_{A_1^j B_1^j E}}^{\frac{1-it}{2}} \rndBrk{\bar{\rho}^{(j+1)}_{A_1^{j} B_1^j E}}^{-\frac{1-it}{2}}} \cdot \omega^{(k,0)}_{A_1^{k-1} B_1^{k-1} R_k E} \cdot \prod_{j=k-1}^{0} \sqBrk{\rndBrk{\bar{\rho}^{(j+1)}_{A_1^{j} B_1^j E}}^{-\frac{1+it}{2}} \rndBrk{\bar{\rho}^{(j)}_{A_1^j B_1^j E}}^{\frac{1+it}{2}}}}.
    \label{eq:approx_EAT_sigma_partial_st2}
  \end{align}
  Let $\sigma^{(k,0)}_{A_1^{k-1} B_1^{k-1} R_k E}$ be the input state for $\cM^\delta_k$ above, so that
  \begin{align}
    \sigma_{A_1^k B_1^k E} &= \cM^\delta_k\rndBrk{\sigma^{(k,0)}_{A_1^{k-1} B_1^{k-1} R_k E}}
  \end{align}
  Let's define $\mu := z(\epsilon + \nu, \delta)^{1/3}$. Using the substate theorem (Theorem \ref{thm:substate_th}), we get the following bound from the above relative entropy bound
  \begin{align}
    D^{\mu}_{\max}(\rho_{A_1^n B_1^n E} || \sigma_{A_1^n B_1^n E}) \leq n \mu + \frac{1}{\mu^2} + \log\frac{1}{1- \mu^2}. 
  \end{align}
  Let $\epsilon' \in (0,1)$ be an arbitrary small parameter such that $\epsilon' + \mu < 1$ and let $\alpha \in (1,2]$. We can now use the entropic triangle inequality in Eq. \ref{eq:ent_tri_ineq_alpha_renyi} to derive 
  \begin{align}
    H^{\mu + \epsilon'}_{\min} (A_1^n | B_1^n E)_\rho \geq \tilde{H}^{\uparrow}_{\alpha}(A_1^n | B_1^n E)_\sigma - \frac{\alpha}{\alpha-1}n \mu - \frac{1}{\alpha-1} \rndBrk{\frac{\alpha}{\mu^2} + \alpha \log\frac{1}{1- \mu^2} + g_1(\epsilon', \mu)} 
    \label{eq:approx_EAT_tri_ineq_app}
  \end{align}
  Moreover, using Eq. \ref{eq:approx_EAT_sigma_partial_st2}, we can show that $B_k$ is independent of $A_1^{k-1} B_1^{k-1} E$ in $\sigma$. For $\sigma_{A_1^{k-1} B_1^{k} E} = \tr_{A_k} \circ \cM^{\delta}_k (\sigma^{(k, 0)}_{A_1^{k-1} B_1^{k-1} R_k E})$, we have 
  \begin{align}
    \sigma_{A_1^{k-1} B_1^{k} E}
    &= (1-\delta) \tr_{A_k}\circ \cM_k (\sigma^{(k,0)}_{A_1^{k-1} B_1^{k-1} R_k E}) + \delta \tau_{B_k} \otimes \sigma^{(k, 0)}_{A_1^{k-1} B_1^{k-1} E} \\
    &= \rndBrk{(1-\delta) \theta^{(k)}_{B_k} + \delta \tau_{B_k}} \otimes \sigma_{A_1^{k-1} B_1^{k-1} E}.
  \end{align}
  In particular, we have that $\sigma$ satisfies the Markov chain $A_1^{k-1} \leftrightarrow B_1^{k-1} E \leftrightarrow B_k$. So, we can use \cite[Corollary 3.5]{Dupuis20} to show that for every $k \in [n]$
  \begin{align*}
    \tilde{H}^{\downarrow}_{\alpha}(A_1^k | B_1^k E)_\sigma &\geq \tilde{H}^{\downarrow}_{\alpha}(A_1^{k-1} | B_1^{k-1} E)_\sigma + \inf_{\omega_{R_k \tilde{R}_k}} \tilde{H}^{\downarrow}_{\alpha} (A_k | B_k \tilde{R}_k)_{\cM_k^{\delta}(\omega)} \\
    &\geq \tilde{H}^{\downarrow}_{\alpha}(A_1^{k-1} | B_1^{k-1} E)_\sigma + \inf_{\omega_{R_k \tilde{R}_k}} \tilde{H}^{\downarrow}_{\alpha} (A_k | B_k \tilde{R}_k)_{\cM_k(\omega)} \numberthis
  \end{align*}
  where we use the quasi-concavity of $\tilde{H}^{\downarrow}_{\alpha}$ \cite[Pg 73]{TomamichelBook16} in the second line. Consecutively using this bound in Eq. \ref{eq:approx_EAT_tri_ineq_app} gives us
  \begin{align*}
    H^{\mu + \epsilon'}_{\min} (A_1^n | B_1^n E)_\rho &\geq \sum_{k=1}^n \inf_{\omega_{R_k \tilde{R}_k}} \tilde{H}^{\downarrow}_{\alpha} (A_k | B_k \tilde{R}_k)_{\cM_k(\omega)} - \frac{\alpha}{\alpha-1}n \mu - \frac{1}{\alpha-1} \rndBrk{\frac{\alpha}{\mu^2} + \alpha \log\frac{1}{1- \mu^2} + g_1(\epsilon', \mu)}\\
    &\geq \sum_{k=1}^n \inf_{\omega_{R_k \tilde{R}_k}} H (A_k | B_k \tilde{R}_k)_{\cM_k(\omega)}  -n (\alpha-1)\log^2(1+ 2|A|) - \frac{\alpha}{\alpha-1}n \mu \\
    &\qquad \qquad - \frac{1}{\alpha-1} \rndBrk{\frac{\alpha}{\mu^2} + \alpha \log\frac{1}{1- \mu^2} + g_1(\epsilon', \mu)}
  \end{align*}
  where in the second line we have used \cite[Lemma B.9]{Dupuis20} which is valid as long as $\alpha < 1 + 1/\log(1+2|A|)$. Lastly, we choose $\alpha = 1 + \frac{\sqrt{\mu}}{\log(1+ 2|A|)}$ and use $\alpha <2$ as an upper bound to derive
  \begin{align}
    H^{\mu + \epsilon'}_{\min} (A_1^n | B_1^n E)_\rho &\geq \sum_{k=1}^n \inf_{\omega_{R_k \tilde{R}_k}} H (A_k | B_k \tilde{R}_k)_{\cM_k(\omega)}  - 3n\sqrt{\mu} \log(1+ 2|A|) \nonumber\\
    &\qquad \qquad- \frac{\log(1+2|A|)}{\sqrt{\mu}} \rndBrk{\frac{2}{\mu^2} + 2 \log\frac{1}{1- \mu^2} + g_1(\epsilon', \mu)}
  \end{align}
  where $\mu = z(\epsilon + \nu, \delta)^{1/3}$. The above bound holds true for all $\nu > 0$. Therefore, it also holds for $\nu \rightarrow 0$. To derive the bound in the theorem statement, we fix $\delta=\epsilon$.\\

  \textbf{Case 2:} Lastly, if $\rho_{A_1^n B_1^n E}$ were not full rank then we can always select a full rank state $\rho'_{A_1^n B_1^n E}$ $\varepsilon$-close to $\rho$, which would satisfy Eq. \ref{eq:approx_EAT_approx_cond} and hence the above inequality with $\epsilon \rightarrow \epsilon + \varepsilon$. Now we can take $\varepsilon \rightarrow 0$ and use the continuity of $H_{\min}$ to arrive at the above bound for such $\rho$. 
\end{proof}

In Appendix \ref{sec:testing}, we incorporate testing into the approximate EAT proven above. Testing allows one to condition the state on a classical event and prove a smooth min-entropy lower bound for the conditional state. This enables one to prove much stronger and tighter bounds using EAT.  

\subsection{Dependence of smoothing parameter on the approximation parameter}
\label{sec:decouple_possible}

It is evident that the smoothing parameter must depend on the approximation parameter, $\epsilon$, for the unstructured approximate EAT in its current form. To illustrate this, consider a distribution $p_{A_1^n B}$ where $B = 1$ with probability $(1-\epsilon)$ and $B=0$ otherwise. In this distribution, $A_1^n$ is sampled uniformly at random from $\{0,1\}^n$ if $B=1$, and otherwise set to a constant string. For every $k$, this distribution satisfies
\begin{align}
  p_{A_1^k B} \approx_{O(\epsilon)} \Delta(p_{A_1^{k-1} B})
\end{align}
where $\Delta$ is a channel that disregards its input and uniformly samples a bit $A_k$. Thus, this distribution meets the requirements for the unstructured approximate EAT. However, for this distribution, $H^{\epsilon}_{\min}(A_1^n | B)_p = n$, but for any $\epsilon' \leq \epsilon/2$, we have $H^{\epsilon'}_{\min}(A_1^n | B)_p = O(\log 1/\epsilon)$. This example demonstrates the necessity of the smoothing parameter's dependence on $\epsilon$.\\ 

Nevertheless, it appears possible to decouple this dependence in certain interesting cases, such as sequential DIQKD with leakage and parallel DIQKD. It seems that the smoothing parameter's dependence on $\epsilon$ is required to remove the case of ``correlated failure''. In the aforementioned example, it is necessary to exclude the case where $B=0$ and $A_1^n$ are perfectly determined (highly correlated). Consider, however, the state $\rho_{X_1^n Y_1^n A_1^n B_1^n E}$ produced in a DIQKD protocol with imperfections or leakages ($X_1^n$ and $Y_1^n$ represent Alice and Bob's questions, $A_1^n$ and $B_1^n$ their answers, and $E$ the adversary's register), where for each $k$
\begin{align}
    \rho_{X_1^k Y_1^k A_1^k B_1^k E} \approx_{\epsilon} \cM_k\rndBrk{\rho_{X_1^{k-1} Y_1^{k-1} A_1^{k-1} B_1^{k-1} R_k E}^{(k,0)}}
\end{align} 
for a channel $\cM_k$, which plays the CHSH game between Alice and Bob, as one would expect in DIQKD with no leakage. We can use the unstructured EAT to prove that the entropy of the answers with respect to the questions and $E$ is large for such a state. Drawing an analogy with the example distribution $p$ above, we can deduce that the smoothing parameter must depend on $\epsilon$ to remove the case of correlated failure, whereby all the games fail together. However, this can also be achieved through testing, similar to the approach adopted in \cite{Marwah23_src_corr}. By measuring the winning probability of the CHSH game on a random sample of games, we can determine if it is sufficiently high. If so, we can conclude that the CHSH game must have been played using ``good'' entangled quantum states between the two parties, and an event of correlated failure must not have occurred. While some additional assumptions may be required regarding the side information, this approach could potentially allow for an arbitrary smoothing parameter in such cases. This is an interesting line of research to pursue in the future. 

\section{Alternative proof for the universal chain rule}
\label{sec:alternate_proof}

The proofs so far in this paper have followed the approach of creating an auxiliary state using certain conditional states, proving that this state has a small relative entropy from the original state and reducing the original problem to a simpler one in terms of this auxiliary state. This is also the general approach we follow in \cite{Marwah23}. In \cite[Section 4]{Marwah23} we provide a simple and intuitive alternate proof technique that lower bounds the smooth min-entropy for the classical approximately independent registers problem in an \emph{elementwise} fashion. At the time of writing \cite{Marwah23}, we were unable to generalise this proof to the quantum setting. However, using the ideas developed in this paper, we can now generalise this proof. Here we use it to provide an alternative proof of the universal chain rule for the smooth min-entropy. \\

\subsection{Classical proof}
\label{sec:alt_pf_cl}
We begin by once again first sketching the proof in the classical case. Recall that we would like to prove that for a probability distribution $p_{A_1^n B}$,
\begin{align}
    H^{g_1(\epsilon)}_{\min}(A_1^n | B)_{p} &\geq \sum_{k=1}^n H_{\min}^{\downarrow, \epsilon}(A_k | A_1^{k-1} B)_p - n g_2(\epsilon) - k(\epsilon).
\end{align}
Let $\lambda_k := H_{\min}^{\downarrow, \epsilon}(A_k | A_1^{k-1} B)_p$. Following Sec. \ref{sec:uni_ch_rul_cl1}, for $k \in [n]$, let $q^{(k)}_{A_1^k B}$ be the distribution, such that 
\begin{align}
    & \frac{1}{2}\norm{p_{A_1^k B} - q^{(k)}_{A_1^k B}}_1 \leq \epsilon \\
    & H^{\downarrow, \epsilon}_{\min}(A_k | A_1^{k-1} B)_p = \lambda_k = H^{\downarrow}_{\min}(A_k | A_1^{k-1} B)_{q^{(k)}}.
\end{align}
Similar to Eq. \ref{eq:cl_ch_rule_dist_bd}, we also have that 
\begin{align}
  \frac{1}{2} \norm{p_{A_1^k B} - p_{A_1^{k-1} B} q^{(k)}_{A_k|A_1^{k-1} B}}_1 \leq 2\epsilon.
\end{align}
We will use the following lemma from \cite{Marwah23}.
\begin{lemma}[{\cite[Lemma 4.1]{Marwah23}}]
    Suppose $p, q$ are probability distributions on $\mathcal{X}$ such that $\frac{1}{2}\norm{p-q}_1 \leq \epsilon$, then $S \subseteq \mathcal{X}$ defined as $S:= \{x \in \mathcal{X}: p(x) \leq (1+\epsilon^{1/2}) q(x) \}$ is such that $q(S) \geq 1- \epsilon^{1/2}$ and $p(S)\geq 1- \epsilon^{1/2}- \epsilon$.
    \label{lemm:small_tr_norm_implies_small_Dmax}
\end{lemma}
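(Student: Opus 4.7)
The plan is a two-step Markov-style argument on the bad set $S^c := \{x \in \mathcal{X} : p(x) > (1 + \sqrt{\epsilon})\, q(x)\}$, using the positive-part formula $\tfrac{1}{2}\norm{p - q}_1 = \sum_{x : p(x) > q(x)} (p(x) - q(x))$ for total variation as the sole input from the hypothesis.

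First I would bound $q(S^c)$. On $S^c$ the defining inequality rewrites pointwise as $p(x) - q(x) > \sqrt{\epsilon}\, q(x)$, and since $S^c \subseteq \{x : p(x) > q(x)\}$, summing over $S^c$ and applying the TV hypothesis gives
\begin{align*}
\sqrt{\epsilon}\, q(S^c) < \sum_{x \in S^c} (p(x) - q(x)) \leq \sum_{x : p(x) > q(x)} (p(x) - q(x)) \leq \epsilon,
\end{align*}
so $q(S^c) \leq \sqrt{\epsilon}$, yielding the first conclusion $q(S) \geq 1 - \sqrt{\epsilon}$.

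Next I would read off the $p$-bound by the decomposition $p(S^c) = \sum_{x \in S^c}(p(x) - q(x)) + q(S^c)$: the first sum is $\leq \epsilon$ by the same TV estimate and the second is $\leq \sqrt{\epsilon}$ by the previous step, giving $p(S) \geq 1 - \sqrt{\epsilon} - \epsilon$. There isn't really a hard step here --- the only fussy point is that the strict inequality in the definition of $S^c$ is what places $S^c$ inside $\{p > q\}$, and this containment is exactly what permits the full TV mass to serve as an upper bound for $\sum_{x \in S^c}(p(x) - q(x))$ in both steps.
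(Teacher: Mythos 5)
Your proof is correct and is essentially the canonical argument for this lemma; the present paper does not reproduce a proof at all (it just cites it to \cite{Marwah23}), so there is nothing to diverge from. The two-step Markov-style bound — using $p(x)-q(x) > \sqrt{\epsilon}\,q(x) \geq 0$ to place $S^c$ inside $\{p>q\}$, then charging both $\sqrt{\epsilon}\,q(S^c)$ and $\sum_{S^c}(p-q)$ against $\tfrac{1}{2}\norm{p-q}_1 \le \epsilon$ — is exactly the right mechanism, and your computation of $p(S^c) = \sum_{S^c}(p-q) + q(S^c) \le \epsilon + \sqrt{\epsilon}$ correctly yields both stated bounds.
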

Using the Lemma above, for every $k \in [n]$, we know that the set 
\begin{align*}
    B_k :&= \curlyBrk{(a_1^n, b): p(a_1^k b) > (1+\sqrt{2\epsilon})p(a_1^{k-1} b)q^{(k)}(a_k|a_1^{k-1} b)} \\
    &= \curlyBrk{(a_1^n, b): p(a_k | a_1^{k-1} b) > (1+\sqrt{2\epsilon})q^{(k)}(a_k |a_1^{k-1} b)}
\end{align*}
satisfies $\Pr_p(B_k) \leq 3\sqrt{\epsilon}$. We can now define $L = \sum_{k=1}^n \charFn{B_k}$, which is a random variable that simply counts the number of bad sets $B_k$ an element $(a_1^n, b)$ belongs to. Using the Markov inequality, we have
\begin{align*}
    \Pr_{p}\sqBrk{L > n\epsilon^{\frac{1}{4}}} \leq \frac{\Expect_p[L]}{n\epsilon^{\frac{1}{4}}} \leq 3\epsilon^{\frac{1}{4}}.
\end{align*}
We can define the bad set $\mathcal{B} := \curlyBrk{(a_1^n, b): L(a_1^n, b)>n\epsilon^{\frac{1}{4}}}$. Then for the subnormalised distribution $\tilde{p}_{A_1^n B}$ defined as 
\begin{align*}
    \tilde{p}_{A_1^n B} (a_1^n, b) = 
    \begin{cases}
        p_{A_1^n B} (a_1^n, b) & (a_1^n, b) \not\in \mathcal{B} \\
        0 & \text{else}
    \end{cases},
\end{align*}
we have $P(\tilde{p}_{A_1^n B}, p_{A_1^n B}) \leq \sqrt{6}\epsilon^{1/8}$. Further, note that for every $(a_1^n, b) \not\in \mathcal{B}$, we have 
\begin{align*}
    p(a_1^n | b) &= \prod_{k=1}^n p(a_k | a_1^{k-1}, b) \\
    &= \prod_{k: (a_1^n, b) \not\in B_k} p(a_k | a_1^{k-1}, b) \prod_{k: (a_1^n, b) \in B_k} p(a_k | a_1^{k-1}, b) \\
    &\leq (1+\sqrt{2\epsilon})^n \prod_{k: (a_1^n, b) \not\in B_k} q^{(k)}(a_k | a_1^{k-1}, b) \prod_{k: (a_1^n, b) \in B_k} e^{\log |A| - \lambda_k}\\
    &\leq (1+\sqrt{2\epsilon})^n \prod_{k: (a_1^n, b) \not\in B_k} e^{-\lambda_k} \prod_{k: (a_1^n, b) \in B_k} e^{\log |A| - \lambda_k}\\
    &\leq (1+\sqrt{2\epsilon})^n\ e^{n\epsilon^{1/4}\log |A|} \prod_{k=1}^n e^{-\lambda_k}
\end{align*}
where in the third line we have used the fact that if $(a_1^n, b) \not\in B_k$, then $p(a_k | a_1^{k-1}b) \leq (1+\sqrt{\epsilon})q^{(k)}(a_k | a_1^{k-1}b)$ and if $(a_1^n, b) \in B_k$ then $p(a_k | a_1^{k-1}b) \leq 1 \leq \exp(\log |A| -\lambda_k)$ since $\lambda_k \leq \log |A|$. In the last line we have used the fact that for $(a_1^k, b) \not\in \mathcal{B}$, we have $|\{k \in [n]: (a_1^{n},b) \in B_k\}| = L(a_1^n, b) \leq n\epsilon^{\frac{1}{4}}$. This proves the following lower bound for the smooth min-entropy of $\rho$
\begin{align}
    H^{\downarrow, \sqrt{6}\epsilon^{1/8}}_{\min} (A_1^n | B) \geq \sum_{k=1}^n H^{\downarrow, \epsilon}_{\min}(A_k | A_1^{k-1}B) - n\epsilon^{1/4}\log |A| - n \log (1+\sqrt{\epsilon}).
\end{align}
There are many hurdles to generalising this proof to the quantum setting. Firstly, the correct generalisation of Lemma \ref{lemm:small_tr_norm_implies_small_Dmax} is quite difficult. Secondly, the proof above bounds the conditional probability $p(a_1^n | b)$ for every element $(a_1^n, b) \not\in \mathcal{B}$ differently depending on which bad sets $(\mathcal{B}_k)_k$ it belongs in. It is not obvious how such an argument can be carried out for quantum states. For example, one cannot simply use the eigenvalues of $\rho_{A_1^n B}$ instead of $p(a_1^n b)$ since the eigenvectors for each of the partial states $\rho_{A_1^k B}$ differ. Lastly, in the quantum case it does not seem that the Markov inequality can be used to identify the bad sets and the smoothed state as we did above.\\

The correct generalisation of Lemma \ref{lemm:small_tr_norm_implies_small_Dmax} was proven by \cite{fedja23}. We state it and reproduce its proof in Lemma \ref{lemm:good_proj}. To address the second and third hurdles mentioned above, we modify the classical proof before proceeding to the quantum proof. \\

Instead of figuring out the bad set $\mathcal{B}$ and eliminating the elements belonging to this set to construct the smoothed distribution, we use the substate theorem and the entropic triangle inequality. Define the auxiliary subnormalised distribution 
\begin{align}
    q(a_1^n b) := \delta^{L(a_1^n b)} p(a_1^n b)
    \label{eq:defn_exp_aux_dist}
\end{align}
for some small $\delta \in (0,\frac{1}{|A|})$. The $\delta^{L(a_1^n b)}$ factor in $q$ simply guarantees that elements which have a large $L$ (are \emph{bad}) are significantly damped. This ensures that for all $a_1^n, b$, we have that 
\begin{align*}
    q(a_1^n b) &= \delta^{L(a_1^n b)} p(a_1^n b) \\
    &= p(b) \prod_{k=1}^{n} \delta^{\chi_{B_k}(a_1^n b)} p(a_k | a_1^{k-1}b) \\
    &\leq p(b) \prod_{k=1}^{n} (1+\sqrt{2\epsilon}) e^{-\lambda_k}.
    \numberthis
\end{align*} 
The last inequality above is guaranteed both when $a_1^n, b \not\in \mathcal{B}_k$ and when $a_1^n, b \in \mathcal{B}_k$ since we chose $\delta < \frac{1}{|A|}$. This gives us a lower bound the min-entropy of $q$. Moreover, we can easily show that the relative entropy between $p$ and $q$ is small:
\begin{align*}
    D(p_{A_1^n B} || q_{A_1^n B}) &= \mathbb{E}_{p}\sqBrk{\log \frac{p(A_1^n B)}{q({A_1^n B})}} \\
    &= \mathbb{E}_{p}\sqBrk{L \log \frac{1}{\delta}}\\
    &= 3n\sqrt{\epsilon}\log \frac{1}{\delta}.
    \numberthis
\end{align*} 
Now, we can bound the smooth min-entropy of $p$ in terms of the min-entropy of $q$ by using the substate theorem and the entropic triangle inequality. We will generalise this proof to the quantum case. 

\subsection{Lemmas}

In this section, we will primarily use the following variant of the smooth min-entropy, which has previously appeared in \cite{Anshu20}:
\begin{align}
    \bar{H}_{\min}^{\epsilon}(A|B)_{\rho} &:= \sup \curlyBrk{\lambda \in \mathbb{R} : \text{ for } \tilde{\rho}_{AB} \in B_{\epsilon}(\rho_{AB}) \text{ such that } \tilde{\rho}_{AB} \leq e^{-\lambda}\Id_A \otimes {\rho}_{B}}.
\end{align}
Note that $\bar{H}_{\min}^{\epsilon}(A|B)_{\rho} \geq -\log |A|$ because $\rho_{AB} \leq |A|\Id_A \otimes \rho_B$. Also, $\bar{H}_{\min}^{\epsilon}(A|B)_{\rho} \leq \log \frac{|A|}{1-\epsilon^2}$. To see this note that for any $\tilde{\rho}_{AB}$ and $\lambda$, such that 
\begin{align*}
    & \tilde{\rho}_{AB} \leq e^{-\lambda}\Id_A \otimes {\rho}_{B} \\
    \Rightarrow & \tr \tilde{\rho}_{AB} \leq e^{-\lambda}|A| \\
    \Rightarrow & \lambda \leq \log \frac{|A|}{\tr \tilde{\rho}_{AB}} \numberthis
\end{align*}
where we have simply taken a trace in the second line. Finally, use the fact that $\tr \tilde{\rho}_{AB} \geq 1- \epsilon^2$ for all $\tilde{\rho}_{AB} \in B_{\epsilon}(\rho_{AB})$.\\

This smooth min-entropy can be lower bounded using the $H_{\min}^{\downarrow, \epsilon}$ min-entropy as the following lemma shows. 

\begin{lemma}
    For a normalized quantum state $\rho_{AB}$ and $\epsilon \geq 0$, we have 
    \begin{align}
        \bar{H}_{\min}^{2\epsilon}(A|B)_{\rho} \geq H_{\min}^{\downarrow, \epsilon}(A|B)_{\rho}.
    \end{align}
    \label{lemm:bar_Hmin_Hmin_dn_reln}
\end{lemma}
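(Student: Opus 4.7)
The plan is to convert any witness of $H^{\downarrow,\epsilon}_{\min}(A|B)_\rho\ge\lambda$ into a witness of $\bar H_{\min}^{2\epsilon}(A|B)_\rho\ge\lambda$. Let $\tilde\rho_{AB}\in B_\epsilon(\rho_{AB})$ satisfy $\tilde\rho_{AB}\le e^{-\lambda}\,\Id_A\otimes\tilde\rho_B$, and define
\[
  \eta_{AB}\;:=\;\rho_B^{1/2}\,\tilde\rho_B^{-1/2}\,\tilde\rho_{AB}\,\tilde\rho_B^{-1/2}\,\rho_B^{1/2},
\]
where $\tilde\rho_B^{-1/2}$ denotes the Moore--Penrose pseudoinverse. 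This is exactly the construction analysed in Lemma \ref{lemm:cond_state_dist}; it remains to verify that $\eta_{AB}$ is close enough to $\rho_{AB}$ and satisfies the right operator inequality.

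The operator inequality is immediate. Conjugating $\tilde\rho_{AB}\le e^{-\lambda}\Id_A\otimes\tilde\rho_B$ by $\rho_B^{1/2}\tilde\rho_B^{-1/2}$ on the $B$ register turns the left-hand side into $\eta_{AB}$ and the right-hand side into $e^{-\lambda}\Id_A\otimes \rho_B^{1/2}\Pi\,\rho_B^{1/2}$, where $\Pi$ is the projector onto $\mathrm{supp}(\tilde\rho_B)$. Since $\rho_B^{1/2}\Pi\rho_B^{1/2}\le\rho_B$, this gives $\eta_{AB}\le e^{-\lambda}\Id_A\otimes\rho_B$, which is the form required by the definition of $\bar H_{\min}^\epsilon$.

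The delicate step, and the main obstacle, is the purified-distance bound $P(\rho_{AB},\eta_{AB})\le 2\epsilon$. A direct invocation of Lemma \ref{lemm:cond_state_dist} only gives $(\sqrt2+1)\epsilon$ through a triangle inequality routed via $\tilde\rho$. To save the extra $(\sqrt2-1)\epsilon$, I would skip the detour and estimate $P(\rho,\eta)$ in one shot with Uhlmann's theorem: fix a Schmidt purification $\ket{\tilde\rho}_{ABR}$ of $\tilde\rho$ relative to the $B$/$AR$ cut, let $\ket\rho_{ABR}$ be its Uhlmann partner with $|\braket{\rho|\tilde\rho}|^2=F(\rho_{AB},\tilde\rho_{AB})\ge 1-\epsilon^2$, and use the natural purification $\ket\eta:=\rho_B^{1/2}\tilde\rho_B^{-1/2}\ket{\tilde\rho}$ of $\eta$. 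Evaluating $|\braket{\rho|\eta}|$ directly in terms of $F(\rho_B,\tilde\rho_B)$ should sharpen the constant from $\sqrt2+1$ down to $2$. Combining the operator inequality with this distance bound, $\eta_{AB}$ certifies $\bar H_{\min}^{2\epsilon}(A|B)_\rho\ge\lambda$, yielding the claim.
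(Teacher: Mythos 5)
Your reduction strategy is the same as the paper's, and the verification of the operator inequality $\eta_{AB}\le e^{-\lambda}\Id_A\otimes\rho_B$ is fine. The gap is in the distance bound, which you correctly identify as the delicate step but do not actually establish. Your $\eta_{AB}:=\rho_B^{1/2}\tilde\rho_B^{-1/2}\tilde\rho_{AB}\tilde\rho_B^{-1/2}\rho_B^{1/2}$ (no unitary) is the construction of Lemma~\ref{lemm:cond_state_dist}, which only gives $P(\rho,\eta)\le(\sqrt2+1)\epsilon$, and you ask the reader to believe that a ``direct'' Uhlmann estimate of $|\braket{\rho|\eta}|$ saves the extra $(\sqrt2-1)\epsilon$. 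But if you try to carry it out, it breaks: with $M:=\rho_B^{1/2}\tilde\rho_B^{-1/2}$ and $\ket{\rho}=c\ket{\tilde\rho}+s\ket{\perp}$ (so $|c|\ge\sqrt{1-\epsilon^2}$, $|s|\le\epsilon$), you get
\begin{align}
\braket{\rho|\eta}=\braket{\rho|M|\tilde\rho}=c^*\,\tr\bigl(\rho_B^{1/2}\tilde\rho_B^{1/2}\bigr)+s^*\braket{\perp|M|\tilde\rho},
\end{align}
and the cross term is only controlled by $|s|\,\lVert M\ket{\tilde\rho}\rVert\le\epsilon$. That yields $F(\rho,\eta)\gtrsim 1-O(\epsilon)$ and hence $P(\rho,\eta)=O(\sqrt\epsilon)$, which is \emph{worse} than the triangle-inequality route, not better. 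The missing ingredient is precisely what Lemma~\ref{lemm:cond_state_dist_extra} adds: a unitary $U_B$ from the polar decomposition $\tilde\rho_B^{1/2}\rho_B^{1/2}=V_B\lvert\tilde\rho_B^{1/2}\rho_B^{1/2}\rvert$ inserted in $\eta_{AB}:=\rho_B^{1/2}U_B\tilde\rho_B^{-1/2}\tilde\rho_{AB}\tilde\rho_B^{-1/2}U_B^\dagger\rho_B^{1/2}$. That choice turns the overlap into $\tr\lvert\tilde\rho_B^{1/2}\rho_B^{1/2}\rvert=\sqrt{F(\tilde\rho_B,\rho_B)}$ rather than the pretty-good fidelity $\tr(\tilde\rho_B^{1/2}\rho_B^{1/2})\ge F(\tilde\rho_B,\rho_B)$, giving $P(\tilde\rho,\eta)\le\epsilon$ and then the triangle inequality gives $2\epsilon$. (Conjugating by the unitary also preserves the needed operator inequality, since $U_B\Pi U_B^\dagger\le\Id$.) So your diagnosis of where the constant is lost is slightly off --- the loss comes from the absence of $U_B$, not from routing through $\tilde\rho$ --- and the proposed fix is an unsupported claim that, as sketched, does not give $2\epsilon$.
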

\begin{proof}
    Let $\lambda = H_{\min}^{\downarrow, \epsilon}(A|B)_{\rho}$ and the state $\tilde{\rho}_{AB} \in B_{\epsilon}(\rho_{AB})$ be such that 
    \begin{align}
        \tilde{\rho}_{AB} \leq e^{-\lambda}\Id_A \otimes \tilde{\rho}_{B}.
    \end{align}
    By Lemma \ref{lemm:cond_state_dist_extra}, we have that $\eta_{AB} := \rho_{B}^{1/2} U_B \tilde{\rho}_{B}^{-1/2}\tilde{\rho}_{AB}\tilde{\rho}_{B}^{-1/2}U_B^{\dagger}\rho_{B}^{1/2}$ satisfies $P(\rho_{AB}, \eta_{AB}) \leq 2\epsilon$. Clearly, this state also satisfies
    \begin{align}
        \eta_{AB} \leq e^{-\lambda}\Id_A \otimes \rho_B.
    \end{align}
\end{proof}

We also require the following operator inequalities relating an operator and its compressions.

\begin{lemma}[Asymmetric pinching {\cite[Lemma 5.2]{Marwah23}}]
    For $t>0$, a positive semidefinite operator $X \geq 0 $ and orthogonal projections $\Pi$ and $\Pi_\perp = \Id - \Pi$, we have that 
    \begin{align}
        X \leq (1+t) \Pi X \Pi + \rndBrk{1+ \frac{1}{t}}\Pi_\perp X \Pi_\perp.
    \end{align}
    \label{lemm:asymm_pinching}
\end{lemma}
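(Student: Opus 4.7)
The plan is to prove this operator inequality by decomposing $X$ according to the projections $\Pi, \Pi_\perp$ and bounding the off-diagonal cross terms using an operator Cauchy--Schwarz style estimate. First I would write the trivial decomposition
\begin{align}
    X = \Pi X \Pi + \Pi_\perp X \Pi_\perp + \Pi X \Pi_\perp + \Pi_\perp X \Pi,
\end{align}
so that the inequality reduces to showing
\begin{align}
    \Pi X \Pi_\perp + \Pi_\perp X \Pi \leq t\, \Pi X \Pi + \tfrac{1}{t}\, \Pi_\perp X \Pi_\perp.
\end{align}

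Next, I would use the fact that $X \geq 0$ to write $X = Y^* Y$ for some operator $Y$ (e.g.\ $Y = X^{1/2}$) and set $A = Y\Pi$, $B = Y\Pi_\perp$. Then $\Pi X \Pi = A^*A$, $\Pi_\perp X \Pi_\perp = B^*B$, and $\Pi X \Pi_\perp + \Pi_\perp X \Pi = A^*B + B^*A$. The desired cross-term bound then follows immediately by expanding the manifestly positive operator
\begin{align}
    0 \leq \Bigl(\sqrt{t}\,A - \tfrac{1}{\sqrt{t}}\,B\Bigr)^{\!*}\Bigl(\sqrt{t}\,A - \tfrac{1}{\sqrt{t}}\,B\Bigr) = t\,A^*A + \tfrac{1}{t}\,B^*B - A^*B - B^*A,
\end{align}
which rearranges into exactly the required bound on the off-diagonal terms. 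Combining this with the decomposition of $X$ yields the claimed inequality.

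There is essentially no hard obstacle here: the argument is a one-line application of the ``polarisation'' trick $(\sqrt{t}A - B/\sqrt{t})^*(\sqrt{t}A - B/\sqrt{t}) \geq 0$ once one has made the right substitution. The only minor care point is ensuring that the factorisation $X = Y^*Y$ is legitimate, which is guaranteed by $X \geq 0$, and observing that the argument does not require $\Pi$ and $\Pi_\perp$ to be complementary beyond the fact that they are each orthogonal projections (the decomposition $X = \Pi X \Pi + \Pi_\perp X \Pi_\perp + \text{cross terms}$ uses only $\Pi + \Pi_\perp = \Id$).
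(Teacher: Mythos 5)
Your proof is correct. The decomposition $X = (\Pi + \Pi_\perp)X(\Pi + \Pi_\perp)$ correctly reduces the claim to bounding the cross terms, and the factorisation $A = X^{1/2}\Pi$, $B = X^{1/2}\Pi_\perp$ makes the identification $\Pi X \Pi = A^*A$, $\Pi_\perp X \Pi_\perp = B^*B$, $\Pi X \Pi_\perp + \Pi_\perp X \Pi = A^*B + B^*A$ exact (using hermiticity of the projectors for $A^* = \Pi X^{1/2}$). Expanding the manifestly nonnegative operator $\bigl(\sqrt{t}A - B/\sqrt{t}\bigr)^*\bigl(\sqrt{t}A - B/\sqrt{t}\bigr)$ then delivers $A^*B + B^*A \leq t A^*A + \tfrac{1}{t}B^*B$, which is precisely the needed bound, and adding back the diagonal blocks finishes the argument.

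The paper does not reprove this lemma; it cites it from \cite{Marwah23} (Lemma 5.2), so there is no in-paper proof to compare against. Your argument is the standard operator Cauchy--Schwarz / completing-the-square derivation and is the natural proof. One small remark on your closing comment: the property you actually exploit is $\Pi + \Pi_\perp = \Id$ together with hermiticity of $\Pi$ and $\Pi_\perp$; idempotency is never used, so the phrase ``orthogonal projections'' understates the generality slightly rather than overstating it, and the statement you have written is what is needed for the application.
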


\begin{lemma}
    \label{lemm:proj_switching}
    For a positive operator $X \geq 0$ and orthogonal projectors $\Pi$ and $\Pi_{\perp} = \Id - \Pi$, we have 
    \begin{align}
        \Pi_{\perp} X \Pi_{\perp} \leq 2 X + 2 \Pi X \Pi.
    \end{align}
\end{lemma}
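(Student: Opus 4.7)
The plan is to prove the operator inequality $\Pi_\perp X \Pi_\perp \leq 2X + 2\Pi X \Pi$ by a short algebraic manipulation exploiting the identity $\Pi_\perp = \Id - \Pi$ and the positivity of a suitable conjugation of $X$.

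First I would expand the left-hand side directly using $\Pi_\perp = \Id - \Pi$:
\begin{align*}
  \Pi_\perp X \Pi_\perp = X - \Pi X - X \Pi + \Pi X \Pi.
\end{align*}
So the claim reduces to showing that $-(\Pi X + X \Pi) \leq X + \Pi X \Pi$, i.e.\ that the Hermitian operator $X + \Pi X + X \Pi + \Pi X \Pi$ is positive semidefinite.

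The key observation is that this operator is precisely the conjugation of $X$ by the Hermitian operator $M := \Id + \Pi$. Indeed,
\begin{align*}
  (\Id + \Pi) X (\Id + \Pi) = X + \Pi X + X \Pi + \Pi X \Pi,
\end{align*}
and since $X \geq 0$ and $M = M^\dagger$, we have $M X M \geq 0$. This immediately gives $-(\Pi X + X \Pi) \leq X + \Pi X \Pi$, and plugging this into the expansion of $\Pi_\perp X \Pi_\perp$ yields
\begin{align*}
  \Pi_\perp X \Pi_\perp = X + \Pi X \Pi - (\Pi X + X \Pi) \leq 2X + 2 \Pi X \Pi,
\end{align*}
as desired.

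There is no real obstacle here; the only nontrivial step is spotting the right Hermitian $M$. An alternative route would be to write $X = Y^2$ with $Y = X^{1/2}$ and apply the operator AM-GM inequality $(A+B)(A+B)^\dagger \leq 2 A A^\dagger + 2 B B^\dagger$ (which follows from $(A-B)(A-B)^\dagger \geq 0$) to $A = Y$, $B = -\Pi Y$, but this requires the square root and is slightly less direct than the $(\Id + \Pi)X(\Id + \Pi) \geq 0$ trick.
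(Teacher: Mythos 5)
Your proof is correct, and it is a genuinely different (and arguably cleaner) argument than the one in the paper. The paper partitions $X$ into $2\times 2$ blocks along $\mathrm{im}(\Pi)\oplus\mathrm{im}(\Pi_\perp)$, writes the target inequality as positivity of a modified block matrix, and invokes the Schur complement characterization of positive semidefiniteness (checking that the support condition and the inequality $4X_1 \geq 2X_2 X_3^{-1}(2X_2)^\ast$ are inherited from $X\geq 0$). Your route bypasses all of that: after expanding $\Pi_\perp X \Pi_\perp$ you reduce the claim to $X + \Pi X + X\Pi + \Pi X \Pi \geq 0$ and recognise this as $(\Id+\Pi)X(\Id+\Pi)\geq 0$, which is immediate from $X\geq 0$ and the self-adjointness of $\Id+\Pi$. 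The paper's approach is more structural and generalizes naturally to arbitrary block manipulations, whereas yours is shorter, coordinate-free, and makes the constant $2$ appear transparently from the single conjugation; both are valid, but the conjugation argument is the more economical of the two.
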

\begin{proof}
    We will write the operator $X$ as the block matrix
    \begin{align}
        X = \begin{pmatrix}
            X_1 & X_2 \\
            X_2^\ast & X_3 
        \end{pmatrix}
    \end{align}
    where the blocks are partitioned according to the direct sum $\text{im}(\Pi)\oplus \text{im}(\Pi_{\perp})$. The statement in the Lemma is now equivalent to proving that 
    \begin{align}
        0 \leq  \begin{pmatrix}
            4X_1 & 2X_2 \\
            2X_2^\ast & X_3 
        \end{pmatrix}.
    \end{align}
    Call the matrix above $X'$. Note that using the Schur's complement \cite[Exercise 1.3.5]{Bhatia07} $X \geq 0$ implies that $X_3 \geq 0$, $\text{im}(X_2^\ast) \subseteq \text{im}(X_3)$ and 
    \begin{align}
        X_1 \geq X_2 X_3^{-1} X_2^\ast. 
    \end{align}
    Using Schur's complement characterization again for $X'$, we see that $X'\geq 0$ is equivalent to $X_3 \geq 0$, $\text{im}(X_2^\ast) \subseteq \text{im}(X_3)$ and 
    \begin{align*}
        4 X_1 \geq (2X_2) X_3^{-1} (2X_2)^\ast
    \end{align*}
    which are all true because of the corresponding relations for $X$ itself. 
\end{proof}
The following lemma correctly generalises Lemma \ref{lemm:small_tr_norm_implies_small_Dmax}. It was proven by user:fedja in response to a question by user:noel (pseudonym used by AM) on MathOverflow \cite{fedja23}. Its proof has been reproduced here for the purpose of completeness.
\begin{lemma}[{\cite{fedja23}}]
    For $\epsilon \in [0,1]$, and subnormalized quantum states $\rho$ and $\sigma$ on the finite dimensional Hilbert space $\mathcal{X}$ such that $\frac{1}{2}\norm{\rho - \sigma}_1 \leq \epsilon$, there exists an orthogonal projector $\Pi$ such that 
    \begin{align}
        &\Pi \rho \Pi \leq (1+g_1(\epsilon))\sigma 
        \label{eq:good_proj1}
    \end{align}
    and 
    \begin{align}
        & \tr((\Id - \Pi) \rho ) \leq g_2(\epsilon)
        \label{eq:good_proj2}
    \end{align}
    for the small functions
    \begin{align}
        g_1(\epsilon) &:= \frac{8}{3}(1+\epsilon^{1/3})\epsilon^{1/3}\log\frac{1}{\epsilon} + (1+\epsilon^{1/3} + \epsilon^{2/3}) \epsilon^{1/3} = O\rndBrk{\epsilon^{1/3} \log \frac{1}{\epsilon}} \\
        g_2(\epsilon) &:= 4(1+\epsilon^{1/3})\epsilon^{1/3} + 2\epsilon = O(\epsilon^{1/3})
    \end{align}
    \label{lemm:good_proj}
\end{lemma}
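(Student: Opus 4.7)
My plan is to generalize the classical Lemma \ref{lemm:small_tr_norm_implies_small_Dmax}, which selects the set $\{x : p(x) \leq (1+\sqrt{\epsilon}) q(x)\}$, to the quantum setting. The fundamental obstruction is that $\rho$ and $\sigma$ do not commute in general, so I cannot directly threshold a density ratio. Instead, I will coarsen $\sigma$'s spectrum into dyadic slabs on which $\sigma$ is scalar up to a bounded multiplicative factor, reducing the slab-restricted problem to a quasi-classical comparison, and then glue the slab projectors together while controlling the cross-slab terms.

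Fix two auxiliary parameters $s, \tau \in (0,1)$ to be optimized at the end. Let $Q_k := \mathds{1}\curlyBrk{\sigma \in [(1+s)^{-k-1}, (1+s)^{-k})}$ for $k = 0, 1, \ldots, K$, with $K = O(\log(1/\tau)/s)$ chosen so that $(1+s)^{-K-1} \geq \tau$, and let $Q_\infty := \mathds{1}\curlyBrk{\sigma < \tau}$. Each $Q_k$ commutes with $\sigma$ and satisfies $(1+s)^{-k-1} Q_k \leq Q_k \sigma Q_k \leq (1+s)^{-k} Q_k$. Within each slab, the 1-norm closeness $\norm{Q_k(\rho - \sigma) Q_k}_1 \leq 2\epsilon$ (from the contractivity of the completely positive map $X \mapsto Q_k X Q_k$) reduces the problem to thresholding the eigenvalues of $Q_k \rho Q_k$ against the scalar value $(1+s)^{-k}$; a Chebyshev-type bound produces a sub-projector $\Pi_k \leq Q_k$ such that $\Pi_k \rho \Pi_k \leq (1+s)^2 Q_k \sigma Q_k$ and $\tr((Q_k - \Pi_k)\rho) \leq O(\sqrt{\epsilon/\tau})$ per slab. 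I then define $\Pi := \sum_k \Pi_k$ and use the crucial fact that $Q_j \sigma Q_k = 0$ for $j \neq k$ (since $Q_k$ commutes with $\sigma$), so the off-diagonal blocks reduce to $\Pi_j \rho \Pi_k = \Pi_j (\rho - \sigma) \Pi_k$, which is controlled by $\epsilon$ in trace norm. A weighted operator Cauchy--Schwarz of the form $A^*B + B^*A \leq \eta A^*A + \eta^{-1} B^*B$ applied to the cross terms, combined with the diagonal bound $\sum_k \Pi_k \rho \Pi_k \leq (1+s)^2 \sigma$, then yields the target operator inequality $\Pi \rho \Pi \leq (1 + g_1(\epsilon))\sigma$. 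Optimizing $s$ (per-slab multiplicative slack), $\tau$ (small-eigenvalue threshold) and the Cauchy--Schwarz weight simultaneously leads to $s \sim \tau \sim \epsilon^{1/3}$, producing the $K s \sim \epsilon^{1/3}\log(1/\epsilon)$ logarithmic contribution to $g_1$, and the $(1+\epsilon^{1/3}+\epsilon^{2/3})$ prefactor reflects the geometric resummation of the cross-slab corrections. The trace bound $g_2(\epsilon)$ emerges by summing the $O(\sqrt{\epsilon/\tau})$ per-slab losses, the $2\epsilon$ from $\norm{\rho-\sigma}_1$ directly (appearing as the $+2\epsilon$ summand in $g_2$), and the sub-threshold mass $\tr(Q_\infty \rho) \leq \tr(Q_\infty \sigma) + 2\epsilon$, which is in turn $O(\epsilon^{1/3})$ after choosing $\tau$.

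The main technical obstacle is the off-diagonal control across slabs: a naive application of operator Cauchy--Schwarz (in the spirit of Lemma \ref{lemm:proj_switching}) introduces a multiplicative factor proportional to $K$, which would degrade the exponent in $g_1$ from $\epsilon^{1/3}\log(1/\epsilon)$ to something like $\epsilon^{1/3}\log^2(1/\epsilon)$ or require a worse $\epsilon$-exponent after re-optimization. Getting the tight rate requires exploiting that $\Pi_j \rho \Pi_k = \Pi_j (\rho - \sigma)\Pi_k$ is small in trace norm (not only bounded in operator norm by the geometric mean of the diagonal blocks), so the cross-term contribution shrinks with $\epsilon$ rather than being a fixed fraction of the diagonal. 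Absorbing these corrections into a Neumann-series-style geometric factor $(1-\epsilon^{1/3})^{-1}$, truncated to three terms, is what allows us to trade the extra $K$ for the slowly-growing $(1+\epsilon^{1/3}+\epsilon^{2/3})$ prefactor appearing in the statement. This careful accounting of diagonal and off-diagonal contributions across $O(\log(1/\epsilon))$ scales is the heart of the proof.
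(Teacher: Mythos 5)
Your high-level skeleton — slicing $\sigma$ into multiplicative eigenvalue slabs $Q_k$, projecting per slab, then gluing — is the same first move the paper makes. But two of the specific mechanisms in your plan would not close, and they correspond to the two genuinely non-obvious ideas in the actual proof.

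\textbf{Per-slab thresholding.} You propose thresholding $Q_k\rho Q_k$ against the scalar $(1+s)^{-k}$ by a Chebyshev-type argument, losing $\tr((Q_k-\Pi_k)\rho)=O(\sqrt{\epsilon/\tau})$ per slab. There are $K=\Theta(s^{-1}\log(1/\tau))$ slabs, so summing gives $\Theta(\sqrt{\epsilon/\tau}\cdot s^{-1}\log(1/\tau))$, and even at your stated choice $s\sim\tau\sim\epsilon^{1/3}$ this is $\Theta(\log(1/\epsilon))$, not $O(\epsilon^{1/3})$. The issue is the square-root loss per slab: it cannot be summed over $K$ slabs without blowing up. The paper instead works with $\Delta:=(\rho-\sigma)^{+}$ and thresholds the eigenvalues of $\Delta_k=P_k\Delta P_k$ against $\delta_2(1+\delta_1)^{-(k+1)}$. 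The crucial gain is that the per-slab loss is then \emph{linear} in $\mu_k=\tr(\Delta P_k)$, and $\sum_k \mu_k=\tr\Delta\leq 2\epsilon$, so the total is $O(\epsilon/\delta_2)$, which with $\delta_2=\epsilon^{2/3}$ is $O(\epsilon^{1/3})$. Moving from $\rho$ itself to the positive part of the difference is what makes the per-slab losses telescoping rather than additive.

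\textbf{Cross-slab control.} You are right that $\Pi_j\rho\Pi_k=\Pi_j(\rho-\sigma)\Pi_k$ for $j\neq k$, and you hope to leverage the trace-norm smallness of this block together with a Neumann-series resummation to avoid a multiplicative factor proportional to $K$. But a trace-norm bound on an off-diagonal block does not translate into an operator inequality of the form you need, and the resummation argument is not developed to the point where it could be checked. The paper in fact does \emph{not} avoid the factor $K$: near-diagonal cross terms (slabs within distance $K$) are bounded by Cauchy--Schwarz and AM--GM exactly as in Lemma~\ref{lemm:proj_switching}, producing the factor $2K$, which is then made harmless by folding it into the multiplicative slack $(1+(2K+1)\delta_2)$ with $\delta_2=\epsilon^{2/3}$ and $K\sim\epsilon^{-1/3}\log(1/\epsilon)$. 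The device that you are missing is what kills the \emph{far} cross terms: each $\Pi_k$ is further restricted to $\bigcap_{j\leq k-K-1}\ker(P_j\Delta)$, so that for $|j-k|>K$ the cross block $\braket{v_j|\Delta|v_k}$ vanishes identically. The additional codimension cost of this restriction is $\sum_{j\leq k-K-1}d_j$, which when weighted by $(1+\delta_1)^{-k}$ and summed over $k$ converges to $O(\delta_1^{-1}(1+\delta_1)^{-K})$ because $\sum_j d_j(1+\delta_1)^{-(j+1)}\leq\tr\sigma\leq 1$. This kernel restriction is the structural idea that replaces the Neumann resummation you were aiming for and is absent from your proposal.

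In short: the slab decomposition is right, but you need to threshold $\Delta=(\rho-\sigma)^{+}$ rather than $\rho$ (to get a linear, rather than square-root, per-slab loss that telescopes to $O(\epsilon)$), and you need the extra kernel restriction to annihilate far cross-terms exactly rather than trying to resum them.
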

\begin{proof}
    Let $\sigma = \sum_{i} q_i \ket{x_i}\bra{x_i}$ be the eigenvalue decomposition of $\sigma$. Let $\delta_1 \in (0,1)$ be a small parameter, which will be specified later. For every $k \geq 0$, define 
    \begin{align}
        & \mathcal{X}_k := \text{span}\curlyBrk{\ket{x_i} : q_i \in ((1+\delta_1)^{-(k+1)}, (1+\delta_1)^{-k}]} \\
        & d_k := \text{dim}(\mathcal{X}_k)
    \end{align}
    We have $\mathcal{X} = \bigoplus_{k=0}^\infty \mathcal{X}_k$ and $\sigma = \bigoplus_{k=0}^\infty \sigma|_{\mathcal{X}_k}$. Let $P_k$ be the projector on the space $\mathcal{X}_k$ for every $k$. Note that $P_k$ commute with $\sigma$. If we restrict $\sigma$ to the space $\mathcal{X}_k$, we have
    \begin{align}
        \frac{1}{(1+\delta_1)^{k+1}} \Id_{\mathcal{X}_k} \leq \sigma|_{\mathcal{X}_k} \leq \frac{1}{(1+\delta_1)^{k}} \Id_{\mathcal{X}_k}
    \end{align}
    for every $k$. Further, for any projector $\Pi_k$ in the space $\mathcal{X}_k$, we have 
    \begin{align*}
        \Vert \sigma|_{\mathcal{X}_k}^{-\frac{1}{2}} \Pi_k \sigma|_{\mathcal{X}_k} \Pi_k \sigma|_{\mathcal{X}_k}^{-\frac{1}{2}} \Vert_{\infty} &\leq \Vert\sigma|_{\mathcal{X}_k}\Vert_{\infty} \Vert \sigma|_{\mathcal{X}_k}^{-1}\Vert_{\infty} \\
        &\leq 1+ \delta_1
    \end{align*}
    which implies that for any projector $\Pi_k$ in $\mathcal{X}_k$
    \begin{align}
        \Pi_k \sigma|_{\mathcal{X}_k} \Pi_k \leq (1+ \delta_1) \sigma|_{\mathcal{X}_k}.
    \end{align}
    We will choose the projector $\Pi$ to satisfy the lemma to be of the form $\Pi = \bigoplus_{k=0}^\infty \Pi_k$, where each $\Pi_k$ is a projector in the space $\mathcal{X}_k$. With such a projector, we have
    \begin{align*}
        \Pi \sigma \Pi &= \bigoplus_{k=0}^\infty \Pi_k \sigma|_{\mathcal{X}_k} \Pi_k \\
        &\leq (1+ \delta_1) \bigoplus_{k=0}^\infty  \sigma|_{\mathcal{X}_k}\\
        & = (1+ \delta_1) \sigma. 
        \label{eq:proj_sigma_bdd_by_sigma}
        \numberthis
    \end{align*}

    Define $\Delta := (\rho - \sigma)^+$ to be the positive part of $(\rho - \sigma)$ and observe that $\tr(\Delta)\leq 2\epsilon$. Note that $\rho \leq \sigma + \Delta$. Further, let $\Delta_k := P_k \Delta P_k$ and $\mu_k := \tr(\Delta P_k)$ for every $k$. Then, we have that 
    \begin{align*}
        \sum_{k=0}^\infty \mu_k &= \sum_{k=0}^\infty \tr(\Delta P_k) \\
        &= \tr(\Delta) \\
        &\leq 2\epsilon. \numberthis \label{eq:sum_of_mu}
    \end{align*}
    Let $\Delta_k = \sum_{i=1}^{d_k} \nu_{ki} \ket{y_{ki}}\bra{y_{ki}}$ be the eigenvalue decomposition of $\Delta_k$. Let $\delta_2 \in (0,1)$ and $K \in \mathbb{N}$ be two more parameters to be chosen later. Define the subspace $\mathcal{Y}_k$ of $\mathcal{X}_k$ as 
    \begin{align}
        \mathcal{Y}_k := \text{span}\curlyBrk{\ket{y_{ki}} : \nu_{ki} \leq \delta_2 (1+\delta_1)^{-(k+1)}}.
    \end{align}
    Note that the codimension of $\mathcal{Y}_k$ in the space $\mathcal{X}_k$, is at most $\frac{\mu_k}{\delta_2}(1+ \delta_1)^{k+1}$. We further restrict the subspace $\mathcal{Y}_k$ to the space
    \begin{align}
        \mathcal{Z}_k := \mathcal{Y}_k \cap \bigcap_{j=0}^{k-K-1} \text{ker}(P_j \Delta).
    \end{align}
    Since, rank of $P_j$ is $d_j$, the codimension of $\mathcal{Z}_k$ in $\mathcal{Y}_k$ is at most $\sum_{j=0}^{k-K-1} d_j$. Define $\Pi_k$ to be the projector on $\mathcal{Z}_k$. Note that since $\mathcal{Z}_k$ is a subspace of $\mathcal{X}_k$, $\Pi_k$ is a projector in the space $\mathcal{X}_k$. We define the projector 
    \begin{align}
        \Pi := \bigoplus_{k=0}^\infty \Pi_k,
    \end{align}
    which is of the form promised and show that this satisfies the conditions of the lemma. \\
    
    \noindent We begin by showing that $\Pi \Delta \Pi$ can be bounded by a small multiple of $\Pi \sigma \Pi$, specifically
    \begin{align*}
        \Pi \Delta \Pi \leq (2K+1) \delta_2 \Pi \sigma \Pi.
    \end{align*}
    It is sufficient to show that for $\ket{v} = \bigoplus_{k=0}^\infty \ket{v_k}$ such that $\ket{v_k} \in \mathcal{Z}_k$ for every $k$, we have
    \begin{align}
        \braket{v | \Delta | v} \leq (2K+1) \delta_2 \braket{v | \sigma | v}.
    \end{align} 
    The left-hand side above can be expanded as
    \begin{align*}
        \braket{v | \Delta | v} &= \sum_{i=0}^\infty \sum_{j=0}^\infty \braket{v_i | \Delta | v_j} \\
        &= \sum_{i=0}^\infty \braket{v_i | \Delta | v_i} + 2 \sum_{i=0}^\infty \sum_{j=0}^{i-1} \text{Re}\curlyBrk{\braket{v_i | \Delta | v_j}}\\
        &\leq \sum_{i=0}^\infty \braket{v_i | \Delta | v_i} + 2 \sum_{i=0}^\infty \sum_{j=0}^{i-1} |\braket{v_i | \Delta | v_j}|\\
        &= \sum_{i=0}^\infty \braket{v_i | \Delta | v_i} + 2 \sum_{i=0}^\infty \sum_{j=i-K}^{i-1} |\braket{v_i | \Delta | v_j}| + 2 \sum_{i=0}^\infty \sum_{j=0}^{i-K-1} |\braket{v_i | \Delta | v_j}|. \label{eq:three_sums}\numberthis
    \end{align*}
    The first summation above can be bounded as
    \begin{align*}
        \sum_{i=0}^\infty \braket{v_i | \Delta | v_i} &\leq \delta_2 \sum_{i=0}^\infty \frac{1}{(1+\delta_1)^{i+1}} \braket{v_i | v_i}  \\
        &\leq \delta_2 \sum_{i=0}^\infty \braket{v_i | \sigma | v_i} \label{eq:sum1_bd} \numberthis
    \end{align*}
    where we have used the fact that $\ket{v_i} \in \mathcal{Y}_i$ and the maximum eigenvalue of $P_i \Delta P_i = \Delta_i$ in this subspace is at most $\delta_2 \frac{1}{(1+\delta_1)^{i+1}}$, and the fact that the minimum eigenvalue of $\sigma$ in $\mathcal{Z}_i \subseteq \mathcal{X}_i$ is at least $\frac{1}{(1+\delta_1)^{i+1}}$. \\
    
    \noindent We can bound the second term as 
    \begin{align*}
        \sum_{i=0}^\infty \sum_{j=i-K}^{i-1} |\braket{v_i | \Delta | v_j}| &\leq \sum_{i=0}^\infty \sum_{j=i-K}^{i-1} \norm{\Delta^{\frac{1}{2}} \ket{v_j}} \norm{ \Delta^{\frac{1}{2}} \ket{v_i}} \\
        &\leq \sum_{i=0}^\infty \frac{1}{2} \sum_{j=i-K}^{i-1} \rndBrk{\norm{\Delta^{\frac{1}{2}} \ket{v_j}}^2 + \norm{ \Delta^{\frac{1}{2}} \ket{v_i}}^2} \\
        &= K \sum_{i=0}^\infty \norm{ \Delta^{\frac{1}{2}} \ket{v_i}}^2 \\
        &= K \sum_{i=0}^\infty\braket{v_i| \Delta | v_i}
    \end{align*}
    where we have used the Cauchy-Schwarz inequality in the first line along with the fact that $\Delta \geq 0$ and the AM-GM inequality in the second line. This summation can now be bounded using Eq. \ref{eq:sum1_bd}. \\

    The last term in Eq. \ref{eq:three_sums} is a summation over inner products $|\braket{v_i | \Delta | v_j}| = |\braket{v_j | P_j \Delta | v_i}|$ for $j \leq i-K-1$. By our definition for the subspace $\mathcal{Z}_i$, we have that $\ket{v_i} \in \text{ker}(P_j \Delta)$ for these choices of the index $j$. \\

    Putting these together, we have the bound 
    \begin{align*}
        \braket{v | \Delta | v} &\leq \rndBrk{2K+1} \delta_2 \sum_{i=0}^\infty \braket{v_i | \sigma | v_i} \\
        &= \rndBrk{2K+1} \delta_2 \braket{v | \sigma | v}
    \end{align*}
    for every $\ket{v} = \Pi \ket{v}$. Above we have used the fact that $\ket{v_i} \in \mathcal{X}_i$ and $\sigma = \bigoplus_{i=0}^\infty \sigma|_{\mathcal{X}_i}$. Therefore, we have 
    \begin{align}
        \Pi \Delta \Pi \leq \rndBrk{2K+1} \delta_2 \Pi \sigma \Pi
    \end{align}
    and also
    \begin{align*}
        \Pi \rho \Pi &\leq \Pi \rndBrk{\sigma + \Delta} \Pi \\
        & \leq \rndBrk{1 + \rndBrk{2K+1} \delta_2} \Pi \sigma \Pi.\numberthis
        \label{eq:proj_sandwich_bd}
    \end{align*}
    Using Eq. \ref{eq:proj_sigma_bdd_by_sigma} we can remove the sandwiching projectors from $\sigma$ to get the bound
    \begin{align}
        \Pi \rho \Pi &\leq \rndBrk{1 + \rndBrk{2K+1} \delta_2} (1+\delta_1) \sigma.
    \end{align}
    Now, we will show that $\tr(\Pi \sigma)$ is large. Recall that the codimension of $\mathcal{Z}_k$ in $\mathcal{X}_k$ is at most
    \begin{align}
        \frac{\mu_k}{\delta_2}(1+\delta_1)^{k+1} + \sum_{j=0}^{k-K-1} d_j.
        \label{eq:codim_bd}
    \end{align}
    Hence,
    \begin{align*}
        \tr\rndBrk{(\Id - \Pi)\sigma} &= \sum_{k=0}^\infty \tr\rndBrk{(P_k - \Pi_k) \sigma|_{\mathcal{X}_k}} \\
        &\leq \sum_{k=0}^\infty \frac{1}{(1+\delta_1)^{k}} \tr(P_k - \Pi_k) \\
        &\leq \sum_{k=0}^\infty \frac{1}{(1+\delta_1)^{k}} \rndBrk{\frac{\mu_k}{\delta_2} (1+\delta_1)^{k+1} + \sum_{j=0}^{k-K-1} d_j} \\
        &= \frac{1}{\delta_2 }(1+ \delta_1) \sum_{k=0}^\infty \mu_k + \sum_{j=0}^{\infty} d_j \sum_{k=j+K+1}^{\infty} (1+\delta_1)^{-k} \\
        &= \frac{2 \epsilon (1+ \delta_1)}{\delta_2} + \frac{1}{\delta_1}(1+ \delta_1)^{-(K-1)} \sum_{j=0}^{\infty} d_j \frac{1}{(1+ \delta_1)^{j+1}} \\
        &= (1+ \delta_1) \rndBrk{\frac{2\epsilon}{\delta_2}  + \frac{1}{\delta_1}(1+ \delta_1)^{-K}} \numberthis \label{eq:proj_overlap_bd}
    \end{align*}
    where in the second line we have used $\sigma|_{\mathcal{X}_k} \leq (1+ \delta_1)^{-k} \Id_{\mathcal{X}_k}$, in the third line we have used the bound on the codimension of $\mathcal{Z}_k$ in Eq. \ref{eq:codim_bd}, in the fifth line we have used the bound in Eq. \ref{eq:sum_of_mu} and in the last line we have used the fact that $\sum_{j=0}^{\infty} d_j (1+ \delta_1)^{-(j+1)} \leq \tr{\sigma}\leq 1$. \\

    Now all that is left to do is to select the parameters $\delta_1, \delta_2$ and $K$. We choose
    \begin{align}
        \delta_1 &:= \epsilon^{1/3} \\
        \delta_2 &:= \epsilon^{2/3} \\
        K &:= \left\lfloor \frac{2}{\epsilon^{1/3}}\log \frac{1}{\epsilon^{2/3}} \right\rfloor.
    \end{align}
    For these parameters, we get 
    \begin{align}
        \Pi \rho \Pi &\leq \rndBrk{1+\frac{8}{3}\epsilon^{1/3}\log\frac{1}{\epsilon} + \epsilon^{2/3}}(1+ \epsilon^{1/3}) \sigma.
    \end{align}
    Note that $(1+\delta_1)^{-1} \leq (1-\delta_1/2)\leq e^{-\delta_1/2}$ since $\delta_1 \in (0,1)$. This implies 
    \begin{align*}
        \tr\rndBrk{(\Id - \Pi)\sigma} &\leq (1+ \epsilon^{1/3}) \rndBrk{2\epsilon^{1/3}  + \frac{1}{\epsilon^{1/3}}e^{-K \frac{\epsilon^{1/3}}{2}}} \\
        &\leq (1+ \epsilon^{1/3}) \rndBrk{2\epsilon^{1/3}  + \frac{1}{\epsilon^{1/3}}e^{-\frac{\epsilon^{1/3}}{2} \rndBrk{\frac{2}{\epsilon^{1/3}}\log \frac{1}{\epsilon^{2/3}}-1}}} \\
        &\leq (1+ \epsilon^{1/3}) \rndBrk{2\epsilon^{1/3}  + \frac{1}{\epsilon^{1/3}}2\epsilon^{2/3}} \\
        &\leq 4(1+ \epsilon^{1/3}) \epsilon^{1/3}.
    \end{align*}
    Finally, using the fact that $\frac{1}{2}\norm{\rho - \sigma}_1 \leq \epsilon$, we get 
    \begin{align*}
        \tr\rndBrk{(\Id - \Pi) \rho} &\leq \tr\rndBrk{(\Id - \Pi) \sigma} + 2\epsilon \\
        &\leq 4(1+ \epsilon^{1/3}) \epsilon^{1/3} + 2\epsilon. 
        \numberthis
    \end{align*}
\end{proof}

\subsection{Quantum proof}

In this section, we will generalise the classical proof presented earlier to the quantum case. We will prove a statement of the following form for all normalised quantum states $\rho_{A_1^n B}$:
\begin{align}
    H^{g'_1(\epsilon)}_{\min} (A_1^n | B)_{\rho} \geq \sum_{k=1}^n \bar{H}^{\epsilon}_{\min} (A_k | A_1^{k-1} B)_{\rho} - n g'_2(\epsilon) - k(\epsilon)
\end{align}
where $g'_1$ and $g'_2$ are small functions of $\epsilon$ and $k$ is a general function of $\epsilon$. This will be sufficient to prove the universal chain rule. \\

For every $k$, let $\lambda_k := \bar{H}^{\epsilon}_{\min} (A_k | A_1^{k-1} B)_\rho$ and $\tilde{\rho}^{(k)}_{A_1^k B}$ be the subnormalised state such that 
\begin{align}
    & P(\rho_{A_1^k B} , \tilde{\rho}^{(k)}_{A_1^k B}) \leq \epsilon 
    \label{eq:tilde_rho_dist}\\
    & \tilde{\rho}^{(k)}_{A_1^k B} \leq e^{-\lambda_k} \Id_{A_k} \otimes \rho_{A_1^{k-1} B}.
    \label{eq:tilde_rho_min_ent}
\end{align}
Using the Fuchs-van de Graaf inequality, for each $k$, we also have 
\begin{align}
    \frac{1}{2}\norm{\rho_{A_1^k B} - \tilde{\rho}^{(k)}_{A_1^k B}}_1 \leq \epsilon.
\end{align}
We will now define the projectors $P^{(k, l)}_{A_1^k B}$ for every $k \in [n]$ and the label $l \in \{g,b\}$ ($\{\text{good, bad}\}$). We define the \emph{good} projector $P^{(k, g)}_{A_1^k B}$ to be the projector given by Lemma \ref{lemm:good_proj} when applied to the states $\rho_{A_1^k B}$ and $\tilde{\rho}^{(k)}_{A_1^k B}$, so that it satisfies
\begin{align}
    P^{(k, g)}_{A_1^k B} \rho_{A_1^k B} P^{(k, g)}_{A_1^k B} \leq (1+g_1(\epsilon)) \tilde{\rho}^{(k)}_{A_1^k B} 
    \label{eq:good_proj_prop}
\end{align}
and 
\begin{align}
    \tr(P^{(k, g)}_{A_1^k B} \rho_{A_1^k B}) \geq 1 - g_2(\epsilon)
\end{align}
for $g_1$ and $g_2$ as defined in Lemma \ref{lemm:good_proj}. Further, define its orthogonal complement as the \emph{bad} projector,
\begin{align}
    P^{(k,b)}_{A_1^k B} := \Id_{A_1^k B} - P^{(k,g)}_{A_1^k B}.
\end{align}
The label $l$ in $P^{(k,l)}_{A_1^k B}$ will allow us to succinctly refer to these two projectors together. Define the subnormalised state 
\begin{align}
    \eta_{A_1^n B} := P^{(1,g)}_{A_1 B} P^{(2,g)}_{A_1^2 B} \cdots P^{(n,g)}_{A_1^n B}\ \rho_{A_1^n B}\ P^{(n,g)}_{A_1^n B} \cdots P^{(2,g)}_{A_1^2 B} P^{(1,g)}_{A_1 B}.
\end{align}
For this state, observe that 
\begin{align*}
    P^{(1,g)}_{A_1 B} & P^{(2,g)}_{A_1^2 B} \cdots P^{(n,g)}_{A_1^n B}\ \rho_{A_1^n B}\ P^{(n,g)}_{A_1^n B} \cdots P^{(2,g)}_{A_1^2 B} P^{(1,g)}_{A_1 B} \\
    &\leq (1+g_1(\epsilon))P^{(1,g)}_{A_1 B} P^{(2,g)}_{A_1^2 B} \cdots P^{(n-1,g)}_{A_1^{n-1} B}\ \tilde{\rho}^{(n)}_{A_1^n B}\ P^{(n-1,g)}_{A_1^{n-1} B} \cdots P^{(2,g)}_{A_1^2 B} P^{(1,g)}_{A_1 B} \\
    &\leq e^{-\lambda_n} (1+g_1(\epsilon))P^{(1,g)}_{A_1 B} P^{(2,g)}_{A_1^2 B} \cdots P^{(n-1,g)}_{A_1^{n-1} B}\ \Id_{A_n} \otimes \rho_{A_1^{n-1} B}\ P^{(n-1,g)}_{A_1^{n-1} B} \cdots P^{(2,g)}_{A_1^2 B} P^{(1,g)}_{A_1 B} \\
    &= e^{-\lambda_n} (1+g_1(\epsilon)) \Id_{A_n} \otimes P^{(1,g)}_{A_1 B} P^{(2,g)}_{A_1^2 B} \cdots P^{(n-1,g)}_{A_1^{n-1} B}\ \rho_{A_1^{n-1} B}\ P^{(n-1,g)}_{A_1^{n-1} B} \cdots P^{(2,g)}_{A_1^2 B} P^{(1,g)}_{A_1 B} \\
    &\leq \cdots \\
    &\leq e^{-\sum_{k=1}^n \lambda_k} (1+g_1(\epsilon))^n \Id_{A_1^n} \otimes \rho_B
\end{align*}
which implies that $H_{\min}(A_1^n| B)_{\eta} \geq \sum_{k=1}^n \bar{H}^{\epsilon}_{\min} (A_k | A_1^{k-1} B)_{\rho} - n \log(1+ g_1(\epsilon))$. Moreover, the distance between $\eta$ and $\rho$ is 
\begin{align*}
    P(\rho_{A_1^n B},\ \eta_{A_1^n B}) &= P(\rho_{A_1^n B},\ P^{(1,g)}_{A_1 B} P^{(2,g)}_{A_1^2 B} \cdots P^{(n,g)}_{A_1^n B}\ \rho_{A_1^n B}\ P^{(n)}_{A_1^n B} \cdots P^{(2,g)}_{A_1^2 B} P^{(1,g)}_{A_1 B}) \\
    &\leq P(\rho_{A_1^n B},\ P^{(1,g)}_{A_1 B}\ \rho_{A_1^n B}\ P^{(1,g)}_{A_1 B}) \\
    &\qquad \qquad + P(P^{(1,g)}_{A_1 B}\ \rho_{A_1^n B}\ P^{(1,g)}_{A_1 B},\ P^{(1,g)}_{A_1 B} P^{(2,g)}_{A_1^2 B} \cdots P^{(n,g)}_{A_1^n B}\ \rho_{A_1^n B}\ P^{(n,g)}_{A_1^n B} \cdots P^{(2,g)}_{A_1^2 B} P^{(1,g)}_{A_1 B}) \\
    &\leq P(\rho_{A_1^n B}, P^{(1,g)}_{A_1 B}\ \rho_{A_1^n B}\ P^{(1,g)}_{A_1 B}) + P(\rho_{A_1^n B},\ P^{(2,g)}_{A_1^2 B} \cdots P^{(n,g)}_{A_1^n B}\ \rho_{A_1^n B}\ P^{(n,g)}_{A_1^n B} \cdots P^{(2,g)}_{A_1^2 B}) \\
    &\leq \cdots \\
    &\leq \sum_{k=1}^n P(\rho_{A_1^n B},\ P^{(k,g)}_{A_1^k B}\ \rho_{A_1^n B}\ P^{(k,g)}_{A_1^k B}) \\
    &\leq n \sqrt{2 g_2(\epsilon)}
\end{align*}
where we have used the gentle measurement lemma \cite[Proposition 3.14]{Watrous18} for the last line. The distance between $\eta$ and $\rho$ grows like $n$ times a small function and the min-entropy of $\eta$ satisfies $H_{\min}(A_1^n| B)_{\eta} \geq \sum_{k=1}^n \bar{H}^{\epsilon}_{\min} (A_k | A_1^{k-1} B)_{\rho} - n \log(1+g_1(\epsilon))$. If we were able to show that the relative entropy distance between these two states also grew like $n$ times a small function, similar to the purified distance then we could use the substate theorem and the entropic triangle inequality to prove a bound of the required form. However, we can't directly prove that the relative entropy between these two states is small because in general the projectors sandwiching the state $\rho$ in $\eta$ could lead to a situation, where $\rho \not\ll \eta$, which would cause their relative entropy diverge. To remedy this, one could imagine adding a small amount of the complement projector to these projectors, so that we sandwich with $P^{(k,g)}_{A_1^k B} + \delta P^{(k,b)}_{A_1^k B}$ instead of simply $P^{(k)}_{A_1^k B}$. Under no further assumptions on $\rho$, the question of whether the divergence of $\rho$ and $\eta$ is finite or not in this case reduces to the question: given a state $\sigma$, a projector $\Pi$ and its complement $\Pi_{\perp} := \Id - \Pi$, is
\begin{align}
    \sigma \ll (\Pi + \delta \Pi_{\perp}) \sigma (\Pi + \delta \Pi_{\perp}) ?
\end{align}
The answer to this is easily seen to be negative when one consider the pure state $\sigma = \ket{u}\bra{u}$. In this case, the above is equivalent to asking if $\ket{u}\bra{u} \ll \ket{v}\bra{v}$ for some vector $\ket{v} \neq \ket{u}$, which is not true. \\

The next simple remedy that comes to mind is to use a ``pinching type'' disturbance to ensure that the divergence is finite. For $\delta \in (0,1)$, a projector $\Pi$ and its complement $\Pi_{\perp}$, define the CP map $\mathcal{P}_\delta[\Pi]$ as
\begin{align}
    \mathcal{P}_\delta[\Pi](\sigma) := \Pi \sigma \Pi + \delta \Pi_{\perp} \sigma \Pi_{\perp}.
    \label{eq:asymm_pinched_map}
\end{align}
Then, by the pinching inequality, for every $\Pi$, we have 
\begin{align*}
    \sigma &\leq \frac{2}{\delta} \rndBrk{\delta \Pi \sigma \Pi + \delta \Pi_{\perp} \sigma \Pi_{\perp}} \\
    &\leq \frac{2}{\delta} \mathcal{P}_\delta[\Pi](\sigma)
\end{align*}
which implies $D_{\max}(\sigma || \mathcal{P}_\delta[\Pi](\sigma)) \leq O(\log {1}/{\delta})$-- a bound that should be sufficient for our purposes. Therefore, one can try proving the chain rule using the subnormalised state 
\begin{align*}
    \mathcal{P}_\delta[P^{(1,g)}_{A_1 B}]& \circ \mathcal{P}_\delta[P^{(2,g)}_{A_1^2 B}] \circ \cdots \circ \mathcal{P}_\delta[P^{(n,g)}_{A_1^n B}] (\rho_{A_1^n B})\\
    &= \sum_{l_1^n \in \{g,b\}^n} \delta^{\omega_b (l_1^n)} P^{(1,l_1)}_{A_1 B} P^{(2,l_2)}_{A_1^2 B} \cdots P^{(n,l_n)}_{A_1^n B}\ \rho_{A_1^n B}\ P^{(n,l_n)}_{A_1^n B} \cdots P^{(2,l_2)}_{A_1^2 B} P^{(1,l_1)}_{A_1 B} 
    \label{eq:pinching_based_st} 
    \numberthis
\end{align*}
where $\omega_b(l_1^n):= |\{i: l_i = b \}|$ is the weight of $b$ labels in the string $l_1^n$. In our proof below, we do not attempt to directly identify the correct modified state ourselves. Similar to the proofs before, we instead leave this question for the generalised GT inequality. We will use an exponential generalisation of the distribution $q$ in Eq. \ref{eq:defn_exp_aux_dist}. Variants of both of the above remedies make an appearance during the following proof.

\begin{lemma}
    For a full rank state $\rho_{A_1^n B}$ and $\epsilon \in (0,1)$, we have the chain rule
    \begin{align*}
        H^{\mu}_{\min}(A_1^n|B)_{\rho} &\geq \sum_{k=1}^n \bar{H}^\epsilon_{\min}(A_k | A_1^{k-1}B)_{\rho} - n\log\rndBrk{1 + 4|A|^2 \frac{\epsilon}{1-\epsilon^2}} - n\log(1+g_1(\epsilon)) \\
        &  - n \log(1+\epsilon) - n (\mu+\mu^3) - \frac{1}{\mu^2} - \log\frac{1}{1-\mu^2}
        \numberthis
    \end{align*}
    where 
    \begin{align}
        & \mu := \rndBrk{8(1+\epsilon^{1/3})\log\frac{1}{\epsilon}}^{1/3}\epsilon^{1/9} = O\rndBrk{\epsilon^{1/9}\rndBrk{\log\frac{1}{\epsilon}}^{1/3}} \\
        & g_1(\epsilon):= \frac{8}{3}(1+\epsilon^{1/3})\epsilon^{1/3}\log\frac{1}{\epsilon} + (1+\epsilon^{1/3} + \epsilon^{2/3}) \epsilon^{1/3} = O\rndBrk{\epsilon^{1/3} \log \frac{1}{\epsilon}}
    \end{align}
    as long as $\mu \in (0,1)$.
\end{lemma}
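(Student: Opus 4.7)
The plan is to follow the general framework of Theorem \ref{th:Hmin_eps_chain_rule}: construct an auxiliary state $\sigma$ with small $D_m(\rho || \sigma)$ and controllable conditional min-entropy, then finish via the substate theorem and the entropic triangle inequality. The departure from that proof is that, rather than smoothing the witnesses $\tilde{\rho}^{(k)}$ through convex combinations, I will build $\sigma$ directly from the good projectors $P^{(k,g)}$ supplied by Lemma \ref{lemm:good_proj}, following the exponential ansatz $q = \delta^L p$ of the classical sketch in Sec. \ref{sec:alt_pf_cl}. Specifically, for each $k \in [n]$ I fix a witness $\tilde{\rho}^{(k)}_{A_1^k B}$ certifying $\bar{H}_{\min}^{\epsilon}(A_k | A_1^{k-1} B)_\rho = \lambda_k$ (so $\tilde{\rho}^{(k)} \leq e^{-\lambda_k} \Id_{A_k} \otimes \rho_{A_1^{k-1} B}$ and $P(\rho_{A_1^k B}, \tilde{\rho}^{(k)}) \leq \epsilon$), and apply Lemma \ref{lemm:good_proj} to the pair $(\rho_{A_1^k B}, \tilde{\rho}^{(k)})$ to obtain a projector $P^{(k,g)}$ obeying $P^{(k,g)} \rho_{A_1^k B} P^{(k,g)} \leq (1+g_1(\epsilon)) e^{-\lambda_k} \Id_{A_k} \otimes \rho_{A_1^{k-1} B}$ and $\tr(P^{(k,b)} \rho_{A_1^k B}) \leq g_2(\epsilon)$, with $P^{(k,b)} = \Id - P^{(k,g)}$.

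For a small $\delta \in (0,1)$ to be tuned later, define the positive operator $Q := \exp\bigl(\log \rho + (\log \delta) \sum_{k=1}^n P^{(k,b)}_{A_1^k B}\bigr)$, the direct operator analog of $q = \delta^L p$ with $L \mapsto \sum_k P^{(k,b)}$. A one-line trace computation yields $D(\rho || Q) = \log(1/\delta) \sum_k \tr(P^{(k,b)} \rho_{A_1^k B}) \leq n g_2(\epsilon) \log(1/\delta)$. Feeding this into the variational expression (Eq. \ref{eq:rel_ent_var_rep}) and bounding the trace exponential via the Generalised Golden-Thompson inequality (Theorem \ref{thm:gen_golden_thompson}), with $\log \rho$ placed as the middle factor and $\log \omega$ as the outermost factor, then produces (mirroring Lemma \ref{lemm:cond_st_rel_ent_bd}) a positive operator $\sigma = \int_{-\infty}^{\infty} dt\, \beta_0(t) M(t)^\dagger \rho M(t)$ with $M(t) = \prod_{k=1}^n (P^{(k,g)} + \delta^{(1+it)/2} P^{(k,b)})$, satisfying $D_m(\rho || \sigma) \leq n g_2(\epsilon) \log(1/\delta)$.

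The hardest step will be the min-entropy bound $H_{\min}(A_1^n | B)_\sigma \geq \sum_k \lambda_k - n \cdot (\text{small})$, which I expect to follow from an operator inequality of the shape $\sigma \leq \prod_k \bigl[(1+g_1(\epsilon))(1+\epsilon)(1 + 4|A|^2 \epsilon/(1-\epsilon^2)) e^{-\lambda_k}\bigr] \Id_{A_1^n} \otimes \rho_B$. Classically the corresponding statement is pointwise: each factor $\delta^{\chi_{B_k}} p(a_k|a_1^{k-1}b)$ is bounded by $(1+\sqrt{2\epsilon}) e^{-\lambda_k}$ in both cases of $\chi_{B_k}$, as long as $\delta \leq 1/|A|$. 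Quantumly, the analogous factors $(P^{(k,g)} + \delta^{(1-it)/2} P^{(k,b)})(\cdot)(P^{(k,g)} + \delta^{(1+it)/2} P^{(k,b)})$ carry non-Hermitian cross terms $\delta^{(1 \pm it)/2} P^{(k,g)}(\cdot) P^{(k,b)}$, and the projectors at different levels fail to commute with $\rho$ or with one another. My plan is to peel off the outer pair $E_k^\pm(t) := P^{(k,g)} + \delta^{(1\pm it)/2} P^{(k,b)}$ inductively, using Lemma \ref{lemm:asymm_pinching} with a $\delta$-tuned parameter to absorb cross terms into the good and bad components, Lemma \ref{lemm:proj_switching} to re-express $P^{(k,b)}(\cdot) P^{(k,b)}$ in terms of $(\cdot)$ plus $P^{(k,g)}(\cdot) P^{(k,g)}$, and then the good projector property to replace the latter by $(1+g_1(\epsilon)) e^{-\lambda_k} \Id_{A_k} \otimes (\cdot)$; the $\beta_0$-integration in $t$ is expected to dampen the off-diagonal phase contributions. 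With $\delta = \epsilon$ and $\mu$ chosen so that $\mu^3$ is of order $g_2(\epsilon) \log(1/\epsilon)$ (matching the stated $\mu = (8(1+\epsilon^{1/3})\log(1/\epsilon))^{1/3} \epsilon^{1/9}$), the substate theorem (Theorem \ref{thm:substate_th}) converts the $D_m$ bound into $D_{\max}^\mu(\rho || \sigma) \leq n \mu + 1/\mu^2 + \log \frac{1}{1-\mu^2}$, and the entropic triangle inequality (Eq. \ref{eq:ent_tri_ineq_simp}), combined with the min-entropy bound on $\sigma$, yields the claimed chain rule, with the three $n \log(1 + \cdot)$ correction terms in the lemma statement tracking the explicit constants arising in the operator inequality for $\sigma$.
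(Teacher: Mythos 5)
Your proposal reconstructs the paper's proof essentially verbatim: same operator $Q=\exp(\log\rho+\log\delta\sum_k P^{(k,b)})$, same use of the variational formula plus generalised Golden--Thompson to obtain the auxiliary state, same substate theorem and triangle inequality at the end, and the same trio of lemmas (Lemma~\ref{lemm:good_proj}, Lemma~\ref{lemm:asymm_pinching}, Lemma~\ref{lemm:proj_switching}) for the min-entropy bound on the auxiliary state. Two small inaccuracies worth noting: the $\beta_0$-integration plays no damping role --- the asymmetric-pinching step absorbs the $\delta^{(1\pm it)/2}$ cross-terms pointwise in $t$, the paper then expands the resulting pinching state as a sum over labelings $l_1^n\in\{g,b\}^n$ with weights $\delta^{\omega_b(l_1^n)/2}$; and the auxiliary state produced by GT is subnormalised, so the substate theorem must be applied to its normalisation $\eta/Z$ and the factor $Z\in(0,1]$ tracked, which is precisely where the extra $n\mu^3$ in the paper's $n(\mu+\mu^3)$ comes from (you also need $\delta=\epsilon^2$ rather than $\delta=\epsilon$ to match the stated $n\log(1+\epsilon)$ and $\mu$).
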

\begin{proof}
    We retain the definitions of  $\tilde{\rho}_{A_1^k B},\ \lambda_k$, and $P^{(k, l)}_{A_1^k B}$ from the discussion above. Following the classical proof in Sec. \ref{sec:alt_pf_cl}, define $L_{A_1^n B}:= \sum_{k=1}^n P^{(k,b)}_{A_1^k B}$ as the sum of the ``bad'' projectors. Then, we have that 
    \begin{align*}
        \tr(L_{A_1^n B}\ \rho_{A_1^n B}) &= \sum_{k=1}^n \tr(P^{(k,b)}_{A_1^k B}\ \rho_{A_1^k B}) \\
        &\leq n g_2(\epsilon).
        \numberthis
        \label{eq:alt_pf_exp_bd}
    \end{align*}
    Let $\delta \in (0,1)$ be a parameter to be chosen later. We have that 
    \begin{align*}
        n& g_2(\epsilon)\log \frac{1}{\delta}\\
        &\geq D\rndBrk{\rho_{A_1^n B} || \exp \rndBrk{\log \rho_{A_1^n B} + \log (\delta) L_{A_1^n B}}} \\
        &= \sup_{\omega_{A_1^n B}>0} \curlyBrk{\tr(\rho_{A_1^n B}\ \log \omega_{A_1^n B}) + 1 - \tr\exp\rndBrk{\log \omega_{A_1^n B} + \log (\delta) \sum_{k=1}^n P^{(k,b)}_{A_1^k B} + \log \rho_{A_1^n B} }} \numberthis
        \label{eq:var_rel_ent_bd}
    \end{align*}
    where we have used Eq. \ref{eq:alt_pf_exp_bd} in the first line and the variational expression in Eq. \ref{eq:rel_ent_var_rep} in the second line. For the trace of exponential term in the last expression above, we can use the generalised Golden-Thompson inequality (Theorem \ref{thm:gen_golden_thompson}) as
    \begin{align*}
        &\tr\exp\rndBrk{\log \omega_{A_1^n B} + \log (\delta) \sum_{k=1}^n P^{(k,b)}_{A_1^k B} + \log \rho_{A_1^n B}} \\
        \ifcomments & \textcolor{red}{(e^{Id_A \tensor X_B} = \Id_A \tensor e^{X_B})}\\ \fi
        &\leq \int_{-\infty}^{\infty} dt \beta_0 (t) \tr\rndBrk{\omega_{A_1^n B}\ e^{\frac{1-it}{2}\log(\delta)P^{(1,b)}_{A_1 B}}\cdots\ e^{\frac{1-it}{2}\log(\delta)P^{(n,b)}_{A_1^{n} B}}\ \rho_{A_1^n B}\ e^{\frac{1+it}{2}\log(\delta)P^{(n,b)}_{A_1^{n} B}} \cdots\ e^{\frac{1+it}{2}\log(\delta)P^{(1,b)}_{A_1 B}}} \\
        &= \int_{-\infty}^{\infty} dt \beta_0 (t) \tr\big(\omega_{A_1^n B}\ (\delta^{\frac{1-it}{2}} P^{(1,b)}_{A_1 B} + P^{(1,g)}_{A_1 B})\cdots\ (\delta^{\frac{1-it}{2}} P^{(n,b)}_{A_1^n B} + P^{(n,g)}_{A_1^n B})\ \rho_{A_1^n B}\\
        &\qquad \qquad \qquad \qquad(\delta^{\frac{1+it}{2}} P^{(n,b)}_{A_1^n B} + P^{(n,g)}_{A_1^n B}) \cdots\ (\delta^{\frac{1+it}{2}} P^{(1,b)}_{A_1 B} + P^{(1,g)}_{A_1 B})\big), \numberthis
        \label{eq:gold_thomp_app}
    \end{align*}
    where we have used the fact that for a projector $P$ and $\beta \in \mathbb{C}$, $\exp(\beta P) = e^{\beta} P + (I-P)$ in the last line. Define the subnormalised state 
    \begin{align}
        \eta_{A_1^n B} &:= \int_{-\infty}^{\infty} dt \beta_0 (t)(\delta^{\frac{1-it}{2}} P^{(1,b)}_{A_1 B} + P^{(1,g)}_{A_1 B})\cdots\ (\delta^{\frac{1-it}{2}} P^{(n,b)}_{A_1^n B} + P^{(n,g)}_{A_1^n B})\ \rho_{A_1^n B} \nonumber \\
        &\qquad \qquad \qquad \qquad (\delta^{\frac{1+it}{2}} P^{(n,b)}_{A_1^n B} + P^{(n,g)}_{A_1^n B}) \cdots\ (\delta^{\frac{1+it}{2}} P^{(1,b)}_{A_1 B} + P^{(1,g)}_{A_1 B})
    \end{align}
    Observe that this is just a clever (and correct) way of implementing the first remedy we discussed in the motivation. From Eq. \ref{eq:gold_thomp_app}, we have
    \begin{align}
        &\tr\exp\rndBrk{\log \rho_{A_1^n B} + \log (\delta) \sum_{k=1}^n P^{(k,b)}_{A_1^k B} + \log \omega_{A_1^n B}} \leq \tr(\omega_{A_1^n B}\eta_{A_1^n B})
    \end{align}
    Plugging this in Eq. \ref{eq:var_rel_ent_bd}, we get 
    \begin{align*}
        n g_2(\epsilon)\log \frac{1}{\delta} &\geq \sup_{\omega_{A_1^n B}>0} \curlyBrk{\tr(\rho_{A_1^n B}\ \log \omega_{A_1^n B}) + 1 - \tr(\omega_{A_1^n B}\eta_{A_1^n B})} \\
        &= D_{m}(\rho_{A_1^n B} || \eta_{A_1^n B}). \numberthis
        \label{eq:Dmeas_bd_rho_eta}
    \end{align*}
    We will now show that the state $\eta$ has a small smooth max-relative entropy distance from $\rho$. We use the substate theorem for this. However, we need to be a bit careful since $\eta$ is not normalised.

    \begin{claim}
        For $\mu(\epsilon, \delta) := \rndBrk{g_2(\epsilon)\log\frac{1}{\delta}}^{1/3}$ (we will use the shorthand $\mu$ going forth)
        \begin{align}
            D^{\mu}_{\max}\rndBrk{\rho_{A_1^n B} || \eta_{A_1^n B}} &\leq n (\mu+\mu^3) + \frac{1}{\mu^2} + \log\frac{1}{1-\mu^2}.
        \end{align}
    \end{claim}
    \begin{proof}
        Let $Z := \tr(\eta_{A_1^n B})$. Then, using the data processing inequality on the above $D_{m}(\rho_{A_1^n B} || \eta_{A_1^n B})$ bound, we see that 
        \begin{align}
            \log \frac{1}{Z} \leq n g_2(\epsilon)\log \frac{1}{\delta}.
            \label{eq:Z_lower_bd}
        \end{align}
        We can also upper bound $Z$ as 
        \begin{align*}
            \tr(\eta_{A_1^n B}) &= \int_{-\infty}^{\infty} dt \beta_0 (t) \tr\bigg((\delta^{\frac{1-it}{2}} P^{(1,b)}_{A_1 B} + P^{(1,g)}_{A_1 B})\cdots\ (\delta^{\frac{1-it}{2}} P^{(n,b)}_{A_1^n B} + P^{(n,g)}_{A_1^n B})\ \rho_{A_1^n B} \nonumber \\
            &\qquad \qquad \qquad \qquad (\delta^{\frac{1+it}{2}} P^{(n,b)}_{A_1^n B} + P^{(n,g)}_{A_1^n B}) \cdots\ (\delta^{\frac{1+it}{2}} P^{(1,b)}_{A_1 B} + P^{(1,g)}_{A_1 B})\bigg) \\
            &\leq \int_{-\infty}^{\infty} dt \beta_0 (t) \norm{\delta P^{(1,b)}_{A_1 B} + P^{(1,g)}_{A_1 B}}_{\infty} \tr\bigg((\delta^{\frac{1-it}{2}} P^{(2,b)}_{A_1^2 B} + P^{(2,g)}_{A_1^2 B})\cdots\ (\delta^{\frac{1-it}{2}} P^{(n,b)}_{A_1^n B} + P^{(n,g)}_{A_1^n B})\ \rho_{A_1^n B} \nonumber \\
            &\qquad \qquad \qquad \qquad (\delta^{\frac{1+it}{2}} P^{(n,b)}_{A_1^n B} + P^{(n,g)}_{A_1^n B}) \cdots\ (\delta^{\frac{1+it}{2}} P^{(2,b)}_{A_1^2 B} + P^{(2,g)}_{A_1^2 B})\bigg) \\
            &\leq \int_{-\infty}^{\infty} dt \beta_0 (t) \tr\bigg((\delta^{\frac{1-it}{2}} P^{(2,b)}_{A_1^2 B} + P^{(2,g)}_{A_1^2 B})\cdots\ (\delta^{\frac{1-it}{2}} P^{(n,b)}_{A_1^n B} + P^{(n,g)}_{A_1^n B})\ \rho_{A_1^n B} \nonumber \\
            &\qquad \qquad \qquad \qquad (\delta^{\frac{1+it}{2}} P^{(n,b)}_{A_1^n B} + P^{(n,g)}_{A_1^n B}) \cdots\ (\delta^{\frac{1+it}{2}} P^{(2,b)}_{A_1^2 B} + P^{(2,g)}_{A_1^2 B})\bigg) \\
            &\leq \cdots \\
            &\leq \tr \rho_{A_1^n B} \\
            &=1. \numberthis
        \end{align*}
        Using the substate theorem, we have
        \begin{align*}
            D^{\mu}_{\max}\rndBrk{\rho_{A_1^n B} || \frac{\eta_{A_1^n B}}{Z}} &\leq \frac{D_m \rndBrk{\rho_{A_1^n B} || \frac{\eta_{A_1^n B}}{Z}} + 1}{\mu^2} + \log\frac{1}{1-\mu^2}\\
            &= \frac{D_m \rndBrk{\rho_{A_1^n B} || \eta_{A_1^n B}} - \log\frac{1}{Z} + 1}{\mu^2} + \log\frac{1}{1-\mu^2}\\
            &\leq \frac{n\mu^3 + 1}{\mu^2} + \log\frac{1}{1-\mu^2}\\
            &\leq n \mu + \frac{1}{\mu^2} + \log\frac{1}{1-\mu^2}
            \numberthis
        \end{align*}
        where we have used $Z\leq 1$ in the third line. \\

        We can get rid of $Z$ normalisation factor by using Eq. \ref{eq:Z_lower_bd}
        \begin{align*}
            D^{\mu}_{\max}\rndBrk{\rho_{A_1^n B} || \eta_{A_1^n B}} &= D^{\mu}_{\max}\rndBrk{\rho_{A_1^n B} || \frac{\eta_{A_1^n B}}{Z}} + \log\frac{1}{Z} \\
            &\leq n (\mu+\mu^3) + \frac{1}{\mu^2} + \log\frac{1}{1-\mu^2} 
            \numberthis
            \label{eq:smooth_Dmax_rho_eta}
        \end{align*}
    \end{proof}

    Though $\eta$ has a nice form, it is still difficult to use the properties of the projectors $P^{(k,g)}_{A_1^k B}$ (Eq. \ref{eq:good_proj_prop}) with it. We will now show how we can dominate $\eta$ using a state of the form in Eq. \ref{eq:pinching_based_st}. 
    \begin{claim}
        Define the subnormalised state $\sigma_{A_1^n B} := P_{\sqrt{\delta}}[P^{(1,g)}_{A_1 B}] \circ \cdots \circ P_{\sqrt{\delta}}[P^{(n,g)}_{A_1^n B}](\rho_{A_1^n B})$ for $P_{\sqrt{\delta}}[\Pi](X) = \Pi X \Pi + \sqrt{\delta} \Pi_{\perp} X \Pi_{\perp}$ as defined in Eq. \ref{eq:asymm_pinched_map}. For this state, we have 
        \begin{align}
            D^{\mu}_{\max}\rndBrk{\rho_{A_1^n B} || \sigma_{A_1^n B}} &\leq n (\mu+\mu^3) + n\log(1+\sqrt{\delta}) + \frac{1}{\mu^2} + \log\frac{1}{1-\mu^2}.
            \label{eq:Dmax_rho_sigma_alt_pf}
        \end{align}
    \end{claim}
    \begin{proof}
        Let $X$ be an arbitrary positive operator, and $\Pi$ and $\Pi_{\perp} = \Id - \Pi$ be orthogonal projectors. Asymmetric pinching (Lemma \ref{lemm:asymm_pinching}) with the projectors $\Pi$ and $\Pi_{\perp}$, and parameter $\sqrt{\delta}$ shows that for all $t \in \mathbb{R}$
        \begin{align*}
            (\Pi + \delta^{\frac{1-it}{2}} \Pi_{\perp}) X (\Pi + \delta^{\frac{1+it}{2}} \Pi_{\perp}) &\leq (1+\sqrt{\delta})\Pi X \Pi + \rndBrk{1+\frac{1}{\sqrt{\delta}}} \delta \Pi_{\perp} X \Pi_{\perp} \\
            &= (1+\sqrt{\delta})P_{\sqrt{\delta}}[\Pi](X)
            \numberthis
            \label{eq:reduction_to_pinched_st}
        \end{align*} 
        Using Eq. \ref{eq:reduction_to_pinched_st} repeatedly, we have that for every $t\in \mathbb{R}$
        \begin{align*}
            &(\delta^{\frac{1-it}{2}} P^{(1,b)}_{A_1 B} + P^{(1,g)}_{A_1 B})\cdots\ (\delta^{\frac{1-it}{2}} P^{(n,b)}_{A_1^n B} + P^{(n,g)}_{A_1^n B})\ \rho_{A_1^n B}\ (\delta^{\frac{1+it}{2}} P^{(n,b)}_{A_1^n B} + P^{(n,g)}_{A_1^n B}) \cdots\ (\delta^{\frac{1+it}{2}} P^{(1,b)}_{A_1 B} + P^{(1,g)}_{A_1 B}) \\
            &\leq (1+\sqrt{\delta})(\delta^{\frac{1-it}{2}} P^{(1,b)}_{A_1 B} + P^{(1,g)}_{A_1 B})\cdots\ (\delta^{\frac{1-it}{2}} P^{(n-1,b)}_{A_1^{n-1} B} + P^{(n-1,g)}_{A_1^{n-1} B})\ P_{\sqrt{\delta}}[P^{(n,g)}_{A_1^n B}](\rho_{A_1^n B})\  \\
            & \qquad \qquad \qquad(\delta^{\frac{1+it}{2}} P^{(n-1,b)}_{A_1^{n-1} B} + P^{(n-1,g)}_{A_1^{n-1} B}) \cdots\ (\delta^{\frac{1+it}{2}} P^{(1,b)}_{A_1 B} + P^{(1,g)}_{A_1 B}) \\
            &\leq \cdots \\
            &\leq (1+\sqrt{\delta})^n P_{\sqrt{\delta}}[P^{(1,g)}_{A_1 B}] \circ \cdots \circ P_{\sqrt{\delta}}[P^{(n,g)}_{A_1^n B}](\rho_{A_1^n B})
            \numberthis
        \end{align*}
        which implies that 
        \begin{align}
            \eta_{A_1^n B} \leq (1+\sqrt{\delta})^n P_{\sqrt{\delta}}[P^{(1,g)}_{A_1 B}] \circ \cdots \circ P_{\sqrt{\delta}}[P^{(n,g)}_{A_1^n B}](\rho_{A_1^n B}).
            \label{eq:eta_bd_pinched_st}
        \end{align} 
        This bound essentially says that the $D_{\max}$ between these two states is small:
        \begin{align}
            D_{\max}(\eta_{A_1^n B} || \sigma_{A_1^n B}) \leq n\log(1+\sqrt{\delta})
            \label{eq:Dmax_eta_sigma_bd}
        \end{align}
        Combining this with the smooth $D_{\max}$ bound between $\rho$ and $\eta$ (Eq. \ref{eq:smooth_Dmax_rho_eta}) shows the bound in the claim.  
        \end{proof}

    The only thing left to do now is to show that min-entropy of the state $\sigma$ is large, i.e., $\gtrsim \sum_{k=1}^n \lambda_k$ (recall $\lambda_k := \bar{H}_{\min}^{\epsilon}(A_k | A_1^{k-1}B)_{\rho}$). Note that
    \begin{align}
        \sigma_{A_1^n B} &= \sum_{l_1^n \in \{g,b\}^n} \delta^{\frac{1}{2}\omega_b (l_1^n)} P^{(1,l_1)}_{A_1 B} P^{(2,l_2)}_{A_1^2 B} \cdots P^{(n,l_n)}_{A_1^n B}\ \rho_{A_1^n B}\ P^{(n,l_n)}_{A_1^n B} \cdots P^{(2,l_2)}_{A_1^2 B} P^{(1,l_1)}_{A_1 B} 
        \label{eq:sigma_projector_sum_rep}
    \end{align}
    where $\omega_b(l_1^n):= |\{i: l_i = b \}|$ the weight of $b$ labels in the string $l_1^n$. We will now bound the min-entropy of each of the terms in this summation. Roughly speaking, whenever the label $l_k = g$ (a good projector is applied), $\lambda_k$ amount of min-entropy will be accumulated but when the projector is bad $O(\log|A|)$ amount of min-entropy will be lost. The reason, we are still able to accumulate a large amount of min-entropy for the state $\sigma_{A_1^n B}$ is because these terms are also weighted with the factor $\delta^{\frac{1}{2}\omega_b (l_1^n)}$. If the number of bad projectors in a term is large, then this factor ensures that the contribution of this term is small. \\

    \begin{claim}
        For every $k \in [n]$ and $l_k \in \{g, b\}$, we have 
        \begin{align}
            P^{(k,l_k)}_{A_1^k B}\ \rho_{A_1^k B}\ P^{(k,l_k)}_{A_1^k B} \leq \begin{cases}
                (1+g_1(\epsilon))e^{-\lambda_k} \Id_{A_k} \otimes \rho_{A_1^{k-1} B} & \text{ if } l_k =g \\
                4|A|(1+g_1(\epsilon)) \Id_{A_k} \otimes \rho_{A_1^{k-1} B} & \text{ if } l_k =b
            \end{cases}
        \end{align} 
        which can succinctly be written as 
        \begin{align}
            P^{(k,l_k)}_{A_1^k B}\ \rho_{A_1^k B}\ P^{(k,l_k)}_{A_1^k B} \leq (1+g_1(\epsilon))e^{-\lambda_k \delta(l_k, g)} (4|A|)^{\delta(l_k, b)} \Id_{A_k} \otimes \rho_{A_1^{k-1} B}
        \end{align}
        where $\delta(x,y)$ is the Kronecker delta function ($\delta(x,y) = 1$ if $x=y$ else it is $0$).        
    \end{claim}
    \begin{proof}
        Let's first consider the case when $l_k =g$. In this case, we have 
        \begin{align*}
            P^{(k,g)}_{A_1^k B}\ \rho_{A_1^k B}\ P^{(k,g)}_{A_1^k B} &\leq (1+g_1(\epsilon)) \tilde{\rho}^{(k)}_{A_1^k B} \\
            &\leq (1+g_1(\epsilon))e^{-\lambda_k} \Id_{A_k} \otimes \rho_{A_1^{k-1} B}
            \numberthis
            \label{eq:good_proj_sandwich_bd}
        \end{align*}
        where the first line follows from the definition of the good projectors (Eq. \ref{eq:good_proj_prop}) and the second line follows from Eq. \ref{eq:tilde_rho_min_ent}.\\

        When $l_k= b$, we have 
        \begin{align*}
            P^{(k,b)}_{A_1^k B}\ \rho_{A_1^k B}\ P^{(k,b)}_{A_1^k B} &\leq 2 \rho_{A_1^k B} + 2 P^{(k,g)}_{A_1^k B}\ \rho_{A_1^k B}\ P^{(k,g)}_{A_1^k B} \\
            &\leq 2 \rho_{A_1^k B} + 2(1+g_1(\epsilon)) \tilde{\rho}^{(k)}_{A_1^k B} \\
            &\leq 2(1+g_1(\epsilon)) \rndBrk{|A| \Id_{A_k} \otimes \rho_{A_1^{k-1} B} + e^{-\lambda_k} \Id_{A_k} \otimes \rho_{A_1^{k-1} B}} \\
            &\leq 4|A|(1+g_1(\epsilon)) \Id_{A_k} \otimes \rho_{A_1^{k-1} B}
            \numberthis
            \label{eq:bad_proj_sandwich_bd}
        \end{align*}
        where in the first line we have used Lemma \ref{lemm:proj_switching}, in the second line we have used Eq. \ref{eq:good_proj_prop}, in the third line we have used $\rho_{A_1^k B} \leq |A| \Id_{A_k} \otimes \rho_{A_1^{k-1} B}$ and Eq. \ref{eq:tilde_rho_min_ent}, and in the last line we use $\lambda_k \geq -\log |A|$. 
    \end{proof}
    For every $k$, let $c_k(l_k) := e^{-\lambda_k \delta(l_k, g)} (4|A|)^{\delta(l_k, b)}$ so that 
    \begin{align}
        P^{(k,l_k)}_{A_1^k B}\ \rho_{A_1^k B}\ P^{(k,l_k)}_{A_1^k B} \leq (1+g_1(\epsilon)) c_k(l_k) \Id_{A_k} \otimes \rho_{A_1^{k-1} B}.
    \end{align}
    Now, observe that for each term in the summation in Eq. \ref{eq:sigma_projector_sum_rep}
    \begin{align*}
        P&^{(1,l_1)}_{A_1 B} P^{(2,l_2)}_{A_1^2 B} \cdots P^{(n,l_n)}_{A_1^n B}\ \rho_{A_1^n B}\ P^{(n,l_n)}_{A_1^n B} \cdots P^{(2,l_2)}_{A_1^2 B} P^{(1,l_1)}_{A_1 B} \\
        &\leq (1+g_1(\epsilon)) c_n(l_n) P^{(1,l_1)}_{A_1 B} P^{(2,l_2)}_{A_1^2 B} \cdots P^{(n-1,l_{n-1})}_{A_1^{n-1} B}\ \Id_{A_n} \otimes \rho_{A_1^{n-1} B}\ P^{(n-1,l_{n-1})}_{A_1^{n-1} B} \cdots P^{(2,l_2)}_{A_1^2 B} P^{(1,l_1)}_{A_1 B} \\
        &= (1+g_1(\epsilon)) c_n(l_n) \Id_{A_n} \otimes P^{(1,l_1)}_{A_1 B} P^{(2,l_2)}_{A_1^2 B} \cdots P^{(n-1,l_{n-1})}_{A_1^{n-1} B}\ \rho_{A_1^{n-1} B}\ P^{(n-1,l_{n-1})}_{A_1^{n-1} B} \cdots P^{(2,l_2)}_{A_1^2 B} P^{(1,l_1)}_{A_1 B} \\
        &\leq \cdots \\
        &\leq (1+g_1(\epsilon))^n \rndBrk{\prod_{k=1}^n c_k(l_k)} \Id_{A_1^n} \otimes \rho_{B}.
        \numberthis
    \end{align*}
    Plugging this bound into the expression for $\sigma$ in Eq. \ref{eq:sigma_projector_sum_rep}, we get 
    \begin{align*}
        \sigma_{A_1^n B} &= \sum_{l_1^n \in \{g,b\}^n} \delta^{\frac{1}{2}\omega_b (l_1^n)} P^{(1,l_1)}_{A_1 B} P^{(2,l_2)}_{A_1^2 B} \cdots P^{(n,l_n)}_{A_1^n B}\ \rho_{A_1^n B}\ P^{(n,l_n)}_{A_1^n B} \cdots P^{(2,l_2)}_{A_1^2 B} P^{(1,l_1)}_{A_1 B} \\
        &\leq (1+g_1(\epsilon))^n \rndBrk{\sum_{l_1^n \in \{g,b\}^n} \delta^{\frac{1}{2}\omega_b (l_1^n)} \prod_{k=1}^n c_k(l_k)} \Id_{A_1^n} \otimes \rho_{B}. 
        \numberthis
        \label{eq:Dmax_sigma_rhoB_partial_bd}
    \end{align*}
    Let us now bound the expression
    \begin{align*}
        \sum_{l_1^n \in \{g,b\}^n} \delta^{\frac{1}{2}\omega_b (l_1^n)} \prod_{k=1}^n c_k(l_k) &= \sum_{l_1^n \in \{g,b\}^n} \prod_{k=1}^n \sqrt{\delta}^{\delta(l_k, b)} \prod_{k=1}^n e^{-\lambda_k \delta(l_k, g)} (4|A|)^{\delta(l_k, b)} \\
        &= \sum_{l_1^n \in \{g,b\}^n} \prod_{k=1}^n e^{-\lambda_k \delta(l_k, g)} (4|A|\sqrt{\delta})^{\delta(l_k, b)} \\
        &= \prod_{k=1}^n \rndBrk{e^{-\lambda_k} + 4|A|\sqrt{\delta}}\\
        &= \prod_{k=1}^n e^{-\lambda_k} \rndBrk{1 + 4|A|e^{\lambda_k}\sqrt{\delta}} \\
        &\leq \prod_{k=1}^n e^{-\lambda_k} \rndBrk{1 + 4|A|^2 \frac{\sqrt{\delta}}{1-\epsilon^2}}\\
        &= \rndBrk{1 + 4|A|^2 \frac{\sqrt{\delta}}{1-\epsilon^2}}^n e^{-\sum_{k=1}^n \lambda_k}
        \numberthis
        \label{eq:Hmin_sum_prod_bd}
    \end{align*}
    where we have used $\lambda_k \leq \log\frac{|A|}{1- \epsilon^2}$ in the second last line. Combining Eq. \ref{eq:Dmax_sigma_rhoB_partial_bd} and Eq. \ref{eq:Hmin_sum_prod_bd}, we get 
    \begin{align}
        H_{\min}(A_1^n |B)_{\sigma} &\geq - D_{\max}(\sigma_{A_1^n B}||\Id_{A_1^n} \otimes \rho_{B}) \nonumber\\
        &\geq \sum_{k=1}^n \lambda_k - n\log\rndBrk{1 + 4|A|^2 \frac{\sqrt{\delta}}{1-\epsilon^2}} - n\log(1+g_1(\epsilon)) \nonumber\\
        &\geq \sum_{k=1}^n \bar{H}^\epsilon_{\min}(A_k | A_1^{k-1}B)_{\rho} - n\log\rndBrk{1 + 4|A|^2 \frac{\sqrt{\delta}}{1-\epsilon^2}} - n\log(1+g_1(\epsilon))
    \end{align}
    Finally, we can use the entropic triangle inequality in Eq. \ref{eq:ent_tri_ineq_simp} along with Eq. \ref{eq:Dmax_rho_sigma_alt_pf} to get
    \begin{align*}
        H^{\mu}_{\min}(A_1^n|B)_{\rho} &\geq \sum_{k=1}^n \bar{H}^\epsilon_{\min}(A_k | A_1^{k-1}B)_{\rho} - n\log\rndBrk{1 + 4|A|^2 \frac{\sqrt{\delta}}{1-\epsilon^2}} - n\log(1+g_1(\epsilon)) \\
        &  - n \log(1+\sqrt{\delta}) - n (\mu+\mu^3) - \frac{1}{\mu^2} - \log\frac{1}{1-\mu^2}
        \numberthis
    \end{align*}
    where $\mu = \rndBrk{g_2(\epsilon)\log\frac{1}{\delta}}^{1/3}$. We choose the parameter $\delta=\epsilon^2$, so that we have
    \begin{align*}
        H^{\mu}_{\min}(A_1^n|B)_{\rho} &\geq \sum_{k=1}^n \bar{H}^\epsilon_{\min}(A_k | A_1^{k-1}B)_{\rho} - n\log\rndBrk{1 + 4|A|^2 \frac{\epsilon}{1-\epsilon^2}} - n\log(1+g_1(\epsilon)) \\
        &  - n \log(1+\epsilon) - n (\mu+\mu^3) - \frac{1}{\mu^2} - \log\frac{1}{1-\mu^2}
        \numberthis
    \end{align*}
    for $\mu = \rndBrk{8(1+\epsilon^{1/3})\log\frac{1}{\epsilon}}^{1/3}\epsilon^{1/9} = O(\epsilon^{1/9}\rndBrk{\log\frac{1}{\epsilon}}^{1/3})$. 
\end{proof}

We complete the proof of the universal chain rule by transforming all the entropies to $H^{\downarrow, \epsilon}_{\min}$ in the following theorem.

\begin{theorem}
    For a state $\rho_{A_1^n B}$ and $\epsilon \in (0,1)$, we have the chain rule
    \begin{align*}
        H^{\downarrow, 2\mu + \epsilon/4}_{\min}(A_1^n|B)_{\rho} &\geq \sum_{k=1}^n H^{\downarrow, \epsilon/4}_{\min}(A_k | A_1^{k-1}B)_{\rho} - n\log\rndBrk{1 + 4|A|^2 \frac{\epsilon}{1-\epsilon^2}} - n\log(1+g_1(\epsilon)) \\
        &  - n \log(1+\epsilon) - n (\mu+\mu^3) - \frac{1}{\mu^2} - \log\frac{1}{1-\mu^2} - \log\rndBrk{\frac{2}{\mu^2} + \frac{1}{1-\mu}}
        \numberthis
    \end{align*}
    where 
    \begin{align}
        & \mu := \rndBrk{8(1+\epsilon^{1/3})\log\frac{1}{\epsilon}}^{1/3}\epsilon^{1/9} = O\rndBrk{\epsilon^{1/9}\rndBrk{\log\frac{1}{\epsilon}}^{1/3}} \\
        & g_1(\epsilon):= \frac{8}{3}(1+\epsilon^{1/3})\epsilon^{1/3}\log\frac{1}{\epsilon} + (1+\epsilon^{1/3} + \epsilon^{2/3}) \epsilon^{1/3} = O\rndBrk{\epsilon^{1/3} \log \frac{1}{\epsilon}}
    \end{align}
    as long as $\mu \in (0,1)$.
\end{theorem}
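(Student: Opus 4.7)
The plan is to combine the full-rank chain rule of the preceding lemma with two standard conversions between the entropy variants and a density argument handling states that are not full rank. The structure is essentially identical to the wrap-up of Theorem \ref{th:Hmin_eps_chain_rule}, only now the input chain rule is the one just proven via projections, rather than the one proven via auxiliary states.

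First, I would take the bound from the preceding lemma at face value: for any full-rank $\rho_{A_1^n B}$,
\begin{align*}
H^{\mu}_{\min}(A_1^n|B)_{\rho} \geq \sum_{k=1}^n \bar{H}^\epsilon_{\min}(A_k | A_1^{k-1} B)_\rho - n\log\!\rndBrk{1+4|A|^2\tfrac{\epsilon}{1-\epsilon^2}} - n\log(1+g_1(\epsilon)) - n\log(1+\epsilon) - n(\mu+\mu^3) - \tfrac{1}{\mu^2} - \log\tfrac{1}{1-\mu^2}.
\end{align*}
Next, I would translate each side into $H^{\downarrow,\cdot}_{\min}$. On the right-hand side, Lemma \ref{lemm:bar_Hmin_Hmin_dn_reln} applied with $\epsilon \to \epsilon/2$ gives $\bar{H}^{\epsilon}_{\min}(A_k|A_1^{k-1}B)_\rho \geq H^{\downarrow,\epsilon/2}_{\min}(A_k|A_1^{k-1}B)_\rho$, strengthening each term in the sum. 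On the left-hand side, Eq. \ref{eq:Hmin_up_dn_reln} (with smoothing parameter $\mu$) gives $H^{\mu}_{\min}(A_1^n|B)_{\rho} \leq H^{\downarrow,2\mu}_{\min}(A_1^n|B)_{\rho} + \log\!\rndBrk{\tfrac{2}{\mu^2}+\tfrac{1}{1-\mu}}$, which lets me rewrite the inequality with $H^{\downarrow,2\mu}_{\min}$ on the left at the cost of one more additive logarithmic term.

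To remove the full-rank assumption, I would use the same device as in Case 2 of Theorem \ref{th:Hmin_eps_chain_rule}. Setting $\rho' := (1-\nu)\rho + \nu \tau_{A_1^n B}$ with $\nu = \epsilon^2/32$ produces a full-rank state with $P(\rho,\rho') \leq \sqrt{2\nu} = \epsilon/4$. Applying the already-established inequality to $\rho'$ and then using the fact that $B_{\epsilon/4}(\rho)$ sits inside the appropriate enlargements of the balls around $\rho'$ gives
\begin{align*}
H^{\downarrow,2\mu+\epsilon/4}_{\min}(A_1^n|B)_\rho \geq H^{\downarrow,2\mu}_{\min}(A_1^n|B)_{\rho'} \quad\text{and}\quad H^{\downarrow,\epsilon/2}_{\min}(A_k|A_1^{k-1}B)_{\rho'} \geq H^{\downarrow,\epsilon/4}_{\min}(A_k|A_1^{k-1}B)_\rho,
\end{align*}
where the second inequality uses the triangle inequality for the purified distance combined with monotonicity of the smooth min-entropy in the smoothing parameter. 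Chaining these with the full-rank bound yields the theorem's statement with the claimed smoothing parameters $2\mu+\epsilon/4$ and $\epsilon/4$, along with the extra additive term $-\log(2/\mu^2 + 1/(1-\mu))$ coming from the up/down conversion on the left.

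The one subtlety, and the only place where care is needed, is that $H^{\downarrow}_{\min}$ is discontinuous, so it is not permissible to first prove the bound for every $\nu > 0$ and then pass to $\nu \to 0$. Instead I must commit to a fixed, explicit $\nu$ tuned to $\epsilon$, which forces a loss of $\epsilon/4$ in both smoothing parameters; this is precisely the loss that appears in the theorem. No new estimates are required beyond the ones already in the excerpt, so the proof is essentially a bookkeeping combination of the preceding lemma, Lemma \ref{lemm:bar_Hmin_Hmin_dn_reln}, and Eq. \ref{eq:Hmin_up_dn_reln}.
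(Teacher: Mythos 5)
Your proposal is correct and is essentially the same argument the paper gives: Case 1 chains the preceding lemma with Lemma \ref{lemm:bar_Hmin_Hmin_dn_reln} (shifting $\epsilon \to \epsilon/2$) and Eq.~\ref{eq:Hmin_up_dn_reln} to get the full-rank bound with $H^{\downarrow,2\mu}_{\min}$ on the left and $H^{\downarrow,\epsilon/2}_{\min}$ on the right, and Case 2 invokes the same perturbation-by-$\nu$ device as Theorem \ref{th:Hmin_eps_chain_rule} with a fixed $\nu$ (here $\nu = \epsilon^2/32$ so that $\sqrt{2\nu} = \epsilon/4$), using ball-inclusion rather than continuity precisely because $H^{\downarrow}_{\min}$ is discontinuous. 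The paper leaves Case 2 implicit by reference, whereas you spell out the ball-containment inequalities; there is no substantive difference.
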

\begin{proof}
    \textbf{Case 1:} If $\rho_{A_1^n B}$ is full rank, then we can use the lemma above along with Eq. \ref{eq:Hmin_up_dn_reln} and Lemma \ref{lemm:bar_Hmin_Hmin_dn_reln} to show that 
    \begin{align*}
        H^{\downarrow, 2\mu}_{\min}(A_1^n|B)_{\rho} &\geq \sum_{k=1}^n H^{\downarrow, \epsilon/2}_{\min}(A_k | A_1^{k-1}B)_{\rho} - n\log\rndBrk{1 + 4|A|^2 \frac{\epsilon}{1-\epsilon^2}} - n\log(1+g_1(\epsilon)) \\
        &  - n \log(1+\epsilon) - n (\mu+\mu^3) - \frac{1}{\mu^2} - \log\frac{1}{1-\mu^2} - \log\rndBrk{\frac{2}{\mu^2} + \frac{1}{1-\mu}}.
        \numberthis
    \end{align*}
    \textbf{Case 2:} Now, we can follow the same argument as the one in Case 2 of the proof of Theorem \ref{th:Hmin_eps_chain_rule} to derive the bound in the theorem statement for all states $\rho$. 
\end{proof}

\section{Conclusion}

We develop a powerful proof technique combining the entropic triangle inequality, the generalised GT inequality and the substate theorem for proving entropic bounds for approximation chains. We use this technique to prove novel chain rules-- the universal smooth min-entropy chain rule and the unstructured approximate entropy accumulation theorem. Importantly, both of these chain rules can be used meaningfully for arbitrarily large number of systems. \\

As far as applications are concerned, we use the unstructured approximate EAT to prove the security of parallel DI-QKD in our companion work \cite{Marwah24-DIQKD}. We expect the universal chain rule to aid in transforming von Neumann entropy based arguments to one-shot arguments. Furthermore, it provides a straightforward solution to problems such as the approximately independent registers problem discussed in \cite[Section 4]{Marwah23}. It is also our conviction that the proof technique introduced in this chapter will be useful for tackling other problems. For instance, it should be possible to use it to derive similar chain rules for one-shot variants (e.g., $I_{\max}$) of the (multipartite) mutual information. \\

As discussed in Sec. \ref{sec:decouple_possible}, it seems possible to decouple the smoothing parameter and the approximation parameter in certain scenarios, especially those involving DIQKD. We leave the problem of determining whether these parameters can be decoupled using testing for future work. This is an interesting and important question, which could potentially lead to significant improvements in the security proof of parallel DIQKD and the analysis of DIQKD with leakage.\\

Finally, we did not attempt to optimise our bounds here, but it should be interesting to study the absolute limits of the entropic error terms and the smoothing errors in our chain rules.   

\section*{Acknowledgments}

We would like to thank user:fedja on MathOverflow, who provided a proof for Lemma \ref{lemm:good_proj}-- a challenge that had eluded us for a long time-- and graciously permitted its reproduction. AM was supported by bourse d'excellence Google. This work was also supported by the Natural Sciences and Engineering Research Council of Canada.

\appendix

\addcontentsline{toc}{section}{APPENDICES}
\section*{APPENDICES}

\section{Additional lemmas}

The following lemma provides a tighter bound for the distance as compared to Lemma \ref{lemm:cond_state_dist}. It is proven in \cite[Lemma B.3]{Dupuis14}. We could replace the use of Lemma \ref{lemm:cond_state_dist} in Theorem \ref{th:approx_EAT} and \ref{th:approx_EAT_wtest} with the following. This would slightly improve some constants. We use Lemma \ref{lemm:cond_state_dist} instead for notational clarity, since we do not need to keep track of the additional unitary introduced in the following lemma. 
\begin{lemma}
    For a normalised state $\rho_{AB}$ and a subnormalised state $\tilde{\rho}_{AB}$ such that $P(\rho_{AB}, \tilde{\rho}_{AB}) \leq \epsilon$, there exists a unitary $U_B$ on the register $B$ such that the state 
    \begin{align}
        \eta_{AB} := \rho_{B}^{1/2} U_B \tilde{\rho}_{B}^{-1/2}\tilde{\rho}_{AB}\tilde{\rho}_{B}^{-1/2}U_B^{\dagger}\rho_{B}^{1/2}
        \label{eq:eta_cond_state_dist_defn}
    \end{align}
    ($\tilde{\rho}_{B}^{-1/2}$ is the Moore-Penrose pseudo-inverse above) satisfies $P(\tilde{\rho}_{AB}, \eta_{AB}) \leq \epsilon$ and $P(\rho_{AB}, \eta_{AB}) \leq 2\epsilon$. Note that if $\tilde{\rho}_B$ is full rank, then $\eta_B = \rho_B$. 
    \label{lemm:cond_state_dist_extra}
\end{lemma}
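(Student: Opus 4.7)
The plan is to improve the proof of Lemma \ref{lemm:cond_state_dist} by choosing the unitary $U_B$ via Uhlmann's theorem so that we get fidelity rather than ``pretty good fidelity'' between the purifications of $\tilde{\rho}_{AB}$ and $\eta_{AB}$. This removes the squaring step in the earlier bound and drops the $\sqrt{2}$ factor from $P(\tilde{\rho}_{AB},\eta_{AB})$.

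Concretely, I would first fix an arbitrary purification $\ket{\tilde{\rho}}_{ABR}$ of $\tilde{\rho}_{AB}$ and observe, as in the proof of Lemma \ref{lemm:cond_state_dist}, that for any unitary $U_B$ on register $B$ the vector $\ket{\eta} := \rho_B^{1/2} U_B \tilde{\rho}_B^{-1/2}\ket{\tilde{\rho}}_{ABR}$ is a purification of the state $\eta_{AB}$ defined in Eq. \ref{eq:eta_cond_state_dist_defn}. A direct computation (using $\tilde{\rho}_B^{-1/2}\tilde{\rho}_B \tilde{\rho}_B^{-1/2} = \Pi_{\tilde{\rho}_B}$ and the support properties of $\tilde{\rho}_{AB}$) gives
\begin{align}
    \braket{\tilde{\rho} | \eta} = \tr\bigl(\rho_B^{1/2} U_B \tilde{\rho}_B^{1/2}\bigr).
\end{align}
By the variational characterisation $\|\sqrt{\rho_B}\sqrt{\tilde{\rho}_B}\|_1 = \max_{U_B}|\tr(\rho_B^{1/2} U_B \tilde{\rho}_B^{1/2})|$, I would then \emph{pick} $U_B$ to saturate this maximum, which yields $|\braket{\tilde{\rho}|\eta}|^2 = F(\rho_B, \tilde{\rho}_B)$.

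With this choice of $U_B$, Uhlmann's theorem gives $F(\tilde{\rho}_{AB},\eta_{AB}) \geq |\braket{\tilde{\rho}|\eta}|^2 = F(\rho_B,\tilde{\rho}_B)$, and monotonicity of fidelity under partial trace combined with $P(\rho_{AB},\tilde{\rho}_{AB})\leq \epsilon$ gives $F(\rho_B,\tilde{\rho}_B) \geq F(\rho_{AB},\tilde{\rho}_{AB}) \geq 1-\epsilon^2$ (using that $\rho_{AB}$ is normalised so $F_\ast = F$ there). Since $F_\ast \geq F$ always, we obtain $F_\ast(\tilde{\rho}_{AB},\eta_{AB}) \geq 1-\epsilon^2$, i.e.\ $P(\tilde{\rho}_{AB},\eta_{AB}) \leq \epsilon$. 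The triangle inequality for purified distance then yields $P(\rho_{AB},\eta_{AB}) \leq 2\epsilon$, as claimed.

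The only subtlety, and the one step I would pay attention to, is handling the case where $\tilde{\rho}_B$ is not full rank: the pseudo-inverse makes $\eta_{AB}$ subnormalised in general, so $F_\ast(\tilde{\rho}_{AB},\eta_{AB})$ could a priori differ from $F(\tilde{\rho}_{AB},\eta_{AB})$. This is resolved by noting that $F_\ast \geq F$ (the correction term in the definition of $F_\ast$ is non-negative), so the lower bound on $F$ transfers to $F_\ast$ without issue. Once this is observed, the remark that $\eta_B = \rho_B$ when $\tilde{\rho}_B$ is full rank follows immediately from the partial trace $\tr_A[\tilde{\rho}_B^{-1/2}\tilde{\rho}_{AB}\tilde{\rho}_B^{-1/2}] = \Id_B$ and $U_B U_B^\dagger = \Id_B$.
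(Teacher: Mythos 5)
Your proposal is correct and follows essentially the same route as the paper's proof: both fix an arbitrary purification of $\tilde{\rho}_{AB}$, observe $\ket{\eta}=\rho_B^{1/2}U_B\tilde{\rho}_B^{-1/2}\ket{\tilde{\rho}}$ purifies $\eta_{AB}$, compute $\braket{\tilde{\rho}|\eta}=\tr(U_B\tilde{\rho}_B^{1/2}\rho_B^{1/2})$, and choose $U_B$ to maximise this overlap (you invoke the variational characterisation of the trace norm, the paper writes out the polar decomposition of $\tilde{\rho}_B^{1/2}\rho_B^{1/2}$ — these are the same choice), yielding $F(\tilde{\rho}_{AB},\eta_{AB})\geq F(\rho_B,\tilde{\rho}_B)\geq 1-\epsilon^2$ via Uhlmann, then $F_\ast\geq F$ and the triangle inequality. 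The only minor cosmetic difference is that you state the Uhlmann optimisation abstractly rather than constructing $U_B=V_B^\dagger$ explicitly.
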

  
\begin{proof}
Note that since $\rho_{AB}$ is normalised, we have $F(\tilde{\rho}_{AB}, \rho_{AB})\geq 1- \epsilon^2$. Let $\ket{\tilde{\rho}}_{ABR}$ be an arbitrary purification of $\tilde{\rho}_{AB}$. Let $U_B$ be any unitary for now and let $\eta_{AB}$ be defined as in Eq. \ref{eq:eta_cond_state_dist_defn} above. We will choose $U_B$ so that $F(\tilde{\rho}_{AB}, \eta_{AB})$ is large. \\

Observe that the pure state $\ket{\eta}_{ABR} := \rho_{B}^{1/2} U_B \tilde{\rho}_{B}^{-1/2} \ket{\tilde{\rho}}_{ABR}$ is a purification of $\eta_{AB}$. Using Uhlmann's theorem \cite[Theorem 3.22]{Watrous18}, we have 
\begin{align*}
    F(\tilde{\rho}_{AB}, \eta_{AB}) &\geq |\braket{\tilde{\rho} | \eta}|^2 \\
    &= \left\vert \braket{\tilde{\rho} | \rho_{B}^{1/2} U_B \tilde{\rho}_{B}^{-1/2} | \tilde{\rho}} \right\vert^2 \\
    &=\left\vert \tr(\rho_{B}^{1/2} U_B \tilde{\rho}_{B}^{-1/2} \tilde{\rho}_{ABR})\right\vert^2 \\
    &=\left\vert \tr(U_B \tilde{\rho}_{B}^{1/2} \rho_{B}^{1/2})\right\vert^2.
\end{align*}
Say the polar decomposition of $\tilde{\rho}_{B}^{1/2} \rho_{B}^{1/2} = V_B \left\vert \tilde{\rho}_{B}^{1/2} \rho_{B}^{1/2}\right\vert$. We can now select $U_B$ to be $V_B^{\dagger}$, so that 
\begin{align*}
    F(\tilde{\rho}_{AB}, \eta_{AB}) &\geq \rndBrk{\tr\left\vert \tilde{\rho}_{B}^{1/2} \rho_{B}^{1/2}\right\vert}^2\\
    &= F(\tilde{\rho}_B, \rho_{B})\\
    &\geq 1-\epsilon^2 \numberthis
\end{align*}
where we have used $F(\tilde{\rho}_B, \rho_{B}) \geq F(\tilde{\rho}_{AB}, \rho_{AB})$. Further, we have 
\begin{align*}
    P(\tilde{\rho}_{AB}, \eta_{AB}) &= \sqrt{1- F_\ast(\tilde{\rho}_{AB}, \eta_{AB})} \\
    &\leq \sqrt{1- F(\tilde{\rho}_{AB}, \eta_{AB})} \\
    &\leq \epsilon.
\end{align*}
Using the triangle inequality, for this choice of $U_B$, we get
\begin{align*}
    P({\rho}_{AB}, \eta_{AB}) &\leq P({\rho}_{AB}, \tilde{\rho}_{AB}) + P(\tilde{\rho}_{AB}, \eta_{AB})\\
    &\leq 2\epsilon.
\end{align*}
\end{proof}

\section{Comparison with previous chain rules}
\label{sec:ch_rule_comp}
It is instructive to understand why the previously developed chain rule fails to yield a universal chain rule. \cite[Lemma A.8]{Dupuis14} proved that
\begin{align}
  H_{\min}^{\epsilon_1+ 2\epsilon_2+\delta}(A_1 A_2|B)_{\rho} \geq H_{\min}^{\epsilon_1}(A_1|B)_{\rho} + H_{\min}^{\epsilon_2}(A_2| A_1 B)_{\rho} - k(\delta)
\end{align}
where $k(\delta)=O\rndBrk{\log\frac{1}{\delta}}$. Firstly, each time this chain rule is used, a large entropy loss of $k(\delta) = O\rndBrk{\log\frac{1}{\delta}}$ is incurred. If this was applied $n$ times, this would result in a loss of $n k(\delta)$ entropy. Secondly, if one repeatedly applies the chain rule above, then the smoothing parameter on the left-hand side would grow linearly with $n$. This second problem is fundamental to the purified distance based technique used to develop the previous chain rules. The triangle inequality based approach developed in this paper helps us overcome this. On the other hand, one can modify the technique in \cite[Lemma A.8]{Dupuis14} to absorb the $k(\delta)$ term into the smooth min-entropy, so that one of the terms is $H^{\downarrow, \epsilon}_{\min}$ in the lower bound. We do this in the following lemma. 

\begin{lemma}
  For $\epsilon_1, \epsilon_2 \in (0,1)$, and a state $\rho_{ABC}$, we have 
  \begin{align}
    H^{\epsilon_1 + 2\epsilon_2}_{\min}(AB | C) \geq H^{\epsilon_1}_{\min}(B | C) + H^{\downarrow, \epsilon_2}_{\min}(A | B C)
  \end{align}
\end{lemma}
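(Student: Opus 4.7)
The plan is to construct an auxiliary state $\eta_{ABC}$ that simultaneously inherits the operator inequality from the $H^{\downarrow,\epsilon_2}_{\min}$ optimiser on $ABC$ and the $H^{\epsilon_1}_{\min}$ optimiser's marginal on $BC$. Let $\lambda_2 := H^{\downarrow,\epsilon_2}_{\min}(A|BC)_\rho$ and pick a subnormalised $\tilde{\rho}_{ABC} \in B_{\epsilon_2}(\rho_{ABC})$ with $\tilde{\rho}_{ABC} \leq e^{-\lambda_2}\Id_A \otimes \tilde{\rho}_{BC}$. Let $\lambda_1 := H^{\epsilon_1}_{\min}(B|C)_\rho$ and pick subnormalised $\hat{\rho}_{BC} \in B_{\epsilon_1}(\rho_{BC})$ together with a state $\sigma_C$ satisfying $\hat{\rho}_{BC} \leq e^{-\lambda_1}\Id_B \otimes \sigma_C$. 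By monotonicity of the purified distance and the triangle inequality, $P(\tilde{\rho}_{BC}, \hat{\rho}_{BC}) \leq \epsilon_1 + \epsilon_2$.

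Mimicking the construction of Lemma \ref{lemm:cond_state_dist_extra}, pick the unitary $U_{BC}$ arising from the polar decomposition of $\tilde{\rho}_{BC}^{1/2}\hat{\rho}_{BC}^{1/2}$ and define
\[
\eta_{ABC} := \hat{\rho}_{BC}^{1/2}\, U_{BC}\, \tilde{\rho}_{BC}^{-1/2}\, \tilde{\rho}_{ABC}\, \tilde{\rho}_{BC}^{-1/2}\, U_{BC}^{\dagger}\, \hat{\rho}_{BC}^{1/2}.
\]
The argument of Lemma \ref{lemm:cond_state_dist_extra} goes through verbatim with $\tilde{\rho}_{AB} \leftarrow \tilde{\rho}_{ABC}$ and $\rho_B \leftarrow \hat{\rho}_{BC}$: using Uhlmann's theorem with the purification $\hat{\rho}_{BC}^{1/2}U_{BC}\tilde{\rho}_{BC}^{-1/2}\ket{\tilde{\rho}}_{ABCR}$ yields $F(\tilde{\rho}_{ABC},\eta_{ABC}) \geq F(\tilde{\rho}_{BC},\hat{\rho}_{BC}) \geq 1 - (\epsilon_1+\epsilon_2)^2$, hence $P(\tilde{\rho}_{ABC},\eta_{ABC}) \leq \epsilon_1+\epsilon_2$, and $\eta_{BC} \leq \hat{\rho}_{BC}$ so $\eta_{ABC}$ is subnormalised.

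Now sandwich the operator inequality $\tilde{\rho}_{ABC} \leq e^{-\lambda_2}\Id_A \otimes \tilde{\rho}_{BC}$ on the $BC$ registers by $\hat{\rho}_{BC}^{1/2}U_{BC}\tilde{\rho}_{BC}^{-1/2}$ and its adjoint. Because the sandwiching acts trivially on $A$, the right-hand side becomes $e^{-\lambda_2}\Id_A \otimes \hat{\rho}_{BC}^{1/2}U_{BC}\bigl(\tilde{\rho}_{BC}^{-1/2}\tilde{\rho}_{BC}\tilde{\rho}_{BC}^{-1/2}\bigr)U_{BC}^{\dagger}\hat{\rho}_{BC}^{1/2}$; the parenthesised factor equals the support projection of $\tilde{\rho}_{BC}$ and is bounded above by $\Id_{BC}$, so
\[
\eta_{ABC} \;\leq\; e^{-\lambda_2}\,\Id_A \otimes \hat{\rho}_{BC} \;\leq\; e^{-(\lambda_1+\lambda_2)}\,\Id_{AB} \otimes \sigma_C,
\]
giving $H_{\min}(AB|C)_{\eta} \geq \lambda_1 + \lambda_2$. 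Combined with the triangle inequality $P(\rho_{ABC},\eta_{ABC}) \leq \epsilon_2 + (\epsilon_1+\epsilon_2) = \epsilon_1 + 2\epsilon_2$, this yields the claim.

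The main (mild) obstacle is verifying that the sandwiching step preserves the operator inequality when $\tilde{\rho}_{BC}$ is not full rank; the interpretation of $\tilde{\rho}_{BC}^{-1}$ as a Moore-Penrose pseudoinverse and the support-projection identity $\tilde{\rho}_{BC}^{-1/2}\tilde{\rho}_{BC}\tilde{\rho}_{BC}^{-1/2} = \Pi_{\tilde{\rho}_{BC}} \leq \Id$ handle this cleanly, and everything else is straightforward bookkeeping.
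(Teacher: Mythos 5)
Your proposal is correct and follows essentially the same route as the paper: both construct $\eta_{ABC}$ by conjugating the $H^{\downarrow,\epsilon_2}_{\min}$ optimiser with a conditional-shift map that replaces its $BC$ marginal with the $H^{\epsilon_1}_{\min}(B|C)$ optimiser, then push the two operator inequalities through that conjugation. The only cosmetic difference is that the paper first extends the $BC$-optimiser to a state on $ABC$ and invokes Lemma \ref{lemm:cond_state_dist_extra} as a black box, whereas you inline its Uhlmann/polar-decomposition argument directly at the level of the $BC$ marginals.
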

\begin{proof}
  Let $\tilde{\rho}_{BC}$ be a subnormalised state such that $P(\rho_{BC}, \tilde{\rho}_{BC}) \leq \epsilon_1$ and $H^{\epsilon_1}_{\min}(B | C)_{\rho} = H_{\min}(B | C)_{\tilde{\rho}} =: \lambda_1$. Also, let $\bar{\rho}_{ABC}$ be a subnormalised state such that $P(\rho_{ABC}, \bar{\rho}_{ABC}) \leq \epsilon_2$ and $H^{\downarrow, \epsilon_2}_{\min}(A | BC)_{\rho} = H_{\min}(A | BC)_{\bar{\rho}} =: \lambda_2$. Let $\tilde{\rho}_{ABC}$ be an extension of $\tilde{\rho}_{BC}$ such that $P(\tilde{\rho}_{ABC}, {\rho}_{ABC}) = P(\tilde{\rho}_{BC}, {\rho}_{BC})$ (see \cite[Corollary 3.1]{TomamichelBook16}). Using Lemma \ref{lemm:cond_state_dist_extra}, there exists a unitary $U_{BC}$ such that 
  \begin{align}
    \eta_{ABC} := \tilde{\rho}_{BC}^{1/2} U_{BC} \bar{\rho}_{BC}^{-1/2} \bar{\rho}_{ABC} \bar{\rho}_{BC}^{-1/2}U_{BC}^\dagger \tilde{\rho}_{BC}^{1/2}
  \end{align}
  satisfies $P(\bar{\rho}_{ABC}, \eta_{ABC}) \leq \epsilon_1 + \epsilon_2$ since $P(\tilde{\rho}_{ABC}, \bar{\rho}_{ABC}) \leq \epsilon_1 + \epsilon_2$. Now, we have that 
  \begin{align}
    \bar{\rho}_{BC}^{-1/2} \bar{\rho}_{ABC} \bar{\rho}_{BC}^{-1/2} \leq e^{-\lambda_2} \Id_{ABC}
  \end{align}
  which implies that 
  \begin{align}
    \eta_{ABC} = \tilde{\rho}_{BC}^{1/2} U_{BC} \bar{\rho}_{BC}^{-1/2} \bar{\rho}_{ABC} \bar{\rho}_{BC}^{-1/2} U_{BC}^\dagger \tilde{\rho}_{BC}^{1/2} &\leq e^{-\lambda_2} \Id_{A} \otimes \tilde{\rho}_{BC} \\\
    &\leq e^{-(\lambda_1 + \lambda_2)} \Id_{AB} \otimes \sigma_C
  \end{align}
  for some state $\sigma_C$. Thus, we have that $H^{\epsilon_1 + 2 \epsilon_2}_{\min}(AB|C)_\rho \geq H_{\min}(AB|C)_\eta \geq \lambda_1 + \lambda_2$.
\end{proof}

\section{Classical approximate chain rule for the relative entropy}
\label{sec:cl_approx_ch_rule_for_D}

\begin{lemma}
    Let $p$ and $p'$ be probability distributions over $\mathcal{X}$. Then, for a function $f: \mathcal{X} \rightarrow \mathbb{R}$, we have 
    \begin{align}
        \left\vert \Expect_{X\sim p}[f(X)] - \Expect_{X\sim p'}[f(X)] \right\vert \leq \max_{x \in \mathcal{X}} |f(x)| \norm{p-p'}_1.
        \label{eq:cty_of_epectations}
    \end{align} 
    \label{lemm:cty_of_epectations}
\end{lemma}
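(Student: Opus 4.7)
The plan is to expand the difference of expectations as a single sum over $\mathcal{X}$, pull out the absolute value via the triangle inequality, and then factor out the maximum of $|f|$ to recover the $\ell_1$ norm of $p-p'$.

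Concretely, first I would write
\begin{align}
    \Expect_{X\sim p}[f(X)] - \Expect_{X\sim p'}[f(X)] = \sum_{x\in\mathcal{X}} \bigl(p(x) - p'(x)\bigr) f(x).
\end{align}
Then I would apply the triangle inequality on the sum to obtain
\begin{align}
    \left\vert \Expect_{X\sim p}[f(X)] - \Expect_{X\sim p'}[f(X)] \right\vert \leq \sum_{x\in\mathcal{X}} |p(x) - p'(x)|\, |f(x)|.
\end{align}
Finally, bounding $|f(x)| \leq \max_{y\in\mathcal{X}} |f(y)|$ uniformly and pulling this factor out of the sum yields
\begin{align}
    \left\vert \Expect_{X\sim p}[f(X)] - \Expect_{X\sim p'}[f(X)] \right\vert \leq \max_{x\in\mathcal{X}} |f(x)| \sum_{x\in\mathcal{X}} |p(x) - p'(x)| = \max_{x\in\mathcal{X}} |f(x)| \, \norm{p-p'}_1,
\end{align}
which is exactly the statement of the lemma.

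There is no real obstacle here; this is a one-line consequence of the triangle inequality and the definition of the trace (total variation) norm for classical distributions, $\norm{p-p'}_1 = \sum_x |p(x) - p'(x)|$. The only minor care needed is that $\mathcal{X}$ should be such that the sums and the maximum are well-defined (finite $\mathcal{X}$, which is implicit from the context of the paper), otherwise one would replace $\max$ by $\sup$ and the argument goes through unchanged.
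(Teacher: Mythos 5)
Your proof is correct and follows exactly the same route as the paper's: rewrite the difference of expectations as $\sum_x (p(x)-p'(x))f(x)$, apply the triangle inequality, and bound $|f(x)|$ by its maximum to recover $\norm{p-p'}_1$. No discrepancy to report.
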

\begin{proof}
    \begin{align*}
        \left\vert \Expect_{X\sim p}[f(X)] - \Expect_{X\sim p'}[f(X)] \right\vert &= \vert \sum_{x \in \mathcal{X}} p(x)f(x) - \sum_{x \in \mathcal{X}} p'(x)f(x)\vert \\
        &= \vert \sum_{x \in \mathcal{X}} (p(x)- p'(x))f(x) \vert \\
        &\leq \sum_{x \in \mathcal{X}} \vert p(x)- p'(x) \vert \vert f(x) \vert \\
        &\leq \max_{x \in \mathcal{X}} |f(x)| \norm{p-p'}_1
    \end{align*}
\end{proof}
We will call the following lemma an approximate chain rule for the relative entropy. It allows us to switch the probability distribution $p_{AB}$ in the term $D(p_{AB} || p_{B}q_{A|B})$ of the chain rule for $D(p_{AB}||q_{AB})$ to $D(p'_{AB} || p'_{B}q_{A|B})$, where $p \approx_{\epsilon} p'$ while incurring a penalty which depends only on the size of the register $A$. 
\begin{lemma}
    Suppose that $\delta >0$ and $q_{AB}$ is a probability distribution over $\mathcal{A}\times \mathcal{B}$ such that for all $a,b \in \mathcal{A}\times \mathcal{B}$, we have $q(a|b)\geq \delta$. Then, for $0<\epsilon<\frac{1}{2}$ and a $p'_{AB}$ such that $\frac{1}{2} \norm{p_{AB}- p'_{AB}}_1 \leq \epsilon$, we have that 
    \begin{align}
        D(p_{AB}||q_{AB}) \leq D(p_{B}||q_{B}) +  D(p'_{AB} ||p'_{B} q_{A|B})+ z(\epsilon, \delta)
        \label{eq:approx_chain_rule_for_D}
    \end{align}
    where $z(\epsilon, \delta) := h(2\epsilon)+ 6\epsilon\log\frac{1}{\delta} + 4\epsilon\log|\mathcal{A}|$. Equivalently, we have
    \begin{align}
        D(p_{AB}||q_{AB}) \leq D(p_{B}||q_{B}) +  \inf_{\norm{p'_{AB} - p_{AB}}_1 \leq 2\epsilon} D(p'_{AB} ||p'_{B} q_{A|B})+ z(\epsilon, \delta)
    \end{align}
    \label{lemm:approx_chain_rule_for_D}
\end{lemma}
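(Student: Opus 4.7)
The plan is to apply the exact classical chain rule for relative entropy
\[
D(p_{AB}\|q_{AB}) = D(p_B\|q_B) + D(p_{AB}\|p_B q_{A|B})
\]
(which holds with equality; the hypothesis $q(a|b)\ge\delta>0$ guarantees absolute continuity), and then to approximate the conditional term $D(p_{AB}\|p_Bq_{A|B})$ by $D(p'_{AB}\|p'_Bq_{A|B})$ using continuity-type arguments that exploit both the closeness $\frac12\|p-p'\|_1\le\epsilon$ and the lower bound $q(a|b)\ge\delta$.

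To that end I would expand the two conditional divergences in the form
\[
D(p_{AB}\|p_Bq_{A|B}) = -H(A|B)_p - \mathbb{E}_{p}[\log q(A|B)],
\]
and the analogous identity for $p'$. Subtracting gives the clean decomposition
\[
D(p_{AB}\|p_Bq_{A|B}) - D(p'_{AB}\|p'_Bq_{A|B}) = \bigl[H(A|B)_{p'}-H(A|B)_p\bigr] + \bigl[\mathbb{E}_{p'}[\log q(A|B)]-\mathbb{E}_{p}[\log q(A|B)]\bigr].
\]
The second bracket is the easy piece. Since $q(a|b)\in[\delta,1]$, the function $(a,b)\mapsto\log q(a|b)$ is bounded in absolute value by $\log(1/\delta)$, so Lemma~\ref{lemm:cty_of_epectations} with $\|p-p'\|_1\le 2\epsilon$ gives
\[
\bigl|\mathbb{E}_{p'}[\log q(A|B)]-\mathbb{E}_{p}[\log q(A|B)]\bigr| \le 2\epsilon\log(1/\delta),
\]
which is absorbed in the $6\epsilon\log(1/\delta)$ term of $z(\epsilon,\delta)$ (the extra slack leaves room for a more careful bookkeeping if one also has to modify $p$ to force $p\ll q$ in edge cases).

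The first bracket is the main obstacle, and it is where the Fannes/Alicki--Fannes--Winter continuity of the classical conditional entropy comes in. The naive route $H(A|B)=H(AB)-H(B)$ with Fannes on each term would introduce an undesirable $\log|\mathcal{B}|$ factor; instead one must use an AFW-style bound that depends only on $|\mathcal{A}|$: for $\frac12\|p_{AB}-p'_{AB}\|_1\le\epsilon$,
\[
\bigl|H(A|B)_p - H(A|B)_{p'}\bigr| \;\le\; 4\epsilon\log|\mathcal{A}| + h(2\epsilon).
\]
Combining this with the expectation bound and plugging back into the chain-rule equality yields
\[
D(p_{AB}\|q_{AB}) \le D(p_B\|q_B) + D(p'_{AB}\|p'_Bq_{A|B}) + h(2\epsilon) + 2\epsilon\log(1/\delta) + 4\epsilon\log|\mathcal{A}|,
\]
which is exactly the claimed inequality (with room to spare in the $\log(1/\delta)$ coefficient). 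For the second (infimum) form one simply observes that the above inequality holds for every $p'_{AB}$ with $\|p'_{AB}-p_{AB}\|_1\le 2\epsilon$, so it holds with the infimum on the right as well.
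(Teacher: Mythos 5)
Your proof takes a genuinely different route from the paper's, and the core idea is sound and actually cleaner. The paper starts from the exact chain rule
\[
D(p_{AB}\|q_{AB}) = D(p_B\|q_B) + \mathbb{E}_{B\sim p_B}\big[D(p_{A|B}\|q_{A|B})\big]
\]
and then switches distributions in two stages: first it replaces the outer expectation $p_B\to p'_B$ by bounding $\max_b |D(p_{A|b}\|q_{A|b})|\le\log|\mathcal{A}|+\log(1/\delta)$ and applying Lemma~\ref{lemm:cty_of_epectations}, and then it replaces the inner conditionals $p_{A|b}\to p'_{A|b}$ by a per-$b$ application of Fannes--Audenaert plus another expectation-continuity bound, followed by concavity of $h$ to push the expectation over $b$ inside. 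You instead work globally: you write the conditional divergence as $-H(A|B)_p - \mathbb{E}_p[\log q(A|B)]$, subtract the analogous identity for $p'$, and bound the two pieces directly. This eliminates the intermediate quantity $p'_B p_{A|B}$ (and the auxiliary $\ell_1$ bound on it that the paper needs), and it also collects the $\log(1/\delta)$ errors once rather than twice, which is why your estimate ends up with $2\epsilon\log(1/\delta)$ where the paper accumulates $6\epsilon\log(1/\delta)$.

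The one thing to be careful about is the specific numerical form of the conditional-entropy continuity bound you invoke. The inequality $|H(A|B)_p-H(A|B)_{p'}|\le 4\epsilon\log|\mathcal{A}|+h(2\epsilon)$ for $\tfrac12\|p-p'\|_1\le\epsilon$ is not a standard statement, and for $\epsilon$ close to $1/2$ and $|\mathcal{A}|=2$ it is not implied by (and is actually smaller than) the sharp Alicki--Fannes--Winter bound $2\epsilon\log|\mathcal{A}| + (1+\epsilon)h\big(\tfrac{\epsilon}{1+\epsilon}\big)$ from \cite{Winter16}. You should cite the Winter form explicitly. With that substitution, your total extra term becomes $2\epsilon\log|\mathcal{A}| + (1+\epsilon)h\big(\tfrac{\epsilon}{1+\epsilon}\big) + 2\epsilon\log(1/\delta)$, and one easily checks (using $\delta\le 1/|\mathcal{A}|$, so $\log(1/\delta)\ge\log|\mathcal{A}|$) that this is bounded by the stated $z(\epsilon,\delta)=h(2\epsilon)+6\epsilon\log(1/\delta)+4\epsilon\log|\mathcal{A}|$ on the whole range $0<\epsilon<1/2$. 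So your argument does establish the lemma, and in fact gives a modestly tighter constant; just replace the unsupported intermediate form with the correct AFW reference.
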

\begin{proof}
    If $\supp(p_B) \not\subseteq \supp(q_B)$, then the right-hand side is infinite and the identity is trivially true. We suppose $\supp(p_B) \subseteq \supp(q_B)$ here on. \\

    We will show that for every $p'_{AB}$, which is $\epsilon$-close to $p_{AB}$ the right-hand side in Eq. \ref{eq:approx_chain_rule_for_D} is greater than the left-hand side. Classically, we have the chain rule
    \begin{align}
        D(p_{AB}||q_{AB}) = D(p_{B}||q_{B}) + \Expect_{B \sim p_B} [D(p_{A|B}||q_{A|B})].
        \label{eq:simple_ch_rule}
    \end{align}
    Note that both the above terms are finite ($q(a|b)\geq \delta$ is given). We will bound $\max_b D(p_{A|b}||q_{A|b})$ and then use Lemma \ref{lemm:cty_of_epectations} to create a bound for the expectation in terms of $p'$. For a given $b \in \mathcal{B}$, we have 
    \begin{align*}
        |D(p_{A|b}||q_{A|b})| &= \left\vert \sum_a p(a|b) \log\frac{p(a|b)}{q(a|b)} \right\vert \\
        &\leq \left\vert \sum_a p(a|b) \log p(a|b)\right\vert + \left\vert\sum_a p(a|b) \log\frac{1}{q(a|b)} \right\vert\\
        &= H(A|B=b)_{p} + \sum_a p(a|b) \log\frac{1}{q(a|b)} \\
        &\leq \log(|\mathcal{A}|) + \log\frac{1}{\delta}. 
    \end{align*} 
    Now, using Lemma \ref{lemm:cty_of_epectations}, 
    \begin{align}
        \Expect_{B \sim p_B} [D(p_{A|B}||q_{A|B})] \leq \Expect_{B \sim p'_B} [D(p_{A|B}||q_{A|B})] + 2\epsilon\rndBrk{\log(|\mathcal{A}|) + \log\frac{1}{\delta}}.
        \label{eq:D_ch_rule_intermediate_ch_rule}
    \end{align}
    Finally, we need to change the $p_{A|B}$ in $D(p_{A|B}||q_{A|B})$ to $p'_{A|B}$. 
    \begin{align*}
        D(p_{A|b}||q_{A|b}) &= -H(A|B=b)_{p} + \Expect_{A\sim p_{A|b}} \sqBrk{\log\frac{1}{q(A|b)}} \\
        &\leq -H(A|B=b)_{p'} + h\rndBrk{\frac{1}{2}\norm{p_{A|b}-p'_{A|b}}_1} + \frac{1}{2}\norm{p_{A|b}-p'_{A|b}}_1 \log(|\mathcal{A}|)\\
        & \quad\quad+ \Expect_{A\sim p'_{A|b}} \sqBrk{\log\frac{1}{q(A|b)}} + \norm{p_{A|b}-p'_{A|b}}_1 \log\frac{1}{\delta}\\
        &= D(p'_{A|b}||q_{A|b}) + h\rndBrk{\frac{1}{2} \norm{p_{A|b}-p'_{A|b}}_1} + \frac{1}{2} \norm{p_{A|b}-p'_{A|b}}_1 \rndBrk{2\log\frac{1}{\delta} + \log(|\mathcal{A}|)}
    \end{align*}
    where we used the Fannes-Audenaert continuity bound \cite[Theorem 11.10.2]{Wilde13} and Lemma \ref{lemm:cty_of_epectations} in the second line. Taking the expectation over $p'_B$, we get 
    \begin{align*}
        & \Expect_{B \sim p'_B} [D(p_{A|B}||q_{A|B})] \\
        & \leq \Expect_{B \sim p'_B} \sqBrk{D(p'_{A|B}||q_{A|B}) + h\rndBrk{\frac{1}{2}\norm{p_{A|B}-p'_{A|B}}_1} + \frac{1}{2}\norm{p_{A|B}-p'_{A|B}}_1 \rndBrk{2\log\frac{1}{\delta} + \log(|\mathcal{A}|)}} \\
        & \leq \Expect_{B \sim p'_B} \sqBrk{D(p'_{A|B}||q_{A|B})} + h\rndBrk{\frac{1}{2}\norm{p'_B p_{A|B} - p'_{AB}}_1}+ \frac{1}{2} \norm{p'_B p_{A|B} - p'_{AB}}_1 \rndBrk{2\log\frac{1}{\delta} + \log|\mathcal{A}|} \\
        & \leq \Expect_{B \sim p'_B} \sqBrk{D(p'_{A|B}||q_{A|B})} + h(2\epsilon)+ 2\epsilon \rndBrk{2\log\frac{1}{\delta} + \log|\mathcal{A}|} \\
        &= D(p'_{AB}|| p'_B q_{A|B}) + h(2\epsilon)+ 2\epsilon \rndBrk{2\log\frac{1}{\delta} + \log|\mathcal{A}|}
    \end{align*}
    where we have used the fact that if $\norm{p_{AB} - p'_{AB}}_1 \leq \epsilon$, then $\norm{p'_{B}p_{A|B} - p'_{AB}}_1 \leq 2\epsilon$. This can be derived using the triangle inequality. Putting this in Eq. \ref{eq:D_ch_rule_intermediate_ch_rule}, we get 
    \begin{align*}
        \Expect_{B \sim p_B} [D(p_{A|B}||q_{A|B})] \leq D(p'_{AB}|| p'_B q_{A|B}) + h(2\epsilon)+ 6\epsilon\log\frac{1}{\delta} + 4\epsilon\log|\mathcal{A}|.
    \end{align*}
    Therefore, using Eq. \ref{eq:simple_ch_rule}, we get 
    \begin{align*}
        D(p_{AB}||q_{AB}) \leq D(p_B || q_B) + D(p'_{AB}|| p'_B q_{A|B}) + h(2\epsilon)+ 6\epsilon\log\frac{1}{\delta} + 4\epsilon\log|\mathcal{A}|.
    \end{align*}
\end{proof}

\section{Markov chain condition for all input states implies independence}
\label{sec:all_op_MC_cond}

In the following lemma, we show that if all outputs of a channel satisfy a certain Markov chain condition with the reference registers, then under some dimension constraints the output of the channel is independent of the input. Due to this fact, we choose to state the unstructured approximate EAT using the independence condition for the side information (Eq. \ref{eq:side_info_ind}). We expect that this lemma can be improved further.

\begin{lemma}
  Let $A$, $B$ and $R$ be registers such that $|A| = |B|$ and $|R| = |A||B|$. Let $\cM: R \rightarrow C$ be a channel such that for all input states $\rho^{(0)}_{A B R}$ the output $\rho_{ABC} = \cM(\rho^{(0)})$ satisfies the Markov chain $A \leftrightarrow B \leftrightarrow C$. Then, we have that $\cM(X_R) = \tr(X) \omega_C$ for some state $\omega_C$.
\end{lemma}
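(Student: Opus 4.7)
The plan is to reduce everything to a statement about the Choi--Jamiolkowski state of $\cM$. Since $|R| = |A||B|$, we may fix an identification $AB \cong R$ and introduce the maximally entangled state $|\Phi\rangle_{(AB)R} = \frac{1}{\sqrt{|R|}}\sum_k |k\rangle_{AB}|k\rangle_R$. Feeding $\rho^{(0)}_{ABR} := |\Phi\rangle\langle\Phi|_{(AB)R}$ into the hypothesis, the resulting state $\rho_{ABC} = (\mathrm{id}_{AB} \otimes \cM_{R \to C})(|\Phi\rangle\langle\Phi|_{(AB)R})$ is (up to normalisation) exactly the Choi state of $\cM$. Since the Choi state uniquely determines the channel, it suffices to show that $\rho_{ABC} = \tau_A \otimes \tau_B \otimes \omega_C$ for some state $\omega_C$; the replacer form $\cM(X_R) = \tr(X)\,\omega_C$ is then a one-line consequence of the Choi--Jamiolkowski correspondence.

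First, observe that $\rho_{AB} = \tau_A \otimes \tau_B$ since $|\Phi\rangle_{(AB)R}$ is maximally entangled. By the standard recovery-map characterisation of quantum Markov chains (i.e.\ $I(A{:}C|B)_{\rho} = 0$ is equivalent to the existence of a CPTP map $\mathcal{R}_{B \to BC}$ with $\rho_{ABC} = \mathcal{R}_{B \to BC}(\rho_{AB})$), the hypothesis forces
\begin{align*}
    \rho_{ABC} = \mathcal{R}_{B \to BC}(\tau_A \otimes \tau_B) = \tau_A \otimes \sigma_{BC}
\end{align*}
for some state $\sigma_{BC}$. Equivalently, $A$ is independent of $BC$ in $\rho_{ABC}$.

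Second, I would exploit the symmetry $|A| = |B|$ by conjugating the input by $\mathrm{SWAP}_{AB}$. The alternative input $\mathrm{SWAP}_{AB}\, |\Phi\rangle\langle\Phi|_{(AB)R}\, \mathrm{SWAP}_{AB}$ is still a legitimate normalised state on $ABR$, and since $\mathrm{SWAP}_{AB}$ commutes with $\mathrm{id}_{AB} \otimes \cM$, the corresponding output equals $\mathrm{SWAP}_{AB}\, \rho_{ABC}\, \mathrm{SWAP}_{AB}$. The Markov hypothesis applied to this swapped output is, after relabelling, the chain $B \leftrightarrow A \leftrightarrow C$ for $\rho_{ABC}$. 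Repeating the first step gives $\rho_{ABC} = \tau_B \otimes \sigma'_{AC}$ for some $\sigma'_{AC}$.

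Finally, combining the two structural identities $\tau_A \otimes \sigma_{BC} = \tau_B \otimes \sigma'_{AC}$ and tracing out $B$ yields $\sigma'_{AC} = \tau_A \otimes \sigma_C$, which in turn forces $\rho_{ABC} = \tau_A \otimes \tau_B \otimes \sigma_C$. Setting $\omega_C := \sigma_C$ and inverting the Choi--Jamiolkowski correspondence closes the argument. I do not anticipate a substantial obstacle: the joint use of the maximally entangled input and the SWAP symmetry sidesteps the only plausible difficulty, namely that a single input can probe only one ``direction'' of the channel's correlation structure. The dimension assumption $|R| = |A||B|$ enters exactly when invoking $|\Phi\rangle_{(AB)R}$ (and hence realising the Choi state of $\cM$ as an output), while $|A| = |B|$ is needed solely for the $\mathrm{SWAP}_{AB}$ step that produces the second Markov chain.
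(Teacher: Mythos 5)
Your proof is correct and follows essentially the same strategy as the paper: feed the (normalised) Choi state into the hypothesis, use the Markov-chain structure together with $\rho_{AB} = \tau_A\otimes\tau_B$ to conclude that $A$ is independent of $BC$, then invoke the $\mathrm{SWAP}_{AB}$ symmetry to obtain the mirror factorisation, and combine. The only cosmetic differences are that you invoke the recovery-map characterisation of quantum Markov chains abstractly, while the paper writes out the Petz-type factorisation $J_{ABC} = J_{AB}J_B^{-1}J_{BC}$ (which collapses trivially because $J_{AB}$ and $J_B$ are proportional to identities), and that you stay at the level of the Choi state throughout whereas the paper phrases the two factorisations at the channel level ($\cM = \cN\circ\tr_{A'}$ and $\cM = \cN'\circ\tr_{B'}$) before comparing them on product inputs.
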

\begin{proof}
  Since $|R| = |A||B|$, we can view $R$ as the registers $A' B'$, where $A \equiv A$ and $B' \equiv B$. We can construct the Choi matrix of this channel as 
  \begin{align}
    J_{ABC} = \cM\rndBrk{\ket{\Phi}\bra{\Phi}_{A B A' B'}}
  \end{align}
  where $\ket{\Phi}$ is used to denote the unnormalised maximally entangled state, i.e., $\ket{\Phi}_{A B A' B'} := \sum_{a,b} \ket{ab}_{A B}\ket{ab}_{A' B'} = \ket{\Phi}_{A A'} \otimes \ket{\Phi}_{B B'}$ and $\cM$ is viewed as a channel from $A' B' \rightarrow A B$. Since, all outputs of $\cM$ satisfy the Markov chain $A \leftrightarrow B \leftrightarrow C$, we have that
  \begin{align*}
    J_{ABC} &= J_{AB} J_B^{-1} J_{BC} \\
    &= \Id_A \otimes \Id_B\ \cdot\ |A|^{-1} \Id_B\ \cdot\ \cM(\Id_{A'} \otimes \ket{\Phi}\bra{\Phi}_{B B'}) \\
    &= \Id_A \otimes \cM(\tau_{A'} \otimes \ket{\Phi}\bra{\Phi}_{B B'})
    \numberthis
  \end{align*}
  where $\tau_{A'} = |A|^{-1} \Id_{A'}$ is the maximally mixed state on $A$. Let's define $\cN:{B'\rightarrow C}$ as 
  \begin{align}
    \cN(X_{B'}) = \cM\rndBrk{\tau_{A'} \otimes X_{B'}}.
  \end{align}
  Since, the Choi matrix is unique, we can see that 
  \begin{align}
    \cM_{A' B' \rightarrow C} = \cN_{B' \rightarrow C} \circ \tr_{A'}.
    \label{eq:M_tr_A}
  \end{align}
  Let $W_{A'B'}$ be the swap unitary matrix, i.e., $W_{A'B'}\ket{a b} = \ket{b a}$. Then, note that for all input states $\rho_{A B A' B'}$, the output $\cM(W_{A'B'} \rho_{A B A' B'} W_{A'B'}^{\dagger})$ satisfies the Markov chain $A \leftrightarrow B \leftrightarrow C$ according to the hypthosis in the lemma statement. In particular, we can carry out the above argument using the channel $\cM(W_{A'B'}\ \cdot\ W_{A'B'}^{\dagger})$ and that gives us that for all operators $X_{A'B'}$
  \begin{align}
    \cM(W_{A'B'} X_{A'B'} W_{A'B'}^{\dagger}) &=  \cN'_{B' \rightarrow C} \rndBrk{\tr_{A'}(X_{A'B'})}
  \end{align}
  which implies that 
  \begin{align}
    \cM( X_{A'B'}) &=  \cN'_{B' \rightarrow C} \rndBrk{\tr_{A'}(W_{A'B'} X_{A'B'} W_{A'B'}^{\dagger})} \nonumber \\
    &= \cN'_{A' \rightarrow C} \rndBrk{\tr_{B'}(X_{A'B'})}
    \label{eq:M_tr_B}
  \end{align}
  Using Eq. \ref{eq:M_tr_A} and \ref{eq:M_tr_B} for $X_{A'B'} = \sigma_{A'} \otimes \sigma_{B'}$ where $\sigma_{A'}$ and $\sigma_{B'}$ are arbitrary states, we have that 
  \begin{align}
    \cN_{B' \rightarrow C} \rndBrk{\sigma_{B'}} = \cN'_{A' \rightarrow C} \rndBrk{\sigma_{A'}}
  \end{align}
  which implies that both of these must be equal to a constant state. Let's call the state $\omega_C$. Using the fact that $\cM$ is trace-preserving, we then have that $\cM_{A'B' \rightarrow C}(X_{A'B'}) = \tr(X)\omega_C$. 
\end{proof}

\section{Testing for unstructured approximate entropy accumulation}
\label{sec:testing}

To incorporate testing to Theorem \ref{th:approx_EAT}, we follow \cite{Metger22}, which is itself based on \cite{Dupuis19}. \\

First, we will define the testing channels $\mathcal{T}_k$. These channels measure the outputs $A_k$ and $B_k$ of the state $\rho$ and output a result $X_k$ based on these measurements. Concretely, for every $k \in [n]$ the channel $\mathcal{T}_k : A_k B_k \rightarrow A_k B_k X_k$ is of the form
\begin{align}
    \mathcal{T}_k (\omega_{A_k B_k}) = \sum_{a, b} \Pi_{A_k}^{(a)} \otimes \Pi_{B_k}^{(b)} \omega_{A_k B_k} \Pi_{A_k}^{(a)} \otimes \Pi_{B_k}^{(b)} \otimes \ket{x(a,b)}\bra{x(a,b)}_{X_k}
    \label{eq:test_maps}
\end{align}
where $\{\Pi_{A_k}^{(a)}\}_a$ and $\{\Pi_{B_k}^{(b)}\}_b$ are orthogonal projectors and $x(\cdot)$ is some deterministic function which uses the measurements $a$ and $b$ to create the output register $X_k$.\\

Using these orthogonal projectors, we further define the orthogonal projectors on the registers $A_1^k B_1^k$
\begin{align}
  \Pi_{A_1^k B_1^k}^{(a_1^k, b_1^k)} := \bigotimes_{i=1}^k \rndBrk{\Pi_{A_i}^{(a_i)} \otimes \Pi_{B_i}^{(b_i)}}
\end{align}
for every $a_1^k, b_1^k$. Together these form a measurement on the registers $A_1^k B_1^k$.\\

Next we define the min-tradeoff functions. Let $\mathbb{P}$ be the set of probability distributions over the alphabet of $X$ registers. Let $R$ be any register isomorphic to $R_{k}$. For a probability $q \in \mathbb{P}$ and a channel $\cN_{k} : R_{k} \rightarrow A_k B_k$, we define the set 
\begin{align}
    \Sigma_k (q | \cN_{k}) := \curlyBrk{\nu_{A_k B_k X_k R} = \mathcal{T}_k \circ \cN_{k}(\omega_{R_{k}R}): \text{ for a state } \omega_{R_{k-1}R} \text{ such that }\nu_{X_k}= q}.
\end{align}

\begin{definition}
    A function $f: \mathbb{P} \rightarrow \mathbb{R}$ is called a min-tradeoff function for the channels $\{\cN_k \}_{k=1}^n$ if for every $k$, it satisfies
    \begin{align}
        f(q) \leq \inf_{\nu \in \Sigma_k (q| \cN_{k})} H(A_k | B_k R)_{\nu}.
    \end{align}
\end{definition}
We will also need the definitions of the following simple properties of the min-tradeoff functions for our entropy accumulation theorem:
\begin{align}
    &\text{max}(f) := \max_{q \in \mathbb{P}} f(q)\\
    &\text{min}(f) := \min_{q \in \mathbb{P}} f(q).
\end{align}
We now state the unstructured approximate EAT with testing. 
\begin{theorem}
  \label{th:approx_EAT_wtest}
    Let $\epsilon \in (0,1)$ and for every $k \in [n]$, the registers $A_k$ and $B_k$ be such that $|A_k| = |A|$ and $|B_k| = |B|$. Suppose, the state $\rho_{A_1^n B_1^n X_1^n E}$ is such that 
    \begin{enumerate}
      \item The registers $X_1^n$ can be recreated by applying the testing maps to the registers $A_1^n$ and $B_1^n$, that is, 
      \begin{align}
        \rho_{A_1^n B_1^n X_1^n E} = \mathcal{T}_n \circ \cdots \circ \mathcal{T}_1 (\rho_{A_1^n B_1^n E})
      \end{align}
      \item For every $k \in [n]$, there exists a channel $\cM_k: R_{k} \rightarrow A_k B_k$ and a state $\theta_{B_k}^{(k)}$ such that 
      \begin{align}
        &\tr_{X_k}\circ \mathcal{T}_k \circ \cM_k = \cM_k \\
        &\tr_{A_k}\circ \cM_k (X_{R_k}) = \tr(X) \theta_{B_k}^{(k)} \quad \text{ for all operators } X_{R_k}
      \end{align}
      and a state $\tilde{\rho}^{(k, 0)}_{A_1^{k-1} B_1^{k-1} E R_{k}}$ for which 
      \begin{align}
        \frac{1}{2}\norm{\rho_{A_1^k B_1^k E} - \cM_k(\tilde{\rho}^{(k, 0)}_{A_1^{k-1} B_1^{k-1} E R_{k}})}_1\leq \epsilon.
      \end{align} 
    \end{enumerate}
    Then, for an event $\Omega$ defined using $X_1^n$, an affine min-tradeoff function $f$ for $\{\cM_k\}_{k=1}^n$ such that for every $x_1^n \in \Omega$, $f(\text{freq}(x_1^n)) \geq h$, we have
    \begin{align}
      H_{\min}^{\mu' + \epsilon'}(A_1^n|B_1^n E)_{\rho_{|\Omega}} &\geq n(h  - V(3\sqrt{\mu} + 4\epsilon) -g_2 (2\epsilon)) \nonumber \\
      & \quad - \frac{V}{\sqrt{\mu}}\rndBrk{2\log\frac{1}{P_{\rho}(\Omega) - \mu} + \frac{2}{\mu^2} + 2 \log \frac{1}{1- \mu^2} + g_1(\epsilon', \mu')}
    \end{align}
    where 
    \begin{align}
      &\mu := \rndBrk{\frac{8 \sqrt{\epsilon} + 2\epsilon}{1- \epsilon^2/(|A| |B|)^2} \log\frac{|A| |B|}{\epsilon}}^{1/3}\\
      &\mu' := 2\sqrt{\frac{\mu}{P_{\rho}(\Omega)}} \\
      &V := \log \rndBrk{1 + 2|A|} + \lceil \max(f) - \min(f)\rceil\\
      &g_1(x, y):= - \log(1- \sqrt{1-x^2}) - \log (1-y^2) \\
      &g_2(x) :=  x\log \frac{1}{x} + (1+x)\log (1+x)
    \end{align}
    and $\epsilon' \in (0,1)$ such that $\mu' + \epsilon' <1$.
\end{theorem}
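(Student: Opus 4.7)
The overall strategy mirrors the proof of Theorem~\ref{th:approx_EAT}, with the standard EAT testing machinery from \cite{Metger22,Dupuis19} grafted onto it. The key point is that applying the testing channels $\mathcal{T}_k$ commutes with everything in sight (the $\mathcal{T}_k$ only measure $A_k, B_k$), so we can build the auxiliary state in two equivalent ways: either construct it for $\rho_{A_1^n B_1^n E}$ and then apply $\mathcal{T}_n \circ \cdots \circ \mathcal{T}_1$, or build it directly at the level of the states including $X_1^n$. We work with the latter.

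First I would apply the procedure from the proof of Theorem~\ref{th:approx_EAT}: pick an auxiliary $\nu > 0$, mix the states $\tilde{\rho}^{(k,0)}$ with $\tau$ to make them full rank, define the corrected states $\omega^{(k)}$ via Lemma~\ref{lemm:cond_state_dist} so that $\omega^{(k)}_{A_1^{k-1}B_1^{k-1}E} = \rho_{A_1^{k-1}B_1^{k-1}E}$, then mix with $\tau_{A_k B_k}\otimes \rho_{A_1^{k-1}B_1^{k-1}E}$ using parameter $\delta$ to obtain $\bar{\rho}^{(k)}$ realised as $\mathcal{M}_k^\delta(\omega^{(k,0)})$, where $\mathcal{M}_k^\delta = (1-\delta)\mathcal{M}_k + \delta\Delta_k$. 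Apply Lemma~\ref{lemm:cond_st_rel_ent_bd} with $\bar{\rho}^{(k)}$ now living on $A_1^k B_1^k E$ to obtain a normalised $\sigma_{A_1^n B_1^n E}$ with $D_m(\rho\|\sigma)\leq nz(\epsilon+\nu,\delta)$. Push $\sigma$ through the testing maps to define $\sigma_{A_1^n B_1^n X_1^n E} := \mathcal{T}_n \circ \cdots \circ \mathcal{T}_1(\sigma)$, so that $D_m(\rho_{A_1^n B_1^n X_1^n E}\|\sigma_{A_1^n B_1^n X_1^n E}) \leq nz(\epsilon+\nu,\delta)$ by data processing, and, crucially, $\sigma_{A_1^k B_1^k X_1^k E} = \mathcal{M}_k^\delta \circ \mathcal{T}_k(\sigma^{(k,0)})$ retains the independence of $B_k X_k$ from $A_1^{k-1}B_1^{k-1}E$ (since $\mathcal{T}_k$ commutes with the tracing-out of $A_k$).

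Next I would take $\mu := z(\epsilon+\nu,\delta)^{1/3}$ and use the substate theorem (with the $D_m$ form) to get $D_{\max}^{\mu}(\rho\|\sigma)\leq n\mu + 1/\mu^2 + \log(1/(1-\mu^2))$. Now condition on $\Omega$: using the standard ``conditioning costs $\log(1/P_{\rho}(\Omega))$ in $D_{\max}$'' lemma together with the fact that $|P_\rho(\Omega)-P_\sigma(\Omega)|$ is small (because the purified distance between $\rho$ and $\sigma$ inherited from $D_{\max}^\mu$ translates into a trace-distance bound $\leq \mu$ on the classical register $X_1^n$), derive
\[
D_{\max}^{\mu'}(\rho_{|\Omega}\|\sigma_{|\Omega}) \leq n\mu + \log\tfrac{1}{P_\rho(\Omega)-\mu} + \tfrac{1}{\mu^2} + \log\tfrac{1}{1-\mu^2},
\]
for $\mu' := 2\sqrt{\mu/P_\rho(\Omega)}$ (this rescaling of the smoothing parameter by $\sqrt{1/P_\rho(\Omega)}$ is the same one that appears in~\cite{Metger22}). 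Then apply the stronger entropic triangle inequality Eq.~\ref{eq:ent_tri_ineq_alpha_renyi} to get
\[
H_{\min}^{\mu'+\epsilon'}(A_1^n|B_1^n E)_{\rho_{|\Omega}} \geq \tilde{H}_\alpha^\uparrow(A_1^n|B_1^n E)_{\sigma_{|\Omega}} - \tfrac{\alpha}{\alpha-1}\bigl(n\mu + \log\tfrac{1}{P_\rho(\Omega)-\mu} + \tfrac{1}{\mu^2} + \log\tfrac{1}{1-\mu^2}\bigr) - \tfrac{g_1(\epsilon',\mu')}{\alpha-1}.
\]

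The final step is the standard EAT argument applied to $\sigma$ (which does satisfy the Markov chains $A_1^{k-1}\leftrightarrow B_1^{k-1}E\leftrightarrow B_k X_k$ by the independence above): convert $\tilde{H}_\alpha^\uparrow$ to $\tilde{H}_\alpha^\downarrow$ on the conditioned state, use the Dupuis--Fawzi chain rule \cite[Corollary~3.5]{Dupuis20} iteratively together with the min-tradeoff function $f$ to lower-bound the single-round $\tilde{H}_\alpha^\downarrow(A_k|B_k\tilde{R})_{\mathcal{M}_k(\cdot)}$ by the von Neumann entropy minus an $O(\alpha-1)$ error, and invoke the Metger--Renner trick for extracting $nh$ from $\sigma_{|\Omega}$ together with a second-order Taylor expansion around $\alpha=1$ governed by $V^2$. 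Finally, choose $\alpha = 1 + \sqrt{\mu}/V$ to balance the error terms, and let $\nu\to 0$. The non-full-rank case is handled by the usual perturbation argument as in Case~2 of Theorem~\ref{th:approx_EAT}.

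The main obstacle will be carefully tracking the conditioning on $\Omega$ in the interplay between the substate theorem (which needs a normalised reference) and the Markov chain structure of $\sigma$: one must ensure that conditioning on a classical event defined via $X_1^n$ preserves both the approximation to $\rho$ in smooth max-divergence and the Markov chains needed for the EAT chain rule. The bookkeeping around the $\sqrt{1/P_\rho(\Omega)}$ rescaling of the smoothing parameter, and producing $V(3\sqrt{\mu}+4\epsilon) + g_2(2\epsilon)$ as the correct per-round error (rather than $V\sqrt{\mu}$ alone as in standard EAT), is the delicate part where the approximation parameter $\epsilon$ couples nontrivially with the testing error.
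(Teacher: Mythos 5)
Your high-level plan tracks the paper's proof closely: build $\sigma$ via Lemma~\ref{lemm:cond_st_rel_ent_bd}, push it through the testing maps, condition on $\Omega$ using the $\mu' = 2\sqrt{\mu/P_\rho(\Omega)}$ rescaling, apply the $\alpha$-R\'enyi triangle inequality, and run the Metger--Renner $D_k$-register argument with $\alpha = 1 + \sqrt{\mu}/V$. You also correctly anticipate that the per-round error picks up a $4\epsilon V + g_2(2\epsilon)$ term beyond the standard $V\sqrt{\mu}$. However, there is one concrete gap that would cause your argument to fail as written.

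You propose mixing with $\tau_{A_k B_k}\otimes\rho_{A_1^{k-1}B_1^{k-1}E}$ to form $\bar{\rho}^{(k)}$, exactly as in Theorem~\ref{th:approx_EAT}. This breaks at the step where you split $\cM_k^\delta = (1-\delta)\cM_k + \delta\Delta_k$ and apply quasi-concavity of $\tilde H_\alpha^\downarrow$ to obtain the minimum in Eq.~\ref{eq:approxEAT_wtest_entropy_bd}. The second branch of that minimum is $\tilde H_\alpha^\downarrow(A_k D_k|B_k)$ evaluated on $\mathcal{D}_k\circ\mathcal{T}_k$ applied to the dilution state. With $\tau_{A_k B_k}$ as dilution this equals, up to the R\'enyi correction,
\begin{align}
H(A_k|B_k)_{\tau} + H(D_k|X_k) = \log|A| + \max(f) - f(p_{X,\tau}),
\end{align}
where $p_{X,\tau}$ is the $X_k$-distribution induced by measuring $\tau_{A_k B_k}$. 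A min-tradeoff function is only constrained on achievable distributions, and $p_{X,\tau}$ need not be achievable; hence $f(p_{X,\tau})$ can exceed $\log|A|$ by an arbitrary amount (the theorem explicitly allows $\max(f)-\min(f)$ to be large --- it enters $V$ and $|D|$). So you cannot lower-bound this branch by $\max(f)-O(\sqrt{\mu})$, and the chain rule collapses. The paper avoids this by diluting with $\rho^{(k,\delta)}_{A_k B_k} := (1-\delta)\rho_{A_k B_k}+\delta\tau_{A_k B_k}$ rather than $\tau_{A_k B_k}$: this state is $O(\epsilon+\delta)$-close in trace distance to $\cM_k(\tilde\rho^{(k,0)})_{A_k B_k}$, which \emph{is} an achievable output, so the Alicki--Fannes--Winter continuity bound transfers the bound from Eq.~\ref{eq:per_rnd_Halpha_bd1} to this branch at cost $2(\epsilon+\delta)\log|A||D| + g_2(\epsilon+\delta)$ --- precisely the $4\epsilon V + g_2(2\epsilon)$ term you noticed appearing in the statement. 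As a secondary consequence, $\bar\rho^{(k)}$ then only dominates $\delta^2\tau_{A_kB_k}\otimes\rho$ rather than $\delta\tau$, which is why the $z$ function (and hence $\mu$) in this theorem has the $8\sqrt{\epsilon}+2\epsilon$ numerator and $\delta^2$ in the denominator, unlike the expressions in Theorem~\ref{th:approx_EAT}.
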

\begin{proof}
  We first define the sets 
  \begin{align}
    \mathcal{A}_k := \curlyBrk{X_{A_1^k B_1^k E} : X_{A_1^k B_1^k E} = \sum_{a_1^k, b_1^k} \Pi^{(a_1^k b_1^k)}_{A_1^k B_1^k} X_{A_1^k B_1^k E} \Pi^{(a_1^k b_1^k)}_{A_1^k B_1^k}}
  \end{align}
  for every $0 \leq k \leq n$. It should be noted that each of these sets is a \emph{unital algebra}, i.e., a vector space over the field $\mathbb{C}$ closed under the matrix product containing identity $\Id_{A_1^k B_1^k E}$. It is also easy to see that 
  \begin{align}
    &\mathcal{A}_k \otimes \Id_{A_{k+1}B_{k+1}} := \curlyBrk{X_{A_1^k B_1^k E} \otimes \Id_{A_{k+1}B_{k+1}} : X_{A_1^k B_1^k E} \in \mathcal{A}_k} \subseteq \mathcal{A}_{k+1} \label{eq:algebra_tensor_Id}\\
    &\tr_{A_k B_k}(\mathcal{A}_k) := \curlyBrk{X_{A_1^{k-1} B_1^{k-1} E}: X_{A_1^k B_1^k E} \in \mathcal{A}_k} \subseteq \mathcal{A}_{k-1}. \label{eq:algebra_part_trace}
  \end{align}
  Finally, it should be noted that for all operators $X_{A_1^k B_1^k E} \in \mathcal{A}_k$, we have 
  \begin{align}
    \tr_{X_1^k} \circ \mathcal{T}_k \circ \cdots \circ \mathcal{T}_1 \rndBrk{X} = X.
  \end{align}
  The above equation in particular implies that one can measure a state in $\mathcal{A}_k$ using the channel $\mathcal{T}_k \circ \cdots \circ \mathcal{T}_1$ without disturbing the state. All the auxiliary states defined during this proof will be contained inside an appropriate set $\mathcal{A}_k$. Noting that these sets are, in fact, an algebra will make it easier to see this. For example, observe that each partial state $\rho_{A_1^k B_1^k E} \in \mathcal{A}_k$. \\

  \textbf{Case 1:} To begin, we restrict our attention to states $\rho_{A_1^n B_1^n E}$, which have full rank. Let $\nu \in (0,1)$ be an arbitrarily chosen small parameter. For every $k \in [n]$, define the states
  \begin{align}
    \tilde{\tilde{\rho}}^{(k, 0)}_{A_1^{k-1} B_1^{k-1} E R_{k}} &:= (1-\nu) \tilde{\rho}^{(k, 0)}_{A_1^{k-1} B_1^{k-1} E R_{k}} + \nu \tau_{A_1^{k-1} B_1^{k-1} E R_{k}} \\
    \tilde{\tilde{\rho}}^{(k, 1)}_{A_1^{k-1} B_1^{k-1} X_1^{k-1} E R_{k}} &:= \mathcal{T}_{k-1} \circ \cdots \circ \mathcal{T}_{1} (\tilde{\tilde{\rho}}^{(k, 0)}_{A_1^{k-1} B_1^{k-1} E R_{k}}) \\
    \tilde{\tilde{\rho}}^{(k)}_{A_1^{k} B_1^{k} X_1^k E} &:= \mathcal{T}_{k} \circ \cM_k\rndBrk{\tilde{\tilde{\rho}}^{(k, 1)}_{A_1^{k-1} B_1^{k-1} X_1^{k-1} E R_{k}}}.
  \end{align}
  Since the maps $\tr_{X_i}\circ \mathcal{T}_{i}$ are unital, for every $k$, we have 
  \begin{align}
    \tilde{\tilde{\rho}}^{(k)}_{A_1^{k-1} B_1^{k-1} E} &= \tilde{\tilde{\rho}}^{(k, 1)}_{A_1^{k-1} B_1^{k-1} E} \label{eq:tilde_tilde_rho_partial_st_eq}\\
    &\geq \nu \tau_{A_1^{k-1} B_1^{k-1} E}.
  \end{align} 
  Thus, the states $\tilde{\tilde{\rho}}^{(k)}_{A_1^{k-1} B_1^{k-1} E}$ are also full rank. Moreover, we have that 
  \begin{align}
    \tilde{\tilde{\rho}}^{(k)}_{A_1^{k} B_1^{k} E} \in \mathcal{A}_k. 
    \label{eq:tilde_tilde_rho_in_algA}
  \end{align}
  For each $k$ these states satisfy
  \begin{align*}
    \frac{1}{2}\norm{\rho_{A_1^{k} B_1^{k} X_1^k E} - \tilde{\tilde{\rho}}^{(k)}_{A_1^{k} B_1^{k} X_1^k E}}
    &= \frac{1}{2}\norm{\rho_{A_1^k B_1^k X_1^k E} - \mathcal{T}_k \circ \cM_k\rndBrk{\tilde{\tilde{\rho}}^{(k,1)}_{A_1^{k-1} B_1^{k-1} X_1^{k-1} E R_{k}}}}_1 \\
    &= \frac{1- \nu}{2}\norm{\mathcal{T}_k \circ \cdots \circ \mathcal{T}_1 (\rho_{A_1^k B_1^k E}) - \mathcal{T}_k \circ \cdots \circ \mathcal{T}_1\circ  \cM_k\rndBrk{\tilde{\rho}^{(k,0)}_{A_1^{k-1} B_1^{k-1} E R_{k}}}}_1 + \nu \\
    &\leq \frac{1- \nu}{2}\norm{\rho_{A_1^k B_1^k E} - \cM_k\rndBrk{\tilde{\rho}^{(k,0)}_{A_1^{k-1} B_1^{k-1} E R_{k}}}}_1 + \nu \\
    &\leq \epsilon + \nu. \numberthis
  \end{align*}
  Now, for each $k \in [n]$, we define the normalised states
  \begin{align}
    &\omega^{(k,1)}_{A_1^{k-1} B_1^{k-1} R_k E} := \rho_{A_1^{k-1} B_1^{k-1} E}^{1/2} \rndBrk{\tilde{\tilde{\rho}}^{(k)}_{A_1^{k-1} B_1^{k-1} E}}^{-1/2} \tilde{\tilde{\rho}}^{(k,1)}_{A_1^{k-1} B_1^{k-1} R_k E} \rndBrk{\tilde{\tilde{\rho}}^{(k)}_{A_1^{k-1} B_1^{k-1} E}}^{-1/2} \rho_{A_1^{k-1} B_1^{k-1} E}^{1/2}
  \end{align}
  and
  \begin{align}
    \omega^{(k)}_{A_1^{k} B_1^{k} E} &:= \cM_k\rndBrk{\omega^{(k,1)}_{A_1^{k-1} B_1^{k-1} R_k E}} \\
    &= \rho_{A_1^{k-1} B_1^{k-1} E}^{1/2} \rndBrk{\tilde{\tilde{\rho}}^{(k)}_{A_1^{k-1} B_1^{k-1} E}}^{-1/2} \tr_{X_k}\circ \mathcal{T}_k \circ \cM_k \rndBrk{\tilde{\tilde{\rho}}^{(k,1)}_{A_1^{k-1} B_1^{k-1} R_k E}} \rndBrk{\tilde{\tilde{\rho}}^{(k)}_{A_1^{k-1} B_1^{k-1} E}}^{-1/2} \rho_{A_1^{k-1} B_1^{k-1} E}^{1/2} \\
    &= \rho_{A_1^{k-1} B_1^{k-1} E}^{1/2} \rndBrk{\tilde{\tilde{\rho}}^{(k)}_{A_1^{k-1} B_1^{k-1} E}}^{-1/2} \tilde{\tilde{\rho}}^{(k)}_{A_1^{k} B_1^{k} E} \rndBrk{\tilde{\tilde{\rho}}^{(k)}_{A_1^{k-1} B_1^{k-1} E}}^{-1/2} \rho_{A_1^{k-1} B_1^{k-1} E}^{1/2}.
  \end{align}
  We used $\tr_{X_k}\circ \mathcal{T}_k \circ \cM_k = \cM_k$ above. Observe that $\omega^{(k)}_{A_1^{k} B_1^{k} E} \in \mathcal{A}_k$ (using Eq. \ref{eq:tilde_tilde_rho_in_algA}, \ref{eq:algebra_tensor_Id} and \ref{eq:algebra_part_trace}). Since, we defined $\tilde{\tilde{\rho}}^{(k)}_{A_1^{k-1} B_1^{k-1} E}$ to be full rank, we have that 
  \begin{align}
    \omega^{(k)}_{A_1^{k-1} B_1^{k-1}  E} = \rho_{A_1^{k-1} B_1^{k-1}  E}.
  \end{align} 
  Using Lemma \ref{lemm:cond_state_dist}, we have that 
  \begin{align*}
    \frac{1}{2}\norm{\rho_{A_1^{k} B_1^{k} E} - \omega^{(k)}_{A_1^{k} B_1^{k} E}}_1 &\leq (\sqrt{2} +1) P(\rho_{A_1^{k} B_1^{k} E}, \tilde{\tilde{\rho}}_{A_1^{k} B_1^{k} E}) \\
    &\leq (\sqrt{2} +1)\sqrt{2(\epsilon + \nu)}\\
    &\leq 4\sqrt{\epsilon + \nu}. \numberthis 
  \end{align*}
  Let $\delta \in (0,1)$ be a small parameter (to be set equal to $\epsilon$ later). Define the states 
  \begin{align}
    \rho^{(k, \delta)}_{A_k B_k} := (1-\delta)\rho_{A_k B_k} + \delta \tau_{A_k B_k}.
  \end{align}
  Finally, for every $k \in [n]$, we define the states 
  \begin{align}
    \bar{\rho}^{(k)}_{A_1^{k} B_1^{k} E} :=& (1-\delta) \omega^{(k)}_{A_1^{k} B_1^{k} E} + \delta \rho^{(k, \delta)}_{A_k B_k} \otimes \rho_{A_1^{k-1} B_1^{k-1} E} \\
    =& (1-\delta) \omega^{(k)}_{A_1^{k} B_1^{k} E} + \delta \rho^{(k, \delta)}_{A_k B_k} \otimes \omega^{(k)}_{A_1^{k-1} B_1^{k-1} E}
  \end{align}
  Also, define $\bar{\rho}^{(0)}_E: =\rho_E$. For each $0 \leq k \leq n$, the state $\bar{\rho}^{(k)}_{A_1^{k} B_1^{k} E} \in \mathcal{A}_k$ using Eq. \ref{eq:algebra_tensor_Id}. We have taken a slight diversion from the proof of Theorem \ref{th:approx_EAT} in defining the above state. This has been done to ensure we are able to bound the entropy in Eq. \ref{eq:per_rnd_Halpha_bd2}.\\

  Let $\Delta_k : {R_k \rightarrow A_k B_k}$ be the map which traces out the register $R_k$ and outputs $\rho^{(k, \delta)}_{A_k B_k}$. Further, let 
  \begin{align}
    \cM^{\delta}_k := (1-\delta) \cM_k + \delta \Delta_k.
  \end{align}
  Note that similar to $\cM_k$, $\cM^{\delta}_k$ also satisfies 
  \begin{align}
    \cM^{\delta}_k = \tr_{X_k} \circ \mathcal{T}_k \circ \cM^{\delta}_k.
  \end{align}
  Then, we have that 
  \begin{align*}
    \bar{\rho}^{(k)}_{A_1^{k} B_1^{k} E} &= (1-\delta) \omega^{(k)}_{A_1^{k} B_1^{k} E} + \delta \rho^{(k, \delta)}_{A_k B_k} \otimes \omega^{(k)}_{A_1^{k-1} B_1^{k-1} E} \\
    &= \rndBrk{(1-\delta) \cM_k + \delta \Delta_k}\rndBrk{\omega^{(k,1)}_{A_1^{k-1} B_1^{k-1} R_k E}} \\
    &=  \cM^\delta_k \rndBrk{\omega^{(k,1)}_{A_1^{k-1} B_1^{k-1} R_k E}} . \numberthis
  \end{align*}
  We also have that 
  \begin{align}
    \frac{1}{2}\norm{\bar{\rho}^{(k)}_{A_1^{k} B_1^{k} E} - \rho_{A_1^{k} B_1^{k} E}}_1 \leq 4 \sqrt{\epsilon + \nu} + \delta
  \end{align}
  and by the definition of $\bar{\rho}^{(k)}_{A_1^{k} B_1^{k} E}$
  \begin{align}
    \delta^2 \tau_{A_k B_k} \otimes \rho_{A_1^{k-1} B_1^{k-1} E} \leq \bar{\rho}^{(k)}_{A_1^{k} B_1^{k} E}
  \end{align}
  Using Corollary \ref{cor:TRE_bd}, gives us that 
  \begin{align}
    D(\rho_{A_1^k B_1^k E} || \bar{\rho}^{(k)}_{A_1^k B_1^k E})
    &\leq \frac{8 \sqrt{\epsilon + \nu} + 2\delta}{1- \delta^2/(|A| |B|)^2} \log\frac{|A||B|}{\delta}.
  \end{align}
  We define the above bound as $z(\epsilon+ \nu, \delta)$. Using Lemma \ref{lemm:cond_st_rel_ent_bd}, for the normalised auxiliary state
  \begin{align}
    \sigma_{A_1^n B_1^n E} &:= \int_{-\infty}^{\infty} dt \beta_0 (t) \prod_{k=0}^{n-1} \sqBrk{\rndBrk{\bar{\rho}^{(k)}_{A_1^k B_1^k E}}^{\frac{1-it}{2}} \rndBrk{\bar{\rho}^{(k+1)}_{A_1^{k} B_1^k E}}^{-\frac{1-it}{2}}} \cdot \bar{\rho}^{(n)}_{A_1^n B_1^n E} \cdot \prod_{k= n-1}^{0} \sqBrk{\rndBrk{\bar{\rho}^{(k+1)}_{A_1^{k} B_1^kE}}^{-\frac{1+it}{2}} \rndBrk{\bar{\rho}^{(k)}_{A_1^k B_1^k E}}^{\frac{1+it}{2}}} \label{eq:approx_EAT_wtest_sigma_def1}
  \end{align}
  we have that 
  \begin{align}
    D_m (\rho_{A_1^n B_1^n E} || \sigma_{A_1^n B_1^n E}) \leq n z(\epsilon + \nu, \delta)
  \end{align}
  and
  \begin{align}
    \sigma&_{A_1^k B_1^k E} \nonumber \\
    &= \int_{-\infty}^{\infty} dt \beta_0 (t) \prod_{j=0}^{k-1} \sqBrk{\rndBrk{\bar{\rho}^{(j)}_{A_1^j B_1^j E}}^{\frac{1-it}{2}} \rndBrk{\bar{\rho}^{(j+1)}_{A_1^{j} B_1^j E}}^{-\frac{1-it}{2}}} \cdot \bar{\rho}^{(k)}_{A_1^k B_1^k E} \cdot \prod_{j= k-1}^{0} \sqBrk{\rndBrk{\bar{\rho}^{(j+1)}_{A_1^{j} B_1^jE}}^{-\frac{1+it}{2}} \rndBrk{\bar{\rho}^{(j)}_{A_1^j B_1^j E}}^{\frac{1+it}{2}}} \label{eq:approx_EAT_wtest_sigma_partial_st1} \\
    &= \cM^\delta_k \bigg(\int_{-\infty}^{\infty} dt \beta_0 (t) \prod_{j=0}^{k-1} \sqBrk{\rndBrk{\bar{\rho}^{(j)}_{A_1^j B_1^j E}}^{\frac{1-it}{2}} \rndBrk{\bar{\rho}^{(j+1)}_{A_1^{j} B_1^j E}}^{-\frac{1-it}{2}}} \cdot \omega^{(k,1)}_{A_1^{k-1} B_1^{k-1} R_k E} \cdot \prod_{j= k-1}^{0} \sqBrk{\rndBrk{\bar{\rho}^{(j+1)}_{A_1^{j} B_1^jE}}^{-\frac{1+it}{2}} \rndBrk{\bar{\rho}^{(j)}_{A_1^j B_1^j E}}^{\frac{1+it}{2}}} \bigg)
    \label{eq:approxEAT_test_sigma_map_op}
  \end{align}
  for all $k \in [n]$. Let $\sigma^{(k,0)}_{A_1^{k-1} B_1^{k-1} R_k E}$ be the input state for $\cM^\delta_k$ above, so that
  \begin{align}
    \sigma_{A_1^k B_1^k E} &= \cM^\delta_k\rndBrk{\sigma^{(k,0)}_{A_1^{k-1} B_1^{k-1} R_k E}}
  \end{align}
  It is also easy to see using the properties of the algebras $\mathcal{A}_k$ that $\sigma_{A_1^k B_1^k E} \in \mathcal{A}_k$ for each $k$. Thus, we can extend the state $\sigma$ as follows using measurements $(\mathcal{T}_i)_i$
  \begin{align*}
    \sigma_{A_1^n B_1^n X_1^n E} &= \mathcal{T}_n \circ \cdots \circ \mathcal{T}_1\rndBrk{\sigma_{A_1^n B_1^n E}}.
  \end{align*}
  Note that the partial state
  \begin{align*}
    \sigma_{A_1^k B_1^k X_1^k E} &= \tr_{A_{k+1}^n B_{k+1}^n X_{k+1}^n} \rndBrk{\mathcal{T}_n \circ \cdots \circ \mathcal{T}_1\rndBrk{\sigma_{A_1^n B_1^n  E}}}\\
    &= \mathcal{T}_{k} \circ \cdots \circ \mathcal{T}_1\rndBrk{\tr_{A_{k+1}^n B_{k+1}^n X_{k+1}^n}\rndBrk{\mathcal{T}_{n} \circ \cdots \circ \mathcal{T}_{k+1}\rndBrk{\sigma_{A_1^n B_1^n E}}}} \\
    &= \mathcal{T}_{k} \circ \cdots \circ \mathcal{T}_1\rndBrk{\sigma_{A_1^k B_1^k E}} \numberthis 
    \label{eq:X_for_sigma_const}\\
    &= \mathcal{T}_{k} \circ \cM^\delta_k \circ \mathcal{T}_{k-1} \circ \cdots \circ \mathcal{T}_1\rndBrk{\sigma^{(k,0)}_{A_1^{k-1} B_1^{k-1} R_k E}} 
    \numberthis
    \label{eq:sigma_in_corr_form}
  \end{align*} 
  Let's define $\mu := z(\epsilon + \nu, \delta)^{1/3}$. Using the substate theorem (Theorem \ref{thm:substate_th}), we get the following bound from the above relative entropy bound
  \begin{align}
    D^{\mu}_{\max}(\rho_{A_1^n B_1^n  E} || \sigma_{A_1^n B_1^n  E}) \leq n \mu + \frac{1}{\mu^2} + \log\frac{1}{1- \mu^2}
  \end{align}
  which using data processing also implies that 
  \begin{align}
    D^{\mu}_{\max}(\rho_{A_1^n B_1^n  X_1^n E} || \sigma_{A_1^n B_1^n X_1^n E}) \leq n \mu + \frac{1}{\mu^2} + \log\frac{1}{1- \mu^2}
  \end{align}
  The bound above implies that there exists a state ${\rho}'_{A_1^n B_1^n X_1^n E}$, which is also classical on $X_1^n$ such that 
  \begin{align}
      P\rndBrk{\rho_{A_1^n B_1^n  X_1^n E}, {\rho}'_{A_1^n B_1^n  X_1^n E}} \leq \mu
  \end{align}
  and 
  \begin{align}
      {\rho}'_{A_1^n B_1^n  X_1^n E} \leq \frac{e^{n \mu + \frac{1}{\mu^2}}}{1- \mu^2}\sigma_{A_1^n B_1^n  X_1^n E}. 
      \label{eq:EAT_prime_rho_op_ineq}
  \end{align}
  The registers $X_1^n$ for ${\rho}'$ can be chosen to be classical, since the channel measuring $X_1^n$ only decreases the distance between ${\rho}'$ and $\rho$, and the new state produced would also satisfy Eq. \ref{eq:EAT_prime_rho_op_ineq}. As the registers $X_1^n$ are classical for both $\sigma$ and ${\rho}'$, we can condition these states on the event $\Omega$. We will call the probability of the event $\Omega$ for the state $\sigma$ and ${\rho}'$ $P_{\sigma}(\Omega)$ and $P_{{\rho}'}(\Omega)$ respectively. Using \cite[Lemma G.1]{Marwah23} and the Fuchs-van de Graaf inequality, we have 
  \begin{align}
      P\rndBrk{\rho_{A_1^n B_1^n X_1^n E|\Omega} , {\rho}'_{A_1^n B_1^n X_1^n E|\Omega}} \leq 2 \sqrt{\frac{\mu}{P_{{\rho}}(\Omega)}}.
  \end{align}
  Conditioning Eq. \ref{eq:EAT_prime_rho_op_ineq} on $\Omega$, we get 
  \begin{align}
      P_{{\rho}'}(\Omega) {\rho}'_{A_1^n B_1^n X_1^n E|\Omega} \leq \frac{e^{n \mu + \frac{1}{\mu^2}}}{1- \mu^2} P_{\sigma}(\Omega) \sigma_{A_1^n B_1^n X_1^n E|\Omega}. 
      \label{eq:EAT_prime_rho_op_ineq_cond}
  \end{align}
  Together, the above two equations imply that 
  \begin{align}
      D^{\mu'}_{\max}(\rho_{A_1^n B_1^n X_1^n E|\Omega} ||\sigma_{A_1^n B_1^n X_1^n E|\Omega}) \leq n \mu + \frac{1}{\mu^2} + \log\frac{P_{\sigma} (\Omega)}{P_{{\rho}'}(\Omega)} + \log \frac{1}{1- \mu^2}
  \end{align}
  for $\mu' := 2 \sqrt{\frac{\mu}{P_{{\rho}}(\Omega)}}$. \\

  For $\epsilon' >0$ such that $\mu' + \epsilon' < 1$ and $\alpha \in (1, 2]$, we can plug the above in the bound provided by Lemma \ref{lemm:ent_tri_ineq} to get
  \begin{align}
      H_{\min}^{\mu' + \epsilon'}(A_1^n|B_1^n E)_{\rho_{|\Omega}} &\geq \tilde{H}^{\uparrow}_{\alpha}(A_1^n|B_1^n E)_{\sigma_{|\Omega}} - \frac{\alpha}{\alpha-1} n \mu \nonumber \\
      & \qquad - \frac{1}{\alpha-1}\rndBrk{\alpha\log\frac{P_{\sigma}(\Omega)}{P_{{\rho}'}(\Omega)}+ \frac{\alpha}{\mu^2} + \alpha\log\frac{1}{1-\mu^2} +g_1(\epsilon', \mu')}.
      \label{eq:Hmin_rho_to_Halpha_sigma_wtest}
  \end{align}
  Now, note that using Eq. \ref{eq:test_maps} and \ref{eq:X_for_sigma_const}, and \cite[Lemma B.7]{Dupuis20} we have 
  \begin{align}
      \tilde{H}^{\uparrow}_{\alpha}(A_1^n|B_1^n E)_{\sigma_{|\Omega}} = \tilde{H}^{\uparrow}_{\alpha}(A_1^n X_1^n|B_1^n E)_{\sigma_{|\Omega}}.
      \label{eq:Halpha_sigma_adding_X}
  \end{align}
  For every $k$, we introduce a register $D_k$ of dimension $|D| := \lceil e^{\max(f)- \min(f)} \rceil$ and the channels $\mathcal{D}_k : X_k \rightarrow X_k D_k$ defined as 
  \begin{align}
      \mathcal{D}_k(\omega) := \sum_x \braket{x|\omega|x} \ket{x}\bra{x}\otimes \nu_x
  \end{align}
  where for every $x$, the state $\nu_x$ is a mixture between a uniform distribution on $\{1, 2, \cdots\ , \lfloor e^{\max(f)- f(\delta_x)} \rfloor\}$ and a uniform distribution on $\{1, 2, \cdots\ , \lceil e^{\max(f)- f(\delta_x)} \rceil\}$, so that
  \begin{align}
      H(D_k)_{\nu_x} = \max(f)- f(\delta_x)
  \end{align}
  where $\delta_x$ is the distribution with unit weight at element $x$. Define the state 
  \begin{align}
      \bar{\sigma}_{A_1^n B_1^n X_1^n D_1^n E} := {\mathcal{D}}_n \circ \cdots \circ {\mathcal{D}}_1 ({\sigma}_{A_1^n B_1^n X_1^n E})
      \label{eq:sigma_bar_defn}
  \end{align}
  Now \cite[Lemma 4.5]{Metger22} implies that this satisfies
  \begin{align}
      \tilde{H}^{\uparrow}_{\alpha}(A_1^n X_1^n|B_1^n E)_{\bar{\sigma}_{|\Omega}} \geq \tilde{H}^{\uparrow}_{\alpha}(A_1^n X_1^n D_1^n |B_1^n E)_{\bar{\sigma}_{|\Omega}} - \max_{x_1^n \in \Omega} H_{\alpha}(D_1^n)_{\bar{\sigma}_{|x_1^n}}
      \label{eq:Halpha_AX_to_AXD_bd}
  \end{align}
  For $x_1^n \in \Omega$, we have 
  \begin{align*}
      H_{\alpha}(D_1^n)_{\bar{\sigma}_{|x_1^n}} &\leq H (D_1^n)_{\bar{\sigma}_{|x_1^n}} \\
      &= \sum_{k=1}^n H(D_k)_{\nu_{x_k}} \\
      &= \sum_{k=1}^n \max(f)- f(\delta_{x_k}) \\
      &= n \max(f) - n f(\text{freq}(x_1^n)) \\
      &\leq n \max(f) - n h.
      \numberthis
      \label{eq:Halpha_on_Omega}
  \end{align*}
  We can get rid of the conditioning on the right-hand side of Eq. \ref{eq:Halpha_AX_to_AXD_bd} by using \cite[Lemma B.5]{Dupuis20}. This gives us
  \begin{align}
      \tilde{H}^{\uparrow}_{\alpha}(A_1^n X_1^n D_1^n |B_1^n E)_{\bar{\sigma}_{|\Omega}} &\geq \tilde{H}^{\uparrow}_{\alpha}(A_1^n X_1^n D_1^n |B_1^n E)_{\bar{\sigma}} - \frac{\alpha}{\alpha-1}\log \frac{1}{P_{\sigma}(\Omega)}
      \label{eq:rem_cond_Halpha_AXD_bd}
  \end{align}
  Moreover, using Eq. \ref{eq:sigma_in_corr_form}, we can show that $B_k$ is independent of $A_1^{k-1} X_1^{k-1} D_1^{k-1} B_1^{k-1} E$ in $\bar{\sigma}$. Firstly, for $\sigma_{A_1^{k-1} B_1^{k} E} = \tr_{A_k} \circ \cM^{\delta}_k (\sigma^{(k, 0)}_{A_1^{k-1} B_1^{k-1} R_k E})$, we have 
  \begin{align}
    \sigma_{A_1^{k-1} B_1^{k} E}
    &= (1-\delta) \tr_{A_k}\circ \cM_k (\sigma^{(k,0)}_{A_1^{k-1} B_1^{k-1} R_k E}) + \delta \rho^{(k, \delta)}_{B_k} \otimes \sigma_{A_1^{k-1} B_1^{k-1} E} \\
    &= \rndBrk{(1-\delta) \theta^{(k)}_{B_k} + \delta \rho^{(k, \delta)}_{B_k}} \otimes \sigma_{A_1^{k-1} B_1^{k-1} E}.
  \end{align}
  Since, $\bar{\sigma}_{A_1^{k-1} X_1^{k-1} D_1^{k-1} B_1^{k} E} = \mathcal{D}_{k-1} \circ \mathcal{T}_{k-1} \cdots \circ \mathcal{D}_1 \circ \mathcal{T}_1 (\sigma_{A_1^{k-1} B_1^{k} E})$, we can easily see that $B_k$ is independent of the other registers. In particular, $\bar{\sigma}$ satisfies the Markov chain $A_1^{k-1} X_1^{k-1} D_1^{k-1} \leftrightarrow B_1^{k-1} E \leftrightarrow B_k$. Let's define the channels 
  \begin{align}
    &\bar{\cM}_k := \mathcal{D}_k \circ \mathcal{T}_k \circ \cM_k \\
    &\bar{\cM}^{\delta}_k := \mathcal{D}_k \circ \mathcal{T}_k \circ \cM^{\delta}_k.
  \end{align}
  Now we can use \cite[Corollary 3.5]{Dupuis20} to show that for every $k \in [n]$
  \begin{align*}
    \tilde{H}^{\downarrow}_{\alpha}(A_1^k X_1^k D_1^k | B_1^k E)_{\bar{\sigma}} &\geq \tilde{H}^{\downarrow}_{\alpha}(A_1^{k-1} X_1^{k-1} D_1^{k-1} | B_1^{k-1} E)_{\bar{\sigma}} + \inf_{\omega_{R_k \tilde{R}_k}} \tilde{H}^{\downarrow}_{\alpha} (A_k X_k D_k| B_k \tilde{R}_k)_{\bar{\cM}_k^{\delta}(\omega)} \\
    &\geq \tilde{H}^{\downarrow}_{\alpha}(A_1^{k-1} X_1^{k-1} D_1^{k-1} | B_1^{k-1} E)_{\bar{\sigma}} \\ 
    & \qquad \qquad + \min\curlyBrk{\inf_{\omega_{R_k \tilde{R}_k}} \tilde{H}^{\downarrow}_{\alpha} (A_k D_k | B_k \tilde{R}_k)_{\bar{\cM}_k(\omega)}, \tilde{H}^{\downarrow}_{\alpha} (A_k D_k | B_k)_{\mathcal{D}_k \circ \mathcal{T}_k(\rho^{(k,\delta)}_{A_k B_k})}}.
    \numberthis
    \label{eq:approxEAT_wtest_entropy_bd}
  \end{align*}
  where we have used the quasi-concavity of R\'enyi conditional entropies \cite[Pg 73]{TomamichelBook16} and the fact that $X_k$ are classical in the second line. \\

  We now lower bound the two terms in the minimum above. Using \cite[Lemma B.9]{Dupuis20}, for a state $\nu = \bar{\cM}_k(\omega_{R_k \tilde{R}_k})$ and $1 < \alpha < \frac{1}{\log (1 + 2|A||D|)}$ we have that 
  \begin{align*}
    \tilde{H}^{\downarrow}_{\alpha} (A_k D_k | B_k \tilde{R}_k)_{\nu} &\geq H (A_k D_k | B_k \tilde{R}_k)_{\nu} - \rndBrk{\alpha-1} \log^2 \rndBrk{1 + 2|A||D|} \\
    &= H (A_k | B_k \tilde{R}_k)_{\nu} + H(D_k | A_k B_k \tilde{R}_k)_{\nu} - \rndBrk{\alpha-1} \log^2 \rndBrk{1 + 2|A||D|} \\
    &= H (A_k | B_k \tilde{R}_k)_{\nu} + H(D_k | X_k)_{\nu} - \rndBrk{\alpha-1} \log^2 \rndBrk{1 + 2|A||D|}\\
    &= H (A_k | B_k \tilde{R}_k)_{\nu} + \sum_x \nu(x) (\text{max}(f) - f(\delta_x)) - \rndBrk{\alpha-1} \log^2 \rndBrk{1 + 2|A||D|}\\
    &= H (A_k | B_k \tilde{R}_k)_{\nu} + \text{max}(f) - f(\nu_X) - \rndBrk{\alpha-1} \log^2 \rndBrk{1 + 2|A||D|}\\
    &\geq \text{max}(f) - \rndBrk{\alpha-1} \log^2 \rndBrk{1 + 2|A||D|}
    \numberthis
    \label{eq:per_rnd_Halpha_bd1}
  \end{align*}
  For the second term, and $1 < \alpha < \frac{1}{\log (1 + 2|A||D|)}$, we have
  \begin{align*}
    \tilde{H}^{\downarrow}_{\alpha} (A_k D_k | B_k)_{\mathcal{D}_k \circ \mathcal{T}_k(\rho^{(k, \delta)}_{A_k B_k})} &\geq H (A_k D_k | B_k )_{\mathcal{D}_k \circ \mathcal{T}_k(\rho^{(k, \delta)}_{A_k B_k})} - \rndBrk{\alpha-1} \log^2 \rndBrk{1 + 2|A||D|} \\
    &\geq H (A_k D_k | B_k )_{\bar{\cM}_k(\tilde{\rho}^{(k,0)}_{R_k})} - \rndBrk{\alpha-1} \log^2 \rndBrk{1 + 2|A||D|} \\
    &\qquad \qquad  - 2(\epsilon + \delta) \log |A||D| - g(\epsilon + \delta)\\
    &\geq \text{max}(f) - \rndBrk{\alpha-1} \log^2 \rndBrk{1 + 2|A||D|} - 2(\epsilon + \delta) \log |A||D| - g_2(\epsilon + \delta)
    \numberthis
    \label{eq:per_rnd_Halpha_bd2}
  \end{align*}
  where we have used \cite[Lemma B.9]{Dupuis20} in the first line, the AFW continuity bound \cite[Theorem 11.10.3]{Wilde13} in the second line to convert the entropy on $\mathcal{D}_k \circ \mathcal{T}_k(\rho^{(k, \delta)}_{A_k B_k})$ to an entropy on $\bar{\cM}_k(\tilde{\rho}^{(k,0)}_{R_k})$ and finally we use bound in Eq. \ref{eq:per_rnd_Halpha_bd1} for states of this form. \\

  Plugging these in Eq. \ref{eq:approxEAT_wtest_entropy_bd}, gives us that for every $k \in [n]$
  \begin{align}
    \tilde{H}^{\downarrow}_{\alpha}(A_1^k X_1^k D_1^k | B_1^k E)_{\bar{\sigma}} &\geq \tilde{H}^{\downarrow}_{\alpha}(A_1^{k-1} X_1^{k-1} D_1^{k-1} | B_1^{k-1} E)_{\bar{\sigma}} \nonumber\\ 
    & \qquad \qquad + \text{max}(f)  - \rndBrk{\alpha-1} \log^2 \rndBrk{1 + 2|A||D|}\nonumber \\
    & \qquad \qquad - 2(\epsilon + \delta) \log |A||D| - g_2(\epsilon + \delta)
  \end{align}
  Consecutively using this bound gives us
  \begin{align*}
    \tilde{H}^{\uparrow}_{\alpha}(A_1^n X_1^n D_1^n | B_1^n E)_{\bar{\sigma}} &\geq \tilde{H}^{\downarrow}_{\alpha}(A_1^n X_1^n D_1^n | B_1^n E)_{\bar{\sigma}} \\
    &\geq  n\text{max}(f)  - n\rndBrk{\alpha-1} \log^2 \rndBrk{1 + 2|A||D|}\\
    &\qquad \qquad - 2n(\epsilon + \delta) \log |A||D| - ng_2(\epsilon + \delta)
    \numberthis
    \label{eq:Halpha_sigma_AXD_bd}
  \end{align*}
  Combining Eq. \ref{eq:Halpha_sigma_adding_X}, \ref{eq:Halpha_AX_to_AXD_bd}, \ref{eq:Halpha_on_Omega}, \ref{eq:rem_cond_Halpha_AXD_bd} and \ref{eq:Halpha_sigma_AXD_bd}, we get 
  \begin{align*}
    \tilde{H}^{\uparrow}_{\alpha}(A_1^n | B_1^n E)_{\sigma_{|\Omega}} &\geq nh  - n\rndBrk{\alpha-1} \log^2 \rndBrk{1 + 2|A||D|} - 2n(\epsilon + \delta) \log |A||D| - ng_2(\epsilon + \delta) \\
    &\qquad \qquad - \frac{\alpha}{\alpha-1}\log \frac{1}{P_{\sigma}(\Omega)}.
    \numberthis
  \end{align*}
  Plugging this into Eq. \ref{eq:Hmin_rho_to_Halpha_sigma_wtest}, we get
  \begin{align*}
    H_{\min}^{\mu' + \epsilon'}(A_1^n|B_1^n E)_{\rho_{|\Omega}} &\geq nh  - n\rndBrk{\alpha-1} \log^2 \rndBrk{1 + 2|A||D|} - \frac{\alpha}{\alpha-1} n \mu \\
      & \qquad - 2n(\epsilon + \delta) \log |A||D| - ng_2(\epsilon + \delta) \\
      & \qquad - \frac{1}{\alpha-1}\rndBrk{\alpha\log\frac{1}{P_{{\rho}'}(\Omega)} + \frac{\alpha}{\mu^2} + \alpha \log \frac{1}{1- \mu^2} + g_1(\epsilon', \mu')}.
      \numberthis
  \end{align*}
  Finally, we choose $\alpha = 1 + \frac{\sqrt{\mu}}{\log \rndBrk{1 + 2|A||D|}}$ and use the $\alpha <2$ as an upper bound to derive
  \begin{align}
    H_{\min}^{\mu' + \epsilon'}(A_1^n|B_1^n E)_{\rho_{|\Omega}} &\geq n(h  - 3\sqrt{\mu} \log \rndBrk{1 + 2|A||D|} - 2(\epsilon + \delta) \log |A||D| - g_2(\epsilon + \delta)) \nonumber \\
    & \quad - \frac{\log \rndBrk{1 + 2|A||D|}}{\sqrt{\mu}}\rndBrk{2\log\frac{1}{P_{\rho}(\Omega) - \mu} + \frac{2}{\mu^2} + 2 \log \frac{1}{1- \mu^2} + g_1(\epsilon', \mu')}.
    \label{eq:almost_final_bd}
  \end{align}
  Note that $\log \rndBrk{1 + 2|A||D|} \leq \log \rndBrk{1 + 2|A|} + \log |D| \leq \log \rndBrk{1 + 2|A|} + \lceil \max(f) - \min(f)\rceil = V$. Recall that $\mu = z(\epsilon + \nu, \delta)^{1/3}$. Since, $\nu$ can be chosen arbitrarily greater than $0$, if we let $\nu \rightarrow 0$, the bound in Eq. \ref{eq:almost_final_bd} is still valid. Finally, choose $\delta = \epsilon$ to derive the bound in the theorem. \\

  \textbf{Case 2:} Finally, for a state $\rho_{A_1^n B_1^n E}$ which satisfies the assumptions of the Theorem but is not full rank, we can consider the full rank states $\rho_{A_1^n B_1^n E}^{(\varepsilon)} := (1- \varepsilon)\rho_{A_1^n B_1^n E} + \varepsilon \tau_{A_1^n B_1^n E}$ for an arbitrarily small $\varepsilon > 0$. $\rho_{A_1^n B_1^n E}^{(\varepsilon)}$ will be full rank and satisfy the assumptions of the Theorem and the bound in Eq. \ref{eq:almost_final_bd} for $\epsilon \rightarrow \epsilon + \varepsilon$. One can then prove the lower bound above for the state $\rho_{A_1^n B_1^n E}$ by taking the limit $\varepsilon \rightarrow 0$. 
\end{proof}

\bibliographystyle{halpha}
\bibliography{bib}

\end{document}